\newcommand*{\uns}{\text{unsat.}}
\spnewtheorem*{delayedproof}{Proof}{\itshape}{\rmfamily}
\newcommand*{\reld}[3]{\mathord{#1}\subseteq#2\times#3}
\newcommand*{\fund}[3]{\mathord{#1}\colon#2\to#3}
\newcommand*{\pard}[3]{\mathord{#1}\colon#2\rightarrowtail#3}
\providecommand*{\Nset}{\mathbb{N}}            % Naturals
\providecommand*{\Rset}{\mathbb{R}}            % Reals
\providecommand*{\extRset}{\overline{\mathbb{R}}} % Extended reals
\providecommand*{\Fset}{\mathbb{F}}            % Floating point numbers
\newcommand*{\cI}{\ensuremath{\mathcal{I}}}
\newcommand*{\cN}{\ensuremath{\mathcal{N}}}
\newcommand*{\nan}{\ensuremath{\mathrm{NaN}}}
\newcommand*{\fpred}{\mathop{\mathrm{pred}}\nolimits}
\newcommand*{\fsucc}{\mathop{\mathrm{succ}}\nolimits}
\newcommand*{\feven}{\mathop{\mathrm{even}}\nolimits}
\newcommand*{\fodd}{\mathop{\mathrm{odd}}\nolimits}
\newcommand*{\ferrdown}{\mathop{\nabla^\downarrow}\nolimits}
\newcommand*{\ferrup}{\mathop{\nabla^\uparrow}\nolimits}
\newcommand*{\ftwiceerrnearneg}{\mathop{\nabla_2^{\mathrm{n}-}}\nolimits}
\newcommand*{\ftwiceerrnearpos}{\mathop{\nabla_2^{\mathrm{n}+}}\nolimits}
\newcommand*{\sgn}{\mathop{\mathrm{sgn}}\nolimits}
\newcommand*{\rup}{{\mathord{\uparrow}}}
\newcommand*{\rdown}{{\mathord{\downarrow}}}
\newcommand*{\rzero}{{0}}
\newcommand*{\rnear}{{\mathrm{n}}}
\newcommand*{\round}[2]{[#2]_{#1}}
\newcommand*{\roundup}[1]{\round{\rup}{#1}}
\newcommand*{\rounddown}[1]{\round{\rdown}{#1}}
\newcommand*{\roundzero}[1]{\round{\rzero}{#1}}
\newcommand*{\roundnear}[1]{\round{\rnear}{#1}}
\newcommand*{\biground}[2]{\bigl[#2\bigr]_{#1}}
\newcommand*{\bigroundup}[1]{\biground{\rup}{#1}}
\newcommand*{\bigrounddown}[1]{\biground{\rdown}{#1}}
\newcommand*{\fmin}{{f_\mathrm{min}}}
\newcommand*{\fmax}{{f_\mathrm{max}}}
\newcommand*{\fnormin}{{f^\mathrm{nor}_\mathrm{min}}}
\newcommand*{\emin}{{e_\mathrm{min}}}
\newcommand*{\emax}{{e_\mathrm{max}}}
\newcommand*{\float}[2]{\ensuremath{\mathalpha{#1} \times \mathalpha{2^{#2}}}}
\newcommand*{\slt}{\prec}
\newcommand*{\sgt}{\succ}
\newcommand*{\sleq}{\preccurlyeq}
\newcommand*{\sgeq}{\succcurlyeq}
\newcommand*{\eval}[2]{\llbracket #2 \rrbracket_{#1}}
\newcommand*{\evalup}[1]{\eval{\rup}{#1}}
\newcommand*{\evaldown}[1]{\eval{\rdown}{#1}}
\newcommand{\mop}{\boxcircle}
\newcommand{\madd}{\boxplus}
\newcommand{\msub}{\boxminus}
\newcommand{\mmul}{\boxdot}
\newcommand{\mdiv}{\boxslash}
\newcommand{\defrel}[1]{\mathrel{\buildrel \mathrm{def} \over {#1}}}
\newcommand{\defeq}{\defrel{=}}
\newcommand{\summary}[1]{\textrm{\textbf{\textup{#1}}}}
\newcommand*{\Cplusplus}{{C\nolinebreak[4]\hspace{-.05em}\raisebox{.4ex}{\tiny\bf ++}}}
\renewcommand{\emptyset}{\varnothing}
\newcommand*{\sseq}{\subseteq}
\newcommand*{\sslt}{\subset}
\newcommand*{\union}{\cup}
\newcommand*{\inters}{\cap}
\newcommand*{\setdiff}{\setminus}
\newcommand{\st}{\mathrel{.}}
\newcommand{\itc}{\mathrel{:}}
\newcommand*{\assign}{\mathrel{\mathord{:}\mathord{=}}}
\newcommand*{\diraddl}{\mathrm{da}_\ell}
\newcommand*{\diraddu}{\mathrm{da}_u}
\newcommand*{\invaddl}{\mathrm{ia}_\ell}
\newcommand*{\invaddu}{\mathrm{ia}_u}
\newcommand*{\dirsubl}{\mathrm{ds}_\ell}
\newcommand*{\dirsubu}{\mathrm{ds}_u}
\newcommand*{\invfirstsubl}{\mathrm{is}^f_\ell}
\newcommand*{\invfirstsubu}{\mathrm{is}^f_u}
\newcommand*{\invsecsubl}{\mathrm{is}^s_\ell}
\newcommand*{\invsecsubu}{\mathrm{is}^s_u}
\newcommand*{\dirmull}{\mathrm{dm}_\ell}
\newcommand*{\dirmulu}{\mathrm{dm}_u}
\newcommand*{\invmull}{\mathrm{im}_\ell}
\newcommand*{\invmulu}{\mathrm{im}_u}
\newcommand*{\dirdivl}{\mathrm{dd}_\ell}
\newcommand*{\dirdivu}{\mathrm{dd}_u}
\newcommand*{\invfirstdivl}{\mathrm{id}^f_\ell}
\newcommand*{\invfirstdivu}{\mathrm{id}^f_u}
\newcommand*{\invsecdivl}{\mathrm{id}^s_\ell}
\newcommand*{\invsecdivu}{\mathrm{id}^s_u}
\newcolumntype{L}{>{\quad$}l<{$\quad}}
\newcolumntype{C}{>{\quad$}c<{$\quad}}
\definecolor{deepblue}{rgb}{0,0,0.5}
\definecolor{deepred}{rgb}{0.6,0,0}
\definecolor{deepgreen}{rgb}{0,0.5,0}
\definecolor{Keywords}{rgb}{0,0,1}
\definecolor{functions}{rgb}{0.75,0,0.5}
\definecolor{boolean}{rgb}{1,0.4,0}
\ttfamily\color{deepblue},
\ttfamily\color{deepgreen},
\begin{document}

\title{Correct Approximation of IEEE~754
       Floating-Point Arithmetic
       for Program Verification
}
\author{Roberto~Bagnara
\and Abramo~Bagnara
\and Fabio~Biselli
\and Michele~Chiari
\and Roberta~Gori
}
\authorrunning{R.~Bagnara,
               A.~Bagnara,
               F.~Biselli,
               M.~Chiari,
               R.~Gori
}
\titlerunning{Correct Approximation of IEEE~754 Floating-Point Arithmetic}

\institute{R.~Bagnara
    \at Dipartimento di Scienze Matematiche, Fisiche e Informatiche,
    Universit\`a di Parma,
    Italy \\
    \email{bagnara@cs.unipr.it}
\and
    R.~Bagnara, A.~Bagnara, F.~Biselli,  M.~Chiari
    \at BUGSENG srl,
    \url{http://bugseng.com},
    Italy \\
    \email{{\it name.surname}@bugseng.com}
\and
    F.~Biselli
    \at Certus Software V\&V Center, SIMULA Research Laboratory,
    Norway
\and
    M.~Chiari
    \at Dipartimento di Elettronica, Informazione e Bioingegneria, Politecnico di Milano,
    Italy \\
    \email{michele.chiari@polimi.it}
\and
    R.~Gori
    \at Dipartimento di Informatica,
    Universit\`a di Pisa,
    Italy \\
    \email{gori@di.unipi.it}
}

\date{}
\maketitle

\begin{abstract}
Verification of programs using floating-point arithmetic is challenging
on several accounts.  One of the difficulties of reasoning about
such programs is due to the peculiarities of floating-point arithmetic:
rounding errors, infinities, non-numeric objects (NaNs), signed zeroes,
denormal numbers, different rounding modes, etc.
One possibility to reason about floating-point
arithmetic is to model a program computation path by means of a set
of ternary constraints of the form $z = x \mop y$ and use constraint
propagation techniques to infer new information on the variables'
possible values.  In this setting, we define and prove the correctness
of algorithms to precisely bound the value of one of the variables
$x$, $y$ or $z$, starting from the bounds known for the other two.
We do this for each of the operations and for each rounding mode
defined by the IEEE~754 binary floating-point standard,
even in the case the rounding mode in effect is only partially known.
This is the first time that such so-called \emph{filtering algorithms}
are defined and their correctness is formally proved.
This is an important slab for paving the way to formal verification
of programs that use floating-point arithmetics.
  \subclass{68N30}
  \CRclass{D.2.4 \and D.2.5}
\end{abstract}

\section{Introduction}
\label{sec:introduction}

Programs using floating-point numbers are notoriously difficult to reason
about \cite{Monniaux08}.  Many factors complicate the task:
\begin{enumerate}
\item
compilers may transform the code in a way that does not preserve
the semantics of floating-point computations;
\item
floating-point formats are an implementation-defined aspect
of most programming languages;
\item
there are different, incompatible implementations of the operations
for the same floating-point format;
\item
mathematical libraries often come with little or no guarantee about
what is actually computed;
\item \label{item:issue-anomalies}
programmers have a hard time predicting and avoiding phenomena caused
by the limited range and precision of floating-point
numbers (overflow, absorption, cancellation, underflow, etc.);
moreover, devices that modern floating-point formats possess in order to support
better handling of such phenomena (infinities, signed zeroes,
denormal numbers, non-numeric objects a.k.a.\ NaNs) come with
their share of issues;
\item \label{item:issue-rounding-modes}
rounding is a source of confusion in itself; moreover, there are
several possible rounding modes and programs can change the rounding
mode any time.
\end{enumerate}
As a result of these difficulties, the verification of floating-point
programs in industry relies, almost exclusively, on informal methods,
mainly testing, or on the evaluation of the numerical accuracy of computations,
which only allows to determine conservative (but often too loose)
bounds on the propagated error \cite{DelmasGPSTV09}.

The satisfactory formal treatment of programs engaging in floating-point
computations requires an equally satisfactory solution to the difficulties
summarized in the above enumeration.
Progress has been made, but more remains to be done.
Let us review each point:
\begin{enumerate}
\item
Some compilers provide options to refrain from rearranging floating-point
computations.
When these are not available or cannot be used, the only possibility
is to verify the generated machine code or some intermediate code
whose semantics is guaranteed to be preserved by the compiler back-end.
\item
Even though the used floating-point formats are implementation-defined aspects
of, say, C and \Cplusplus\footnote{This is not relevant if one
analyzes machine or sufficiently low-level intermediate code.}
the wide adoption of the IEEE~754 standard for binary floating-point
arithmetic \cite{IEEE-754-2008} has improved things considerably.
\item
The IEEE~754 standard does provide some strong guarantees, e.g.,
that the results of individual additions, subtractions, multiplications,
divisions and square roots are correctly rounded, that is, it is \emph{as if}
the results were computed in the reals and then rounded as per
the rounding mode in effect.
But it does not provide guarantees on the results of other operations
and on other aspects, such as, e.g., when the underflow exception is
signaled~\cite{CuytKVV02}.\footnote{The indeterminacy described in
\cite{CuytKVV02} is present also in the 2008 edition of IEEE~754
\cite{IEEE-754-2008}.}
\item
A pragmatic, yet effective approach to support formal reasoning
on commonly used implementation of mathematical functions has been
recently proposed in \cite{BagnaraCGB21}.
The proposed techniques exploit the fact that the floating-point
implementation of mathematical functions preserve, not completely but
to a great extent, the piecewise monotonicity nature of the
approximated functions over the reals.
\item
A static analysis for detecting floating-point exceptions
based on abstract interpretation has been presented in \cite{Mine04}.
A few attempts at this task have been made using other techniques
\cite{BarrVLS13,WuLZ17} but, as we argue in Sections~\ref{sec:related-work}
and \ref{sec:experimental-evaluation}, they present precision
and soundness issues.
\item
Most verification approaches in the literature assume the round-to-nearest
rounding mode \cite{BotellaGM06}, or over-approximate by always considering
worst-case rounding modes \cite{Mine04}.
Analyses based on SMT solvers \cite{BrainTRW15} can treat each rounding mode
precisely, but only if the rounding mode in use is known exactly.
As we show in Section~\ref{sec:experimental-evaluation}, some SMT solvers
also suffer from soundness issues.
\end{enumerate}

The contribution of this paper is in areas~\ref{item:issue-anomalies}
and \ref{item:issue-rounding-modes}.
In particular, concerning point~\ref{item:issue-anomalies}, by defining
and formally proving the correctness of constraint propagation algorithms
for IEEE~754 arithmetic constraints, we enable the use of formal methods
for a broad range of programs.  Such methods, i.e., abstract interpretation
and symbolic model checking, allow for proving that a number of generally
unwanted phenomena (e.g., generation of NaNs and infinities,
absorption, cancellation, instability, etc.) do not happen or, in case
they do happen, allow the generation of a test vector to reproduce
the issue.
Regarding point~\ref{item:issue-rounding-modes},
handling of all IEEE~754 rounding modes, and being resilient to uncertainty
about the rounding mode in effect, is another original contribution of this
paper.

While the round-to-nearest rounding mode is,
by far, the most frequently used one, it must be taken into account that:
\begin{itemize}
\item
the possibility of programmatically changing the rounding mode is granted
by IEEE~754 and is offered by most of its implementations (e.g., in the
C~programming language, via the \verb+fesetround()+ standard function);
\item
such possibility is exploited by interval libraries and by numerical calculus
algorithms (see, e.g., \cite{Rump13,RumpO07});
\item
setting the rounding mode to something different from round-to-nearest
can be done by third parties in a way that was not anticipated by
programmers: this may cause unwanted non-determinism in video games
\cite{Fiedler10} and there is nothing preventing the abuse of this
feature for more malicious ends, denial-of-service being only
the least dangerous in the range of possibilities.
Leaving malware aside, there are graphic and printer drivers
and sound libraries that are known to change the rounding mode and
may fail to set it back \cite{Watte08}.
\end{itemize}

As a possible way of tackling the difficulties described until now,
and enabling sound formal verification of floating-point computations,
this paper introduces new algorithms for the propagation
of arithmetic constraints over floating-point numbers.
These algorithms are called \emph{filtering algorithms} as their purpose
is to prune the domains of possible variable values by \emph{filtering out}
those values that cannot be part of the solution of a system of constraints.
Algorithms of this kind must be employed in \emph{constraint solvers}
that are required in several different areas,
such as automated test-case generation, exception detection or
the detection of subnormal computations.
In this paper we propose fully detailed, provably correct filtering
algorithms for floating-point constraints.
Such algorithms handle all values, including symbolic values
(NaNs, infinities and signed zeros), and rounding modes defined by IEEE~754.
Note that filtering techniques used in solvers over the reals do not
preserve all solutions of constraints over floating-point numbers
\cite{MichelRL01,Michel02}, and therefore they cannot be used to prune
floating-point variable domains reliably.
This leads to the need of filtering algorithms such as those we hereby
introduce.

The choice of the IEEE~754 Standard for floating-point numbers
as the target representation for our algorithms is due to their
ubiquity in modern computing platforms.
Indeed, although some programming languages leave the floating-point
format as an implementation-defined aspect, all widely-used
hardware platforms ---e.g., x86 \cite{Intelx8664} and ARM \cite{Armv8}---
only implement the IEEE~754 Standard, while older formats are
considered legacy.

Before defining our filtering algorithms in a detailed and formal way,
we provide a more comprehensive context on the propagation of floating-point
constraints and its practical applications
(Sections~\ref{sec:from-programs-to-floating-point-constraints}
and~\ref{sec:constraint-propagation}),
and justify their use in formal program analysis and verification
(Section~\ref{sec:applications-to-program-analysis}).
We also give a more in-depth view of related work in Section~\ref{sec:related-work},
and clarify our contribution in Section~\ref{sec:contribution}.

\subsection{From Programs to Floating-Point Constraints}
\label{sec:from-programs-to-floating-point-constraints}

Independently from the application, program analysis starts with
parsing, the generation of an \emph{abstract syntax tree} and
the generation of various kinds of intermediate program representations.
An important intermediate representation is called
\emph{three-address code} (TAC).
In this representation, complex arithmetic expressions and assignments
are decomposed into sequences of assignment instructions of the form
\[
  \mathtt{result}
    \mathrel{:=}
      \mathtt{operand}_1 \,\mathbin{\mathtt{operator}} \,\mathtt{operand}_2.
\]
A further refinement is the computation of the
\emph{static single assignment form} (SSA) \cite{AhoLSU06} whereby,
labeling each assigned variable with a fresh name,
assignments can be considered as if they were equality constraints.
For example, the TAC form of the floating-point assignment
$\mathtt{z} \mathrel{:=} \mathtt{z}*\mathtt{z} + \mathtt{z}$
is
\(
  \mathtt{t} \mathrel{:=} \mathtt{z}*\mathtt{z}; \;
  \mathtt{z} \mathrel{:=} \mathtt{t} + \mathtt{z}
\),
which in an SSA form becomes
\(
  \mathtt{t}_1 \mathrel{:=} \mathtt{z}_1*\mathtt{z}_1; \;
  \mathtt{z}_2 \mathrel{:=} \mathtt{t}_1 + \mathtt{z}_1
\).
These, in turn, can be regarded as the conjunction
of the constraints $t_1 = z_1 \mmul  z_1$ and $z_2 = t_1 \madd z_1$,
where by $\mmul$ and $\madd$ we denote the multiplication and
addition operations on floating-point numbers, respectively.
The Boolean comparison expressions that appear in the guards of if statements
and loops can be translated into constraints similarly.
This way, a C/\Cplusplus{} program translated into an SSA-based intermediate
representation can be represented as a set of constraints on its variables.
In particular, a constraint set arises form each execution path in the program.
For this reason, this approach to program modeling can be viewed
as \emph{symbolic execution} \cite{King76,ClarkeR85}.
Constraints can be added or removed from such a set in order to obtain
a constraint system that describes a particular behavior of the program
(e.g., the execution of a certain instruction, the occurrence of an overflow in
a computation, etc.). Once such a constraint system has been solved,
the variable domains only contain values that cause the desired behavior.
If one of the domains is empty, then that behavior can be ruled out.
For more details on the symbolic execution of floating-point computations,
we refer the reader to \cite{BotellaGM06,BagnaraCGG13ICST}.

\subsection{Constraint Propagation}
\label{sec:constraint-propagation}

Once constraints have been generated, they are amenable to
\emph{constraint propagation}:
under this name goes any technique that entails considering
a subset of the constraints at a time, explicitly removing
elements from the set of values that are candidate to be assigned
to the constrained variables. The values that can be removed are
those that cannot possibly participate in a solution for the selected
set of constraints.
For instance, if a set of floating-point constraints contains
the constraint
$x \mmul x = x$, then any value outside the set $\{ \nan, +0, 1, +\infty \}$
can be removed from further consideration.
The degree up to which this removal can actually take place depends on
the data-structure used to record the possible values for $x$,
intervals and multi-intervals being typical choices for numerical
constraints.
For the example above, if intervals are used, the removal can only be partial
(negative floating-point numbers are removed from the domain of $x$).
With multi-intervals more precision is possible, but any approach
based on multi-intervals must take measures to avoid combinatorial
explosion.

In this paper, we only focus on interval-based constraint propagation:
the algorithms we present for intervals can be rather easily
generalized to the case of multi-intervals.
We make the further assumption that the floating-point formats available
to the analyzed program are also available to the analyzer:
this is indeed quite common due to the wide adoption
of the IEEE~754 formats.

Interval-based floating-point constraint propagation consists of
iteratively narrowing the intervals associated to each variable:
this process is called \emph{filtering}.
A \emph{projection} is a function that,
given a constraint and the intervals associated to two of the variables
occurring in it, computes a possibly refined interval for the third
variable (the projection is said to be \emph{over} the third variable).
Taking $z_2 = t_1 \madd z_1$ as an example, the projection over $z_2$
is called \emph{direct projection} (it goes in the same sense of the
TAC assignment it comes from), while the projections over $t_1$ and $z_1$
are called \emph{indirect projections}.

\subsection{Applications of Constraint Propagation to Program Analysis}
\label{sec:applications-to-program-analysis}

When integrated in a complete program verification framework,
the constraint propagation techniques presented in this paper enable
activities such as abstract interpretation,
automatic test-input generation and symbolic model checking.
In particular, symbolic model checking means exhaustively proving
that a certain property, called specification, is satisfied by the system in exam,
which in this case is a computer program.
A model checker can either prove that the given specification is satisfied,
or provide a useful counterexample whenever it is not.

For programs involving floating-point computations,
some of the most significant properties that can be checked
consist of ruling out certain undesired exceptional behaviors
such as overflows, underflows and the generation of NaNs,
and numerical pitfalls such as absorption and cancellation.
In more detail, we call a \emph{numeric-to-NaN} transition
a floating-point arithmetic computation that returns a NaN
despite its operands being non-NaN.
We call a \emph{finite-to-infinite} transition
the event of a floating-point operation returning
an infinity when executed on finite operands,
which occurs if the operation overflows.
An \emph{underflow} occurs when the output of a computation is too small
to be represented in the machine floating-point format
without a significant loss in accuracy.
Specifically, we divide underflows into three categories,
depending on their severity:
\begin{description}
\item[Gradual underflow:]
an operation performed on normalized numbers results in a subnormal
number.  In other words, a subnormal has been generated out of normalized
numbers: enabling gradual underflow is indeed the very reason for the
existence of subnormals in IEEE~754.
However, as subnormals come with their share of problems, generating
them is better avoided.
\item[Hard underflow:]
an operation performed on normalized numbers results in a zero,
whereas the result computed on the reals is nonzero.
This is called \emph{hard} because the relative error is 100\%,
gradual overflow does not help (the output is zero, not a subnormal),
and, as neither input is a subnormal, this operation may constitute
a problem per se.
\item[Soft underflow:]
an operation with at least one subnormal operand results in a zero,
whereas the result computed on the reals is nonzero.
The relative error is still 100\% but, as one of the operands is a
subnormal, this operation may not be the root cause of the problem.
\end{description}
\emph{Absorption} occurs when the result of an arithmetic operation
is equal to one of the operands, even if the other one is not the neutral
element of that operation. For example, absorption occurs when summing
a number with another one that has a relatively very small exponent.
If the precision of the floating-point format in use is not enough to
represent them, the additional digits that would appear in the
mantissa of the result are rounded out.
\begin{definition} \summary{(Absorption.)}
Let $x, y, z \in \Fset$ with $y, z \in \Rset$,
let $\mathord{\mop}$ be any IEEE~754 floating-point operator,
and let $x = y \mop z$.
Then $y \mop z$ gives rise to \emph{absorption} if
\begin{itemize}
\item
$\mathord{\mop} = \mathord{\madd}$
and either $x = y$ and $z \neq 0$, or $x = z$ and $y \neq 0$;
\item
$\mathord{\mop} = \mathord{\msub}$
and either $x = y$ and $z \neq 0$, or $x = -z$ and $y \neq 0$;
\item
$\mathord{\mop} = \mathord{\mmul}$
and either $x = \pm y$ and $z \neq \pm 1$, or $x = \pm z$ and $y \neq \pm 1$;
\item
$\mathord{\mop} = \mathord{\mdiv}$,
$x = \pm y$ and $z \neq \pm 1$.
\end{itemize}
\end{definition}

In this section, we show how symbolic model checking
can be used to either rule out or pinpoint the presence of these run-time anomalies
in a software program by means of a simple but meaningful practical example.
Floating-point constraint propagation has been fully implemented with the techniques
presented in this paper in the commercial tool ECLAIR,%
\footnote{\url{https://bugseng.com/eclair}, last accessed on October~28th, 2021.}
developed and commercialized by BUGSENG.
ECLAIR is a generic platform for the formal verification of
C/\Cplusplus{} and Java source code, as well as Java bytecode.
The filtering algorithms described in the present paper are used in
the C/\Cplusplus{} modules of ECLAIR that are responsible for semantic
analysis based on abstract interpretation \cite{CousotC77}, automatic
generation of test-cases, and symbolic model checking. The latter two
are based on symbolic execution and constraint satisfaction problems
\cite{GotliebBR98,GotliebBR00}, whose solution is based on
multi-interval refinement and is driven by labeling and backtracking
search.
Indeed, the choice of ECLAIR as our target verification platform
is mainly due to its use of constraint propagation for solving
constraints generated by symbolic execution, which makes it easier
to integrate the algorithms presented in this paper.
However, such techniques are general, and could be used to solve
the constraints generated by any symbolic execution engine.

Constraints arising from the use of mathematical functions provided
by C/\Cplusplus{} standard libraries are also supported.
Unfortunately, most implementations of such libraries are not correctly rounded,
which makes the realization of filtering algorithms for them rather challenging.
In ECLAIR, propagation for such constraints is performed by exploiting
the piecewise monotonicity properties of those functions, which are partially
retained by all implementations we know of \cite{BagnaraCGB21}.

\begin{figure}
\centering
\begin{lstlisting}[mathescape,emph={exp,fabs},numbers=left,morekeywords={gsl_sf_result}]
int gsl_sf_bessel_i1_scaled_e(const double x, gsl_sf_result * result)
{
  double ax = fabs(x);

  /* CHECK_POINTER(result) */

  if(x == 0.0) {
    result->val = 0.0;
    result->err = 0.0;
    return GSL_SUCCESS;
  }
  else if(ax < 3.0*GSL_DBL_MIN) { |\label{line:bessel-if-underflow}|
    UNDERFLOW_ERROR(result);
  }
  else if(ax < 0.25) {
    const double eax = exp(-ax);
    const double y  = x*$^\mathrm{h}$x; |\label{line:bessel-hard1}|
    const double c1 = 1.0/10.0;
    const double c2 = 1.0/280.0;
    const double c3 = 1.0/15120.0;
    const double c4 = 1.0/1330560.0;
    const double c5 = 1.0/172972800.0;
    const double sum = 1.0 +$^\mathrm{a}$ y*$^\mathrm{sg}$(c1 +$^\mathrm{a}$ y*$^\mathrm{sg}$(c2 +$^\mathrm{a}$ y*$^\mathrm{sg}$(c3 +$^\mathrm{a}$ y*$^\mathrm{sg}$(c4 +$^\mathrm{a}$ y*$^\mathrm{sg}$c5))));
    result->val = eax * x/3.0 * sum;
    result->err = 2.0 * GSL_DBL_EPSILON * fabs(result->val);
    return GSL_SUCCESS;
  }
  else {
    double ex = exp$^\mathrm{hs}$(-2.0*$^\mathrm{i}$ax); |\label{line:bessel-minf}|
    result->val = 0.5 * (ax*(1.0+$^\mathrm{a}$ex) -$^\mathrm{a}$ (1.0-$^\mathrm{a}$ex)) /$^\mathrm{n}$ (ax*$^\mathrm{i}$ax); |\label{line:bessel-pinf-nan}|
    result->err = 2.0 * GSL_DBL_EPSILON * fabs(result->val);
    if(x < 0.0) result->val = -result->val;
    return GSL_SUCCESS;
  }
}
\end{lstlisting}
\caption{Function extracted from the GNU Scientific Library (GSL), version 2.5.
  The possible numerical exceptions detected by ECLAIR are marked
  by the raised letters next to the operators causing them.
  h, s and g stand for hard, soft and gradual underflow, respectively;
  a for absorption; i for finite-to-infinity; n for numeric-to-NaN.}
\label{fig:bessel}
\end{figure}

To demonstrate the capabilities of the techniques presented in this paper,
we applied them to the C code excerpt of Figure~\ref{fig:bessel}.
It is part of the implementation of the Bessel functions in the GNU Scientific Library,%
\footnote{\url{https://www.gnu.org/software/gsl/}, last accessed on October~28th, 2021.}
a widely adopted library for numerical computations.
In particular, it computes the scaled regular modified cylindrical Bessel function of first order,
$\exp(-|x|)I_1(x)$, where $x$ is a purely imaginary argument.
The function stores the computed result in the \texttt{val} field of the data structure
\texttt{result}, together with an estimate of the absolute error (\texttt{result->err}).
Additionally, the function returns an \texttt{int} status code,
which reports to the user the occurrence of certain exceptional conditions,
such as overflows and underflows.
In particular, this function only reports an underflow when the argument is smaller than a constant.
We analyzed this program fragment with ECLAIR's symbolic model checking engine,
setting it up to detect overflow (finite-to-infinite transitions),
underflow and absorption events, and NaN generation (numeric-to-NaN transitions).
Thus, we found out the underflow guarded against by the if statement
of line~\ref{line:bessel-if-underflow} is by far not the only numerical anomaly
affecting this function. In total, we found a numeric-to-NaN transition,
two possible finite-to-infinite transitions, two hard underflows, 5 gradual underflows
and 6 soft underflows. The code locations in which they occur are
all reported in Figure~\ref{fig:bessel}.

For each one of these events, ECLAIR yields an input value causing it.
Also, it optionally produces an instrumented version of the original code,
and runs it on every input it reports, checking whether it actually triggers
the expected behavior or not.
Hence, the produced input values are validated automatically.
For example, the hard underflow of line~\ref{line:bessel-hard1}
is triggered by the input $\mathtt{x} = \texttt{-0x1.8p-1021} \approx -6.6752 \times 10^{-308} $.
If the function is executed with $\mathtt{x} = \texttt{-0x1p+1023} \approx -8.9885 \times 10^{307}$,
the multiplication of line~\ref{line:bessel-minf} yields a negative infinity.
Since $\mathtt{ax} = |\mathtt{x}|$, we know $\mathtt{x} = \texttt{0x1p+1023}$
would also cause the overflow.
The same value of $\mathtt{x}$ causes an overflow in line~\ref{line:bessel-pinf-nan} as well.
The division in the same line produces a NaN if the function is executed
with $\mathtt{x} = -\infty$.

The context in which the events we found occur determines
whether they could cause significant issues.
For example, even in the event of absorption, the output of the overall computation
could be correctly rounded. Whether or not this is acceptable must be assessed depending
on the application. Indeed, the capability of ECLAIR of detecting absorption
can be a valuable tool to decide if a floating-point format with a higher precision is needed.
Nevertheless, some of such events are certainly problematic.
The structure of the function suggests that no underflow should occur if control flow
reaches past the if guard of line~\ref{line:bessel-if-underflow}.
On the contrary, several underflows may occur afterwards, some of which are even \emph{hard}.
Moreover, the generation of infinities or NaNs should certainly either be avoided,
or signaled by returning a suitable error code (and not \texttt{GSL\_SUCCESS}).
The input values reported by ECLAIR could be helpful for the developer in fixing
the problems detected in the function of Figure~\ref{fig:bessel}.
Furthermore, the algorithms presented in this paper are provably correct.
For this reason, it is possible to state that this code excerpt presents no other issues
besides those we reported above.
Notice, however, that due to the way the standard C mathematical library functions
are treated, the results above only hold with respect to the implementation of the $\mathtt{exp}$
function in use. In particular, the machine we used for the analysis is equipped
with the x86\_64 version of EGLIBC~2.19, running on Ubuntu 14.04.1.

\subsection{Related Work}
\label{sec:related-work}

\subsubsection{Filtering Algorithms}
In \cite{Michel02} C.\ Michel proposed a framework for filtering
constraints over floating-point numbers.
He considered monotonic functions over one argument and devised exact
direct and correct indirect projections for each possible rounding mode.
Extending this approach to binary arithmetic operators is not an easy task.
In \cite{BotellaGM06}, the authors extended the approach of \cite{Michel02}
by proposing filtering algorithms for the four basic binary arithmetic operators
when only the round-to-nearest tails-to-even rounding mode is available.
They also provided tables for indirect function projections
when zeros and infinities are considered with this rounding mode.
In our approach, we generalize the initial work of \cite{BotellaGM06}
by providing extended interval reasoning.
The algorithms and tables we present in this paper
consider all rounding modes, and contain all details and special cases,
allowing the interested reader to write an implementation of
interval-based filtering code.

Recently, \cite{Gallois-WongBC20} presented optimal inverse
projections for addition under the round-to-nearest rounding mode.
The proposed algorithms combine classical filtering based on the
properties of addition with filtering based on the properties of
subtraction constraints on floating-points as introduced by Marre and
Michel \cite{MarreM10}.
The authors are able to prove the optimality of the lower bounds
computed by their algorithms.
However, \cite{Gallois-WongBC20} only covers addition in the round-to-nearest
rounding mode, leaving other arithmetic operations (subtraction, multiplication and division)
and rounding modes to future work.
Special values (infinities and NaNs) are also not handled.
Conversely, this paper presents filtering algorithms covering all such cases.

It is worth noting that the filtering algorithms on intervals
presented in \cite{MarreM10} have been corrected for
addition/subtraction constraints and extended to multiplication and
division under the round-to-nearest rounding mode by some of these
authors (see \cite{BagnaraCGG13ICST,BagnaraCGG16IJOC}).
In this paper we discuss the cases in which the filtering algorithms
in \cite{BagnaraCGG13ICST,BagnaraCGG16IJOC,MarreM10} should be used in
combination with our filters for arithmetic constraints.
However, the main aim of this paper is to provide an exhaustive and
provably correct treatment of filtering algorithms supporting all
special cases for all arithmetic constraints under all rounding modes.

\subsubsection{SMT Solvers}
Satisfiability Modulo Theories (SMT) is the problem of deciding satisfiability of
first-order logic formulas containing terms from different, pre-defined theories.
Examples of such theories are integer or real arithmetic, bit-vectors,
arrays and uninterpreted functions.
Recently, SMT solvers have been widely employed as backends
for different software verification techniques, such as model checking
and symbolic execution \cite{BarrettT18}.
The need for verifying floating-point programs lead to the introduction
of a floating-point theory \cite{BrainTRW15} in SMT-LIB,
a library defining a common input language for SMT solvers.
Since then, the theory has been implemented in different ways into several solvers.
CVC4 \cite{BarrettCDHJKRT11,BrainSS19}, MathSAT \cite{CimattiGSS13} and Z3 \cite{MouraB08}
use \emph{bit-blasting}, i.e., they convert floating-point constraints to bit-vector formulae,
which are then solved as Boolean SAT problems.
Some tools, instead, use methods based on interval reasoning.
MathSAT also supports Abstract Conflict Driven Learning (ACDL)
for solving floating-point constraints based on interval domains \cite{BrainDGHK14}.
Colibri \cite{MarreBC17} uses constraint programming techniques,
with filtering algorithms such as those in \cite{BotellaGM06,BagnaraCGG16IJOC}
and those presented in this paper.
However, \cite{MarreBC17} does not report such filters in detail,
nor proves their correctness.
This leads to serious soundness issues, as we shall see in Section~\ref{sec:smt-solvers}.
An experimental comparison of such tools can be found in \cite{BrainSS19}.

Note that SMT-LIB, the input language of all such tools, only allows to specify
one single rounding mode for each floating-point operation.
Thus, the only way of dealing with uncertainty of the rounding mode in use
is to solve the same constraint system with all possible rounding mode combinations,
which is quite unpractical.
Our filtering algorithms are instead capable of working with a \emph{set} of possible
rounding modes, and retain soundness by always choosing the worst-case one.

\subsubsection{Floating-Point Program Verification}
Several program analyses for automatic detection of floating-point exceptions
were proposed in the literature.

Relational abstract domains for the analysis of floating-point
computations through abstract interpretation have been presented
in \cite{Mine04} and implemented in the tool Astr\'ee.%
\footnote{\url{https://www.absint.com/astree/index.htm}, last accessed on October 28th, 2021.}
Such domains, however, over-ap\-prox\-i\-mate rounding operations by always
assuming worst cases, namely rounding toward plus and minus infinity,
which may cause precision issues (i.e., false positives) if only
round-to-nearest is used.
Also, \cite{Mine04} does not offer a treatment of symbolic values
(e.g., infinities) as exhaustive as the one we offer in this paper.

In \cite{BarrVLS13} the authors proposed a symbolic execution system
for detecting floating-point exceptions.
It is based on the following steps: each numerical program is
transformed to directly check each exception-triggering condition, the
transformed program is symbolically-executed in real arithmetic to
find a (real) candidate input that triggers the exception, the real
candidate is converted into a floating-point number, which is finally
tested against the original program.
Since approximating floating-point arithmetic with real arithmetic
does not preserve the feasibility of execution paths and outputs in
any sense, they cannot guarantee that once a real candidate has been
selected, a floating-point number raising the same exception can be
found.
Even more importantly, even if the transformed program over the reals
is exception-free, the original program using floating-point
arithmetic may not be actually exception-free.

Symbolic execution is also the basis of the analysis proposed in
\cite{WuLZ17}, that aims at detecting floating-point exceptions by
combining it with value range analysis.
The value range of each variable is updated with the appropriate path
conditions by leveraging interval constraint-propagation techniques.
Since the projections used in that paper have not been proved to yield
correct approximations, it can be the case that the obtained value
ranges do not contain all possible floating-point values for each
variable.
Indeed, valid values may be removed from value ranges, which leads to
false negatives.
In Section~\ref{sec:discussion-and-conclusion},
the tool for floating-point exception
detection presented in \cite{WuLZ17} is compared with the same
analysis based on our propagation algorithms.
As expected, no false positives were detected among the results of our analysis.

\subsection{Contribution}
\label{sec:contribution}

This paper improves the state of the art in several directions:
\begin{enumerate}
\item
all rounding modes are treated and there is no assumption that the
rounding mode in effect is known and unchangeable (increased generality);
\item
utilization, to a large extent, of machine floating-point arithmetic
in the analyzer with few rounding mode changes (increased performance);
\item
accurate treatment of \emph{round half to even} ---the default rounding
mode of IEEE~754--- (increased precision);
\item
explicit and complete treatment of intervals containing symbolic values
(i.e., infinities and signed zeros);
\item
application of floating-point constraint propagation techniques
to enable detection of program anomalies such as overflows, underflows,
absorption, generation of NaNs.
\end{enumerate}

\subsection{Plan of the Paper}

The rest of the paper is structured as follows:
Section~\ref{sec:preliminaries} recalls the required notions
and introduces the notation used throughout the paper;
Section~\ref{sec:rounding-modes-and-rounding-errors} presents
some results on the treatment of uncertainty on the rounding
mode in effect and on the quantification of the rounding errors
committed in floating-point arithmetic operations;
Section~\ref{sec:propagation-arithmetic-constraints} contains the complete treatment
of addition and division constraints on intervals,
by showing detailed special values tables and the refinement algorithms;
Section~\ref{sec:experimental-evaluation} reports the results of experiments
aimed at evaluating the soundness of existing tools.
Section~\ref{sec:discussion-and-conclusion} concludes the main part of the paper.
Appendix~\ref{se:Subtraction-Multiplication} contains the complete
treatment of subtraction and multiplication constraints.
The proofs of results not reported in the main text of the paper
can be found in Appendix~\ref{proofs}.

%%% Local Variables:
%%% mode: latex
%%% TeX-master: "main"
%%% End:

\section{Preliminaries}
\label{sec:preliminaries}

We will denote by $\Rset_+$ and $\Rset_-$ the sets of strictly positive
and strictly negative real numbers, respectively.
The set of \emph{affinely extended reals},
$\Rset \union \{ -\infty, +\infty \}$, is denoted by $\extRset$.

\begin{definition} \summary{(IEEE~754 binary floating-point numbers.)}
\label{def:binary-floating-point-number}
A set of IEEE~754 binary floating-point numbers \textup{\cite{IEEE-754-2008}}
is uniquely identified by:
$p \in \Nset$, the number of significant digits (precision);
$\emax \in \Nset$, the maximum exponent, the minimum exponent being
$\emin \defeq 1 - \emax$.
The set of binary floating-point numbers
$\Fset(p, \emax, \emin)$ includes:
\begin{itemize}
\item
all signed zero and non-zero numbers of the form
$(-1)^s \cdot 2^e \cdot m$, where
  \begin{itemize}
  \item
   $s$ is the \emph{sign bit};
   \item
   the \emph{exponent} $e$ is any integer
    such that $\emin \leq e \leq \emax$;
   \item
   the \emph{mantissa} $m$, with $0 \leq m < 2$, is a number
   represented by a string of $p$ binary
   digits with a ``binary point'' after the first digit:
   \[
     m = (d_0 \;.\; d_1 d_2 \dots d_{p-1})_2 = \sum_{i = 0}^{p-1} d_i 2^{-i};
   \]
   \end{itemize}
\item
the \emph{infinities} $+\infty$ and $-\infty$;
the \emph{NaNs}: $\mathrm{qNaN}$ (\emph{quiet NaN})
and $\mathrm{sNaN}$ (\emph{signaling NaN}).
\end{itemize}
Numbers such that $d_0 = 1$ are called \emph{normal}.
The smallest positive normal floating-point number is
$\fnormin \defeq 2^{\emin}$
and the largest is $\fmax \defeq 2^{\emax}(2 - 2^{1-p})$.
The non-zero floating-point numbers such that $d_0 = 0$
are called \emph{subnormal}:
their absolute value is less than $2^{\emin}$, and
they always have fewer than $p$ significant digits.
Every finite floating-point number is an integral multiple of
the smallest subnormal magnitude
$\fmin \defeq 2^{\emin + 1 - p}$.
Note that the \emph{signed zeroes} $+0$ and $-0$ are distinct
floating-point numbers.
For a non-zero number $x$, we will write $\feven(x)$ (resp., $\fodd(x)$)
to signify that the least significant digit of $x$'s mantissa,
$d_{p-1}$, is~$0$ (resp.,~$1$).
\end{definition}

In the sequel we will only be concerned with IEEE~754 binary floating-point
numbers and we will write simply $\Fset$ for
$\Fset(p, \emax, \emin)$ when there is no risk of confusion.

\begin{definition} \summary{(Floating-point symbolic order.)}
\label{def:fp-symbolic-order}
Let $\Fset$ be any IEEE~754 floating-point format.
The relation $\reld{\slt}{\Fset}{\Fset}$ is such that,
for each $x, y \in \Fset$, $x \slt y$ if and only if
both $x$ and $y$ are not NaNs and either:
$x = -\infty$ and $y \neq -\infty$, or
$x \neq +\infty$ and $y = +\infty$, or
$x = -0$ and $y \in \{ +0 \} \union \Rset_+$, or
$x \in \Rset_- \union \{ -0 \}$ and $y = +0$, or
$x, y \in \Rset$ and $x < y$.
The partial order $\reld{\sleq}{\Fset}{\Fset}$ is such that,
for each $x, y \in \Fset$, $x \sleq y$ if and only if
both $x$ and $y$ are not NaNs and either $x \slt y$ or $x = y$.
\end{definition}
Note that $\Fset$ without the NaNs is linearly ordered with
respect to `$\slt$'.

For $x \in \Fset$ that is not a NaN, we will often abuse the notation
by interchangeably using the floating-point number or the extended
real number it represents.
The floats $-0$ and $+0$ both correspond to the real number $0$.
Thus, when we write, e.g., $x < y$ we mean that $x$ is numerically
less than $y$ (for example, we have $-0 \slt +0$ though $-0 \nless +0$,
but note that $x \sleq y$ implies $x \leq y$).
Numerical equivalence will be denoted by `$\equiv$' so that
$x \equiv 0$, $x \equiv +0$ and $x \equiv -0$ all denote
$(x = +0) \lor (x = -0)$.

\begin{definition} \summary{(Floating-point predecessors and successors.)}
\label{def:floating-point-predecessors-and-successors}
The partial function $\pard{\fsucc}{\Fset}{\Fset}$ is such that,
for each $x \in \Fset$,
\[
  \fsucc(x)
    \defeq
      \begin{cases}
        +\infty,
          &\text{if $x = \fmax$;} \\
        \min \{\, y \in \Fset \mid y > x \,\},
          &\text{if  $-\fmax \leq x < -\fmin $}\\
&\text{ or $\fmin \leq x < \fmax$;} \\
        \fmin,
          &\text{if $x \equiv 0$;} \\
        -0,
          &\text{if $x = -\fmin$;} \\
        -\fmax,
          &\text{if $x = -\infty$;} \\
        \text{undefined},
          &\text{otherwise.} \\
      \end{cases}
\]
The partial function $\pard{\fpred}{\Fset}{\Fset}$ is defined by
reversing the ordering, so that, for each $x \in \Fset$,
$\fpred(x) = -\fsucc(-x)$ whenever $\fsucc(x)$ is defined.
\end{definition}

Let $\circ \in \{ \mathord{+}, \mathord{-}, \mathord{\cdot}, \mathord{/} \}$
denote the usual arithmetic operations over the reals.
Let $R \defeq \{ \rdown, \rzero, \rup, \rnear \}$ denote
the set of IEEE~754 rounding modes (\emph{rounding-direction attributes}):
round towards minus infinity (\emph{roundTowardNegative}, $\rdown$),
round towards zero (\emph{round\-Toward\-Zero}, $\rzero$),
round towards plus infinity (\emph{roundTowardPositive}, $\rup$), and
round to nearest (\emph{round\-Ties\-To\-Even}, $\rnear$).
We will use the notation $\mathord{\mop_r}$, where
\(
  \mathord{\mop}
    \in
      \{ \mathord{\madd}, \mathord{\msub}, \mathord{\mmul}, \mathord{\mdiv} \}
\)
and $r \in R$,
to denote an IEEE~754 floating-point operation with rounding $r$.

The rounding functions are defined as follows.
Note that they are not defined for $0$: the IEEE~754 standard, in
fact, for operations whose exact result is $0$, bases the choice
between $+0$ and $-0$ on the operation itself and on the sign of the
arguments \cite[Section~6.3]{IEEE-754-2008}.

\begin{definition} \summary{(Rounding functions.)}
\label{def:rounding-functions}
The rounding functions defined by IEEE~754,
$\fund{\roundup{\cdot}}{\Rset \setdiff \{ 0 \}}{\Fset}$,
$\fund{\rounddown{\cdot}}{\Rset \setdiff \{ 0 \}}{\Fset}$,
$\fund{\roundzero{\cdot}}{\Rset \setdiff \{ 0 \}}{\Fset}$ and
$\fund{\roundnear{\cdot}}{\Rset \setdiff \{ 0 \}}{\Fset}$,
are such that, for each $x \in \Rset \setdiff \{ 0 \}$,
\begin{align}
\label{eq:round-towards-plus-infinity}
  \roundup{x}
    &\defeq
      \begin{cases}
        +\infty,
          &\text{if $x > \fmax$;} \\
        \min \{\, z \in \Fset \mid z \geq x \,\},
          &\text{if $x \leq -\fmin$ or $0 < x \leq \fmax$;} \\
        -0,
          &\text{if $-\fmin < x <0 $;}
      \end{cases} \\
\label{eq:round-towards-minus-infinity}
  \rounddown{x}
    &\defeq
      \begin{cases}
        \max \{\, z \in \Fset \mid z \leq x \,\},
          &\text{if $-\fmax \leq x < 0$ or $\fmin \leq x$;} \\
          +0,
          &\text{if $0 < x < \fmin $;} \\
          -\infty,
          &\text{if $x < -\fmax$;}
      \end{cases} \\
\label{eq:round-towards-zero}
  \roundzero{x}
    &\defeq
      \begin{cases}
        \rounddown{x},
          &\text{if $x > 0$;} \\
        \roundup{x},
          &\text{if $x < 0$;}
      \end{cases} \\
\label{eq:round-to-nearest}
  \roundnear{x}
    &\defeq
      \begin{cases}
        \rounddown{x},
          &\text{if $ -\fmax\leq x\leq  \fmax$ and either} \\
          &\text{ $\bigl|\rounddown{x} - x\bigr|
                 < \bigl|\roundup{x}   - x\bigr|$ or} \\
          &\text{ $\bigl|\rounddown{x} - x\bigr|
                 = \bigl|\roundup{x}   - x\bigr|$
                 and $\feven\bigl(\rounddown{x}\bigr)$;} \\
        \rounddown{x},
          &\text{if $\fmax < x <  2^{\emax}(2 - 2^{-p})$ %} \\
                 or $x \leq -2^{\emax}(2 - 2^{-p})$;} \\
        \roundup{x},
          &\text{otherwise.}
      \end{cases}
\end{align}
\end{definition}
Note that, when the result of an operation has magnitude lower than $\fnormin$,
it is rounded to a subnormal number, by adjusting it to the form
$(-1)^s \cdot 2^\emin \cdot m$, and truncating its mantissa $m$,
which now starts with at least one $0$, to the first $p$ digits.
This phenomenon is called \emph{gradual underflow}, and while it is preferred
to \emph{hard underflow}, which truncates a number to $0$,
it still may cause precision issues due to the reduced number of significant
digits of subnormal numbers.

The rounding modes $\rdown$ and $\rup$ are the most ``extreme'',
while $\rnear$ and $\rzero$ are always contained between them.
We formalize this observation as follows:
\begin{proposition} \summary{(Properties of rounding functions.)}
\label{prop:round-properties}
Let $x \in \Rset \setdiff \{ 0 \}$. Then
\begin{align}
\label{round-properties:first}
  \rounddown{x} \leq x \leq  \roundup{x}, \\
\label{round-properties:second}
  \rounddown{x} \leq \roundzero{x} \leq \roundup{x}, \\
\label{round-properties:third}
  \rounddown{x} \leq \roundnear{x} \leq \roundup{x}.
\end{align}
 Moreover,
\begin{equation}
\label{round-properties:fourth}
  \rounddown{x} = -\roundup{-x}.
\end{equation}
\end{proposition}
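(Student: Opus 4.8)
The plan is to read everything straight off Definition~\ref{def:rounding-functions} by a short case analysis on the position of $x$ relative to the overflow thresholds $\pm\fmax$ and the underflow region $(-\fmin,\fmin)$, using two elementary facts: the set $\Fset$ (restricted to its non-NaN elements), together with the constants $\fmax$ and $\fmin$, is closed under negation; and, numerically, $-(+0)=-0$, $-(-0)=+0$, $-(\pm\infty)=\mp\infty$, while $+0$, $-0$ and $0$ all denote the same extended real.

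First I would establish \eqref{round-properties:first}. Inspecting the three branches defining $\roundup{x}$: if $x>\fmax$ then $\roundup{x}=+\infty\geq x$; if $x\leq-\fmin$ or $0<x\leq\fmax$ then $\roundup{x}=\min\{z\in\Fset\mid z\geq x\}\geq x$ by construction; if $-\fmin<x<0$ then $\roundup{x}=-0$, which numerically equals $0>x$. A symmetric inspection of the three branches of $\rounddown{x}$ shows $\rounddown{x}\leq x$ in every case. Hence $\rounddown{x}\leq x\leq\roundup{x}$.

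Then \eqref{round-properties:second} and \eqref{round-properties:third} drop out of \eqref{round-properties:first}. Indeed, by its definition $\roundzero{x}$ equals $\rounddown{x}$ when $x>0$ and $\roundup{x}$ when $x<0$, so in either case one of the two required inequalities is trivial and the other is supplied by \eqref{round-properties:first}. Likewise, in every branch of the definition of $\roundnear{x}$ the value is either $\rounddown{x}$ or $\roundup{x}$, so it lies between them by \eqref{round-properties:first}; note that the overflow asymmetry in the round-to-nearest clause is irrelevant here, since we only need membership of $\roundnear{x}$ in $\{\rounddown{x},\roundup{x}\}$.

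Finally I would prove \eqref{round-properties:fourth}. Since $x\neq 0$ we also have $-x\neq 0$, so both sides are defined. Matching the branches of $\rounddown{\cdot}$ evaluated at $x$ against the branches of $\roundup{\cdot}$ evaluated at $-x$: the condition $x<-\fmax$ is equivalent to $-x>\fmax$, and there $\rounddown{x}=-\infty=-(+\infty)=-\roundup{-x}$; the condition $0<x<\fmin$ is equivalent to $-\fmin<-x<0$, and there $\rounddown{x}=+0=-(-0)=-\roundup{-x}$; in the remaining region the substitution $w=-z$ is an order-reversing bijection of $\Fset$ onto itself, so $\max\{z\in\Fset\mid z\leq x\}=-\min\{w\in\Fset\mid w\geq -x\}$, i.e.\ $\rounddown{x}=-\roundup{-x}$. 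The whole argument is routine bookkeeping; the only point demanding a little care is keeping the signed zeros and the boundary points $x=\pm\fmax$, $x=\pm\fmin$ consistent between the two definitions, but no genuine difficulty appears.
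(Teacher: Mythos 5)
Your proof is correct and follows essentially the same route as the paper's: a direct case analysis on the branches of Definition~\ref{def:rounding-functions} for \eqref{round-properties:first} and \eqref{round-properties:fourth}, with \eqref{round-properties:second} and \eqref{round-properties:third} reduced to \eqref{round-properties:first} by observing that $\roundzero{x}$ and $\roundnear{x}$ always coincide with one of $\rounddown{x}$, $\roundup{x}$. Your handling of \eqref{round-properties:third} is in fact slightly more economical than the paper's branch-by-branch verification, but it is the same underlying argument.
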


In this paper, we use intervals of floating-point numbers in
$\Fset$ that are not NaNs.
\begin{definition} \summary{(Floating-point intervals.)}
Let $\Fset$ be any IEEE~754 floating-point format.
The set $\cI_\Fset$ of floating-point intervals with boundaries in $\Fset$
is given by
\[
  \cI_\Fset
    \defeq
      \{ \emptyset \}
        \union
          \bigl\{\,
            [\ell, u]
          \bigm|
            \ell, u \in \Fset, \ell \sleq u
          \,\bigr\}.
\]
By $[\ell, u]$ we denote the set
$\{\, x \in \Fset \mid \ell \sleq x \sleq u \,\}$.
The set $\cI_\Fset$ is a bounded meet-semilattice with least element $\emptyset$,
greatest element $[-\infty, +\infty]$, and the meet operation, which is
induced by set-intersection, will be simply denoted by $\mathord{\inters}$.
\end{definition}

Floating-point intervals with boundaries in $\Fset$ allow to capture
the extended numbers in $\Fset$: NaNs should be tracked separately.

%%% Local Variables:
%%% mode: latex
%%% TeX-master: "main"
%%% End:

\section{Rounding Modes and Rounding Errors}
\label{sec:rounding-modes-and-rounding-errors}

The IEEE~754 standard for floating-point arithmetic
introduces different rounding operators, among which
the user can choose on compliant platforms.
The rounding mode in use affects the results of the
floating-point computations performed, and it must be therefore
taken into account during constraint propagation.
In this section, we present some abstractions aimed at
facilitating the treatment of rounding modes
in our constraint projection algorithms.

\subsection{Dealing with Uncertainty on the Rounding Mode in Effect}

Even if programs that change the rounding mode in effect are quite rare,
whenever this happens,  the rounding mode in effect at each program point
cannot be known precisely.  So, for a completely general treatment
of the problem, such as the one we are proposing, our choice is to consider
a \emph{set} of possible rounding modes.  To this aim, in this section we define
two auxilliary functions that, given a set of rounding modes possibly
in effect, select a worst-case rounding mode that ensures soundness
of interval propagation.
Soundness is guaranteed even if the rounding mode used in the
actual computation differs from the one selected,
as far as the former is contained in the set.
Of course, if a program never changes the rounding mode,
the set of possible rounding modes boils down to be a singleton.

The functions presented in the first definition select the rounding modes that
can be used to compute the lower (function $r_\ell$) and upper (function $r_u$) bounds
of an operation in case of direct projections.

\begin{definition} \summary{(Rounding mode selectors for direct projections.)}
\label{def:rounding-mode-selectors}
Let $\Fset$ be any IEEE~754 floating-point format and
$S \sseq R$ be a set of rounding modes.
Let also $y, z \in \Fset$
and
\(
  \mathord{\mop}
   \in
     \{ \mathord{\madd}, \mathord{\msub}, \mathord{\mmul}, \mathord{\mdiv} \}
\)
be such that either $\mathord{\mop} \neq \mathord{\mdiv}$ or $z \neq 0$.
Then
\begin{align*}
   r_\ell(S, y, \mathord{\mop}, z)
     &\defeq
       \begin{cases}
         \rdown, &\text{if $\rdown \in S$;} \\
         \rdown, &\text{if $\rzero \in S$
                        and $y \circ z > 0$;} \\
         \rnear, &\text{if $\rnear \in S$;} \\
         \rup,   &\text{otherwise;}
      \end{cases} \\
   r_u(S, y, \mathord{\mop}, z)
     &\defeq
       \begin{cases}
         \rup,   &\text{if $\rup \in S$;} \\
         \rup,   &\text{if $\rzero \in S$
                        and $y \circ z \leq 0$;} \\
         \rnear, &\text{if $\rnear \in S$;} \\
         \rdown, &\text{otherwise.}
      \end{cases} \\
\end{align*}
\end{definition}

The following functions select the rounding modes that will be used for
the lower (functions $\bar{r}_\ell^r$ and $\bar{r}_\ell^\ell$)
and upper (functions $\bar{r}_u^r$ and $\bar{r}_u^\ell$)
bounds of an operation when computing inverse projections.
Note that there are different functions depending on which one of the
two operands is being projected:
$\bar{r}_\ell^r$ and $\bar{r}_u^r$ for the right one,
$\bar{r}_\ell^\ell$ and $\bar{r}_u^\ell$ for the left one.

\begin{definition} \summary{(Rounding mode selectors for inverse projections.)}
\label{def:rounding-mode-selectors-inverse}
Let $\Fset$ be any IEEE~754 floating-point format and
$S \sseq R$ be a set of rounding modes.
Let also $a, b \in \Fset$
and
\(
  \mathord{\mop}
   \in
     \{ \mathord{\madd}, \mathord{\msub}, \mathord{\mmul}, \mathord{\mdiv} \}.
\)
First, we define
\begin{align*}
   \hat{r}_\ell(S, \mop, b)
     &\defeq
       \begin{cases}
         \rup,   &\text{if $\rup \in S$;} \\
         \rup,   &\text{if $\rzero \in S$
                        and $b \sleq -0$,
                        or $b = +0$ and
                        $\mathord{\mop} \in \{\mathord{\madd}, \mathord{\msub}\}$;} \\
         \rnear, &\text{if $\rnear \in S$;} \\
         \rdown, &\text{otherwise;}
      \end{cases} \\
   \hat{r}_u(S, b)
     &\defeq
       \begin{cases}
         \rdown, &\text{if $\rdown \in S$;} \\
         \rdown, &\text{if $\rzero \in S$
                        and $b \sgeq +0$;} \\
         \rnear, &\text{if $\rnear \in S$;} \\
         \rup,   &\text{otherwise.}
       \end{cases}
\end{align*}
Secondly, we define the following selectors:
\begin{align*}
   \bigl(
    \bar{r}_\ell^\ell(S, b, \mop, a),
    \bar{r}_u^\ell(S, b, \mop, a)
   \bigr)
   &\defeq
     \begin{cases}
       \bigl(
        \hat{r}_\ell(S, \mop, b),
        \hat{r}_u(S, b)
       \bigr),
       &\text{if $\mop \in \{\madd, \msub\}$} \\
       &\text{ or $\mop \in \{\mmul, \mdiv\} \land a \sgeq +0$;} \\
       \bigl(
        \hat{r}_u(S, b),
        \hat{r}_\ell(S, \mop, b)
       \bigr),
       &\text{if \(
                   \mop \in \{\mmul, \mdiv\}
                   \land a \sleq -0
                 \);} \\
     \end{cases} \\
   \bigl(
    \bar{r}_\ell^r(S, b, \mop, a),
    \bar{r}_u^r(S, b, \mop, a)
   \bigr)
   &\defeq
     \begin{cases}
       \bigl(
        \hat{r}_\ell(S, \mop, b),
        \hat{r}_u(S, b)
       \bigr),
       &\text{if $\mop = \madd$,} \\
       &\text{ or $\mop = \mmul \land a \sgeq +0$,} \\
       &\text{ or $\mop = \mdiv \land a \sleq -0$;} \\
       \bigl(
        \hat{r}_u(S, b),
        \hat{r}_\ell(S, \mop, b)
       \bigr),
       &\text{if $\mop = \msub$,} \\
       &\text{ or $\mop = \mmul \land a \sleq -0$,} \\
       &\text{ or $\mop = \mdiv \land a \sgeq +0$.}
     \end{cases}
\end{align*}
\end{definition}

The usefulness in interval propagation of the functions presented above
will be clearer after considering Proposition~\ref{prop:worst-case-rounding-modes}.
Moreover, it is worth noting that, if the set of possible rounding modes
is composed by a unique rounding mode,
then all the previously defined functions return such rounding mode itself.
In that case, the claims of Proposition~\ref{prop:worst-case-rounding-modes}
trivially hold.

\begin{proposition}
\label{prop:worst-case-rounding-modes}
Let $\Fset$, $S$, $y$, $z$ and `$\mathord{\mop}$' be
as in \textup{Definition~\ref{def:rounding-mode-selectors}}.
Let also
$r_\ell = r_\ell(S, y, \mathord{\mop}, z)$
and
$r_u = r_u(S, y, \mathord{\mop}, z)$.
Then, for each $r \in S$
\begin{equation}
\label{eq:worst-case-rm-direct-between}
  y \mop_\mathrm{r_\ell} z
    \sleq
  y \mop_\mathrm{r} z
    \sleq
  y \mop_\mathrm{r_u} z.
\end{equation}
Moreover, there exist $r', r'' \in S$ such that
\begin{equation}
\label{eq:worst-case-rm-direct-equal}
  y \mop_\mathrm{r_\ell} z = y \mop_\mathrm{r'} z
  \quad\text{and}\quad
  y \mop_\mathrm{r_u} z = y \mop_\mathrm{r''} z.
\end{equation}

Now, consider $x = y \mop_\mathrm{r} z$
with $x, z \in \Fset$ and $r \in S$.
Let
$\bar{r}_\ell = \bar{r}_\ell^\ell(S, x_u, \mop, z)$
and
$\bar{r}_u = \bar{r}_u^\ell(S, x_\ell, \mop, z)$,
according to
\textup{Definition~\ref{def:rounding-mode-selectors-inverse}}.
Moreover, let
$\hat{y}'$
be the minimum $y' \in \Fset$ such that
$x = y' \mop_\mathrm{\bar{r}_\ell} z$,
and let
$\tilde{y}''$
be the maximum $y'' \in \Fset$ such that
$x = y'' \mop_\mathrm{\bar{r}_u} z$.
Then, the following inequalities hold:
\[
  \hat{y}' \sleq y \sleq \tilde{y}''.
\]
The same result holds if
$x = z \mop_\mathrm{r} y$,
with
$\bar{r}_\ell = \bar{r}_\ell^r(S, x_u, \mop, z)$
and
$\bar{r}_u = \bar{r}_u^r(S, x_\ell, \mop, z)$.
\end{proposition}
\begin{proof}
Here we only prove the claims for direct projections
(namely, \eqref{eq:worst-case-rm-direct-between} and
\eqref{eq:worst-case-rm-direct-equal}), leaving those concerning indirect projections,
which are analogous, to Appendix~\ref{sec:proofs-rounding-modes-and-rounding-errors}.

First, we observe that, for each $x, y, z \in \Fset$,
we have $\roundnear{y \circ z} = \rounddown{y \circ z}$,
or $\roundnear{y \circ z} = \roundup{y \circ z}$ or both.
Then we prove that, for each $x, y, z \in \Fset$, we have
$y \mop_\rdown z \sleq y \mop_\rnear z \sleq y \mop_\rup z$.
We distinguish between the following cases, depending on $y \circ z$:
\begin{description}

\item[$y \circ z = +\infty \;\lor\; y \circ z = -\infty:$]
in this case we have $y \mop_\rdown z = y \mop_\rnear z = y \mop_\rup z$
and thus $y \mop_\rdown z\sleq y \mop_\rnear z \sleq y \mop_\rup z$
holds.

\item[$y \circ z\leq -\fmin \;\lor\; y \circ z\geq \fmin:$]
in this case we have, by Proposition~\ref{prop:round-properties},
\(
       y \mop_\rdown z
     = \rounddown{y \circ z}
  \leq \roundnear{y \circ z}
     = y \mop_\rnear z
  \leq \roundup{y \circ z}
     = y \mop_\rup z
\);
as $y \mop_\rdown z \neq 0$, $y \mop_\rnear z \neq 0$ and $y \mop_\rup z \neq 0$,
the numerical order is reflected into the symbolic order to give
$y \mop_\rdown z \sleq y \mop_\rnear z \sleq y \mop_\rup z$.

\item[$-\fmin < y \circ z < 0:$]
in this case we have
$y \mop_\rdown z = -\fmin \leq y \mop_\rnear z \leq y \mop_\rup z =-0$
by Definition~\ref{def:rounding-functions};
since either $\roundnear{y \circ z} = -\fmin$
or $\roundnear{y \circ z} = -0$, we have$\roundnear{y \circ z} \neq +0$,
thus $y \mop_\rdown z \sleq y \mop_\rnear z \sleq y \mop_\rup z$.

\item[$0 < y \circ z<\fmin:$]
in this case we have
$y \mop_\rdown z = +0 \leq y \mop_\rnear z \leq y \mop_\rup z = \fmin$
by Definition~\ref{def:rounding-functions};
again, since either $\roundnear{y \circ z} = +0$
or $\roundnear{y \circ z} = \fmin$ we know that
$\roundnear{y \circ z} \neq -0$, and thus
$y \mop_\rdown z\sleq y \mop_\rnear z \sleq y \mop_\rup z$.

\item[$y \circ z = 0:$]
in this case, for multiplication and division the result is the same
for all rounding modes, i.e., $+0$ or $-0$ depending on the sign of the
arguments \cite[Section~6.3]{IEEE-754-2008};
for addition or subtraction we have
$y \mop_\rdown z \neq -0$ while $y \mop_\rnear z = y \mop_\rup z = +0$;
hence, also in this case,
$y \mop_\rdown z \sleq y \mop_\rnear z \sleq y \mop_\rup z$ holds.
\end{description}

Note now that, by Definition~\ref{def:rounding-functions},
if $y \circ z > 0$ then $y \mop_\rzero z = y \mop_\rdown z$ whereas,
if $y \circ z > 0$, then $y \mop_\rzero z = y \mop_\rup z$.
Therefore we can conclude that:
\begin{itemize}
\item if $y \circ z > 0$, then
\(
               y \mop_\rdown z
             = y \mop_\rzero z
  \sleq y \mop_\rnear z
  \sleq y \mop_\rup z
\)
while,
\item if $y \circ z < 0$, then
\(
               y \mop_\rdown z
  \sleq y \mop_\rnear z
  \sleq y \mop_\rzero z
             = y \mop_\rup z;
\)
moreover,
\item if $y \circ z = 0$ and $\circ \notin \{ \mathord{+}, \mathord{-} \}$, then
\(
    y \mop_\rdown z
  = y \mop_\rnear z
  = y \mop_\rzero z
  = y \mop_\rup z
\)
while,
\item if $y \circ z = 0$ and $\circ \in \{ \mathord{+}, \mathord{-} \}$, then
\(
               y \mop_\rdown z
  \sleq y \mop_\rnear z
             = y \mop_\rzero z
             = y \mop_\rup z.
\)
\end{itemize}
In order to prove inequality~\eqref{eq:worst-case-rm-direct-between},
it is now sufficient to consider all possible sets $S \sseq R$
and use the relations above.

For claim~\eqref{eq:worst-case-rm-direct-equal},
observe that $r_\ell(S, y, \mathord{\mop}, z) \in S$ for any combination
of rounding modes in $S$ except for one case:
that is when $y \circ z > 0$, and $\rzero \in S$ but $\rdown \notin S$.
In this case, however, by Definition~\ref{def:rounding-functions},
$y \mop_\rdown z = y \mop_\rnear z$.
Similarly, $r_u(S, y, \mathord{\mop}, z)\in S$ except for the case when
$y \circ z \leq 0$, $\rzero \in S$ but $\rup \notin S$.
First, assume that $y \circ z < 0$: in this case, by Definition~\ref{def:rounding-functions},
$y \mop_\rup z = y \mop_\rnear z$.
For the remaining case, that is $y \circ z = 0$, we observe that
for multiplication and division the result is the same
for all rounding modes \cite[Section~6.3]{IEEE-754-2008}, while
for addition or subtraction we have
$y \mop_\rzero z =  y \mop_\rnear z = y \mop_\rup z = +0$.
\end{proof}

Thanks to Proposition~\ref{prop:worst-case-rounding-modes} we need not
be concerned with sets of rounding modes, as any such set $S \sseq R$
can always be mapped to a pair of ``worst-case rounding modes'' which,
in addition, are never round-to-zero.
Therefore, projection functions can act as if the only possible rounding mode
in effect was the one returned by the selection functions,
greatly simplifying their logic.
For example, consider the constraint $x = y \mop_S z$, meaning
``$x$ is obtained as the result of $y \mop_r z$
for some $r \in S$.''
Of course, $x = y \mop_S z$ implies $x \sgeq y \mop_S z$
and $x \sleq y \mop_S z$, which,
by Proposition~\ref{prop:worst-case-rounding-modes},
imply $x \sgeq y \mop_{r_\ell} z$
and $x \sleq y \mop_{r_u} z$,
where
$r_\ell \defeq r_\ell(S, y, \mathord{\mop}, z)$
and
$r_u \defeq r_u(S, y, \mathord{\mop}, z)$.
The results obtained by projection functions that only consider
$r_\ell$ and $r_u$ are consequently valid for any $r \in S$.

\subsection{Rounding Errors}

For the precise treatment of all rounding modes it is useful to introduce
a notation that expresses, for each floating-point number $x$, the maximum
error that has been committed by approximating with $x$ a real number
under the different rounding modes (as shown in the previous section,
we need not be concerned with round-to-zero).

\begin{definition} \summary{(Rounding Error Functions.)}
\label{def:rounding-error-functions}
The partial functions
$\pard{\ferrup}{\Fset}{\extRset}$,
$\pard{\ferrdown}{\Fset}{\extRset}$,
$\pard{\ftwiceerrnearneg}{\Fset}{\extRset}$ and
$\pard{\ftwiceerrnearpos}{\Fset}{\extRset}$
are defined as follows,
for each $x \in \Fset$ that is not a NaN:
\begin{align}
  \ferrdown(x)
      &=
        \begin{cases}
          \text{undefined},
            &\text{if $x = +\infty$;} \\
          \fsucc(x) - x,
            &\text{otherwise;}
        \end{cases} \\
    \ferrup(x)
      &=
        \begin{cases}
          \text{undefined},
            &\text{if $x = -\infty$;} \\
          \fpred(x) - x,
            &\text{otherwise;}
        \end{cases} \\
  \ftwiceerrnearneg(x)
  \label{eq:def-ftwiceerrnearneg}
      &=
        \begin{cases}
          +\infty
            &\text{if $x = -\infty$;} \\
          x - \fsucc(x),
            &\text{if $x = -\fmax$;} \\
          \fpred(x) - x,
            &\text{otherwise;}
        \end{cases} \\
  \ftwiceerrnearpos(x)
      &=
        \begin{cases}
          -\infty,
            &\text{if $x = +\infty$;} \\
          x - \fpred(x),
            &\text{if $x = \fmax$;} \\
          \fsucc(x) - x,
            &\text{otherwise.}
        \end{cases}
\end{align}
\end{definition}

An interesting observation is that the values of the functions
introduced in Definition~\ref{def:rounding-error-functions} are
always representable in $\Fset$ and thus their computation
does not require extra-precision, something that, as we shall see,
is exploited in the implementation.  This is the reason why,
for round-to-nearest, $\ftwiceerrnearneg$ and $\ftwiceerrnearpos$
have been defined as \emph{twice} the approximation error bounds:
the absolute value of the bounds themselves, being $\fmin/2$,
is not representable in $\Fset$ for each $x \in \Fset$ such
that $|x| \leq \fnormin$.

When the round-to-nearest rounding mode is in effect,
Proposition~\ref{prop:min-max-x-ftwiceerrnearneg-ftwiceerrnearpos}
relates the bounds of a floating-point interval $[x_\ell,x_u]$
with those of the corresponding interval of $\extRset$ it represents.

\begin{proposition}
\label{prop:min-max-x-ftwiceerrnearneg-ftwiceerrnearpos}
Let $x_\ell, x_u \in \Fset \inters \Rset$.
Then
\begin{align}
\label{eq:min-x-plus-ftwiceerrnearneg}
  \min_{x_\ell \leq x \leq x_u} \bigl(x + \ftwiceerrnearneg(x)/2\bigr)
    = x_\ell + \ftwiceerrnearneg(x_\ell)/2, \\
\label{eq:max-x-plus-ftwiceerrnearpos}
  \max_{x_\ell \leq x \leq x_u} \bigl(x + \ftwiceerrnearpos(x)/2\bigr)
    = x_u + \ftwiceerrnearpos(x_u)/2.
\end{align}
\end{proposition}
\begin{proof}[sketch]
To prove \eqref{eq:min-x-plus-ftwiceerrnearneg},
we separately consider the two cases defined by \eqref{eq:def-ftwiceerrnearneg}.

If $x_\ell = -\fmax$, we prove that
$x_\ell + \ftwiceerrnearneg(x_\ell) / 2 = -2^{\emax}(2 - 2^{-p})$,
while for all $x_\ell < x \leq x_u$ we have
$x + \ftwiceerrnearneg(x) / 2 = (x + \fpred(x)) / 2$.
By monotonicity of `$\fpred$', the minimum value of $(x + \fpred(x)) / 2$
occurs when $x = \fsucc(-\fmax)$, and
\[
  (\fsucc(-\fmax) - \fmax) / 2
  = -2^{\emax}(2 - 2^{-p})
  = x_\ell + \ftwiceerrnearneg(x_\ell) / 2,
\]
which proves \eqref{eq:min-x-plus-ftwiceerrnearneg} in this case.

If, instead, $x_\ell > -\fmax$, applying monotonicity of `$\fpred$' suffices.

The full proof is in Appendix~\ref{sec:proofs-rounding-modes-and-rounding-errors},
together with the one of \eqref{eq:max-x-plus-ftwiceerrnearpos}, which is symmetric.
\end{proof}

\subsection{Real Approximations of Floating-Point Constraints}
\label{se:RealApprox}

In this section we show how inequalities of the form
$x \sgeq y \mop_r z$ and $x \sleq y \mop_r z$,
with $r \in \{ \rdown, \rup, \rnear \}$ can be reflected
on the reals.
Indeed, it is possible to
algebraically manipulate constraints on the reals so as to
numerically bound the values of floating-point quantities.
The results of this and of the next section will be useful
in designing inverse projections.

\begin{proposition}
\label{prop:real-approx-of-fp-constraints}
The following implications hold, for each $x, y, z \in \Fset$
such that all the involved expressions do not evaluate to NaN,
for each floating-point operation
\(
  \mathord{\mop}
    \in
      \{ \mathord{\madd}, \mathord{\msub}, \mathord{\mmul}, \mathord{\mdiv} \}
\)
and the corresponding extended real operation
$\circ \in \{ \mathord{+}, \mathord{-}, \mathord{\cdot}, \mathord{/} \}$,
where the entailed inequalities are to be interpreted over $\extRset$:
\begin{align}
\label{eq:real-approx-of-fp-constraints:1}
  x \sleq y \mop_\rdown z
    &\implies
      x \leq y \circ z;
\intertext{%
moreover, if $x \neq -\infty$,}
      \label{eq:real-approx-of-fp-constraints:2}
  x \sleq y \mop_\rup z
    &\implies
      x + \ferrup(x) < y \circ z; \\
           \label{eq:real-approx-of-fp-constraints:3}
  x \sleq y \mop_\rnear z
    &\implies
      \begin{cases}
        x + \ftwiceerrnearneg(x)/2 \leq y \circ z,
          &\text{if $\feven(x)$ or $x = +\infty$;} \\
        x + \ftwiceerrnearneg(x)/2 < y \circ z
          &\text{if $\fodd(x)$;}
      \end{cases} \\
\intertext{%
conversely,
}
\label{eq:real-approx-of-fp-constraints:4}
  x \sgeq y \mop_\rdown z
    &\implies
      x + \ferrdown(x) > y \circ z; \\
\intertext{%
moreover, if $x \neq +\infty$,}
      \label{eq:real-approx-of-fp-constraints:5}
  x \sgeq y \mop_\rup z
    &\implies
      x \geq y \circ z; \\
      \label{eq:real-approx-of-fp-constraints:6}
  x \sgeq y \mop_\rnear z
    &\implies
      \begin{cases}
        x + \ftwiceerrnearpos(x)/2 \geq y \circ z,
          &\text{if $\feven(x)$ or $x = -\infty$;} \\
        x + \ftwiceerrnearpos(x)/2 > y \circ z,
          &\text{if $\fodd(x)$.}
      \end{cases}
\end{align}
\end{proposition}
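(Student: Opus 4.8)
The plan is to reduce each of the six implications to a property of the IEEE~754 rounding functions of Definition~\ref{def:rounding-functions}, which is then read off from their piecewise definitions together with those of $\fsucc$, $\fpred$ and the error functions of Definition~\ref{def:rounding-error-functions}.

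\emph{Common reduction.} Fix the operation $\mop$ with real counterpart $\circ$ and put $v \defeq y \circ z$, understood in $\extRset$. Since no involved floating-point expression is a NaN, $v$ is a well-defined extended real, and the correct-rounding guarantee of IEEE~754 gives $y \mop_r z = \round{r}{v}$ whenever $v \in \Rset \setdiff \{0\}$ --- note that the $\round{r}{\cdot}$ of Definition~\ref{def:rounding-functions} already encode overflow to $\pm\infty$ --- while $y \mop_r z$ is a signed zero (numerically $0$) when $v = 0$ and $y \mop_r z = v$ when $v \in \{-\infty, +\infty\}$. From the hypothesis $x \sleq y \mop_r z$ (resp.\ $x \sgeq y \mop_r z$) and the fact that $\sleq$ refines the numeric order we already obtain $x \leq y \mop_r z$ (resp.\ $x \geq y \mop_r z$); the task is to sharpen this into the stated bound on $v$. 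In \eqref{eq:real-approx-of-fp-constraints:2} and \eqref{eq:real-approx-of-fp-constraints:4} the hypothesis on $x$, together with definedness of the error term, keeps $x$ finite; in \eqref{eq:real-approx-of-fp-constraints:3} and \eqref{eq:real-approx-of-fp-constraints:6} the generic argument assumes $x$ finite, and the explicitly listed infinite cases ($x = +\infty$, resp.\ $x = -\infty$) are checked directly --- there $x \sleq y \mop_\rnear z$ (resp.\ $\sgeq$) forces $y \mop_\rnear z = x$, reducing the claim to a statement about the overflow threshold in the definition of $\roundnear{\cdot}$.

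\emph{Directed rounding.} Implication \eqref{eq:real-approx-of-fp-constraints:1} follows because $\rounddown{v} \leq v$ for every nonzero real $v$, so $x \sleq \rounddown{v}$ gives $x \leq \rounddown{v} \leq v$ (the cases $v = 0$ and $v = \pm\infty$ being trivial). For \eqref{eq:real-approx-of-fp-constraints:2}, with $x$ finite we have $\ferrup(x) = \fpred(x) - x$, so the claim reads $\fpred(x) < v$: I would split on $x \sleq \roundup{v}$, noting that if $x = \roundup{v}$ then $\fpred(x) = \fpred(\roundup{v}) < v$ because $\roundup{v}$ is the least float $\geq v$, whereas if $x \slt \roundup{v}$ then, in the linear order on non-NaN floats, $x \sleq \fpred(\roundup{v})$, whence $\fpred(x) < x \leq \fpred(\roundup{v}) < v$; the signed-zero subcase ($\roundup{v} = -0$, i.e.\ $-\fmin < v < 0$) and the infinite subcases are checked by hand. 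Implications \eqref{eq:real-approx-of-fp-constraints:4} and \eqref{eq:real-approx-of-fp-constraints:5} are the mirror images of \eqref{eq:real-approx-of-fp-constraints:2} and \eqref{eq:real-approx-of-fp-constraints:1} under $t \mapsto -t$: one uses $\rounddown{-t} = -\roundup{t}$, $\ferrdown(x) = -\ferrup(-x)$, $\fsucc(x) = -\fpred(-x)$, and that $-(y \mop_r z)$ equals the same operation applied to negated operand(s), so no fresh argument is needed.

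\emph{Round to nearest.} Here $x$ is finite (infinite cases as above), and by the definition of $\ftwiceerrnearneg$ the quantity $m \defeq x + \ftwiceerrnearneg(x)/2$ is exactly the midpoint $(x + \fpred(x))/2$ between $x$ and its predecessor, the boundary value $x = -\fmax$ producing the correct threshold $-2^{\emax}(2 - 2^{-p})$. The crux is the round-half-to-even rule: monotonicity of $\roundnear{\cdot}$ (immediate from its definition) together with its tie-break at $m$ shows that $\roundnear{v} \sgeq x$ forces $v \geq m$ when $\feven(x)$ (the tie at $m$ rounds up to the even value $x$, so equality is attainable) and forces $v > m$ when $\fodd(x)$ (the tie at $m$ rounds down to $\fpred(x) \slt x$); combined with the common reduction this is \eqref{eq:real-approx-of-fp-constraints:3}. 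Implication \eqref{eq:real-approx-of-fp-constraints:6} follows by the same $t \mapsto -t$ symmetry, with $\ftwiceerrnearpos$ and $\fsucc$ in place of $\ftwiceerrnearneg$ and $\fpred$ and using $\feven(-x) = \feven(x)$. The main obstacle is exactly this round-to-nearest analysis: one must keep track of the parity-dependent strictness of the conclusions and patiently verify every boundary subcase ($v$ zero or overflowing, $x$ a signed zero, $x = \pm\fmax$, $x = \pm\infty$) against the piecewise definitions of $\roundnear{\cdot}$ and of the error functions.
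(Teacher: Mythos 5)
Your proposal is correct, and for the forward half it follows essentially the same route as the paper: reduce each implication to a property of the rounding functions via $x \sleq y \mop_r z \implies x \leq y \mop_r z$, use $\rounddown{v} \leq v$ for \eqref{eq:real-approx-of-fp-constraints:1}, the minimality of $\roundup{v}$ (equivalently, the contrapositive ``if $\fpred(x) \geq v$ then $\roundup{v} \sleq \fpred(x) \slt x$'') for \eqref{eq:real-approx-of-fp-constraints:2}, and the midpoint-plus-tie-break analysis for \eqref{eq:real-approx-of-fp-constraints:3}. Two organizational differences are worth noting. First, your round-to-nearest step re-derives inline what the paper factors out as a separate lemma (its Proposition~\ref{prop:rounding-error-functions}, bounding $r - \roundnear{r}$ between $\ftwiceerrnearneg(\roundnear{r})/2$ and $\ftwiceerrnearpos(\roundnear{r})/2$, strictly when $\roundnear{r}$ is odd); the content is identical. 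Second, you obtain \eqref{eq:real-approx-of-fp-constraints:4}--\eqref{eq:real-approx-of-fp-constraints:6} by the negation symmetry $t \mapsto -t$, whereas the paper proves all six implications by direct case analysis. Your shortcut is sound and shorter, but note one point of care it needs: the identity $-(y \mop_r z) = (-y) \mop_{\tilde r} (-z)$ (with $\tilde\rdown = \rup$, etc.) fails for exactly-zero sums under round-to-nearest, where the left side is $-0$ and the right side is $+0$; what actually holds, and what suffices to transport the hypothesis $x \sgeq y \mop_r z$ to $-x \sleq (-y) \mop_{\tilde r} (-z)$, is the one-sided relation $-(y \mop_r z) \sleq (-y) \mop_{\tilde r} (-z)$. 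Since the conclusions are numeric inequalities over $\extRset$, where the signed zeros coincide, this does not affect the result, but the symbolic-order step should be stated in this weaker form. With that caveat, and with the deferred boundary subcases ($v$ zero or overflowing, $x$ a signed zero or $\pm\fmax$) filled in as you indicate, the argument is complete.
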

The proof of Proposition~\ref{prop:real-approx-of-fp-constraints}
is carried out by applying the inequalities of
Proposition~\ref{prop:round-properties} to each rounded operation,
resulting in a quite long case analysis.
It can be found in Appendix~\ref{sec:proofs-rounding-modes-and-rounding-errors}.

\subsection{Floating-Point Approximations of Constraints on the Reals}

In this section, we show how possibly complex constraints involving
floating-point operations can be approximated directly
using floating-point computations, without necessarily using
infinite-precision arithmetic.

Without being too formal, let us consider the domain $E_\Fset$ of abstract
syntax trees with leafs labelled by constants in $\Fset$ and internal
nodes labeled with a symbol in
$\{ \mathord{+}, \mathord{-}, \mathord{\cdot}, \mathord{/} \}$
denoting an operation on the reals.
While developing propagation algorithms, it is often
necessary to deal with inequalities between real numbers
and expressions described by such syntax trees.
In order to successfully approximate them using the available
floating-point arithmetic, we need two functions:
$\fund{\evaldown{\cdot}}{E_\Fset}{\Fset}$ and
$\fund{\evalup{\cdot}}{E_\Fset}{\Fset}$.
These functions provide an abstraction
of evaluation algorithms that: (a) respect the indicated approximation
direction; and (b) are as precise as practical.
Point (a) can always be achieved by substituting the real operations
with the corresponding floating-point operations rounded in the right
direction.  For point (b), maximum precision can trivially be achieved
whenever the expression involves only one operation; generally speaking,
the possibility of efficiently computing a maximally precise
(i.e., correctly rounded) result
depends on the form of the expression (see, e.g., \cite{KornerupLLM09}).

\begin{definition} \summary{(Evaluation functions.)}
\label{evaluation functions}
The two partial functions $\pard{\evaldown{\cdot}}{E_\Fset}{\Fset}$
and $\pard{\evalup{\cdot}}{E_\Fset}{\Fset}$ are such that,
for each $e \in \Fset$ that evaluates on $\extRset$ to a nonzero value,
\begin{align}
  \evaldown{e} &\sleq \rounddown{e}, \\
  \evalup{e}   &\sgeq \roundup{e}.
\end{align}
\end{definition}

\begin{proposition}
\label{prop:fp-approx-of-real-constraints}
Let $x \in \Fset$ be a non-NaN floating point number and
$e \in E_\Fset$ an expression that evaluates on $\extRset$ to a nonzero value.
The following implications hold:
\begin{align}
\label{prop:fp-approx-of-real-constraints:1}
  x \geq e
    &\implies
      x \sgeq \evaldown{e}; \\
\label{prop:fp-approx-of-real-constraints:2}
  \text{if $\evaldown{e} \neq +\infty$, }
  x > e
    &\implies
      x \sgeq \fsucc\bigl(\evaldown{e}\bigr); \\
\label{prop:fp-approx-of-real-constraints:3}
  x \leq e
    &\implies
      x \sleq \evalup{e}; \\
\label{prop:fp-approx-of-real-constraints:4}
  \text{if $\evaldown{e} \neq -\infty$, }
  x < e
    &\implies
      x \sleq \fpred\bigl(\evalup{e}\bigr). \\
\intertext{%
In addition, if $\fpred\bigl(\evalup{e}\bigr) < e$
(or, equivalently, $\evalup{e} = \roundup{e}$)
we also have}
\label{case:x-geq-e-implies-x-sgeq-evalup-e}
  x \geq e
    &\implies
      x \sgeq \evalup{e};
\intertext{%
likewise, if $\fsucc\bigl(\evaldown{e}\bigr) > e$
(or, equivalently, $\evaldown{e} = \rounddown{e}$)
we have}
\label{case:x-leq-e-implies-x-sleq-evaldown-e}
  x \leq e
    &\implies
      x \sleq \evaldown{e}.
\end{align}
\end{proposition}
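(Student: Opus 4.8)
The plan is to reduce the six implications to one \emph{lifting observation} about the symbolic order together with a handful of elementary facts. The facts, all read off Definitions~\ref{def:rounding-functions} and~\ref{def:floating-point-predecessors-and-successors} branch by branch (adopting the conventions $\rounddown{\pm\infty}=\roundup{\pm\infty}=\pm\infty$), are: (i) $\evaldown{e}\sleq\rounddown{e}$ and $\evalup{e}\sgeq\roundup{e}$ (Definition~\ref{evaluation functions}); (ii) $\rounddown{v}\sleq\roundup{v}$, with $\rounddown{v}\leq v\leq\roundup{v}$ numerically, for every nonzero $v$; (iii) $\rounddown{v}\neq-0$, $\roundup{v}\neq+0$, and $\fsucc(w)\neq+0$, $\fpred(w)\neq-0$ on their domains; and (iv) $\fsucc$ and $\fpred$ are monotone for `$\sleq$' where defined. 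The lifting observation I would prove first: for non-NaN $a,b\in\Fset$, if $a\geq b$ numerically and $(a,b)\neq(-0,+0)$ then $a\sgeq b$; symmetrically, $a\leq b$ numerically and $(a,b)\neq(+0,-0)$ imply $a\sleq b$. This is a short case analysis on the clauses defining `$\slt$': since $a\slt b$ entails $a\leq b$ numerically, an $a\geq b$ that fails to give $a\sgeq b$ forces $a\equiv b\equiv0$, i.e.\ $(a,b)=(-0,+0)$.

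Next comes the \emph{core claim}: for non-NaN $x\in\Fset$ and $v\in\extRset\setminus\{0\}$, (a) $x\geq v$ numerically implies $x\sgeq\roundup{v}$, hence also $x\sgeq\rounddown{v}$; (b) dually, $x\leq v$ implies $x\sleq\rounddown{v}$, hence $x\sleq\roundup{v}$; (c) if $\rounddown{v}\neq+\infty$, then $x>v$ implies $x\sgeq\fsucc(\rounddown{v})$; (d) if $\roundup{v}\neq-\infty$, then $x<v$ implies $x\sleq\fpred(\roundup{v})$. For (a): inspecting its definition, $\roundup{v}$ is the smallest floating-point number that is $\geq v$ numerically (including the degenerate branches $\roundup{v}=-0$ for $-\fmin<v<0$ and $\roundup{v}=+\infty$ for $v>\fmax$), so a float $x\geq v$ already has $x\geq\roundup{v}$ numerically; since $\roundup{v}\neq+0$ (fact (iii)), the lifting observation gives $x\sgeq\roundup{v}$, and then $x\sgeq\roundup{v}\sgeq\rounddown{v}$ by fact (ii). For (c), distinguish whether $v$ equals the value of some float: if not, there is no float strictly between $\rounddown{v}$ and $v$, so $\fsucc(\rounddown{v})=\roundup{v}$ and (a) applies; if $v$ is the value of a (necessarily nonzero) float $f$, then $\rounddown{v}=f$, any float $x>f$ numerically satisfies $x\geq\fsucc(f)$ numerically, and the lifting observation applies since $\fsucc(f)\neq+0$. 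Items (b) and (d) are mirror images.

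Assembly is then routine. Implications~\eqref{prop:fp-approx-of-real-constraints:1} and~\eqref{prop:fp-approx-of-real-constraints:3} are core-claim (a), (b) composed with $\evaldown{e}\sleq\rounddown{e}\sleq\roundup{e}\sleq\evalup{e}$. For~\eqref{prop:fp-approx-of-real-constraints:2}: if $e$ evaluates to $+\infty$ the hypothesis $x>e$ is unsatisfiable; otherwise $\rounddown{e}$ is finite or $-\infty$, so (c) gives $x\sgeq\fsucc(\rounddown{e})$, and since $\evaldown{e}\neq+\infty$ both $\evaldown{e}$ and $\rounddown{e}$ lie in the domain of $\fsucc$ with $\evaldown{e}\sleq\rounddown{e}$, whence $\fsucc(\evaldown{e})\sleq\fsucc(\rounddown{e})\sleq x$ by monotonicity. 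Implication~\eqref{prop:fp-approx-of-real-constraints:4} is symmetric: $\evaldown{e}\neq-\infty$ forces $e$ not to evaluate to $-\infty$ (else $\rounddown{e}=-\infty$ would force $\evaldown{e}=-\infty$), hence $\roundup{e}\neq-\infty$, $\evalup{e}$ lies in the domain of $\fpred$, and one applies (d) and the monotonicity of $\fpred$. For the last two I would first check the parenthetical equivalences: if $\evalup{e}\sgt\roundup{e}$ then $\fpred(\evalup{e})\sgeq\roundup{e}\geq e$, whereas if $\evalup{e}=\roundup{e}$ its predecessor lies strictly below $e$ by minimality of $\roundup{e}$ --- so $\fpred(\evalup{e})<e\iff\evalup{e}=\roundup{e}$, and dually $\fsucc(\evaldown{e})>e\iff\evaldown{e}=\rounddown{e}$; then~\eqref{case:x-geq-e-implies-x-sgeq-evalup-e} is core-claim (a) with $\roundup{e}=\evalup{e}$, and~\eqref{case:x-leq-e-implies-x-sleq-evaldown-e} is core-claim (b) with $\rounddown{e}=\evaldown{e}$.

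I expect the main obstacle to be bookkeeping rather than any real difficulty: one must verify, branch by branch of Definitions~\ref{def:rounding-functions} and~\ref{def:floating-point-predecessors-and-successors}, that the degenerate values ($\pm0$ near zero, $\pm\infty$ beyond $\pm\fmax$, and $\fmin$, $\fmax$ at the ends of the normal range) behave exactly as used above, and in particular that none of the roundings, successors or predecessors appearing in the argument ever produces the ``wrong-signed'' zero that would block the passage from a numerical inequality to a symbolic one.
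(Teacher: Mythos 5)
Your proposal is correct and rests on the same underlying facts as the paper's proof (Proposition~\ref{prop:round-properties}, the case structure of Definitions~\ref{def:rounding-functions} and~\ref{def:floating-point-predecessors-and-successors}, and the signed-zero conventions), but it is decomposed differently. The paper proves each of the six implications by a separate inline case analysis, each time re-deriving by hand the passage from a numerical inequality to a symbolic one (splitting on whether $\rounddown{e}$, $\roundup{e}$ or $x$ is a zero) and, for \eqref{prop:fp-approx-of-real-constraints:2} and \eqref{prop:fp-approx-of-real-constraints:4}, case-analyzing $\evaldown{e}$ and $\evalup{e}$ directly. You instead isolate two reusable lemmas --- the lifting observation (numerical $\geq$ implies symbolic $\sgeq$ except for the pair $(-0,+0)$) and the core claim about $\roundup{\cdot}$, $\rounddown{\cdot}$, $\fsucc$, $\fpred$ --- and obtain all six implications by composing with $\evaldown{e}\sleq\rounddown{e}\sleq\roundup{e}\sleq\evalup{e}$ and the weak monotonicity of $\fsucc$ and $\fpred$; in particular you reach $\fsucc\bigl(\evaldown{e}\bigr)$ via $\fsucc\bigl(\rounddown{e}\bigr)$ and monotonicity rather than by inspecting $\evaldown{e}$ itself. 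This buys a shorter, non-repetitive argument at the cost of verifying once and for all that no rounding, successor or predecessor ever yields the zero of the wrong sign --- which you correctly flag as the real bookkeeping burden. One caveat: in your check of the parenthetical of \eqref{case:x-geq-e-implies-x-sgeq-evalup-e}, the step ``$\evalup{e}\sgt\roundup{e}$ implies $\fpred\bigl(\evalup{e}\bigr)\sgeq\roundup{e}$'' fails for $(\evalup{e},\roundup{e})=(+0,-0)$, so the claimed equivalence between $\fpred\bigl(\evalup{e}\bigr)<e$ and $\evalup{e}=\roundup{e}$ has a signed-zero exception; this is a blemish of the statement itself rather than of your argument, since both you and the paper ultimately invoke only the $\evalup{e}=\roundup{e}$ form of the hypothesis.
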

The implications of Proposition~\ref{prop:fp-approx-of-real-constraints}
can be derived from Definition~\ref{evaluation functions}
and Proposition~\ref{prop:round-properties}.
Their proof is postponed to Appendix~\ref{sec:proofs-rounding-modes-and-rounding-errors}.

%%% Local Variables:
%%% mode: latex
%%% TeX-master: "main"
%%% End:
\section{Propagation for Simple Arithmetic Constraints}
\label{sec:propagation-arithmetic-constraints}

In this section we present our propagation procedure for the solution of
floating-point constraints obtained from the analysis of programs
engaging in IEEE~754 computations.

The general propagation algorithm,
which we already introduced in
Section~\ref{sec:constraint-propagation},
consists in an iterative procedure that applies the
direct and inverse filtering algorithms
associated with each constraint,
narrowing down the intervals associated with each variable.
The process stops when fixed point is reached,
i.e., when a further application of any filtering
algorithm does not change the domain of any variable.

\subsection{Propagation Algorithms: Definitions}
\label{sec:propagation-algorithms-definitions}

Constraint propagation is a process that
prunes the domains of program variables
by deleting values that do not satisfy any
of the constraints involving those variables.
In this section, we will state these ideas more formally.

Let
\(
  \mathord{\mop}
   \in
     \{ \mathord{\madd}, \mathord{\msub}, \mathord{\mmul}, \mathord{\mdiv} \}
\)
and $S \sseq R$.
Consider a constraint
$x = y \mop_S z$ with
$x \in X = [x_\ell, x_u]$,
$y \in Y = [y_\ell, y_u]$ and
$z \in Z = [z_\ell, z_u]$.

\emph{Direct propagation} aims at inferring a narrower interval
for variable $x$, by considering the domains of $y$ and $z$.
It amounts to computing a possibly
refined interval for $x$,
$X' = [x'_\ell, x'_u] \sseq X$, such that
\begin{equation}
\label{eq:direct-propagation-correctness}
  \forall r \in S, x \in X, y \in Y, z \in Z
    \itc x = y \mop_r z \implies x \in X'.
\end{equation}
Property~\eqref{eq:direct-propagation-correctness}
is known as the \emph{direct propagation correctness property}.

Of course it is always possible to take $X' = X$,
but the objective of constraint propagation is to compute a ``small'',
possibly the smallest $X'$ enjoying~\eqref{eq:direct-propagation-correctness},
compatibly with the available information.
The smallest $X'$
that satisfies~\eqref{eq:direct-propagation-correctness}
is called optimal and is such that
\begin{equation}
\label{eq:direct-propagation-optimality}
  \forall X''\subset X'
    \itc \exists r \in S, y \in Y, z \in Z
      \st y \mop_r z \not\in X''.
\end{equation}
Property~\eqref{eq:direct-propagation-optimality} is called
the \emph{direct propagation optimality property}.

\emph{Inverse propagation}, on the other hand,
uses the domain of the result $x$ to deduct
new domains for the operands, $y$ or $z$.
For the same constraint,
$x = y \mop_S z$,
it means computing a possibly refined interval
for $y$,
$Y' = [y'_\ell, y'_u] \sseq Y$, such that
\begin{equation}
\label{eq:inverse-propagation-correctness}
 \forall r \in S, x \in X, y \in Y, z \in Z
    \itc x = y \mop_r z \implies y \in Y'.
\end{equation}
Property~\eqref{eq:inverse-propagation-correctness}
is known as the \emph{inverse propagation correctness property}.
Again, taking $Y' = Y$ is always possible and
sometimes unavoidable. The best we can hope for is to be able
to determine the smallest such set, i.e., satisfying
\begin{equation}
\label{eq:inverse-projection-optimality}
  \forall Y'' \sslt Y
    \itc \exists r \in S, y \in Y' \setdiff Y'', z \in Z
      \st y \mop_r z \not\in X.
\end{equation}
Property~\eqref{eq:inverse-projection-optimality}
is called the \emph{inverse propagation optimality property}.
Satisfying this last property can be very difficult.

\subsection{The Boolean Domain for NaN}
\label{sec:boolean-domain-nan}

From now on, we will consider
floating-point intervals with boundaries in $\Fset$.
They allow for capturing the extended numbers in $\Fset$ only:
NaNs (quiet NaNs and signaling NaNs) should be tracked separately.
To this purpose, a Boolean domain
$\cN \defeq \{ \top, \bot \}$, where $\top$ stands for ``may be NaN''
and $\bot$ means ``cannot be NaN'',
can be used and coupled with the arithmetic filtering algorithms.

Let be $x = y \mathbin{\mop} z$ an arithmetic constraint over floating-point numbers,
and $(X, \nan_x)$, $(Y, \nan_y)$ and
$(Z, \nan_z)$ be the variable domains of $x$, $y$ and $z$ respectively.
In practice, the propagation process for such a constraint
reaches a fixed point when the combination of refining domains
$(X', \nan_x')$, $(Y', \nan_y')$ and $(Z', \nan_z')$
remains the same obtained in the previous iteration.
For each iteration of the algorithm we analyze the NaN domain of all the
constraint variables in order to define the next propagator action.

The IEEE~754 Standard \cite[Section 7.2]{IEEE-754-2008} lists
all combinations of operand values that yield a NaN result.
For the arithmetic operations considered in this paper,
NaN is returned if any of the operands is NaN.
Moreover, addition and subtraction return NaN when infinities are subtracted
(e.g., $+\infty \madd -\infty$ or $+\infty \msub +\infty$),
and also $\pm \infty \mmul 0 = \nan$, $0 \mmul \pm \infty = \nan$,
$0 \mdiv 0 = \nan$, and $\pm \infty \mdiv \pm \infty = \nan$.

Thus, direct projections are such that if $\nan_y = \top$ or $\nan_z = \top$,
then also $\nan_x' = \top$; indirect projections yield $\nan_y' = \nan_z' = \top$ if
$\nan_x = \top$.
Moreover, e.g., if $\mop = \madd$, then the direct projection yields
$\nan_x' = \top$ also if $\pm \infty \in Y$ and $\mp \infty \in Z$,
and the indirect one allows for $\pm \infty$ in $Y'$ and $\mp \infty$ in $Z'$
only if $\nan_x = \top$, and so on for the other operators.

\subsection{Filtering Algorithms for Simple Arithmetic Constraints}
\label{sec:filtering-algorithms}

Filtering algorithms for arithmetic constraints
are the main focus of this paper.
In the next sections, we will propose algorithms realizing
optimal \emph{direct} projections and correct \emph{inverse} projections
for the addition ($\madd$) and division ($\mdiv$) operations.
The reader interested in implementing constraint propagation
for all four operations can find the algorithms and results
for the missing operations in Appendix~\ref{se:Subtraction-Multiplication}.
We report the correctness proofs of the projections for addition
in the main text, leaving those for the remaining operations to
Appendix~\ref{sec:proofs-filtering}.

The filtering algorithms we are about to present are capable of
dealing with any set of rounding modes and are designed to distinguish
between all different (special) cases in order to be as precise as
possible, especially when the variable domains contain symbolic
values.  Much simpler projections can be designed whenever precision
is not of particular concern. Indeed, the algorithms presented in this
paper can be considered as the basis for finding a good trade-off
between efficiency and the required precision.

\subsubsection{Addition}
Here we deal with constraints of the form
$x = y \madd_S z$ with $S \sseq R$.
Let
$X = [x_\ell, x_u]$,
$Y = [y_\ell, y_u]$ and
$Z = [z_\ell, z_u]$.

Thanks to Proposition~\ref{prop:worst-case-rounding-modes},
any set of rounding modes $S \sseq R$
can be mapped to a pair of ``worst-case rounding modes'' which,
in addition, are never round-to-zero.
Therefore, the projection algorithms use the selectors presented in
Definition~\ref{def:rounding-mode-selectors} to choose the appropriate
worst-case rounding mode, and then operate as if it was the only one in effect,
yielding results implicitly valid for the entire set $S$.

\paragraph{Direct Propagation.}

For direct propagation, i.e., the process that infers a new interval
for $x$ starting from the interval for $y$ and $z$, we propose
Algorithm~\ref{algo1} and functions $\diraddl$ and $\diraddu$,
as defined in Figure~\ref{fig:direct-projection-addition}.
Functions $\diraddl$ and $\diraddu$ yield new bounds for interval $X$.
In particular, function $\diraddl$ gives the new lower bound, while
function $\diraddu$ provides the new upper bound of the interval.
Functions $\diraddl$ and $\diraddu$ handle all rounding modes and,
in order to be as precise as possible, they distinguish between several cases,
depending on the values of the bounds of intervals $Y$ and $Z$.
These cases are infinities ($-\infty$ and $+\infty$),
zeroes ($-0$ and $+0$), negative values ($\Rset_-$) and positive values ($\Rset_+$).

\begin{algorithm}
\caption{Direct projection for addition constraints.}
\label{algo1}
\begin{algorithmic}[1]
\REQUIRE $x = y \madd_S z$,
$x \in X = [x_\ell, x_u]$,
$y \in Y = [y_\ell, y_u]$ and
$z \in Z = [z_\ell, z_u]$.
\ENSURE
$X' \sseq  X$ and
\(
  \forall r \in S, x \in X, y \in Y, z \in Z
    \itc  x = y \madd_r z \implies x \in X'
\)
and
\(
\forall X''\subset X', \exists r \in S, y \in Y, z \in Z \itc y  \madd_r z\not\in X''\).
\STATE
$r_\ell \assign r_\ell(S, y_\ell, \madd, z_\ell)$; $r_u \assign r_u(S, y_u, \madd, z_u)$;
\STATE
$x'_\ell \assign \diraddl(y_\ell, z_\ell, r_\ell)$;
$x'_u \assign \diraddu(y_u, z_u, r_u)$;
\STATE
$X' \assign X \inters [x'_\ell, x'_u]$;
\end{algorithmic}
\end{algorithm}

\begin{figure}[h]
\begin{tabular}{L|CCCCCC}
\diraddl(y_\ell, z_\ell, r_\ell)
        & -\infty & \Rset_-           & -0      & +0      & \Rset_+           & +\infty \\
\hline
-\infty & -\infty & -\infty           & -\infty & -\infty & -\infty           & +\infty \\
\Rset_- & -\infty & y_\ell \madd_{r_\ell} z_\ell & y_\ell     & y_\ell     & y_\ell \madd_{r_\ell} z_\ell & +\infty \\
-0      & -\infty & z_\ell                & -0     &    a_1  & z_\ell                & +\infty \\
+0      & -\infty & z_\ell                & a_1    & +0      & z_\ell                & +\infty \\
\Rset_+ & -\infty & y_\ell \madd_{r_\ell} z_\ell & y_\ell     & y_\ell     & y_\ell \madd_{r_\ell} z_\ell & +\infty \\
+\infty & +\infty & +\infty           & +\infty & +\infty & +\infty           & +\infty \\
\end{tabular}
\[
a_1
  =
    \begin{cases}
      -0, &\text{if $r_\ell = \rdown$,} \\
      +0, &\text{otherwise;}
    \end{cases}
\]

\bigskip
\begin{tabular}{L|CCCCCC}
\diraddu(y_u, z_u, r_u)
       & -\infty & \Rset_- & -0      & +0      & \Rset_+ & +\infty \\
\hline
-\infty & -\infty & -\infty & -\infty & -\infty & -\infty & -\infty \\
\Rset_- & -\infty     &  y_u \madd_{r_u} z_u   & y_u     & y_u     &  y_u \madd_{r_u} z_u    & +\infty  \\
-0      & -\infty    & z_u    & -0      & a_2    & z_u    & +\infty  \\
+0 & -\infty    & z_u    & a_2     & +0      & z_u    & +\infty  \\
\Rset_+ & -\infty    &  y_u \madd_{r_u} z_u  & y_u     & y_u     &  y_u \madd_{r_u} z_u      & +\infty  \\
+\infty &  -\infty    & +\infty & +\infty & +\infty &  +\infty   &  +\infty \\
\end{tabular}
\[
a_2
  =
    \begin{cases}
      -0, &\text{if $r_u = \rdown$,} \\
      +0, &\text{otherwise.}
    \end{cases}
\]
\caption{Direct projection of addition: the function $\diraddl$ (resp., $\diraddu$);
         values for $y_\ell$ (resp., $y_u$) on rows,
         values for $z_\ell$ (resp., $z_u$) on columns.}
\label{fig:direct-projection-addition}
\end{figure}

It can be proved that \textup{Algorithm~\ref{algo1}}
computes a \emph{correct} and \emph{optimal direct projection},
as stated by its postconditions.

\begin{theorem}
\label{teo:direct-projection-addition}
\textup{Algorithm~\ref{algo1}} satisfies its contract.
\end{theorem}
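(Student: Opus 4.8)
The plan is to verify the three postconditions of Algorithm~\ref{algo1} separately: the containment $X' \sseq X$, the direct propagation correctness property, and the direct propagation optimality property. Containment is immediate, since line~3 sets $X' \assign X \inters [x'_l, x'_u]$, so $X' \sseq X$ holds regardless of the values of $x'_l$ and $x'_u$. The substance of the proof is in the other two properties, and both reduce to a careful case analysis over the sign/zero/infinity classification of the interval bounds $y_l, z_l$ (for the lower bound) and $y_u, z_u$ (for the upper bound), matching the rows and columns of the tables in Figure~\ref{fig:direct-projection-addition}.

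For correctness, fix $r \in S$, $x \in X$, $y \in Y$, $z \in Z$ with $x = y \madd_r z$; I must show $x \in X'$, i.e.\ $x'_l \sleq x \sleq x'_u$ (since $x \in X$ already). I would focus on the lower-bound inequality $x'_l \sleq x$; the upper-bound case is symmetric. First, by Proposition~\ref{prop:worst-case-rounding-modes}, since $r_l = r_l(S, y_l, \madd, z_l)$, we have $y_l \madd_{r_l} z_l \sleq y_l \madd_r z_l$ for the particular operands $y_l, z_l$; more importantly, I need monotonicity: $y \madd_r z$ is monotone nondecreasing in each operand (for fixed rounding mode), and $y_l \madd_{r_l} z$ is $\sleq y_l \madd_r z$ by the rounding-mode selector property applied pointwise. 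Combining, $\diraddl(y_l, z_l, r_l)$, which in the "generic" cells equals $y_l \madd_{r_l} z_l$, satisfies $\diraddl(y_l, z_l, r_l) \sleq y_l \madd_{r_l} z \sleq y_l \madd_r z \sleq y \madd_r z = x$. For the non-generic cells one argues directly: e.g.\ if $y_l = -0$ and $z_l \in \Rset_-$, the table gives $x'_l = z_l$, and indeed for any $y \sgeq -0$ and $z \sgeq z_l$ we have $y \madd_r z \sgeq (-0) \madd_r z_l = z_l$ (adding a nonnegative quantity to $z_l$, correctly rounded in any mode, cannot produce something below $z_l$ since $z_l$ is itself a float); the $\pm 0$ result cells ($a_1$, and the $(-0,-0)$, $(+0,+0)$ entries) require recalling the IEEE~754 sign-of-zero convention for exact-zero results, which is exactly why $a_1$ depends on whether $r_l = \rdown$. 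The infinity rows/columns are handled by the conventions on $\madd$ with infinite operands (and the $(-\infty, +\infty)$ cell, which gives $+\infty$, is vacuous here because that sum is NaN, excluded by hypothesis).

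For optimality, I must show that for every $X'' \sslt X'$ there exist $r \in S$, $y \in Y$, $z \in Z$ with $y \madd_r z \notin X''$. It suffices to exhibit, for the computed $x'_l$ and $x'_u$, witnesses $r', r'' \in S$ and operands attaining exactly $x'_l$ and $x'_u$ (then no proper subinterval of $[x'_l,x'_u]$, hence no proper subinterval of $X' = X \inters [x'_l,x'_u]$ dropping either endpoint, can satisfy correctness). For $x'_l$: in the generic cell, take $y = y_l$, $z = z_l$; by the "moreover" clause of Proposition~\ref{prop:worst-case-rounding-modes} there is $r' \in S$ with $y_l \madd_{r'} z_l = y_l \madd_{r_l} z_l = x'_l$. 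In the special cells one picks the witness explicitly (e.g.\ $y = y_l = -0$, $z = z_l$ for the $z_l$-entry; or for $a_1 = \pm 0$ cells, $y_l$ and $z_l$ both $\equiv 0$ with the appropriate rounding mode realizing the stated sign). I expect the main obstacle to be the bookkeeping around signed zeros: ensuring in the correctness direction that every sum whose real value is $0$ lands on the correct side of the table entry (e.g.\ that $x'_l = -0$ is genuinely a lower bound when it should be, using $-0 \slt +0$), and in the optimality direction that the claimed $\pm 0$ endpoints are actually achieved by some legal choice of operands and rounding mode in $S$ — these are the cells where the naive "just use monotonicity" argument breaks down and one must fall back on the precise definitions of $\madd_r$ on zeros and of the selectors $r_l, r_u$.
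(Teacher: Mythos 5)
Your proposal is correct and follows essentially the same route as the paper's proof: containment is immediate from the intersection on line~3; correctness follows by combining Proposition~\ref{prop:worst-case-rounding-modes} with the monotonicity of $\madd$ to reduce to the table entries $\diraddl(y_l,z_l,r_l)$ and $\diraddu(y_u,z_u,r_u)$, with the symbolic cells (signed zeros, the $(-\infty,+\infty)$ entry) checked against the IEEE~754 conventions; and optimality is obtained by exhibiting $y_l, z_l$ (resp.\ $y_u, z_u$) together with the rounding mode $r' \in S$ (resp.\ $r''$) guaranteed by the ``moreover'' clause of Proposition~\ref{prop:worst-case-rounding-modes} as witnesses attaining the computed endpoints. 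The only difference is presentational: the paper settles the non-numeric cells by exhaustive (brute-force) verification rather than the ad hoc arguments you sketch, but the underlying reasoning is the same.
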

\begin{proof}
Given the constraint $x = y \madd_S z$ with
$x \in X = [x_\ell, x_u]$,
$y \in Y = [y_\ell, y_u]$ and
$z \in Z = [z_\ell, z_u]$,
Algorithm~\ref{algo1} sets
$X' = [x'_\ell, x'_u] \inters X$; hence, we have $X' \sseq X$.
Moreover, by Proposition~\ref{prop:worst-case-rounding-modes},
for each $y \in Y$, $z \in Z$ and $r \in S$,
we have $y \madd_{r_\ell} z \sleq y \madd_r z \sleq y \madd_{r_u} z$,
and because $a \sleq b$ implies $a \leq b$ for any $a, b \in \Fset$
according to Definition~\ref{def:fp-symbolic-order},
we know that $y \madd_{r_\ell} z \leq y \madd_r z \leq y \madd_{r_u} z$.
Thus, by monotonicity of $\madd$, we have
\(
  y_\ell \madd_{r_\ell} z_\ell
  \leq y \madd_{r_\ell} z
  \leq y \madd_r z
  \leq y \madd_{r_u} z
  \leq y_u \madd_{r_u} z_u
\).
Therefore, we can focus on finding a lower bound for $y_\ell \madd_{r_\ell} z_\ell$
and an upper bound for $y_u \madd_{r_u} z_u$.

Such bounds are given by the functions $\diraddl$ and $\diraddu$
of Figure~\ref{fig:direct-projection-addition}.
Almost all of the cases reported in the tables can be trivially derived
from the definition of the addition operation in the IEEE~754 Standard
\cite{IEEE-754-2008};
just two cases need further explanation.
Concerning the entry of $\diraddl$ in which $y_\ell = -\infty$ and
$z_\ell = +\infty$, note that $z_\ell = +\infty$ implies $z_u = +\infty$.
Then for any $y > y_\ell = -\infty$, $y \madd +\infty = +\infty$.
On the other hand, by the IEEE~754 Standard \cite{IEEE-754-2008},
$-\infty \madd +\infty$ is an invalid operation.
For the symmetric case, i.e., the entry of $\diraddu$
in which $y_u = -\infty$ and
$z_u = +\infty$, we can reason dually.

We are now left to prove that
$\forall X'' \subset X' \itc \exists r \in S, y \in Y, z \in Z \itc y \madd_r z \not\in X''$.
Let us focus on the lower bound $x'_\ell$,
proving that there always exists a $r \in S$ such that $y_\ell \madd_r z_\ell = x'_\ell$.

First, consider the cases in which
$y_\ell \not\in (\Rset_- \cup \Rset_+)$ or $z_\ell \not\in (\Rset_- \cup \Rset_+)$.
In these cases, a case analysis proves that
$\diraddl(y_\ell, z_\ell,r_\ell)$ is equal to $y_\ell \madd_{r_\ell} z_\ell$.
Indeed, if either of the operands (say $y_\ell$) is $-\infty$ and the other one
(say $z_\ell$) is \emph{not} $+\infty$, then according to the IEEE~754 Standard
we have $y_\ell \madd_r z_\ell = -\infty$ for any $r \in R$.
Symmetrically, $y_\ell \madd_r z_\ell = +\infty$
if one operand is $+\infty$ and the other one is not $-\infty$.
If, w.l.o.g., $y_\ell = +\infty$ and $z_\ell = -\infty$,
the set $X'$ is non-empty only if $z_u \neq -\infty$,
and $y_\ell \madd_r z_u = +\infty$ for any $r \in R$.

For the cases in which $y_\ell \in (\Rset_- \cup \Rset_+)$
and $z_\ell \in (\Rset_- \cup \Rset_+)$
we have $x'_\ell = y_\ell \madd_{r_\ell} z_\ell$,
by definition of $\diraddl$ of Figure~\ref{fig:direct-projection-addition}.
Remember that, by Proposition~\ref{prop:worst-case-rounding-modes},
there exists $r \in S$ such that
$y_\ell \madd_\mathrm{r_\ell} z_\ell = y_\ell \madd_\mathrm{r} z_\ell$.
Since $y_\ell \in Y$ and $z_\ell \in Z$, we can conclude that
for any $X'' \subseteq X'$,
$x'_\ell \not\in X''$ implies $y_\ell \madd_\mathrm{r} z_\ell \not\in X''$.

An analogous argument allows us to conclude that there exists an $r \in S$
for which the following holds:
for any $X'' \subseteq X'$,
$x'_u \not\in X''$ implies $y_u \madd_\mathrm{r} z_u \not\in X''$.
\end{proof}

The following example will better illustrate how the tables in
Figure~\ref{fig:direct-projection-addition}
should be used to compute functions $\diraddl$ and $\diraddu$.
All examples in this section refer to the IEEE~754 binary single precision format.
\begin{example}
  Assume $Y=[+0,5]$, $Z=[-0, 8]$, and that the selected rounding mode
  is $r_\ell = r_u = \rdown$.
  In order to compute the lower bound $x_\ell'$ of $X'$, the new interval
  for $x$, function $\diraddl(+0, -0, \rdown)$ is called.
  These arguments fall in case $a_1$, which yields $-0$ with rounding mode $\rdown$.
  Indeed, when the rounding mode is $\rdown$, the sum of $-0$ and $+0$
  is $-0$, which is clearly the lowest result that can be obtained with
  the current choice of $Y$ and $Z$.
  For the upper bound $x_u'$, the algorithm calls $\diraddu(5, 8, \rdown)$.
  This falls in the case in which both operands are positive numbers
  ($y_u, z_u \in \Rset_+$), and therefore
  $x_u = y_u \madd_{r_u} z_u = 13$.
  In conclusion, the new interval for $x$ is
  $X' = [-0, 13]$.

  If any other rounding mode was selected (say, $r_\ell = r_u = \rnear$),
  the new interval computed by the projection would have been
  $X'' = [+0, 13]$.
\end{example}

\paragraph{Inverse Propagation.}

For inverse propagation,
i.e., the process that infers a new interval for $y$ (or for $z$)
starting from the interval from $x$ and $z$ ($x$ and $y$, resp.)
we define Algorithm~\ref{algo2} and functions $\invaddl$ in
Figure~\ref{fig:indirect-projection-addition-yl} and $\invaddu$ in
Figure~\ref{fig:indirect-projection-addition-yu},
where $\equiv$ indicates the syntactic substitution of expressions.
Since the inverse operation of addition is subtraction,
note that the values of $x$ and $z$ that minimize $y$ are $x_\ell$ and $z_u$;
analogously, the values of $x$ and $z$ that maximize $y$ are $x_u$ and $z_\ell$.

When the round-to-nearest rounding mode is in effect,
addition presents some nice properties.
Indeed, several expressions for lower and upper bounds can be
easily computed without approximations, using floating-point operations.
In more detail, it can be shown (see the proof of Theorem~\ref{teo:indirect-projection-addition})
that when $x$ is subnormal $\ftwiceerrnearpos(x)$ and $\ftwiceerrnearneg(x)$ are negligible.
This allows us to define tight bounds in this case.
On the contrary, when the terms $\ftwiceerrnearneg(x_\ell)$ and $\ftwiceerrnearpos(x_u)$
are non negligible, we need to approximate the values of expressions
$e_\ell$ and
$e_u$.
This can always be done with reasonable efficiency \cite{KornerupLLM09},
but we leave this as an implementation choice, thus accounting for the case
when the computation is exact
($\evalup{e_\ell} = \roundup{e_\ell}$ and $\evaldown{e_u} = \rounddown{e_u}$)
as well as when it is not
($\evalup{e_\ell} > \roundup{e_\ell}$ and $\evaldown{e_u} < \rounddown{e_u}$).

\begin{algorithm}
\caption{Inverse projection for addition constraints.}
\label{algo2}
\begin{algorithmic}[1]
\REQUIRE $x = y \madd_S z$,
$x\in X = [x_\ell, x_u]$,
$y\in Y = [y_\ell, y_u]$ and
$z\in Z = [z_\ell, z_u]$.
\ENSURE
$Y' \sseq  Y$ and
\(
  \forall r \in S, x \in X, y \in Y, z \in Z
    \itc  x = y \madd_r z \implies y \in Y'
\).

\STATE
  $\bar{r}_\ell\assign\bar{r}_\ell^\ell(S, x_\ell,  \madd, z_u)$;   $\bar{r}_u\assign  \bar{r}_u^\ell(S, x_u,  \madd, z_\ell) ;$
\STATE
$y'_\ell \assign \invaddl(x_\ell, z_u, \bar{r}_\ell)$;
$y'_u \assign \invaddu(x_u, z_\ell, \bar{r}_u)$;
\IF {$y'_\ell \in \Fset$ and $y'_u\in \Fset$}
\STATE
    $Y' \assign Y \inters [y'_\ell, y'_u]$;
\ELSE
\STATE
    $Y' \assign \emptyset$;
 \ENDIF
\end{algorithmic}
\end{algorithm}

\begin{figure}[h!]
\begin{tabular}{L|CCCCCC}
\invaddl(x_\ell, z_u,\bar{r}_\ell)
       & -\infty & \Rset_- & -0      & +0      & \Rset_+ & +\infty \\
\hline
-\infty & -\infty& -\infty & -\infty & -\infty & -\infty & -\infty \\
\Rset_- & \uns   & a_3      & x_\ell     & x_\ell     & a_3      & -\fmax  \\
-0      & \uns   & -z_u    & -0      & -0      & -z_u    & -\fmax  \\
+0    & \uns    & a_4    & a_4    & a_5   & a_4    & -\fmax  \\
\Rset_+ & \uns     & a_3      & x_\ell     & x_\ell     & a_3      & -\fmax  \\
+\infty & \uns    & +\infty & +\infty & +\infty & a_6   & -\fmax  \\
\end{tabular}
\begin{align*}
  e_\ell
    &\equiv x_\ell + \ftwiceerrnearneg(x_\ell)/2 - z_u; \\
  a_3
    &=
      \begin{cases}
      -0,
        &\text{if $\bar{r}_\ell = \rnear$, $\ftwiceerrnearneg(x_\ell) = -\fmin$ and $x_\ell = z_u$;} \\
      x_\ell \msub_\rup z_u,
        &\text{if $\bar{r}_\ell = \rnear$, $\ftwiceerrnearneg(x_\ell) = -\fmin$ and $x_\ell \neq z_u$;} \\
     \evalup{e_\ell},
        &\text{if $\bar{r}_\ell = \rnear$, $\feven(x_\ell)$, $\ftwiceerrnearneg(x_\ell) \neq -\fmin$
               and $\evalup{e_\ell} = \roundup{e_\ell}$;} \\
               \evaldown{e_\ell},
        &\text{if $\bar{r}_\ell = \rnear$, $\feven(x_\ell)$, $\ftwiceerrnearneg(x_\ell) \neq -\fmin$ and $\evalup{e_\ell} > \roundup{e_\ell}$;} \\
      \fsucc\bigl(\evaldown{e_\ell}\bigr),
        &\text{if $\bar{r}_\ell = \rnear$, otherwise;} \\
      -0,
        &\text{if $\bar{r}_\ell = \rdown$ and $x_\ell = z_u$;} \\
      x_\ell \msub_\rup z_u,
        &\text{if $\bar{r}_\ell = \rdown$ and $x_\ell \neq z_u$;} \\
      \fsucc\bigl(\fpred(x_\ell) \msub_\rdown z_u\bigr),
        &\text{if $\bar{r}_\ell = \rup$;}
      \end{cases} \\
  (a_4, a_5)
    &=
      \begin{cases}
        \bigl(\fsucc(-z_u), +0\bigr), &\text{if $\bar{r}_\ell = \rdown$;} \\
        (-z_u, -0),                   &\text{otherwise;}
      \end{cases} \qquad
  a_6
    =
      \begin{cases}
        +\infty,                       &\text{if $\bar{r}_\ell = \rdown$;} \\
        \fsucc(\fmax \msub_\rdown z_u), &\text{if $\bar{r}_\ell = \rup$;} \\
        \mathrlap{\fmax \madd_\rup \bigl(\ftwiceerrnearpos(\fmax)/2 \msub_\rup z_u\bigr),} \\
                                       &\text{if $\bar{r}_\ell = \rnear$.}
      \end{cases}
\end{align*}
\caption{Inverse projection of addition: function $\invaddl$.}
\label{fig:indirect-projection-addition-yl}
\end{figure}

\begin{figure}[h!]
\begin{tabular}{L|CCCCCC}
\invaddu(x_u, z_\ell, \bar{r}_u)
       & -\infty & \Rset_- & -0      & +0      & \Rset_+ & +\infty \\
\hline
-\infty & \fmax  & a_7   & -\infty & -\infty & -\infty &  \uns     \\
\Rset_- & \fmax  & a_8      & x_u     & x_u     & a_8      &  \uns     \\
-0       & \fmax  & a_9       & a_{10}   & a_9 & a_9    &  \uns    \\
+0      & \fmax  & -z_\ell    & +0      & +0      & -z_\ell    &  \uns     \\
\Rset_+ & \fmax  & a_8      & x_u     & x_u     & a_8      &  \uns     \\
+\infty & +\infty & +\infty & +\infty & +\infty & +\infty & +\infty \\
\end{tabular}
\begin{align*}
  e_u
    &\equiv x_u + \ftwiceerrnearpos(x_u)/2 - z_\ell; \\
  a_7
    &=
      \begin{cases}
        -\infty, &\text{if $\bar{r}_u = \rup$;} \\
        \fpred(-\fmax \msub_\rup z_\ell), &\text{if $\bar{r}_u = \rdown$;} \\
        \mathrlap{-\fmax \madd_\rdown \bigl(\ftwiceerrnearneg(-\fmax) \msub_\rdown z_\ell\bigr);} \\
                                       &\text{if $\bar{r}_u = \rnear$;}
      \end{cases} \qquad
  (a_9, a_{10})
    =
      \begin{cases}
        (-z_\ell, +0),        &\text{if $\bar{r}_u = \rdown$;} \\
        \bigl(\fpred(-z_\ell), -0\bigr), &\text{otherwise;}
      \end{cases} \\
  a_8
    &=
      \begin{cases}
      +0,
        &\text{if $\bar{r}_u = \rnear$, $\ftwiceerrnearpos(x_u) = \fmin$ and $x_u = z_\ell$;}\\
      x_u \msub_\rdown z_\ell,
        &\text{if $\bar{r}_u = \rnear$, $\ftwiceerrnearpos(x_u) = \fmin$ and  $x_u \neq z_\ell$;} \\
      \evaldown{e_u},
        &\text{if $\bar{r}_u = \rnear$, $\feven(x_u)$, $\ftwiceerrnearpos(x_u) \neq \fmin$
               and $\evaldown{e_u} = \rounddown{e_u}$;} \\
                 \evalup{e_u},
        &\text{if $\bar{r}_u = \rnear$, $\feven(x_u)$, $\ftwiceerrnearpos(x_u) \neq \fmin$   and $\evalup{e_u} <\roundup{e_u}$;} \\
      \fpred\bigl(\evalup{e_u}\bigr),
        &\text{$\bar{r}_u = \rnear$, otherwise;} \\
      \fpred\bigl(\fsucc(x_u) \msub_\rup z_\ell\bigr),
        &\text{if $\bar{r}_u = \rdown$;} \\
      +0,
        &\text{if $\bar{r}_u = \rup$ and $x_u = z_\ell$;} \\
      x_u \msub_\rdown z_\ell,
        &\text{if $\bar{r}_u = \rup$ and $x_u \neq z_\ell$.}
      \end{cases} \\
\end{align*}
\caption{Inverse projection of addition: function $\invaddu$.}
\label{fig:indirect-projection-addition-yu}
\end{figure}

The next result assures us that our algorithm computes a
\emph{correct inverse projection}, as claimed by its postcondition.

\begin{theorem}
\label{teo:indirect-projection-addition}
\textup{Algorithm~\ref{algo2}} satisfies its contract.
\end{theorem}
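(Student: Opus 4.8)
The claim is that Algorithm~\ref{algo2} meets its contract, i.e. that the interval $Y' = Y \inters [y'_l, y'_u]$ (or $\emptyset$) computed from the table entries $\invaddl$ and $\invaddu$ satisfies the inverse-propagation correctness property~\eqref{eq:inverse-propagation-correctness}. The strategy is to reduce the claim to two ``one-sided'' facts: (i) $\hat y' \defeq \invaddl(x_l, z_u, \bar r_l) \sleq y$ for every $r \in S$, $x \in X$, $y \in Y$, $z \in Z$ with $x = y \madd_r z$, and symmetrically (ii) $y \sleq \invaddu(x_u, z_l, \bar r_u) \defeq \tilde y''$; together with the side observation that when either table entry returns ``\uns'' the constraint is genuinely infeasible, so setting $Y' = \emptyset$ is correct. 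Given (i) and (ii), correctness is immediate: any $y$ compatible with the constraint lies in $[\hat y', \tilde y'']$, hence in $Y \inters [\hat y', \tilde y''] = Y'$.

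\textbf{Reducing to Proposition~\ref{prop:worst-case-rounding-modes}.} First I would invoke Proposition~\ref{prop:worst-case-rounding-modes} with the operand being projected as the left one: it tells us that with $\bar r_l = \bar r_l^\mathrm{l}(S, x_l, \madd, z_l)$ wait --- one must be careful here, the algorithm uses $\bar r_l^\mathrm{l}(S, x_l, \madd, z_u)$ and $\bar r_u^\mathrm{l}(S, x_u, \madd, z_l)$, i.e. it pairs $x_l$ with $z_u$ and $x_u$ with $z_l$, which is exactly the monotonicity remark in the text (``the values of $x$ and $z$ that minimize $y$ are $x_l$ and $z_u$''). So the argument is: for fixed $r$ and fixed $z \in Z$, the set of $y$ with $y \madd_r z \in [x_l, x_u]$ is an interval $[\underline y(r,z), \overline y(r,z)]$ by monotonicity of $\madd_r$ in its first argument; one shows $\underline y(r,z) \geq \underline y(\bar r_l, z_u)$ and $\overline y(r,z) \leq \overline y(\bar r_u, z_l)$ using Proposition~\ref{prop:worst-case-rounding-modes} (for the rounding direction) and monotonicity in $z$ (for replacing $z$ by $z_u$ resp.\ $z_l$). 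Hence it suffices to treat the single worst-case rounding mode $\bar r_l$ (resp.\ $\bar r_u$) and the extreme values $x_l, z_u$ (resp.\ $x_u, z_l$) --- which is precisely what the tables do.

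\textbf{The core case analysis.} The heart of the proof is verifying, entry by entry in Figure~\ref{fig:indirect-projection-addition-yl}, that $\invaddl(x_l, z_u, \bar r_l)$ is a correct lower bound for $y$ — i.e. that $y \madd_{\bar r_l} z_u \sgeq x_l$ forces $y \sgeq \invaddl(x_l, z_u, \bar r_l)$ — and symmetrically for $\invaddu$. The entries split along the symbolic/finite nature of $x_l$ and $z_u$ (the $-\infty$, $\Rset_-$, $-0$, $+0$, $\Rset_+$, $+\infty$ rows and columns). The symbolic rows/columns ($\pm\infty$, $\pm 0$) are handled directly from the IEEE~754 addition tables and the definition of $\slt$; these give the ``\uns'' entries (e.g. $x_l \in \Rset_-$ or larger but $z_u = -\infty$: then $y \madd_{\bar r_l} z_u$ is $-\infty$ or NaN for every finite $y$, never $\sgeq x_l$) and the $\pm 0$, $\mp\fmax$ entries. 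The genuinely finite case $x_l, z_u \in \Rset$ is where the rounding-error machinery of Section~\ref{sec:rounding-modes-and-rounding-errors} enters: from $y \madd_{\bar r_l} z_u \sgeq x_l$ one applies Proposition~\ref{prop:real-approx-of-fp-constraints} (clause~\eqref{eq:real-approx-of-fp-constraints:4},~\eqref{eq:real-approx-of-fp-constraints:5} or~\eqref{eq:real-approx-of-fp-constraints:6} according to $\bar r_l \in \{\rdown, \rup, \rnear\}$) to get a real inequality $y + z_u > x_l + \ferrdown(x_l)$, $y + z_u \geq x_l$, or $y + z_u \gtrsim x_l + \ftwiceerrnearneg(x_l)/2$; one then rearranges to isolate $y$, obtaining a real lower bound, and applies Proposition~\ref{prop:fp-approx-of-real-constraints} (clauses~\eqref{prop:fp-approx-of-real-constraints:1}--\eqref{prop:fp-approx-of-real-constraints:2} and the refinements~\eqref{case:x-geq-e-implies-x-sgeq-evalup-e}) to push that bound back into $\Fset$, landing exactly on the expressions $\fsucc(\fpred(x_l)\msub_\rdown z_u)$, $x_l \msub_\rup z_u$, $\evalup{e_l}$, $\fsucc(\evaldown{e_l})$, etc.\ that appear in the definition of $a_3$. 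The special sub-cases $\ftwiceerrnearneg(x_l) = -\fmin$ and $x_l$ subnormal correspond to the ``$\ftwiceerrnearpos(x), \ftwiceerrnearneg(x)$ negligible'' observation flagged in the text and use Proposition~\ref{prop:min-max-x-ftwiceerrnearneg-ftwiceerrnearpos} together with the fact (noted after Definition~\ref{def:rounding-error-functions}) that $\fmin/2$ is not representable. The $a_6, a_7$ entries ($x_l = +\infty$ or $x_u = -\infty$) require that $y$ be large enough that $y + z_u$ overflows to $+\infty$ under $\bar r_l$, whose threshold again comes from the rounding-function definition~\eqref{eq:round-to-nearest} (the $2^{\emax}(2-2^{-p})$ half-ulp boundary for round-to-nearest) — hence the $\fmax \madd_\rup(\ftwiceerrnearpos(\fmax)/2 \msub_\rup z_u)$ formula.

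\textbf{The expected obstacle.} The main obstacle is the round-to-nearest column of $a_3$ (and dually $a_8$): correctly chaining Proposition~\ref{prop:real-approx-of-fp-constraints}'s clause~\eqref{eq:real-approx-of-fp-constraints:4}/\eqref{eq:real-approx-of-fp-constraints:6} — which is a \emph{strict} or \emph{non-strict} inequality depending on the parity bit $\feven(x_l)/\fodd(x_l)$ — through the subtraction $e_l = x_l + \ftwiceerrnearneg(x_l)/2 - z_u$ and then back via Proposition~\ref{prop:fp-approx-of-real-constraints} requires keeping precise track of whether each inequality is strict, and of whether $\evalup{e_l}$ equals $\roundup{e_l}$ exactly. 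Getting the strictness bookkeeping right is what forces the four-way split in $a_3$ (even/exact, even/inexact, and the catch-all $\fsucc(\evaldown{e_l})$), and it is the only place where the argument is genuinely delicate rather than a mechanical table check. Optimality is \emph{not} asserted by this theorem (only the correctness postcondition appears in Algorithm~\ref{algo2}'s \texttt{ENSURE}), so no matching lower-bound-on-$Y''$ argument is needed — a welcome simplification relative to the direct-projection case.
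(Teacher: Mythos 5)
Your proposal follows essentially the same route as the paper's proof: reduce via Proposition~\ref{prop:worst-case-rounding-modes} and monotonicity to the single worst-case rounding modes and the extreme pairs $(x_l,z_u)$, $(x_u,z_l)$; dispatch the symbolic table entries from IEEE~754 semantics; and handle the finite entries by chaining Proposition~\ref{prop:real-approx-of-fp-constraints} with Proposition~\ref{prop:fp-approx-of-real-constraints}, with the negligible-error case $\ftwiceerrnearneg(x_l)=-\fmin$ and the overflow threshold for $a_6$ treated exactly as the paper does. The only part you compress is the exactness computation behind $\fmax \madd_\rup\bigl(\ftwiceerrnearpos(\fmax)/2 \msub_\rup z_u\bigr)$ (the paper proves that either the subtraction or the addition is exact by a case split on the exponent of $z_u$), but your plan correctly identifies where and why that formula is needed.
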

\begin{proof}
Given the constraint $x = y \madd_S z$ with
$x \in X = [x_\ell, x_u]$,
$y \in Y = [y_\ell, y_u]$ and
$z \in Z = [z_\ell, z_u]$,
Algorithm~\ref{algo2} computes a new and refined domain $Y'$ for variable $y$.

First, observe that the newly computed interval $[y'_\ell, y'_u]$
is either intersected with the old domain $Y$, so that
$Y' = [y'_\ell, y'_u] \inters Y$, or set to $Y' = \emptyset$.
Hence, $Y' \sseq Y$ holds.

Proposition~\ref{prop:worst-case-rounding-modes} and the monotonicity of $\madd$
allow us to find a lower bound for $y$ by exploiting the constraint
$y \madd_{\bar{r}_\ell} z_u = x_\ell$, and an upper bound for $y$
by exploiting the constraint $y \madd_{\bar{r}_u} z_\ell = x_u$.
We will now prove that the case analyses of functions $\invaddl$,
described in Figure~\ref{fig:indirect-projection-addition-yl},
and $\invaddu$, described in Figure~\ref{fig:indirect-projection-addition-yu},
express such bounds correctly.

Concerning the operand combinations in which $\invaddl$ takes the value
described by the case analysis $a_4$,
remember that, by the IEEE~754 Standard \cite{IEEE-754-2008},
whenever the sum of two operands with opposite sign is zero,
the result of that sum is $+0$ in all rounding-direction attributes
except roundTowardNegative: in that case the result is $-0$.
Then, since
$z_u \madd_{\rdown}(-z_u) = -0$, when $\bar{r}_\ell = \rdown$,
$y_\ell$ can safely be set to $\fsucc(-z_u)$.

As for the case in which $\invaddl$ takes one of the values determined by $a_5$,
the IEEE~754 Standard \cite{IEEE-754-2008} asserts that
$+0 \madd_{\rdown} +0 = +0$, while $-0 \madd_{\rdown} +0 = -0$:
the correct lower bound for $y$ is $y'_\ell = +0$, in this case.
As we already pointed out, for any other rounding-direction attribute
$+0 \madd -0 = +0$ holds, which allows us to include $-0$ in the new domain.

Concerning cases of $\invaddl$ that  give  the  result described by the case analysis $a_6$,
we clearly must have $y = +\infty$ if $\bar{r}_\ell = \rdown$;
if $\bar{r}_\ell = \rup$, it should be $y + z_u > \fmax$ and thus
$y > \fmax - z_u$ and,
by~\eqref{prop:fp-approx-of-real-constraints:4}
of Proposition~\ref{prop:fp-approx-of-real-constraints},
$y \sgeq \fsucc(\fmax \msub_\rdown z_u)$.
If $\bar{r}_\ell = \rnear$, there are two cases:
\begin{description}
\item[$z_u < \ftwiceerrnearpos(\fmax)/2.$]
  In this case,
  $y$ must be greater than $\fmax$,
  since $\fmax+ z_u < \fmax+ \ftwiceerrnearpos(\fmax)/2$ implies that
  $\fmax \madd_{\rnear} z_u = \fmax < +\infty$. Note that
  in this case $\ftwiceerrnearpos(\fmax)/2 \msub_\rup z_u \geq \fmin$, hence
  $\fmax \madd_\rup \bigl(\ftwiceerrnearpos(\fmax)/2 \msub_\rup z_u\bigr) = +\infty$.
\item[$z_u \geq \ftwiceerrnearpos(\fmax)/2.$]
  Since $\fodd(\fmax)$, for $x_\ell = +\infty$ we need
  $y$ to be greater than or equal to $\fmax + \ftwiceerrnearpos(\fmax)/2 - z_u$.
  Note that $y \geq \fmax + \ftwiceerrnearpos(\fmax)/2 - z_u$ together with
  \begin{equation}\label{eq:tab:addition-special-cases}
    \bigroundup{\fmax + \ftwiceerrnearpos(\fmax)/2 - z_u}
    = \fmax \madd_\rup \bigl(\ftwiceerrnearpos(\fmax)/2 \msub_\rup z_u\bigr)
  \end{equation}
  allows us to apply~\eqref{case:x-geq-e-implies-x-sgeq-evalup-e} of
  Proposition~\ref{prop:fp-approx-of-real-constraints},
  concluding
  $y \sgeq \fmax \madd_\rup \bigl(\ftwiceerrnearpos(\fmax)/2 \msub_\rup z_u\bigr)$.

  Equality~\eqref{eq:tab:addition-special-cases} holds because
  either the application of `$\mathord{\msub_\rup}$' is exact
  or the application of `$\mathord{\madd_\rup}$' is exact.
  In fact,
  since $z_u = m \cdot 2^e \geq \ftwiceerrnearpos(\fmax)/2 = 2^{\emax -p}$,
  for some $1 \leq m < 2$, there are two cases: either $e = \emax$ or
  $\emax - p \leq e < \emax$.

  Suppose first that $e = \emax$: we have
  \begin{align*}
    \ftwiceerrnearpos(\fmax)/2 - z_u
      &= 2^{\emax -p} - m \cdot 2^\emax \\
      &= -2^\emax (m - 2^{-p}),
  \end{align*}
  and thus
\[
  \ftwiceerrnearpos(\fmax)/2 \msub_\rup z_u
    =
      \begin{cases}
        -2^\emax (m - 2^{1-p}),
          &\text{if $m > 1$;} \\
        -2^{\emax - 1}(2 - 2^{1-p}),
          &\text{if $m = 1$.}
      \end{cases}
\]
Since if $e = \emax$ the application of `$\mathord{\msub_\rup}$' is not exact,
we prove that the application of `$\mathord{\madd_\rup}$' is exact.
Hence, if $m > 1$, we prove that
\begin{align*}
  \fmax + (\ftwiceerrnearpos(\fmax)/2 \msub_\rup z_u)
    &= 2^\emax(2 - 2^{1-p}) - 2^\emax(m - 2^{1-p}) \\
    &= 2^\emax(2 - 2^{1-p} - m + 2^{1-p}) \\
    &= 2^\emax(2 - m) \\
    &= 2^{\emax - k} \bigl(2^k (2 - m)\bigr)
\end{align*}
where $k \defeq -\bigl\lfloor\log_2(2-m)\bigr\rfloor$.
It is worth noting that $2^k (2 - m)$ can be represented by a normalized mantissa;
moreover, since $1 \leq k \leq p-1$,
$\emin \leq \emax - k \leq \emax$,
hence,
$\fmax + (\ftwiceerrnearpos(\fmax)/2 \msub_\rup z_u) \in \Fset$.
If, instead, $m = 1$,
\begin{align*}
  \fmax + (\ftwiceerrnearpos(\fmax)/2 \msub_\rup z_u)
    &= 2^\emax(2 - 2^{1-p}) - 2^{\emax-1}(2 - 2^{1-p}) \\
    &= (2^\emax - 2^{\emax-1})(2 - 2^{1-p}) \\
    &= 2^{\emax-1}(2 - 2^{1-p})
\end{align*}
and, also in this case,
$\fmax + (\ftwiceerrnearpos(\fmax)/2 \msub_\rup z_u) \in \Fset$.

Suppose now that $\emax - p \leq e < \emax$ and let
$h \defeq e - \emax + p$ so that $0 \leq h \leq p-1$.
In this case we show that the application of `$\mathord{\msub_\rup}$' is exact.
Indeed, we have
\begin{align*}
  \ftwiceerrnearpos(\fmax)/2 - z_u
    &= 2^{\emax -p} - m \cdot 2^e \\
    &= -2^{\emax - p} (m \cdot 2^h - 1) \\
    &= -2^{\emax - p + h} (m - 2^{-h}).
\end{align*}
If $e = \emax - p$ and $m = 1$, then $h = 0$, $m - 2^{-h} = 0$
and thus $\ftwiceerrnearpos(\fmax)/2 - z_u = 0$.
Otherwise, let $k \defeq -\bigl\lfloor\log_2(m - 2^{-h})\bigr\rfloor$.
We have
\[
  \ftwiceerrnearpos(\fmax)/2 - z_u
    = -2^{\emax - p + h -k} \bigl(2^k (m - 2^{-h})\bigr),
\]
which is an element of $\Fset$.
\end{description}

Dual arguments w.r.t. the ones used to justify cases of $\invaddl$
that give the result described by $a_4$, $a_6$ and $a_5$
can be used to justify the cases of $\invaddu$
described by $a_9$, $a_{10}$ and $a_7$.

We now tackle the entries of $\invaddl$ described by $a_3$,
and those of $\invaddu$ described by $a_8$.
Exploiting $x \sleq y \madd z$ and $x \sgeq y \madd z$,
by Proposition~\ref{prop:real-approx-of-fp-constraints}, we have
\begin{align*}
  y + z
    &\begin{cases}
      \mathord{} \geq x,
        &\text{if $\bar{r}_\ell = \rdown$;} \\
      \mathord{} > x + \ferrup(x) = \fpred(x),
        &\text{if $\bar{r}_\ell = \rup$;} \\
      \mathord{} \geq x + \ftwiceerrnearneg(x)/2,
        &\text{if $\bar{r}_\ell = \rnear$ and $\feven(x)$;} \\
      \mathord{} > x + \ftwiceerrnearneg(x)/2,
        &\text{if $\bar{r}_\ell = \rnear$ and $\fodd(x)$.}
    \end{cases} \\
  y + z
    &\begin{cases}
      \mathord{} < x + \ferrdown(x) = \fsucc(x),
        &\text{if $\bar{r}_u = \rdown$;} \\
      \mathord{} \leq x,
        &\text{if $\bar{r}_u = \rup$;} \\
      \mathord{} \leq x + \ftwiceerrnearpos(x)/2,
        &\text{if $\bar{r}_u = \rnear$ and $\feven(x)$;} \\
      \mathord{} < x + \ftwiceerrnearpos(x)/2,
        &\text{if $\bar{r}_u = \rnear$ and $\fodd(x)$.}
    \end{cases}
\end{align*}
The same case analysis gives us
\begin{align*}
  y
   &\begin{cases}
      \mathord{} \geq x - z,
        &\text{if $\bar{r}_\ell = \rdown$;} \\
      \mathord{} > \fpred(x) - z,
        &\text{if $\bar{r}_\ell = \rup$} \\
      \mathord{} \geq x + \ftwiceerrnearneg(x)/2 - z,
        &\text{if $\bar{r}_\ell = \rnear$ and $\feven(x)$;} \\
      \mathord{} > x + \ftwiceerrnearneg(x)/2 - z,
        &\text{if $\bar{r}_\ell = \rnear$ and $\fodd(x)$;}
    \end{cases} \\
  y
   &\begin{cases}
      \mathord{} < \fsucc(x) - z,
        &\text{if $\bar{r}_u = \rdown$;} \\
       \mathord{} \leq x - z,
        &\text{if $\bar{r}_u = \rup$;} \\
      \mathord{} \leq x + \ftwiceerrnearpos(x)/2 - z,
        &\text{if $\bar{r}_u = \rnear$ and $\feven(x)$;} \\
      \mathord{} < x + \ftwiceerrnearpos(x)/2 - z,
        &\text{if $\bar{r}_u = \rnear$ and $\fodd(x)$.}
    \end{cases}
\end{align*}

We can now exploit the fact that $x \in [x_\ell, x_u]$ and $z \in [z_\ell, z_u]$
with $x_\ell, x_u, z_\ell, z_u \in \Fset$ to obtain, using
Proposition~\ref{prop:min-max-x-ftwiceerrnearneg-ftwiceerrnearpos} and
the monotonicity of `$\fpred$' and `$\fsucc$':
\begin{align}
\label{eq:add-refine-real-lower_bounds}
 y
   &\begin{cases}
      \mathord{} \geq x_\ell - z_u,
        &\text{if $\bar{r}_\ell = \rdown$;} \\
      \mathord{} > \fpred(x_\ell) - z_u,
        &\text{if $\bar{r}_\ell = \rup$;} \\
      \mathord{} \geq x_\ell + \ftwiceerrnearneg(x_\ell)/2 - z_u,
        &\text{if $\bar{r}_\ell = \rnear$ and $\feven(x_\ell)$;} \\
      \mathord{} > x_\ell + \ftwiceerrnearneg(x_\ell)/2 - z_u,
        &\text{if $\bar{r}_\ell = \rnear$ and $\fodd(x_\ell)$.}
    \end{cases} \\
\label{eq:add-refine-real-upper_bounds}
 y
   &\begin{cases}
      \mathord{} < \fsucc(x_u) - z_\ell,
        &\text{if $\bar{r}_u = \rdown$;} \\
      \mathord{} \leq x_u-z_\ell,
        &\text{if $\bar{r}_u = \rup$;} \\
      \mathord{} \leq x_u + \ftwiceerrnearpos(x_u)/2 - z_\ell,
        &\text{if $\bar{r}_u = \rnear$ and $\feven(x_u)$;} \\
      \mathord{} < x_u + \ftwiceerrnearpos(x_u)/2 - z_\ell,
        &\text{if $\bar{r}_u = \rnear$ and $\fodd(x_u)$.}
    \end{cases}
\end{align}
We can now exploit Proposition~\ref{prop:fp-approx-of-real-constraints}
and obtain
\begin{align}
\label{eq:add-inv-num-rup-rdown-lower}
  y'_\ell &\defeq
  \begin{cases}
      -0,
        &\text{if $\bar{r}_\ell = \rdown$ and $x_\ell = z_u$;} \\
      x_\ell \msub_\rup z_u,
        &\text{if $\bar{r}_\ell = \rdown$ and $x_\ell \neq z_u$;} \\
      \fsucc\bigl(\fpred(x_\ell) \msub_\rdown z_u\bigr),
        &\text{if $\bar{r}_\ell = \rup$;}
  \end{cases} \\
\label{eq:add-inv-num-rup-rdown-upper}
     y'_u &\defeq
  \begin{cases}
      \fpred\bigl(\fsucc(x_u) \msub_\rup z_\ell\bigr),
        &\text{if $\bar{r}_u = \rdown$;} \\
      +0,
        &\text{if $\bar{r}_u = \rup$ and $x_u = z_\ell$;} \\
      x_u \msub_\rdown z_\ell,
        &\text{if $\bar{r}_u = \rup$ and $x_u \neq z_\ell$.}
  \end{cases}
\end{align}

In fact, if $x_\ell = z_u$, then, according to IEEE~754
\cite[Section~6.3]{IEEE-754-2008}, for each non-NaN, nonzero and
finite $w \in \Fset$, $-0$ is the least value for $y$ that satisfies
$w = y \madd_\rdown w$.
If $x_\ell \neq z_u$,
then case~\eqref{case:x-geq-e-implies-x-sgeq-evalup-e}
of Proposition~\ref{prop:fp-approx-of-real-constraints}
applies and we have $y \sgeq x_\ell \msub_\rup z_u$.
Suppose now that $\fpred(x_\ell) = z_u$,
then $\fpred(x_\ell) \msub_\rdown z_u \equiv 0$ and
$\fsucc\bigl(\fpred(x_\ell) \msub_\rdown z_u\bigr) = \fmin$,
coherently with the fact that, for each non-NaN, nonzero
and finite $w \in \Fset$, $\fmin$ is the least value for $y$ that satisfies
$w = y \madd_\rup \fpred(w)$.
If $\fpred(x_\ell) \neq z_u$,
then case~\eqref{prop:fp-approx-of-real-constraints:2}
of Proposition~\ref{prop:fp-approx-of-real-constraints}
applies and we have
$y \sgeq \fsucc\bigl(\fpred(x_\ell) \msub_\rdown z_u\bigr)$.
A symmetric argument justifies~\eqref{eq:add-inv-num-rup-rdown-upper}.

For the remaining cases, we first show that when
$\ftwiceerrnearpos(x) = \fmin$,
\begin{equation}\label{eq:inverse-addition-1}
\bigrounddown{x_u + \ftwiceerrnearpos(x_u)/2 - z_\ell } = \rounddown{x_u - z_\ell }.
\end{equation}
The previous equality has the following main consequences:
we can perform the computation
in $\Fset$, that is, we do not need to compute
$\ftwiceerrnearpos(x)/2$ and, since $\rounddown{x_u - z_\ell } = \evaldown{x_u - z_\ell}$,
we can apply~\eqref{case:x-leq-e-implies-x-sleq-evaldown-e}
of Proposition~\ref{prop:fp-approx-of-real-constraints},
obtaining a tight bound for $y_u'$.

Let us prove~\eqref{eq:inverse-addition-1}.
Suppose $\ftwiceerrnearpos(x_u) = \fmin$, and assume $x_u \neq z_\ell$.
There are two cases:
\begin{description}
\item[$\rounddown{x_u - z_\ell} = x_u - z_\ell:$]
then we have $y \leq \rounddown{x_u - z_\ell} = x_u - z_\ell$
since the addition of
$\ftwiceerrnearpos(x_u)/2 = \fmin/2$ is insufficient to
reach $\fsucc(x_u - z_\ell)$, whose distance from $x_u - z_\ell$ is at least $\fmin$.

\item[$\rounddown{x_u - z_\ell} < x_u - z_\ell < \roundup{x_u - z_\ell}:$]
since by Definition~\ref{def:binary-floating-point-number}
every finite floating-point number is an integral multiple of $\fmin$,
so are $x_u - z_\ell$ and $\roundup{x_u - z_\ell}$.
Therefore,
again, $y \leq \rounddown{x_u - z_\ell}$, since the addition
of $\ftwiceerrnearpos(x_u)/2 = \fmin/2$ $x_u - z_\ell$
is insufficient to reach $\roundup{x_u - z_\ell}$,
whose distance from $x_u - z_\ell$ is at least $\fmin$.
\end{description}
In the case where $x_u = z_\ell$ we have
\(
  \bigrounddown{x_u + \ftwiceerrnearpos(x_u)/2 - z_\ell}
    = \rounddown{0 + \fmin/2 }
    = +0
\),
hence~\eqref{eq:inverse-addition-1} holds.
As we have already pointed out, this allows us to
apply~\eqref{case:x-leq-e-implies-x-sleq-evaldown-e}
of Proposition~\ref{prop:fp-approx-of-real-constraints}
to the case $\ftwiceerrnearpos(x_u) = \fmin$, obtaining the bound
$y \sleq \rounddown{x_u - z_\ell}$.

Similar arguments can be applied to
$\ftwiceerrnearneg(x_\ell)$
whenever $\ftwiceerrnearneg(x_\ell) = -\fmin$
to prove that
$\bigroundup{x_\ell + \ftwiceerrnearneg(x_\ell)/2 - z_u} = \roundup{x_\ell - z_u}$.
Then, by~\eqref{case:x-geq-e-implies-x-sgeq-evalup-e}
of Proposition~\ref{prop:fp-approx-of-real-constraints},
we obtain $y \sgeq \roundup{x_\ell - z_u}$.

When the terms $\ftwiceerrnearneg(x_\ell)$ and $\ftwiceerrnearpos(x_u)$
are non-negligible, we need to approximate the values of the expressions
$e_\ell \defeq x_\ell + \ftwiceerrnearneg(x_\ell)/2 - z_u$ and
$e_u \defeq x_u + \ftwiceerrnearpos(x_u)/2 - z_\ell$.
Hence, we have the cases
$\evalup{e_\ell} = \roundup{e_\ell}$ and $\evaldown{e_u} = \rounddown{e_u}$
as well as
$\evalup{e_\ell} > \roundup{e_\ell}$ and $\evaldown{e_u} < \rounddown{e_u}$.
Thus, when
$\evaldown{e_u} < \rounddown{e_u}$
by~\eqref{eq:add-refine-real-upper_bounds}
and~\eqref{prop:fp-approx-of-real-constraints:3}
of Proposition~\ref{prop:fp-approx-of-real-constraints}
we obtain $y \sleq \evalup{e_u}$, while, when
$\evaldown{e_\ell} > \rounddown{e_\ell}$
by~\eqref{eq:add-refine-real-upper_bounds}
and~\eqref{prop:fp-approx-of-real-constraints:1}
of Proposition~\ref{prop:fp-approx-of-real-constraints}
we obtain $y \sgeq \evaldown{e_\ell}$.
Finally, when
$\fodd(x_u)$, by~\eqref{eq:add-refine-real-upper_bounds}
and~\eqref{prop:fp-approx-of-real-constraints:4}
of Proposition~\ref{prop:fp-approx-of-real-constraints},
we obtain $y \sleq \fpred\bigl(\evalup{e_u}\bigr)$.
Dually, when $\fodd(x_\ell)$ by~\eqref{eq:add-refine-real-lower_bounds}
and~\eqref{prop:fp-approx-of-real-constraints:2}
of Proposition~\ref{prop:fp-approx-of-real-constraints},
we obtain $y \sgeq  \fsucc\bigl(\evaldown{e_\ell}\bigr)$.

Thus, for the case  $\bar{r}_\ell = \rnear$ we have
\begin{align}
\label{eq:add-inv-num-rnear-lower}
  y'_\ell &\defeq
    \begin{cases}
      -0,
        &\text{if $\ftwiceerrnearneg(x_\ell) = \fmin$ and $x_\ell = z_u$;} \\
      x_\ell \msub_\rup z_u,
        &\text{if $\ftwiceerrnearneg(x_\ell) = \fmin$ and $x_\ell \neq z_u$;} \\
     \evalup{e_\ell},
        &\text{if $\feven(x_\ell)$, $\ftwiceerrnearneg(x_\ell) \neq \fmin$
               and $\evalup{e_\ell} = \roundup{e_\ell}$;} \\
               \evaldown{e_\ell},
        &\text{if $\feven(x_\ell)$, $\ftwiceerrnearneg(x_\ell) \neq \fmin$ and $\evalup{e_\ell} >\roundup{e_\ell}$;} \\
      \fsucc\bigl(\evaldown{e_\ell}\bigr),
        &\text{otherwise.}
    \end{cases}
    \end{align}

whereas, for the case where $\bar{r}_u = \rnear$, we have
\begin{align}
\label{eq:add-inv-num-rnear-upper}
  y'_u &\defeq
    \begin{cases}
      +0,
        &\text{if $\ftwiceerrnearpos(x_u) = \fmin$ and $x_u = z_\ell$;}\\
      x_u \msub_\rdown z_\ell,
        &\text{if $\ftwiceerrnearpos(x_u) = \fmin$ and  $x_u \neq z_\ell$;} \\
      \evaldown{e_u},
        &\text{if $\feven(x_u)$, $\ftwiceerrnearpos(x_u) \neq \fmin$
               and $\evaldown{e_u} = \rounddown{e_u}$;} \\
                 \evalup{e_u},
        &\text{if $\feven(x_u)$, $\ftwiceerrnearneg(x_u) \neq \fmin$ and $\evalup{e_u} <\roundup{e_u}$;} \\
      \fpred\bigl(\evalup{e_u}\bigr),
        &\text{otherwise.}
    \end{cases}
\end{align}
\end{proof}

\begin{example}
Let $X = [+0, +\infty]$ and $Z = [-\infty, +\infty]$.
Regardless of the rounding mode, the calls to functions
$\invaddl(+0, +\infty, \bar{r}_\ell)$ and
$\invaddu(+\infty, -\infty, \bar{r}_u)$
yield $Y' = [-\fmax, +\infty]$.
Note that $-\fmax$ is the lowest value that variable $y$
could take, since there is no value for $z \in Z$
that summed with $-\infty$ gives a value in $X$.
Indeed, if we take $z = +\fmax$,
then we have $-\fmax \madd_r +\fmax = +0 \in X$
for any $r \in R$.
On the other hand, $+\infty$ is clearly the highest value
$y$ could take, since $+\infty \madd_r z = +\infty \in X$
for any value of $z \in Z \setdiff \{-\infty\}$.
In this case, our projections yield a more refined result
than the competing tool FPSE \cite{BotellaGM06},
which computes the wider interval $Y' = [-\infty, +\infty]$.
\end{example}

\begin{example}
Consider also
$X = [1.0, 2.0]$ and $Z = [-\float{1.0}{30}, \float{1.0}{30}]$ and $S=\{\rnear\}.$
With our inverse projection we obtain
$Y = [-\float{1.1 \cdots 1}{29}, \float{1.0}{30}]$
which is correct but not optimal.
For example, pick $y = \float{1.0}{30}$:
for $z = -\float{1.0}{30}$ we have $y \madd_S z = 0$
and $y \madd_S z^+ = 64$.
By monotonicity of $\madd_S$, for no $z \in [-\float{1.0}{30}, \float{1.0}{30}]$
we can have $y \madd_S z \in [1.0, 2.0]$.
\end{example}

One of the reasons the inverse projection for addition is not optimal is
because floating point numbers present some peculiar properties that
are not related in any way to those of real numbers.  For
interval-based consistency approaches, \cite{MarreM10} identified a
property of the representation of floating-point numbers and proposed
to exploit it in filtering algorithms for addition and subtraction
constraints.  In \cite{BagnaraCGG13ICST,BagnaraCGG16IJOC} some of these authors
revised and corrected the Michel and Marre filtering algorithm on
intervals for addition/subtraction constraints under the round to
nearest rounding mode.  A generalization of such algorithm to the all
rounding modes should be used to enhance the precision of the
classical inverse projection of addition.  Indeed, classical and
maximum ULP filtering \cite{BagnaraCGG16IJOC} for addition are
orthogonal: both should be applied in order to obtain optimal
results. Therefore, inverse projections for addition, as the one
proposed above, have to be intersected with a filter based on the
Michel and Marre property in order to obtain more precise results.

\begin{example}
Assume, again, $X = [1.0, 2.0]$
and $Z = [-\float{1.0}{30}, \float{1.0}{30}]$
and $S = \{\rnear\}$.
By applying maximum ULP filtering~\cite{MarreM10,BagnaraCGG16IJOC},
we obtain the much tighter intervals
$Y, Z = [-\float{1.1\cdots1}{24}, \float{1.0}{25}]$.
These are actually optimal as
\(
  -\float{1.1\cdots1}{24} \madd_S \float{1.0}{25}
  =
  \float{1.0}{25} \madd_S -\float{1.1\cdots1}{24}
  = 2.0
\).
This example shows that filtering by maximum ULP can be stronger
than our interval-consistency based filtering.
However, the opposite phenomenon is also possible. Consider again
$X = [1.0, 2.0]$ and
$Z = [1.0, 5.0]$.
Filtering by maximum ULP projection gives
$Z = [-\float{1.1\cdots1}{24}, \float{1.0}{25}]$;
in contrast, our inverse projection exploits the available
information on $Z$ to obtain $Y = [-4, 1.0 \cdots 01]$.
As we already stated, our filtering and maximum ULP filtering
should both be applied in order to obtain precise results.
\end{example}

Exploiting the commutative property of addition,
the refinement $Z'$ of $Z$ can be defined analogously.

\subsubsection{Division}

In this section we deal with constraints of the form
$x = y \mdiv_S z$ with $S \sseq R$.

\paragraph{Direct Propagation.}

For direct propagation, interval $Z$ is partitioned
into the sign-homogeneous intervals
$Z_- \defeq Z \inters [-\infty, -0]$ and $Z_+ \defeq Z \inters [+0, +\infty]$.
This is needed because the sign of operand $z$ determines the
monotonicity with respect to $y$,
and therefore the interval bounds to be used for propagation
depend on it.
Hence, once $Z$ has been partitioned into sign-homogeneous intervals,
we use the interval $Y$ and $W = Z_-$, to obtain
the new interval $[x^-_\ell, x^-_u]$, and $Y$ and  $W=Z_+$, to obtain $[x^+_\ell, x^+_u]$.
The appropriate bounds for interval propagation are chosen
by function $\tau$ of Figure~\ref{fig:the-tau-function}.
Note that the sign of $z$ is, by construction, constant over interval $W$.
The selected values are then taken as arguments by functions
$\dirdivl$ and $\dirdivu$ of Figure~\ref{fig:direct-projection-division},
which return the correct bounds for the aforementioned
new intervals for $X$.
The intervals $X \inters [x^-_\ell, x^-_u]$ and
$X \inters [x^+_\ell, x^+_u]$ are eventually joined using convex union,
denoted by $\biguplus$, to obtain the refining interval $X'$.

\begin{figure}[ht!]
\[
  \tau(y_\ell, y_u, w_\ell, w_u)
    \defeq
            \begin{cases}
        (y_u, y_\ell, w_\ell, w_u),
          &\text{if $\sgn(w_u)=\sgn(y_u) = -1$;} \\
        (y_u, y_\ell, w_u, w_\ell),
          &\text{if $-\sgn(w_u)=\sgn(y_\ell) = 1$;} \\
        (y_u, y_\ell, w_u, w_u),
          &\text{if $-\sgn(w_u)=-\sgn(y_\ell) = \sgn(y_u)= 1$;}\\
            (y_\ell, y_u, w_\ell, w_u),
          &\text{if $-\sgn(w_\ell)=\sgn(y_u) = -1$;} \\
        (y_\ell, y_u, w_u, w_\ell),
          &\text{if $\sgn(w_\ell)=\sgn(y_\ell) = 1$;} \\
        (y_\ell, y_u, w_\ell, w_\ell),
          &\text{if $\sgn(w_\ell)=-\sgn(y_\ell) = \sgn(y_u)= 1$.}
      \end{cases} \\
         \]
\caption{Direct  projection of division: the function $\tau$; assumes $\sgn(w_\ell) = \sgn(w_u)$}
         \label{fig:the-tau-function}
\end{figure}

\begin{algorithm}
\caption{Direct projection for division constraints.}
\label{algo5}
\begin{algorithmic}[1]
\REQUIRE $x = y \mdiv_S z$,
$x \in X = [x_\ell, x_u]$,
$y \in Y = [y_\ell, y_u]$ and
$z \in Z = [z_\ell, z_u]$.
\ENSURE
$X' \sseq  X$ and
\(
  \forall r \in S, x \in X, y \in Y, z \in Z
    \itc  x = y \mdiv_r z \implies x \in X'
\) and
\(
\forall X''\subset X, \exists r \in S, y \in Y, z \in Z \itc y  \mdiv_r z\not\in X''\).
\STATE
$Z_-\assign  Z \inters [-\infty, -0];$
\IF {$Z_- = [z^-_\ell, z^-_u]  \neq \emptyset$ }
\STATE $W \assign Z_- $;
    \STATE
    $(y_L, y_U,w_L, w_U)\assign \tau(y_\ell, y_u, w_\ell, w_u)$
   \STATE $r_\ell \assign r_\ell(S, y_L, \mdiv, w_L)$; $r_u \assign r_u(S, y_U, \mdiv, w_U)$;
   \STATE
 $x^-_\ell \assign \dirdivl(y_L, w_L, r_\ell)$;
  $x^-_u \assign \dirdivu(y_U, w_U, r_u)$;
\ELSE
      \STATE
      $[x^-_\ell, x^-_u] \assign \emptyset$;
      \ENDIF
 \STATE
   $X'_-= X \inters[x^-_\ell, x^-_u];  $
      \STATE
$Z_+\assign  Z \inters [+0, +\infty];$
\IF {$Z_+ = [z^+_\ell, z^+_u]  \neq \emptyset$ }
    \STATE $W \assign Z_+ $;
 \STATE
    $(y_L, y_U,w_L, w_U)\assign \tau(y_\ell, y_u, w_\ell, w_u)$
       \STATE $r_\ell \assign r_\ell(S, y_L, \mdiv, w_L)$; $r_u \assign r_u(S, y_U, \mdiv, w_U)$;
   \STATE
$x^+_\ell \assign \dirdivl(y_L, w_L, r_\ell)$;
  $x^+_u \assign \dirdivu(y_U, w_U, r_u)$;
  \ELSE
      \STATE
      $[x^+_\ell, x^+_u] \assign \emptyset$;
     \ENDIF
    \STATE
   $X'_+= X \inters[x^+_\ell, x^+_u];  $
             \STATE
$X' \assign X'_-  \biguplus X'_+ $;
  \end{algorithmic}
\end{algorithm}

\begin{figure}[ht!]
\begin{tabular}{L|CCCCCC}
\dirdivl(y_L, w_L,r_\ell)
        & -\infty & \Rset_-  & -0        & +0      & \Rset_+  & +\infty      \\
\hline
-\infty & +\infty & +\infty  & +\infty   & -\infty & -\infty  & -0        \\
\Rset_- & +0      & y_L \mdiv_{r_\ell} w_L      & +\infty   & -\infty & y_L \mdiv_{r_\ell} w_L      & -0        \\
-0      & +0      & +0       & +\infty   & -0      & -0       & -0        \\
+0      & -0      & -0       & -0        & +\infty & +0       & +0        \\
\Rset_+ & -0      & y_L \mdiv_{r_\ell} w_L      & -\infty   & +\infty & y_L \mdiv_{r_\ell} w_L      & +0        \\
+\infty & -0      & -\infty  & -\infty   & +\infty & +\infty  & +\infty   \\
\end{tabular}

\bigskip

\begin{tabular}{L|CCCCCC}
\dirdivu(y_U, w_U,r_u)
        & -\infty & \Rset_-  & -0        & +0      & \Rset_+  & +\infty      \\
\hline
-\infty & +0      & +\infty  & +\infty   & -\infty & -\infty  & -\infty   \\
\Rset_- & +0      & y_U \mdiv_{r_u} w_U     & +\infty   & -\infty & y_U \mdiv_{r_u} w_U     & -0        \\
-0      & +0      & +0       & +0        & -\infty & -0       & -0        \\
+0      & -0      & -0       & -\infty   & +0      & +0       & +0        \\
\Rset_+ & -0      & y_U \mdiv_{r_u} w_U    & -\infty   & +\infty & y_U \mdiv_{r_u} w_U     & +0        \\
+\infty & -\infty & -\infty  & -\infty   & +\infty & +\infty  & +0        \\
\end{tabular}
\caption{Case analyses for direct propagation of division.}
\label{fig:direct-projection-division}
\end{figure}

It can be proved that \textup{Algorithm~\ref{algo5}}
computes a \emph{correct} and \emph{optimal direct projection},
as ensured by its postconditions.

\begin{theorem}
\label{teo:direct-projection-division}
\textup{Algorithm~\ref{algo5}} satisfies its contract.
\end{theorem}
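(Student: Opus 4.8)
The plan is to verify the two postconditions of Algorithm~\ref{algo5} --- correctness (\(X'\sseq X\) and \(X'\) retains every attainable \(x\)) and optimality (\(X'\) is the least such interval) --- by decomposing along the sign partition of the divisor that the algorithm performs. Since \([-\infty,-0]\) and \([+0,+\infty]\) are disjoint and cover \(\Fset\) minus the NaNs, every non-NaN \(z\in Z\) lies in exactly one of \(Z_-=Z\inters[-\infty,-0]\), \(Z_+=Z\inters[+0,+\infty]\), and that part is then non-empty, so the matching branch of lines~2--20 runs. It therefore suffices to prove, for each non-empty \(W\in\{Z_-,Z_+\}\), that the interval \([x^W_l,x^W_u]\) produced by the two calls to \(\dirdivl\) and \(\dirdivu\) equals the least interval of \(\cI_\Fset\) containing
\[
  A_W \defeq \bigl\{\, y\mdiv_r z \bigm| y\in Y,\ z\in W,\ r\in S,\ y\mdiv_r z\text{ is not a NaN} \,\bigr\};
\]
the postconditions for \(X'=\bigl(X\inters[x^{Z_-}_l,x^{Z_-}_u]\bigr)\biguplus\bigl(X\inters[x^{Z_+}_l,x^{Z_+}_u]\bigr)\) then follow by elementary reasoning about intersection and convex union.

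For correctness on a fixed part \(W\), the key structural fact is that, since \(\sgn(z)\) is constant on \(W\), for each \(r\in S\) the map \((y,z)\mapsto y\mdiv_r z\) is monotone in each argument --- inherited from the monotonicity of real division on sign-homogeneous domains and of the rounding functions of Definition~\ref{def:rounding-functions}. Hence its extrema over \(Y\times W\) are attained at corners, and one checks, in each of the six sign cases of Figure~\ref{fig:the-tau-function}, that \(\tau\) selects precisely the corner \((y_L,w_L)\) minimising and the corner \((y_U,w_U)\) maximising the real quotient. For the ``generic'' table entries, where both selected operands lie in \(\Rset_-\cup\Rset_+\), the entry is \(y_L\mdiv_{r_l}w_L\) with \(r_l=r_l(S,y_L,\mdiv,w_L)\), and Proposition~\ref{prop:worst-case-rounding-modes} yields both \(y_L\mdiv_{r_l}w_L\sleq y\mdiv_r z\) for all admissible \(y,z,r\) and a rounding mode \(r'\in S\) realising \(y_L\mdiv_{r'}w_L=x^W_l\) exactly (dually for \(x^W_u\) via \(r_u\)). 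The remaining entries --- those where a selected operand is \(\pm\infty\) or a signed zero, or where the exact quotient is \(0\) so the rounding functions are undefined --- must be checked one at a time; what makes this finite check go through is that a selected corner being a special value pins down the shape of \(Y\) and \(W\) within the relevant \(\tau\)-case, so the listed constant is simultaneously a bound for, and an element of, the NaN-free achievable set (e.g.\ \(\dirdivl(-\infty,-\infty,\cdot)=+\infty\) is correct because in the \(\tau\)-case producing it one has \(Y=\{-\infty\}\), and \(-\infty\) divided by any element of \(W\) is either \(+\infty\) or a NaN). Given these per-part facts, \(X'\sseq X\) is immediate, and any attainable \(x=y\mdiv_r z\in X\) lies in \([x^W_l,x^W_u]\) for its part \(W\), hence in \(X\inters[x^W_l,x^W_u]\sseq X'\).

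For optimality one must show that \(\inf X'\) and \(\sup X'\) are themselves values of the full achievable set \(A=A_{Z_-}\union A_{Z_+}\), and that \(X'=\emptyset\) exactly when no attainable \(x\) lies in \(X\). When the intersection with \(X\) is inactive on a side, the corresponding bound is one of \(x^W_l\), \(x^W_u\), shown above to be attained. The delicate case is when the intersection with \(X\) is active, e.g.\ \(\inf X'=x_l\) with \(x_l\) strictly above \(\min A\): here I would exhibit a witness by choosing a suitable divisor \(z\) in the part that realises the minimum, taking \(y\) to be the floating-point number nearest \(x_l\cdot z\), and checking that \(y\in Y\) and \(y\mdiv_r z=x_l\) for an appropriate \(r\in S\), using that \(x_l\) is squeezed strictly between the extreme attainable quotients. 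I expect this attainment-under-clipping step --- together with the bookkeeping of the convex union when one part is empty, or when \(X\) clips a part entirely --- to be the main obstacle; the rest of the proof is a large but essentially mechanical case analysis over the six cases of \(\tau\), the thirty-six entries of each of the two tables in Figure~\ref{fig:direct-projection-division}, and the three effective rounding modes \(\rdown\), \(\rup\), \(\rnear\).
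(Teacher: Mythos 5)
Your treatment of correctness, and of the attainment of the per-part bounds, is the paper's own argument: split $Z$ by sign, use the monotonicity of $(y,z)\mapsto y\mdiv_r z$ for fixed $r$ on a sign-homogeneous $W$ to reduce to the corner selected by $\tau$ (Figure~\ref{fig:the-tau-function}), invoke Proposition~\ref{prop:worst-case-rounding-modes} at that corner to replace $S$ by the worst-case modes $r_l,r_u$ and to obtain an $r'\in S$ with $y_L\mdiv_{r'}w_L=x^W_l$, and check the special-value entries of Figure~\ref{fig:direct-projection-division} by finite enumeration. Up to there the proposal is sound and essentially identical to the paper's proof.

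The step you flag as the main obstacle --- exhibiting a witness for a \emph{clipped} bound --- is where the proposal goes wrong, for two reasons. First, it is not required by the contract: the optimality clause (cf.~\eqref{eq:direct-propagation-optimality}; the $X''\sslt X$ in the \textbf{Ensure} of Algorithm~\ref{algo5} is evidently $X''\sslt X'$, as in Algorithm~\ref{algo1}) only demands that every $X''\sslt X'$ exclude \emph{some} attainable $y\mdiv_r z$, and that value need not lie in $X$ or in $X'$. Once the unclipped bounds $x^W_l,x^W_u$ are known to be attained, this is automatic: a proper subinterval of $X'$ must drop $\min X'$ or $\max X'$; if that endpoint is an unclipped bound it is an attained value that is excluded, and if it is a clipped bound then the corresponding attained $x^W_l$ (or $x^W_u$) already lies outside $X'\Sseq X''$. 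This is exactly where the paper stops. Second, and more seriously, the stronger statement you set out to prove --- that $\min X'$ and $\max X'$ are always attained and that $X'=\emptyset$ exactly when no attainable value lies in $X$ --- is false, so the witness construction (``take $y$ nearest $x_l\cdot z$ and check $y\mdiv_r z=x_l$'') cannot be completed. The achievable set is in general a proper subset of $[x^W_l,x^W_u]\inters\Fset$ with gaps: e.g., with $S=\{\rnear\}$, $Y=[1,1]$, $Z=[1,2]$, the map $z\mapsto\roundnear{1/z}$ skips some floating-point numbers of $(1/2,1)$, and if $X$ is a degenerate interval at such a skipped float then $X'=X\neq\emptyset$ although no $y\mdiv_r z$ lies in it and $\min X'$ is not attained. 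Being squeezed between the extreme attainable quotients does not make a float attainable. Deleting this step and keeping only your attainment argument for the unclipped bounds yields a correct proof of the stated theorem.
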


\begin{example}
Consider $Y = [-0, 42]$, $Z = [-3, 6]$ and any value of $S$.
First, $Z$ is split into $Z_- = [-3, -0]$ and $Z_+ = [+0, 6]$.
For the negative interval, the third case of $\tau(-0, 42, -3, -0)$ applies,
yielding $(y_L, y_U,w_L, w_U) = (42, -0, -0, -0)$.
Then, the projection functions are invoked, and we have
$\dirdivl(42, -0, r_\ell) = -\infty$ and
$\dirdivu(-0, -0, r_u) = +0$, i.e., $[x^-_\ell, x^-_u] = [-\infty, +0]$.
For the positive part, we have $\tau(-0, 42, +0, 6) = (-0, 42, +0, +0)$ (sixth case).
From the projections we obtain
$\dirdivl(-0, +0, r_\ell) = -0$ and
$\dirdivu(42, +0, r_u) = +\infty$, and $[x^+_\ell, x^+_u] = [-0, +\infty]$.
Finally, $X' = [x^-_\ell, x^-_u] \biguplus [x^+_\ell, x^+_u] = [-\infty, +\infty]$.
\end{example}

\paragraph{Inverse Propagation (First Projection).}

The inverse projections of division must be handled separately for each operand.
The projection on $y$ is the \emph{first} inverse projection.
This case requires, as explained for Algorithm~\ref{algo5},
to split $Z$ into the sign-homogeneous intervals
$Z_- \defeq Z \inters [-\infty, -0]$ and
$Z_+ \defeq Z \inters [+0, +\infty]$.
Then, in order to select the extrema that determine
the appropriate lower and upper bound for $y$,
function $\sigma$ of Figure~\ref{fig:the-sigma-function}
is applied.

\begin{figure}[h]
\[
  \sigma(z_\ell, z_u, x_\ell, x_u)
    \defeq
      \begin{cases}
        (z_\ell, z_u, x_\ell, x_u),
          &\text{if $\sgn(z_\ell) = \sgn(x_\ell) = 1$;} \\
        (z_u, z_\ell, x_\ell, x_u),
          &\text{if $\sgn(z_\ell) = -\sgn(x_u) = 1$;} \\
        (z_u, z_u, x_\ell, x_u),
          &\text{if $\sgn(z_\ell) = -\sgn(x_\ell) = \sgn(x_u) = 1$;} \\
        (z_u, z_\ell, x_u, x_\ell),
          &\text{if $\sgn(z_u) = \sgn(x_u) = -1$;} \\
        (z_\ell, z_u, x_u, x_\ell),
          &\text{if $-\sgn(z_u) = \sgn(x_\ell) = 1$;} \\
        (z_\ell, z_\ell, x_u, x_\ell),
          &\text{if $-\sgn(z_u) = -\sgn(x_\ell) = \sgn(x_u) = 1$.}
      \end{cases}
\]
\caption{First inverse projection of division: the function $\sigma$;
         assumes $\sgn(z_\ell) = \sgn(z_u)$}
         \label{fig:the-sigma-function}
\end{figure}

\begin{algorithm}[h]
\caption{First inverse projection for division constraints.}
\label{algo6}
\begin{algorithmic}[1]
\REQUIRE $x = y \mdiv_S z$,
$x \in X = [x_\ell, x_u]$,
$y \in Y = [y_\ell, y_u]$ and
$z \in Z = [z_\ell, z_u]$.
\ENSURE
$Y' \sseq  Y$ and
\(
  \forall r \in S, x \in X, y \in Y, z \in Z
    \itc  x = y \mdiv_r z \implies y \in Y'
\).
\STATE
$Z_-\assign  Z \inters [-\infty, -0];$
\IF {$Z_- = [z^-_\ell, z^-_u]  \neq \emptyset$ }
\STATE $W \assign Z_- $;
 \STATE
    $(w_L, w_U,x_L, x_U)\assign \sigma(w_\ell, w_u, x_\ell, x_u)$
         \STATE
        $\bar{r}_\ell\assign\bar{r}_\ell^\ell(S, x_L,  \mdiv, w_L)$;   $\bar{r}_u\assign  \bar{r}_u^\ell(S, x_U,  \mdiv, w_U) ;$
\STATE
$y^-_\ell \assign \invfirstdivl(x_L, w_L, \bar{r}_\ell)$;
  $y^-_u \assign \invfirstdivu(x_U, w_U, \bar{r}_u)$;
 \IF {$y^-_\ell \in \Fset$ and $y^-_u\in \Fset$}
   \STATE
   $Y'_-= Y \inters[y^-_\ell, y^-_u];  $
\ELSE
\STATE
     $  Y'_-=\emptyset;  $
        \ENDIF
    \ELSE
\STATE
     $ Y'_-=\emptyset; $
     \ENDIF
  \STATE
$Z_+\assign  Z \inters [+0,+\infty];$
\IF {$Z_+ = [z^+_\ell, z^+_u]  \neq \emptyset$ }
\STATE $W \assign Z_+ $;
\STATE
 $(w_L, w_U,x_L, x_U)\assign \sigma(w_\ell, w_u, x_\ell, x_u)$;
 \STATE
         $\bar{r}_\ell\assign\bar{r}_\ell^\ell(S, x_L,  \mdiv, w_L)$;   $\bar{r}_u\assign  \bar{r}_u^\ell(S, x_U,  \mdiv, w_U) ;$
             \STATE
$y^+_\ell \assign \invfirstdivl(x_L, w_L, \bar{r}_\ell)$;
  $y^+_u \assign \invfirstdivu(x_U, w_U, \bar{r}_u)$;
 \IF {$y^+_\ell \in \Fset$ and $y^+_u\in \Fset$}
   \STATE
   $Y'_+= Y \inters[y^+_\ell, y^+_u];  $
\ELSE
\STATE
     $  Y'_+=\emptyset;  $
        \ENDIF
    \ELSE
\STATE
     $ Y'_+=\emptyset; $
     \ENDIF
\STATE
$Y' \assign Y'_-  \biguplus Y'_+ $;
 \end{algorithmic}
\end{algorithm}

\begin{figure}[p!]
\begin{tabular}{L|CCCCCC}
\invfirstdivl(x_L, w_L,\bar{r}_\ell)
        & -\infty & \Rset_- & -0      & +0      & \Rset_+ & +\infty \\
\hline
-\infty & \uns    & a_4     & \fmin   & -\infty & -\infty & -\infty \\
\Rset_- & \uns    & a^-_3   & \fmin   & \fmin & a^+_3   & -\fmax  \\
-0      & +0      & +0      & +0      & \fmin & a_7     & -\fmax  \\
+0      & -\fmax  & a_6     & \fmin  & +0      & +0      & +0      \\
\Rset_+ & -\fmax  & a^-_3   & \fmin & \fmin   & a^+_3   & \uns    \\
+\infty & -\infty & -\infty & -\infty & \fmin   & a_5     & \uns    \\
\end{tabular}
\begin{align*}
e^+_\ell & \equiv (x_L+ \ftwiceerrnearneg(x_L)/2) \cdot w_L;\\
a^+_3
    &=
      \begin{cases}
      \evalup{e^+_\ell},
        &\text{if $\bar{r}_\ell = \rnear$, $\feven(x_L)$
               and $\evalup{e^+_\ell} = \roundup{e^+_\ell}$;} \\
      \evaldown{e^+_\ell},
        &\text{if $\bar{r}_\ell = \rnear$, $\feven(x_L)$ and $\evalup{e^+_\ell} > \roundup{e^+_\ell}$;} \\
      \fsucc\bigl(\evaldown{e^+_\ell}\bigr),
        &\text{if $\bar{r}_\ell = \rnear$, otherwise;} \\
      x_L \mmul_\rup w_L,
        &\text{if $\bar{r}_\ell = \rdown$;} \\
      \fsucc\bigl(\fpred(x_L) \mmul_\rdown w_L\bigr),
        &\text{if $\bar{r}_\ell = \rup$;}
      \end{cases} \\
 e^-_\ell & \equiv (x_L+ \ftwiceerrnearpos(x_L)/2) \cdot w_L; \\
 a^-_3
    &=
      \begin{cases}
      \evalup{e^-_\ell},
        &\text{if $\bar{r}_\ell = \rnear$, $\feven(x_L)$
               and $\evalup{e^-_\ell} = \roundup{e^-_\ell}$;} \\
      \evaldown{e^-_\ell},
        &\text{if $\bar{r}_\ell= \rnear$, $\feven(x_L)$  and $\evalup{e^-_\ell} >\roundup{e^-_\ell}$;} \\
      \fsucc\bigl(\evaldown{e^-_\ell}\bigr),
        &\text{if $\bar{r}_\ell = \rnear$, otherwise;} \\
      x_L \mmul_\rup w_L,
        &\text{if $\bar{r}_\ell = \rup$;} \\
      \fsucc\bigl(\fsucc(x_L) \mmul_\rdown w_L\bigr),
        &\text{if $\bar{r}_\ell = \rdown$;}
      \end{cases} \\
e^1_\ell & \equiv (-\fmax+ \ftwiceerrnearneg(-\fmax)/2) \cdot w_L; \\
a_4
    &=
    \begin{cases}
    +\infty,
        &\text{if $\bar{r}_\ell = \rup$;} \\
    \fsucc(-\fmax \mmul_\rdown w_L),
        &\text{if $\bar{r}_\ell = \rdown$;} \\
    \evalup{ e^1_\ell},
        &\text{if $\bar{r}_\ell = \rnear$ and   $\roundup{e^1_\ell}  = \evalup{e^1_\ell}$;} \\
    \evaldown{e^1_\ell},
        &\text{if $\bar{r}_\ell = \rnear$, otherwise;} \\
    \end{cases} \\
e^2_\ell &  \equiv (\fmax + \ftwiceerrnearpos(\fmax)/2) \cdot w_L; \\
a_5
    &=
    \begin{cases}
    +\infty,
        &\text{if $\bar{r}_\ell = \rdown$;} \\
    \fsucc(\fmax \mmul_\rdown w_L),
        &\text{if $\bar{r}_\ell = \rup$;} \\
    \evalup{ e^2_\ell},
        &\text{if $\bar{r}_\ell = \rnear$ and $\roundup{e^2_\ell} = \evalup{e^2_\ell}$;} \\
    \evaldown{e^2_\ell},
        &\text{if $\bar{r}_\ell = \rnear$, otherwise;} \\
    \end{cases} \\
(a_6, a_7)
    &=
    \begin{cases}
    (-0,\;\fsucc(-\fmin \mmul_\rdown w_L)),
        &\text{if $\bar{r}_\ell = \rup$;} \\
    (\fsucc(\fmin \mmul_\rdown w_L)),\;-0),
        &\text{if $\bar{r}_\ell = \rdown$;} \\
    ((\fmin \mmul_\rup w_L)/2, \;(-\fmin \mmul_\rup w_L)/2)),
        &\text{if $\bar{r}_\ell = \rnear$.} \\
    \end{cases} \\
\end{align*}
\caption{First inverse projection of division: function $\invfirstdivl$.}
\label{fig:first-indirect-projection-division-yl}
\end{figure}

\begin{figure}[p!]
\begin{tabular}{L|CCCCCC}
\invfirstdivu(x_U, w_U,\bar{r}_u)
        & -\infty & \Rset_- & -0      & +0      & \Rset_+ & +\infty \\
\hline
-\infty & +\infty & +\infty & +\infty & -\fmin  & a_9     & \uns    \\
\Rset_- & \fmax   & a^-_8   & -\fmin  & -\fmin  & a^+_8   & \uns    \\
-0      & \fmax   & a_{12}  & -\fmin  & -0      & -0      & -0      \\
+0      & -0      & -0      & -0      & -\fmin  & a_{11}  & \fmax   \\
\Rset_+ & \uns    & a^-_8   & -\fmin  & -\fmin  & a^+_8   & \fmax   \\
+\infty & \uns    & a_{10}  & -\fmin  & +\infty & +\infty & +\infty \\
\end{tabular}\begin{align*}
e^+_u &\equiv (x_U + \ftwiceerrnearpos(x_U)/2) \cdot w_U; \\
a^+_8 &=
      \begin{cases}
      \evaldown{e^+_u},
        &\text{if $\bar{r}_u = \rnear$, $\feven(x_U)$
               and $\evaldown{e^+_u} = \rounddown{e^+_u}$;} \\
      \evalup{e^+_u},
        &\text{if $\bar{r}_u = \rnear$, $\feven(x_U)$ and $\evaldown{e^+_u} < \rounddown{e^+_u}$;} \\
      \fpred\bigl(\evalup{e^+_u}\bigr),
        &\text{if $\bar{r}_u = \rnear$, otherwise;} \\
      \fpred\bigl(\fsucc(x_U) \mmul_\rup w_U\bigr),
        &\text{if $\bar{r}_u = \rdown$;} \\
      x_U \mmul_\rdown w_U,
        &\text{if $\bar{r}_u = \rup$;}
      \end{cases} \\
e^-_u &\equiv (x_U + \ftwiceerrnearneg(x_U)/2) \cdot w_U; \\
a^-_8 &=
      \begin{cases}
      \evaldown{e^-_u},
        &\text{if $\bar{r}_u = \rnear$, $\feven(x_U)$
               and $\evaldown{e^-_u} = \rounddown{e^-_u}$;} \\
      \evalup{e^-_u},
        &\text{if $\bar{r}_u = \rnear$, $\feven(x_U)$ and $\evaldown{e^-_u} < \rounddown{e^-_u}$;} \\
      \fpred\bigl(\evalup{e^-_u}\bigr),
        &\text{if $\bar{r}_u = \rnear$, otherwise;} \\
      \fpred\bigl(\fpred(x_U) \mmul_\rup w_U\bigr),
        &\text{if $\bar{r}_u = \rup$;} \\
      x_U \mmul_\rdown w_U,
        &\text{if $\bar{r}_u = \rdown$;}
      \end{cases} \\
e^1_u &\equiv (-\fmax + \ftwiceerrnearneg(-\fmax)/2) \cdot w_U; \\
a_9
    &=
      \begin{cases}
      -\infty, &\text{if $\bar{r}_u = \rup$;} \\
      \fpred(-\fmax \mmul_\rup w_U), &\text{if $\bar{r}_u = \rdown$;} \\
      \evaldown{e^1_u}
        &\text{if $\bar{r}_u = \rnear$ and $\rounddown{e^1_u} = \evaldown{e^1_u}$}; \\
      \evalup{e^1_u} &\text{if $\bar{r}_u = \rnear$, otherwise;} \\
      \end{cases} \\
 e^2_u &\equiv (\fmax + \ftwiceerrnearpos(\fmax)/2) \cdot w_U; \\
 a_{10}
    &=
      \begin{cases}
      -\infty, &\text{if $\bar{r}_u = \rdown$;} \\
      \fpred(\fmax \mmul_\rup w_U), &\text{if $\bar{r}_u = \rup$;} \\
      \evaldown{e^2_u}
        &\text{if $\bar{r}_u = \rnear$ and $\rounddown{e^2_u} = \evaldown{e^2_u}$}; \\
      \evalup{e^2_u} &\text{if $\bar{r}_u = \rnear$, otherwise;} \\
      \end{cases} \\
 (a_{11},a_{12})
    &=
      \begin{cases}
        (+0,\;\fpred(-\fmin \mmul_\rup w_U)),
        &\text{if $\bar{r}_u = \rup$;} \\
        (\fpred(\fmin \mmul_\rup w_U),\;+0),  &\text{if $\bar{r}_u = \rdown$;} \\
        ((\fmin \mmul_\rdown w_U)/2,\; (-\fmin \mmul_\rdown w_U)/2),
          &\text{if $\bar{r}_u = \rnear$.} \\
      \end{cases} \\
\end{align*}
\caption{First inverse projection of division: function $\invfirstdivu$.}
\label{fig:first-indirect-projection-division-yu}
\end{figure}

\begin{example}
Suppose $X = [-42, +0]$, $Z = [-\float{1.0}{100}, -0]$ and $S = \{\rnear\}$.
In this case, $Z_- = Z$, and $Z_+ = \emptyset$. We obtain
$\sigma(-\float{1.0}{100}, -0, -42, +0) = (-\float{1.0}{100}, -\float{1.0}{100}, +0, -42)$
from the sixth case of $\sigma$. Then,
\(
\invfirstdivl(+0, -\float{1.0}{100}, \rnear)
= (\fmin \mmul_\rup (-\float{1.0}{100}))/2
= -\float{1.0}{-50}
\),
because the lowest value of $y_\ell$ is obtained when a division by $-\float{1.0}{100}$ underflows.
Moreover, $\invfirstdivu(-42, -\float{1.0}{100}, \rnear) = \float{1.0101}{105}$.
Therefore, the projected interval is $Y' = [-\float{1.0}{-50}, \float{1.0101}{105}]$.
\end{example}

The following result assures us that Algorithm~\ref{algo6}
computes a \emph{correct first inverse projection},
as ensured by its postcondition.

\begin{theorem}
\label{teo:first-indirect-projection-division}
\textup{Algorithm~\ref{algo6}} satisfies its contract.
\end{theorem}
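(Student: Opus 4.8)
The plan is to verify the two conjuncts of the contract separately. The inclusion $Y' \sseq Y$ is pure bookkeeping: in every branch $Y'_-$ and $Y'_+$ are set either to $\emptyset$ or to an intersection $Y \inters [\,\cdot,\cdot\,]$, so they are subintervals of $Y$, and since $Y$ is itself an interval their convex union $Y'_- \biguplus Y'_+$ stays inside $Y$. For the inverse-propagation correctness property, fix a tuple $(r,x,y,z)$ with $r \in S$, $x \in X$, $y \in Y$, $z \in Z$ and $x = y \mdiv_r z$. Since $z$ is not a NaN and $Z = Z_- \union Z_+$ is a partition (indeed $[-\infty,-0] \inters [+0,+\infty] = \emptyset$), either $z \in Z_-$ or $z \in Z_+$; and because $Y'_- \biguplus Y'_+ \Sseq Y'_- \union Y'_+$, it is enough to show $y \in Y'_-$ when $z \in Z_-$ and $y \in Y'_+$ when $z \in Z_+$. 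The two halves of Algorithm~\ref{algo6} being mirror images, I would treat the case $z \in Z_-$.

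Over $Z_-$ the sign of $z$ is constant, which is exactly why the split into $Z_-$, $Z_+$ is forced here just as for the direct projection (Algorithm~\ref{algo5}): for every fixed rounding mode $r'$ the map $y' \mapsto y' \mdiv_{r'} z$ is monotone (non-increasing, since $z \sleq -0$), and for fixed $y$ the map $z' \mapsto y \mdiv_{r'} z'$ is monotone over $Z_-$, so---modulo the inputs that produce NaNs, which are handled as special cases---the extreme admissible values of $y$ are attained at a pair of endpoints of $X$ and $Z_-$. The selector $\sigma$ of Figure~\ref{fig:the-sigma-function} is precisely the function returning the pair $(x_L,w_L)$ that realizes the \emph{smallest} admissible $y$ and the pair $(x_U,w_U)$ that realizes the \emph{largest}; its six cases encode the sign of $z$ together with the position of $0$ relative to $[x_l,x_u]$, which is what fixes the two monotonicity directions. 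Then Proposition~\ref{prop:worst-case-rounding-modes}---in the form used by Algorithm~\ref{algo6} through the selectors $\bar{r}_l^{\mathrm{l}}$ and $\bar{r}_u^{\mathrm{l}}$---replaces the whole set $S$ by the single worst-case rounding modes $\bar{r}_l$, $\bar{r}_u$ (which are never round-to-zero), reducing everything to: for $x = x_L$, $z = w_L$, the tabulated value $\invfirstdivl(x_L,w_L,\bar{r}_l)$ is $\sleq$ every $y' \in \Fset$ with $x_L = y' \mdiv_{\bar{r}_l} w_L$ (hence $\sleq$ every admissible $y$), and dually $\invfirstdivu(x_U,w_U,\bar{r}_u)$ is an upper bound. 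Only these one-sided inequalities are needed---no optimality claim appears in the contract---so each sub-case stays short, and the final convex union can only enlarge $Y'$.

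The heart of the proof is then an entry-by-entry verification of the two $6\times6$ tables (Figures~\ref{fig:first-indirect-projection-division-yl} and~\ref{fig:first-indirect-projection-division-yu}), whose rows and columns range over the value classes $-\infty$, $\Rset_-$, $-0$, $+0$, $\Rset_+$, $+\infty$ of the selected $x$ and $w$. For the generic entries---selected $x$, $w$ finite and nonzero---I would chain the two ``reflection'' propositions. Because $x = y \mdiv_{\bar{r}_l} w$ holds with equality, \emph{both} $x \sleq y \mdiv_{\bar{r}_l} w$ and $x \sgeq y \mdiv_{\bar{r}_l} w$ hold, so Proposition~\ref{prop:real-approx-of-fp-constraints} supplies real inequalities bounding the quotient $y/w$ above or below by $x$ shifted by one of $\ferrup(x)$, $\ferrdown(x)$, $\ftwiceerrnearneg(x)/2$, $\ftwiceerrnearpos(x)/2$ (strict or not according to the parity of $x$ and the rounding mode); one selects the inequality that, after multiplying through by $w$ and reversing it exactly when $w < 0$, yields a \emph{lower} bound $y \geq e$ or $y > e$ for the syntax tree $e \in E_\Fset$ displayed as $e^+_l$ (for $w > 0$, built with $\ftwiceerrnearneg$) or $e^-_l$ (for $w < 0$, built with $\ftwiceerrnearpos$), and then Proposition~\ref{prop:fp-approx-of-real-constraints} turns that bound into $\invfirstdivl$'s entry---$\evaldown{e}$, or $\fsucc(\evaldown{e})$, with the sharper $\evalup{e}$ when $\evalup{e} = \roundup{e}$, precisely the branching of $a^+_3$ and $a^-_3$. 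For a directed rounding mode the shift is $0$ and the entry collapses to a single rounded product such as $x_L \mmul_\rup w_L$ or $\fsucc(\fpred(x_L) \mmul_\rdown w_L)$ (with the roles of $\rdown$ and $\rup$ swapping between the $w > 0$ and $w < 0$ columns). For round-to-nearest I would additionally invoke Proposition~\ref{prop:min-max-x-ftwiceerrnearneg-ftwiceerrnearpos} and the observation following Definition~\ref{def:rounding-error-functions} that $\ftwiceerrnearneg/2$ and $\ftwiceerrnearpos/2$ are the genuine round-to-nearest error bounds and that these halvings are exact. The $\invfirstdivu$ table ($a^+_8$, $a^-_8$, and so on) follows symmetrically, extracting an \emph{upper} bound on $y$. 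The remaining, non-generic entries I would dispatch directly from Definition~\ref{def:rounding-functions}: dividing by a signed zero gives a signed infinity when the numerator is nonzero and a NaN when it is zero (this pins the $-0$ and $+0$ columns, including why the extreme $y$ whose quotient still rounds to a signed zero is governed by $\fmin$, the factor $1/2$ in $a_6,a_7,a_{11},a_{12}$ coming from the round-to-nearest half-way rule); dividing by $\pm\infty$ gives a signed zero, forcing $x \equiv 0$ (the $\pm\infty$ columns); a selected $x = \pm\infty$ forces the quotient to overflow, bounding $y$ by $-\fmax$-type or, under round-to-nearest, $2^{\emax}(2-2^{-p})$-type thresholds ($a_4,a_5,a_9,a_{10}$); and every entry marked $\uns$ is exactly a sign/zero/infinity combination admitting no $y \in \Fset$ with $x = y \mdiv_{\bar{r}} w$, so the guard $y^-_l \in \Fset$ fails and the corresponding $Y'_\pm$ is legitimately set to $\emptyset$. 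One also checks in passing that the arguments of $\evaldown{\cdot}$ and $\evalup{\cdot}$ evaluate to nonzero reals, so that those functions are defined, which holds because the relevant $x$ and $w$ are finite and nonzero.

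The main obstacle is not a single hard idea but the combinatorial bulk---two $6\times6$ tables, each generic entry unfolding into up to five rounding-mode sub-cases, all of it under the six branches of $\sigma$---together with three genuinely delicate points: keeping the inequality reversals consistent with the sign of $w$ and with the branch of $\sigma$ actually taken; the round-to-nearest subnormal regime, where the true error bound $\fmin/2$ is not itself a floating-point number, so one must argue through the ``twice-error'' functions and certify that the halvings in $e^+_l,e^-_l,e^+_u,e^-_u$ and in $a_6,a_7,a_{11},a_{12}$ are exact; and the underflow-to-zero entries, where one must pin down the exact extreme $y$ for which $y/w$ still rounds to a signed zero. Since only correctness---retention of every solution---and not optimality is required, each of these many sub-cases stays individually elementary once the monotonicity reduction and the two reflection propositions are in place.
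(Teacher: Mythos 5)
Your proposal is correct and follows essentially the same route as the paper's proof: the same reduction via the $Z_-$/$Z_+$ split, the selector $\sigma$, and the worst-case rounding modes of Proposition~\ref{prop:worst-case-rounding-modes}, followed by an entry-by-entry check of $\invfirstdivl$ and $\invfirstdivu$ that chains Proposition~\ref{prop:real-approx-of-fp-constraints} (to bound $y/z$ over the reals) with Proposition~\ref{prop:fp-approx-of-real-constraints} (to land back in $\Fset$), with the special-value entries dispatched directly from IEEE~754 semantics. The only quibble is your gloss on the $\pm\infty$ columns --- several of those entries yield $\pm\fmax$ rather than forcing $x \equiv 0$, because a finite nonzero quotient merely forces $y$ to be finite --- but this is a slip in a descriptive aside, not in the proof strategy.
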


Once again, in order to obtain more precise results in some cases,
the first inverse projection for division has to be intersected
with a filter based on an extension of the Michel and Marre property
originally proposed in \cite{MarreM10} and
extended to multiplication and division in \cite{BagnaraCGG16IJOC}.
Indeed, when interval $X$ does not contain zeroes
and interval $Z$ contains zeros and infinities,
the proposed filtering by maximum ULP algorithm
is able to derive more precise bounds than the ones obtained with
the inverse projection we are proposing. Thus, for division (and for multiplication as well), the
 indirect projection and filtering by maximum ULP are mutually exclusive: one applies
when the other cannot derive anything useful \cite{BagnaraCGG16IJOC}.

\begin{example}
\label{ex:mu-fdiv1}
Consider the IEEE~754 single-precision constraint
$x = y \mdiv_S z$ with initial intervals $X=[-\float{1.0}{-110}, -\float{1.0}{-121}]$
and
$Y = Z = [-\infty, +\infty]$.
When $S = \{\rnear\}$, filtering by maximum ULP results in the possible refinement
$Y' = [-\float{1.1 \cdots 1}{17}, \float{1.1 \cdots 1}{17}]$,
while Algorithm~\ref{algo6} would return the less precise
$Y'=[-\fmax, \fmax]$, with any rounding mode.
\end{example}

\paragraph{Inverse Propagation (Second Projection).}

The second inverse projection for division
computes a new interval for operand $z$.
For this projection,
we need to partition interval $X$ into sign-homogeneous intervals
$X_- \defeq X \inters [-\infty, -0]$ and
$X_+ \defeq X \inters [+0, +\infty]$ since, in this case,
it is the sign of $X$ that matters for deriving correct
bounds for $Z$.
Once $X$ has been partitioned,
we use intervals $X_-$ and $Y$ to obtain
the interval $[z^-_\ell, z^-_u]$;
intervals $X_+$ and $Y$ to obtain $[z^+_\ell, z^+_u]$.
The new bounds for $z$ are computed by functions
$\invsecdivl$ of Figure~\ref{fig:second-indirect-projection-division-zl} and
$\invsecdivu$ of Figure~\ref{fig:second-indirect-projection-division-zu},
after the appropriate interval extrema of $Y$
and $V = X_-$ (or $V = X_+$)
have been selected by function $\tau$.
The intervals $Z \inters [z^-_\ell, z^-_u]$ and
$Z \inters [z^+_\ell, z^+_u]$ will be then joined
with convex union to obtain $Z'$.

\begin{algorithm}
\caption{Second inverse projection for division constraints.}
\label{algo7}
\begin{algorithmic}[1]
\REQUIRE $x = y \mmul_S z$,
$x \in X = [x_\ell, x_u]$,
$y \in Y = [y_\ell, y_u]$ and
$z \in Z = [z_\ell, z_u]$.
\ENSURE
$Z' \sseq  Z$ and
\(
  \forall r \in S, x \in X, y \in Y, z \in Z
    \itc  x = y \mmul_r z \implies z \in Z'
\).
\STATE
$X_-\assign  X \inters [-\infty, -0];$
\IF {$X_-   \neq \emptyset$ }
\STATE $V\assign X_- $;
  \STATE
    $(y_L, y_U,v_L, v_U)\assign \tau(y_\ell, y_u, v_\ell, v_u)$
   \STATE
$\bar{r}_\ell\assign\bar{r}_r^r(S, v_L,  \mdiv, y_L)$;   $\bar{r}_u\assign  \bar{r}_u^r(S, v_U,  \mdiv, y_U) ;$
\STATE
$z^-_\ell \assign \invsecdivl(y_L, v_L, \bar{r}_\ell)$;
  $z^-_u \assign \invsecdivu(y_U, v_U, \bar{r}_u)$;
 \IF {$z^-_\ell \in \Fset$ and $z^-_u\in \Fset$}
   \STATE
   $Z'_-= Z \inters[z^-_\ell, z^-_u];  $                     
\ELSE  
\STATE  
     $  Z'_-=\emptyset;  $
        \ENDIF                  
    \ELSE  
\STATE  
     $ Z'_-=\emptyset; $ 
     \ENDIF   
     \STATE
$X_+\assign  X \inters [+0, +\infty];$
\IF {$X_+  \neq \emptyset$ }
    \STATE $V \assign X_+ $;
 \STATE
    $(y_L, y_U,v_L, v_U)\assign \tau(y_\ell, y_u, v_\ell, v_u)$     
   \STATE
$\bar{r}_\ell\assign\bar{r}_r^r(S, v_L,  \mdiv, y_L)$;   $\bar{r}_u\assign  \bar{r}_u^r(S, v_U,  \mdiv, y_U) ;$
\STATE
   $z^+_\ell \assign \invsecdivl(y_L, v_L, \bar{r}_\ell)$;
  $z^+_u \assign \invsecdivu(y_U, v_U, \bar{r}_u)$;    
 \IF {$z^+_\ell \in \Fset$ and $z^+_u\in \Fset$}
   \STATE
   $Z'_+= Z \inters[z^+_\ell, z^+_u];  $                     
\ELSE  
\STATE  
     $  Z'_+=\emptyset;  $
        \ENDIF                  
    \ELSE  
\STATE  
     $ Z'_+=\emptyset; $ 
     \ENDIF   
             \STATE
$Z' \assign Z'_-  \biguplus Z'_+ $;
  \end{algorithmic}
\end{algorithm}

\begin{figure}[p!]
\begin{tabular}{L|CCCCCC}
\invsecdivl(y_L, v_L, \bar{r}_\ell)
        & -\infty & \Rset_- & -0      & +0      & \Rset_+ & +\infty \\
\hline
-\infty & +0      & \uns    & \uns    & -\infty & -\fmax  & -\fmax  \\
\Rset_- & +0      &  a^-_3  & a_4     & -\infty &  a^-_3  & a_6     \\
-0      & +0      & \fmin   & \fmin   & -\infty & +0      & +0      \\
+0      & +0      & +0      & -\infty & \fmin   & \fmin   & +0      \\
\Rset_+ & a_7     & a^+_3   & -\infty & a_5     & a^+_3   & +0      \\
+\infty & -\fmax  & -\fmax  & -\infty & \uns    & \uns    & +0      \\
\end{tabular}
\begin{align*}
e^+_\ell & \equiv y_L / (v_L + \ftwiceerrnearpos(v_L)/2);\\
a^+_3
    &=
      \begin{cases}
      \evalup{e^+_\ell},
        &\text{if $\bar{r}_\ell = \rnear$, $\feven(v_L)$
               and $\evalup{e^+_\ell} = \roundup{e^+_\ell}$;} \\
      \evaldown{e^+_\ell},
        &\text{if $\bar{r}_\ell = \rnear$, $\feven(v_L)$ and $\evalup{e^+_\ell} > \roundup{e^+_\ell}$;} \\
      \fsucc(\evaldown{e^+_\ell})
        &\text{if $\bar{r}_\ell = \rnear$, otherwise;} \\
      y_L \mdiv_\rup v_L,
        &\text{if $\bar{r}_\ell = \rup$;} \\
      \fsucc\bigl(y_L \mdiv_\rdown \fsucc(v_L)\bigr),
        &\text{if $\bar{r}_\ell = \rdown$;}
      \end{cases} \\
e^-_\ell & \equiv y_L / (v_L + \ftwiceerrnearneg(v_L)/2); \\
a^-_3
    &=
      \begin{cases}
        \evalup{e^-_\ell},
          &\text{if $\bar{r}_\ell = \rnear$, $\feven(v_L)$
                 and $\evalup{e^-_\ell} = \roundup{e^-_\ell}$;} \\
        \evaldown{e^-_\ell},
          &\text{if $\bar{r}_\ell = \rnear$, $\feven(v_L)$ and $\evalup{e^-_\ell} > \roundup{e^-_\ell}$;} \\
        \fsucc(\evaldown{e^-_\ell})
          &\text{if $\bar{r}_\ell = \rnear$, otherwise;} \\
                 y_L \mdiv_\rup v_L,
          &\text{if $\bar{r}_\ell = \rdown$;} \\
        \fsucc\bigl(y_L \mdiv_\rdown \fpred(v_L)\bigr),
          &\text{if $\bar{r}_\ell = \rup$;}
      \end{cases} \\
(a_4, a_5)
    &=
      \begin{cases}
      (+\infty, \fsucc(y_L \mdiv_\rdown \fmin)),
        &\text{if $\bar{r}_\ell = \rdown$}; \\
      (\fsucc(y_L \mdiv_\rdown -\fmin), +\infty),
        &\text{if $\bar{r}_\ell = \rup$}; \\
      ((y_L \mdiv_\rup -\fmin) \cdot 2, (y_L \mdiv_\rup \fmin) \cdot 2),
        &\text{otherwise}; \\
      \end{cases} \\
e^1_\ell & \equiv y_L / (\fmax + \ftwiceerrnearpos(\fmax)/2); \\
a_6
    &=
      \begin{cases}
      -0,
        &\text{if $\bar{r}_\ell = \rdown$;} \\
      \fsucc(y_L \mdiv_\rdown \fmax),
        &\text{if $\bar{r}_\ell = \rup$;} \\
      \evalup{ e^1_\ell},
        &\text{if $\bar{r}_\ell = \rnear$ and   $\roundup{e^1_\ell}  = \evalup{e^1_\ell}$;} \\
      \evaldown{e^1_\ell},
        &\text{if $\bar{r}_\ell = \rnear$, otherwise;} \\
      \end{cases} \\
e^2_\ell & \equiv y_L / (-\fmax + \ftwiceerrnearneg(-\fmax)/2); \\
a_7
    &=
      \begin{cases}
      -0,
        &\text{if $\bar{r}_\ell = \rup$;} \\
      \fsucc(y_L \mdiv_\rdown -\fmax),
        &\text{if $\bar{r}_\ell = \rdown$;} \\
      \evalup{ e^2_\ell},
        &\text{if $\bar{r}_\ell = \rnear$ and   $\roundup{e^2_\ell}  = \evalup{e^2_\ell}$;} \\
      \evaldown{e^2_\ell},
        &\text{if $\bar{r}_\ell = \rnear$, otherwise.} \\
      \end{cases} \\
\end{align*}
\caption{Second inverse projection of division: function $\invsecdivl$.}
\label{fig:second-indirect-projection-division-zl}
\end{figure}

\begin{figure}[p!]
\begin{tabular}{L|CCCCCC}
\invsecdivu(y_U, v_U, \bar{r}_u)
        & -\infty & \Rset_- & -0      & +0      & \Rset_+ & +\infty  \\
\hline
-\infty & \fmax   & \fmax   & +\infty & \uns    & \uns    & -0       \\
\Rset_- & a_{11}  & a^-_8   & +\infty & a_9     &  a^-_8  & -0       \\
-0      & -0      & -0      & +\infty & -\fmin  & -\fmin  & -0       \\
+0      & -0      & -\fmin  & -\fmin  & +\infty & -0      & -0       \\
\Rset_+ & -0      & a^+_8   & a_{10}  & +\infty &  a^+_8  & a_{12}   \\
+\infty & -0      & \uns    & \uns    & +\infty & \fmax   & \fmax    \\
\end{tabular}
\begin{align*}
e^+_u & \equiv y_U / (v_U + \ftwiceerrnearneg(v_U)/2); \\
a^+_8
    &=
      \begin{cases}
        \evaldown{e^+_u},
          &\text{if $\bar{r}_u = \rnear$, $\feven(v_U)$
                 and $\evaldown{e^+_u} = \rounddown{e^+_u}$;} \\
        \evalup{e^+_u},
          &\text{if $\bar{r}_u = \rnear$, $\feven(v_U)$ and $\evaldown{e^+_u} < \rounddown{e^+_u}$;} \\
        \fpred(\evalup{e^+_u})
          &\text{if $\bar{r}_u = \rnear$, otherwise;}\\
        y_U \mdiv_\rdown v_U,
          &\text{if $\bar{r}_u = \rdown$;} \\
        \fpred\bigl(y_U \mdiv_\rup \fpred(v_U)\bigr),
          &\text{if $\bar{r}_u = \rup$;}
      \end{cases} \\
e^-_u & \equiv y_U / (v_U + \ftwiceerrnearpos(v_U)/2);\\
a^-_8
    &=
      \begin{cases}
        \evaldown{e^-_u},
          &\text{if $\bar{r}_u = \rnear$, $\feven(v_U)$
                 and $\evaldown{e^-_u} = \rounddown{e^-_u}$;} \\
        \evalup{e^-_u},
          &\text{if $\bar{r}_u = \rnear$, $\feven(v_U)$ and $\evaldown{e^-_u} < \rounddown{e^-_u}$;} \\
        \fpred(\evalup{e^-_u})
          &\text{if $\bar{r}_u = \rnear$, otherwise;} \\
        y_U \mdiv_\rdown v_U,
          &\text{if $\bar{r}_u = \rup$;} \\
        \fpred\bigl(y_U \mdiv_\rup \fsucc(v_U)\bigr),
          &\text{if $\bar{r}_u = \rdown$;}
      \end{cases} \\
(a_9, a_{10})
    &=
    \begin{cases}
      (-\infty, \fpred(y_U \mdiv_\rup -\fmin)),
        &\text{if $\bar{r}_u = \rup$}; \\
      (\fpred(y_U \mdiv_\rup \fmin),-\infty),
        &\text{if $\bar{r}_u = \rdown$}; \\
      ((y_U \mdiv_\rdown \fmin) \cdot 2,(y_U \mdiv_\rdown -\fmin) \cdot 2),
        &\text{otherwise}; \\
   \end{cases} \\
e^1_u & \equiv y_U / (-\fmax + \ftwiceerrnearneg(-\fmax)/2);\\
a_{11}
    &=
    \begin{cases}
      +0,
        &\text{if $\bar{r}_u = \rup$;} \\
      \fpred(y_U \mdiv_\rup -\fmax),
        &\text{if $\bar{r}_u = \rdown$;} \\
      \evaldown{ e^1_u},
        &\text{if $\bar{r}_u= \rnear$ and $\rounddown{e^1_u} = \evaldown{e^1_u}$;} \\
      \evalup{e^1_u},
        &\text{if $\bar{r}_u= \rnear$, otherwise;} \\
    \end{cases} \\
e^2_u & \equiv y_U / (\fmax + \ftwiceerrnearpos(\fmax)/2); \\
a_{12}
    &=
    \begin{cases}
      +0,
        &\text{if $\bar{r}_u = \rdown$;} \\
      \fpred(y_U \mdiv_\rup \fmax),
        &\text{if $\bar{r}_u = \rup$;} \\
      \evaldown{e^2_u},
        &\text{if $\bar{r}_u = \rnear$ and $\rounddown{e^2_u} = \evaldown{e^2_u}$;} \\
      \evalup{e^2_u},
        &\text{if $\bar{r}_u = \rnear$, otherwise.} \\
    \end{cases} \\
\end{align*}
\caption{Second inverse projection of division: function $\invsecdivu$.}
\label{fig:second-indirect-projection-division-zu}
\end{figure}

Our algorithm computes a \emph{correct second inverse projection}.
\begin{theorem}
\label{teo:second-indirect-projection-div}
\textup{Algorithm~\ref{algo7}} satisfies its contract.
\end{theorem}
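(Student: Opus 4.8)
The plan is to mirror the structure already used for the direct and first‑inverse division projections, exploiting the real‑arithmetic inversion $x = y \mathbin{/} z \iff z = y \mathbin{/} x$ (valid away from the degenerate combinations of zeros and infinities), so that the second inverse projection for division behaves like a ``direct division'' $z = y \mathbin{/} x$ in which the divisor role is played by $x$. First I would reduce the correctness claim to the two sign‑homogeneous sub‑problems. Since $X = X_- \union X_+$, every tuple $(x, y, z, r)$ with $r \in S$, $x \in X$, $y \in Y$, $z \in Z$ and $x = y \mdiv_r z$ has $x \in X_-$ or $x \in X_+$; hence it is enough to show, for each $V \in \{X_-, X_+\}$, that whenever $\invsecdivl$ and $\invsecdivu$ return finite values the interval $[z^{\pm}_l, z^{\pm}_u]$ contains every $z \in Z$ that admits such a tuple with $x \in V$, and that when either returns \uns no such $z$ exists for that half. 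The conclusion for $Z'$ then follows because convex union over‑approximates the union of the two witness sets and $Z' \sseq Z$ holds by construction of the intersections.

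Next I would eliminate the set $S$: by Proposition~\ref{prop:worst-case-rounding-modes} it is sound to work with the single worst‑case rounding modes $\bar{r}_l = \bar{r}_l^{\mathrm{r}}(S, v_L, \mdiv, y_L)$ and $\bar{r}_u = \bar{r}_u^{\mathrm{r}}(S, v_U, \mdiv, y_U)$ (which, moreover, are never round‑to‑zero), so it remains to bound $z$ under one fixed rounding mode at a time. Monotonicity of $\mdiv_r$ in each operand over sign‑homogeneous intervals then lets me replace the endpoints of $Y$ and of $V$ by the extremal arguments selected by $\tau$; this is exactly the role $\tau$ plays, and its correctness under the hypothesis $\sgn(v_l) = \sgn(v_u)$ is the same argument already needed for Algorithm~\ref{algo5}, which holds here because the sign of $V = X_\pm$ is constant by construction.

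The core of the proof is then a finite case analysis over the six symbolic classes $-\infty, \Rset_-, -0, +0, \Rset_+, +\infty$ for each of $y_L, v_L$ (lower bound) and $y_U, v_U$ (upper bound), matching the entries of Figures~\ref{fig:second-indirect-projection-division-zl} and~\ref{fig:second-indirect-projection-division-zu}. For the purely symbolic entries the bound is checked directly from the definition of $\mdiv_r$ on zeros and infinities together with the definitions of $\fsucc$ and $\fpred$. For the numeric entries, where $y_L, v_L \in \Rset$, the recipe is: (i) convert $x = y \mdiv_{\bar{r}_l} z$ with $x \in V$ into a real inequality bounding $y \mathbin{/} z$ from one side, using Proposition~\ref{prop:real-approx-of-fp-constraints} with the appropriate error function among $\ferrdown, \ferrup, \ftwiceerrnearneg, \ftwiceerrnearpos$ according to $\bar{r}_l$ and the parity of the relevant endpoint; (ii) solve that inequality for $z$, which produces precisely the expressions $e^{\pm}_l$ (respectively $e^{\pm}_u$), of the form $y_L \mathbin{/}\bigl(x + \text{error}/2\bigr)$, appearing in the figures; (iii) pull this real bound back into $\Fset$ using Proposition~\ref{prop:fp-approx-of-real-constraints}, whose two branches account exactly for the exact case ($\evalup{e} = \roundup{e}$) and the inexact case, and whose $\fsucc/\fpred$ corrections match the strict‑inequality sub‑cases. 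The remaining sub‑cases $a_4, a_5, a_6, a_7, a_9, a_{10}, a_{11}, a_{12}$ cover underflow and overflow of the quotient: for underflow I would reuse the observation already exploited for $\invfirstdivl$ that a true result in the subnormal range incurs a round‑to‑nearest error of at most $\fmin/2$, so the extremal pre‑image is $(\fmin \mdiv_r v) \cdot 2$ or $(-\fmin \mdiv_r v) \cdot 2$; for overflow the extremal pre‑image degenerates to $\pm\infty$ or to $\fsucc/\fpred$ of a value involving $\fmax$.

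The step I expect to be the main obstacle is the sign bookkeeping: whether to use $\ftwiceerrnearpos$ or $\ftwiceerrnearneg$, and in which direction to round the auxiliary computations, depends jointly on the sign of $x$ (equivalently, on which of $X_-, X_+$ is being processed) and on the sign of $y_L$ and $v_L$, because the inversion $z = y \mathbin{/} x$ reverses monotonicity according to these signs; making this consistent with the $\bar{r}^{\mathrm{r}}$ selectors of Definition~\ref{def:rounding-mode-selectors-inverse} and with the fact that $a^{+}_3$ and $a^{-}_3$ (and $a^{+}_8$, $a^{-}_8$) differ only in which error function occurs is the delicate part of the verification. A secondary obstacle is justifying the \uns entries, i.e.\ checking that the indicated sign combinations (for instance $x \sleq -0$ together with values of $y$ and $z$ that would force $y \mathbin{/} z > 0$) are genuinely infeasible, so that discarding that half is correct and not merely conservative.
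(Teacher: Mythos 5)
Your plan matches the paper's proof essentially step for step: split $X$ into the sign-homogeneous halves $X_\pm$, reduce $S$ to the worst-case rounding modes via Proposition~\ref{prop:worst-case-rounding-modes}, use $\tau$ to pick the extremal arguments, handle the symbolic entries directly, and for the numeric entries convert $x \sleq y \mdiv z$ and $x \sgeq y \mdiv z$ to real inequalities (Proposition~\ref{prop:real-approx-of-fp-constraints}), solve for $z$ to get the $e^{\pm}$ expressions, and pull back into $\Fset$ via Proposition~\ref{prop:fp-approx-of-real-constraints}, with the underflow/overflow corner cases $a_4$--$a_{12}$ treated as you describe. The only detail worth adding when you execute the plan is the degenerate case $v_L = -\fmin$ (resp.\ $v_U = \fmin$), where the real bound $y/\fsucc(v_L)$ (resp.\ $y/\fpred(v_U)$) involves division by zero and must be argued separately, even though the floating-point formula happens to return the correct value $\mp\fmax$.
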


\begin{example}
Consider $X = [6, +\infty]$, $Y = [+0, 42]$ and $S = \{\rnear\}$.
In this case, we only have $X_+ = X$, and $X_- = \emptyset$.
With this input, $\tau(+0, 42, 6, +\infty) = (+0, 42, +\infty, 6)$ (case 5).
Therefore, we obtain $\invsecdivl(+0, +\infty, \rnear) = +0$,
because any number in $Y$ except $+0$ yields $+\infty$ when divided by $+0$.
If we compute intermediate values exactly, $\invsecdivu(42, 6) = 7$
and the refined interval is $Z' = [+0, 7]$.
If not, then $z'_u = \float{1.110 \cdots 01}{2} = \fsucc(7)$.
\end{example}

In order to obtain more precise results, the result of our second inverse projection
can also be intersected with the interval obtained by the maximum ULP filter
proposed in \cite{BagnaraCGG16IJOC}.
Indeed, when interval $X$ does not contain zeros and interval $Y$ contains zeros and infinities,
the proposed filtering by maximum ULP algorithm
is able to derive tighter bounds than those obtained with
the inverse projection presented in this work.

\begin{example}
\label{ex:mu-fdiv2}
Consider the IEEE~754 single-precision division constraint
$x = y \mdiv_S z$ with initial intervals
$x \in [\float{1.0 \cdots 010}{110}, \float{1.0}{121}]$ and
$Y = Z = [-\infty, +\infty]$.
When $S = \{\rnear\}$, filtering by maximum ULP results in the possible refinement
$Z' = [-\float{1.0}{18}, \float{1.0}{18}]$,
while Algorithm~\ref{algo7} would compute $Z' = [-\fmax,\fmax]$,
regardless of the rounding mode.
\end{example}

%%% Local Variables:
%%% mode: latex
%%% TeX-master: "main"
%%% End:

\afterpage{\clearpage}

\section{Experimental Evaluation}
\label{sec:experimental-evaluation}

The main aim of this section is to motivate the need of provably
correct filtering algorithms, by highlighting the issues caused by
the unsoundness of most available implementations of similar methods.

\subsection{Software Verification}

As we reported in Section~\ref{sec:applications-to-program-analysis},
we implemented our work in the commercial tool ECLAIR.  While the
initial results on a wide range of self-developed tests looked very
promising, we wanted to compare them with the competing tools
presented in the literature, in order to better assess the strength of our
approach with respect to the state of the art.  Unfortunately, most of
these tools were either unavailable, or not sufficiently equipped to
analyze real-world C/\Cplusplus{} programs.  We could, however, do a
comparison with the results obtained in \cite{WuLZ17}.  It presents a
tool called seVR-fpe, for floating-point exception detection based on
symbolic execution and value-range analysis. The same task can be
carried out by the constraint-based symbolic model checker we
included in ECLAIR.  The authors of seVR-fpe tested their tool
both on a self-developed benchmark suite and on real-world programs.
Upon contacting them, they were unfortunately unable to provide us
with more detailed data regarding their analysis of real world
programs. This prevents us from doing an in-depth comparison of the
tools, since we only know the total number of bugs found, but not
their exact nature and location.  Data with this level of detail was
instead available for (most of) their self-developed benchmarks.
The results obtained by running ECLAIR on
them are reported in Table~\ref{table:eclair-vs-wulz17}.
\begin{table}
  \centering
  \begin{tabular}{| l | r | r | r |}
    \hline
    Exception type  & ECLAIR & seVR-fpe & Difference \\
    \hline
    total           & 135    & 66       & 69         \\
    overflow        & 55     & 26       & 29         \\
    underflow       & 30     & 13       & 17         \\
    invalid         & 47     & 8        & 39         \\
    divbyzero       & 3      & 3        & 0          \\
    false positives & 0      & 15       & -15        \\
    \hline
  \end{tabular}
  \caption{Number of exceptions found by ECLAIR and seVR-fpe
    on the self-developed benchmarks of \cite{WuLZ17}.}
  \label{table:eclair-vs-wulz17}
\end{table}
ECLAIR could find a number of possible bugs significantly higher than
seVR-fpe.  As expected, due to the provable correctness of the
algorithms employed in ECLAIR, no false positives were detected among
the inputs it generated. This confirms the solid results obtainable by means of
the algorithms presented in this paper.

\subsection{SMT Solvers}
\label{sec:smt-solvers}

We compare ECLAIR with several SMT solvers that support floating-point arithmetic
by executing them on a benchmark suite devised to test their floating-point theory for soundness.
Since our aim is to evaluate filtering algorithms for floating-point addition,
subtraction, multiplication and division, we only included tests that do not rely
on other theories, such as arrays, bit-vectors and uninterpreted functions,
as well as those containing quantifiers.
Such features are typical of SMT, and are tackled by techniques which are out of the
scope of this paper.
The suite is made of a total of 151,432 tests, of which:
\begin{itemize}
\item 10,380 are randomly generated tests by Florian Schanda.
(\texttt{random} directory from the benchmark suite%
\footnote{\url{https://github.com/florianschanda/smtlib_schanda},
  last accessed on October~28th, 2021.}
used in \cite{BrainSS19}.)

\item 11,544 are randomly generated tests by Christoph M. Wintersteiger.
(\texttt{QF\_FP/wintersteiger} directory from the SMT-LIB benchmark repository.%
\footnote{\url{https://smtlib.cs.uiowa.edu/benchmarks.shtml}, last accessed on October~28th, 2021.})

\item 126,909 tests were generated from the IBM Test Suite for IEEE 754R Compliance%
\footnote{\url{https://www.research.ibm.com/haifa/projects/verification/fpgen/test_suite_download.shtml}, last accessed on October~28th, 2021.}
created with FPgen \cite{AharoniAFKN03}.
\end{itemize}

The experiments were carried out on a high-end laptop with an x86\_64 CPU (6 cores @2.20GHz)
and 16 GB of RAM, running Ubuntu 20.04.
Each solver's version is reported in the table.
MathSAT has been executed with the option \texttt{-theory.fp.mode=2},
which enables the ACDL-based solver for the floating-point theory.

Results are reported in Table~\ref{table:smt-comparison}.
The main observation we can make is that ECLAIR is the only tool based on interval reasoning
to be completely sound.
On the contrary, Colibri and MathSAT are unsound on numerous tests,
even though they did not explicitly report any error.
This hinders their use for program verification.
On the other hand, bit-blasting based tools CVC4 and Z3 do not present such issues,
because their bit-vector encodings for floating-point arithmetic
are derived from solid and formally verified circuit designs such as \cite{MullerP00}.
This demonstrates that the provably-correct filters presented in this paper
are needed to achieve reliable implementations of interval-based constraint solving methods.

The execution times seem to be mainly determined by implementation details
such as the programming language used.
In fact, Colibri and ECLAIR, the slowest tools, were written respectively
in ECLiPSe Prolog\footnote{\url{https://eclipseclp.org/}, last accessed on October~28th, 2021.}
and SWI Prolog,\footnote{\url{https://www.swi-prolog.org/}, last accessed on October~28th, 2021.}
while other tools were written in \Cplusplus{}.

\begin{table}
\centering
\begin{tabular}{| l | l | r | r | r | r |}
\hline
Solver	&	Version	&	Solved	&	Errors	&	Unsound	&	Time (h:m:s)	\\
\hline
Colibri	&	2176	&	148,766	&	0	&	67	&	01:02:42.03	\\
CVC4	&	1.8	&	148,833	&	0	&	0	&	00:09:32.11	\\
ECLAIR	&	---	&	148,833	&	0	&	0	&	01:59:28.80	\\
MathSAT ACDL	&	5.6.5	&	147,958	&	0	&	875	&	00:07:05.89	\\
z3	&	4.8.10	&	148,833	&	0	&	0	&	00:08:38.33	\\
\hline
\end{tabular}
\caption{Results of the evaluation of SMT-solvers.}
\label{table:smt-comparison}
\end{table}

\section{Discussion and Conclusion}
\label{sec:discussion-and-conclusion}

With the increasing use of floating-point computations in mission- and safety-critical settings,
the issue of reliably verifying their correctness has risen to a point
in which testing or other informal techniques are not acceptable any more.
Indeed, this phenomenon has been fostered by the wide adoption
of the IEEE~754 floating-point format, which has significantly simplified the
use of floating-point numbers, by providing a precise, sound,
and reasonably cross-platform specification of floating-point representations,
operations and their semantics.
The approach we propose in this paper exploits these solid foundations
to enable a wide range of floating-point program verification techniques.
It is based on the solution of constraint satisfaction problems by means
of interval-based constraint propagation, which is enabled by the filtering algorithms
we presented. These algorithms cover the whole range of possible floating-point
values, including symbolic values, with respect to interval-based reasoning.
Moreover, they not only support all IEEE~754 available rounding-modes,
but they also allow to take care of uncertainty on the rounding-mode in use.
Some important implementation aspects are also taken into account,
by allowing both the use of machine floating-point arithmetic for all computations
(for increased performance), and of extended-precision arithmetic
(for better precision with the round-to-nearest rounding mode).
In both cases, correctness is guaranteed, so that no valid solutions
can erroneously be removed from the constraint system.
This is supported by the extensive correctness proofs of all algorithms and tables,
which allow us to claim that neither false positives, nor false negatives may be produced.
The experimental evaluation of Section~\ref{sec:experimental-evaluation}
shows that soundness is, indeed, a widespread issue in several floating-point
verification tools. Our work provides solid foundations to soundly develop
such kind of tools.

Several aspects of the constraint-based verification of floating-point
programs remain, however, open problems, both from a theoretical and a
practical perspective.  As we showed throughout the paper, the
filtering algorithms we presented are not optimal, i.e., they may not
yield the tightest possible intervals containing all solutions to the
constraint system. They must be interleaved with the filtering
algorithms of \cite{BagnaraCGG16IJOC}, and they may require multiple
passes before reaching the maximum degree of variable-domain pruning
they are capable of.  Therefore, the next possible advance in
this direction would be conceiving optimal filtering algorithms, that
reduce variable domains to intervals as tight as possible with a
single application. This has been achieved in \cite{Gallois-WongBC20},
but only for addition.

However, filtering algorithms only represent a significant, but to
some extent limited, part of the constraint solving process.  Indeed,
even an optimally pruned interval may contain values that are not
solutions to the constraint system, due to the possible non-linearity
thereof.  If the framework in use supports multi-intervals, this issue
is dealt with by means of labeling techniques: when a
constraint-solving process reaches quiescence, i.e., the application
of filtering algorithms fails to prune variable domains any further,
such intervals are split into two or more sub-intervals, and the process
continues on each partition separately. In this context, the main
issues are \emph{where} to split intervals, and in \emph{how many}
parts. These issues are currently addressed with heuristic labeling
strategies.  Indeed, significant improvements to the
constraint-propagation process could be achieved by investigating
better labeling strategies. To this end, possible advancements would
include the identification of objective criteria for the evaluation of
labeling strategies on floating point-numbers, and the conception of
labeling strategies tailored to the properties of constraint systems
most commonly generated by numeric programs.

In conclusion, we believe the work presented in this paper can be an extensive
reference for the readers interested in realizing applications for formal
reasoning on floating-point computations, as well as a solid foundation
for further improvements in the state of the art.

\begin{acknowledgements}
The authors express their gratitude to Roberto Amadini,
Isacco Cattabiani and Laura Savino
for their careful reading of early versions of this paper.
\end{acknowledgements}

%\bibliography{tgen,mybib}
%\bibliographystyle{spmpsci}

\hyphenation{ Ba-gna-ra Bie-li-ko-va Bruy-noo-ghe Common-Loops DeMich-iel
  Dober-kat Di-par-ti-men-to Er-vier Fa-la-schi Fell-eisen Gam-ma Gem-Stone
  Glan-ville Gold-in Goos-sens Graph-Trace Grim-shaw Her-men-e-gil-do Hoeks-ma
  Hor-o-witz Kam-i-ko Kenn-e-dy Kess-ler Lisp-edit Lu-ba-chev-sky
  Ma-te-ma-ti-ca Nich-o-las Obern-dorf Ohsen-doth Par-log Para-sight Pega-Sys
  Pren-tice Pu-ru-sho-tha-man Ra-guid-eau Rich-ard Roe-ver Ros-en-krantz
  Ru-dolph SIG-OA SIG-PLAN SIG-SOFT SMALL-TALK Schee-vel Schlotz-hauer
  Schwartz-bach Sieg-fried Small-talk Spring-er Stroh-meier Thing-Lab Zhong-xiu
  Zac-ca-gni-ni Zaf-fa-nel-la Zo-lo }

\clearpage
\appendix

\ESM{Appendices}

\section{Filtering algorithms: Subtraction and Multiplication}
\label{se:Subtraction-Multiplication}

\subsection{Subtraction}

Here we deal with constraints of the form
$x = y \msub_S z$.

Assume
$X = [x_\ell, x_u]$,
$Y = [y_\ell, y_u]$ and
$Z = [z_\ell, z_u]$.

Again, thanks to Proposition ~\ref{prop:worst-case-rounding-modes} we need not
be concerned with sets of rounding modes, as any such set $S \sseq R$
can always be mapped to a pair of ``worst-case rounding modes'' which,
in addition are never round-to-zero.

\paragraph{Direct Propagation.}
For direct propagation, we use Algorithm~\ref{algo:direct-subtraction} and
functions $\dirsubl$ and $\dirsubu$, as defined in
Figure~\ref{fig:direct-projection-subtraction}.

\begin{algorithm}
\caption{Direct projection for subtraction constraints.}
\label{algo:direct-subtraction}
\begin{algorithmic}[1]
\REQUIRE $x = y \msub_S z$,
$x \in X = [x_\ell, x_u]$,
$y \in Y = [y_\ell, y_u]$ and
$z \in Z = [z_\ell, z_u]$.
\ENSURE
$X' \sseq  X$ and
\(
  \forall r \in S, x \in X, y \in Y, z \in Z
    \itc  x = y \msub_r z \implies x \in X'
\)
and
\(
\forall X''\subset X, \exists r \in S, y \in Y, z \in Z \itc y
\msub_r z\not\in X''\).
\STATE
$r_\ell \assign r_\ell(S, y_\ell, \msub, z_u)$; $r_u \assign r_u(S, y_u, \msub, z_\ell)$;
\STATE
$x'_\ell \assign \dirsubl(y_\ell, z_u, r_\ell)$;
$x'_u \assign \dirsubu(y_u, z_\ell, r_u)$;
\STATE
$X' \assign X \inters [x'_\ell, x'_u]$;
\end{algorithmic}
\end{algorithm}

\begin{figure}[ht!]
\begin{tabular}{L|CCCCCC}
\dirsubl(y_\ell, z_u, r_\ell)
        & -\infty & \Rset_-             & -0      & +0      & \Rset_+             & +\infty \\
\hline
-\infty & +\infty & -\infty             & -\infty & -\infty & -\infty             & -\infty \\
\Rset_- & +\infty & y_\ell \msub_{r_\ell} z_u & y_\ell     & y_\ell     & y_\ell \msub_{r_\ell} z_u & -\infty \\
-0      & +\infty & -z_u                & a_1     & -0      & -z_u                & -\infty \\
+0      & +\infty & -z_u                & +0      & a_1     & -z_u                & -\infty \\
\Rset_+ & +\infty & y_\ell \msub_{r_\ell} z_u & y_\ell     & y_\ell     & y_\ell \msub_{r_\ell} z_u & -\infty \\
+\infty & +\infty & +\infty             & +\infty & +\infty & +\infty             & +\infty \\
\end{tabular}
\[
a_1
 =
   \begin{cases}
     -0, &\text{if $r_\ell = \rdown$,} \\
     +0, &\text{otherwise;}
   \end{cases}
\]

\bigskip
\begin{tabular}{L|CCCCCC}
\dirsubu(y_u, z_\ell, r_u)
        & -\infty & \Rset_-             & -0      & +0      & \Rset_+             & +\infty \\
\hline
-\infty & -\infty & -\infty             & -\infty & -\infty & -\infty             & -\infty \\
\Rset_- & +\infty & y_u \msub_{r_u} z_\ell & y_u     & y_u     & y_u \msub_{r_u} z_\ell & -\infty \\
-0      & +\infty & -z_\ell                & a_2     & -0      & -z_\ell                & -\infty \\
+0      & +\infty & -z_\ell                & +0      & a_2     & -z_\ell                & -\infty \\
\Rset_+ & +\infty & y_u \msub_{r_u} z_\ell & y_u     & y_u     & y_u \msub_{r_u} z_\ell & -\infty \\
+\infty & +\infty & +\infty             & +\infty & +\infty & +\infty             & -\infty \\
\end{tabular}
\[
a_2
 =
   \begin{cases}
     -0, &\text{if $r_u = \rdown$,} \\
     +0, &\text{otherwise.}
   \end{cases}
\]
\caption{Direct projection of subtraction: function $\dirsubl$
         (resp., $\dirsubu$);
         values for $y_\ell$ (resp., $y_u$) on rows,
         values for $z_u$ (resp., $z_\ell$) on columns.}
\label{fig:direct-projection-subtraction}
\end{figure}

\begin{theorem}
\label{teo:direct-projection-subtraction}
\textup{Algorithm~\ref{algo:direct-subtraction}} satisfies its contract.
\end{theorem}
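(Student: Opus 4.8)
The plan is to reduce the statement to the already-established addition case, Theorem~\ref{teo:direct-projection-addition}, via the identity $y \msub_r z = y \madd_r (-z)$. First I would check that this identity holds for \emph{all} non-NaN $y,z \in \Fset$ and every $r \in R$, including the delicate cases: the $\pm\infty$ combinations ($(+\infty)\msub(+\infty)$ and $(-\infty)\msub(-\infty)$ are NaN on both sides, the others reduce to $\pm\infty$ on both sides), and the signed-zero combinations, where an exactly-zero real result forces the IEEE~754 sign-of-zero rule ($-0$ under round-down, $+0$ otherwise) on both sides. Since negation is an order-reversing bijection on $\Fset$ minus the NaNs, with $-[z_l,z_u] = [-z_u,-z_l]$, the constraint $x = y \msub_S z$ with $z \in Z = [z_l,z_u]$ is equivalent to $x = y \madd_S z'$ with $z' \in Z' \defeq [-z_u,-z_l]$, and the two constraints have the same set of achievable values $\{\, y \msub_r z \mid r\in S,\ y\in Y,\ z\in Z\,\} = \{\, y \madd_r z' \mid r\in S,\ y\in Y,\ z'\in Z'\,\}$.

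Next I would verify that Algorithm~\ref{algo:direct-subtraction} run on $(X,Y,Z)$ produces exactly the output of Algorithm~\ref{algo1} run on $(X,Y,Z')$. For the rounding-mode selectors this is immediate from Definition~\ref{def:rounding-mode-selectors}: $r_l$ and $r_u$ depend on $\mathord{\mop}$ only through the sign of the real result $y\circ z$, and $y_l - z_u = y_l + (-z_u)$, so $r_l(S,y_l,\msub,z_u) = r_l(S,y_l,\madd,-z_u)$ and similarly for $r_u$. For the tables one compares Figure~\ref{fig:direct-projection-subtraction} with Figure~\ref{fig:direct-projection-addition} cell by cell under the column relabeling $b \mapsto -b$ (which swaps $\Rset_-\leftrightarrow\Rset_+$, $-0\leftrightarrow+0$, $-\infty\leftrightarrow+\infty$): one checks $\dirsubl(a,b,r) = \diraddl(a,-b,r)$ and $\dirsubu(a,b,r) = \diraddu(a,-b,r)$ for all $36$ entries, the generic $\Rset_\pm\times\Rset_\pm$ entries being the operation identity above and the zero/infinity entries matching by inspection (note $a_1,a_2$ coincide with the addition $a_1,a_2$). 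Since $X$ is untouched, the three lines of Algorithm~\ref{algo:direct-subtraction} then coincide with those of Algorithm~\ref{algo1} on $(X,Y,Z')$, so the output $X'$ satisfies the correctness \emph{and} optimality postconditions by Theorem~\ref{teo:direct-projection-addition}; translating back along the bijection $z'=-z$ (which, as noted, preserves the achievable-value set) yields the contract of Algorithm~\ref{algo:direct-subtraction}.

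As a self-contained alternative — and to cross-check — one can mirror the proof of Theorem~\ref{teo:direct-projection-addition} directly: (i) $\msub_r$ is $\sleq$-monotone, nondecreasing in its left operand and nonincreasing in its right operand, on the set of argument pairs where it is not NaN, so among non-NaN values the minimum over $Y\times Z$ is attained at $(y_l,z_u)$ and the maximum at $(y_u,z_l)$; (ii) Proposition~\ref{prop:worst-case-rounding-modes} collapses the set $S$ to the single worst-case modes $r_l,r_u$; (iii) one then reads off the $36$ table entries, the $\Rset_\pm\times\Rset_\pm$ cells following from (i)–(ii), the cells with a zero argument giving an exact unrounded result governed by the signed-zero rule, and the four corner cells combining $\pm\infty$ chosen so that $X\cap[x'_l,x'_u]$ is empty exactly when every $y\msub_r z$ with $(y,z)\in Y\times Z$ is NaN. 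I expect the main obstacle to be precisely this last point: getting the deliberate asymmetry between the $\dirsubl$ and $\dirsubu$ corner entries right so that the degenerate NaN-only situation collapses to $\emptyset$ rather than to a spurious nonempty interval, and handling the interaction of the $\pm\infty$ "holes" with monotonicity. The reduction route sidesteps most of this by inheriting it from the addition proof, at the cost of the (routine but tedious) cell-by-cell table comparison.
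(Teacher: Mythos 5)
Your proposal is correct, and it is worth noting that the paper itself never writes out a proof of Theorem~\ref{teo:direct-projection-subtraction}: Appendix~\ref{proofs} has no entry for it, and the only guidance is the remark (stated for the two \emph{inverse} subtraction projections) that the proofs ``can be obtained by reasoning in the same way as for the projections of addition.'' Your second, self-contained route is precisely that intended argument: monotonicity of $\msub_r$ in each operand locates the extrema at $(y_l,z_u)$ and $(y_u,z_l)$, Proposition~\ref{prop:worst-case-rounding-modes} collapses $S$ to $r_l,r_u$, and the $36$ table entries are checked case by case, with the non-numeric cells verified exhaustively --- exactly how the paper treats the analogous $(\pm\infty,\pm\infty)$ corner entries of $\diraddl$ and $\diraddu$, whose deliberate asymmetry makes $X\inters[x'_l,x'_u]$ empty precisely when every $y\msub_r z$ is NaN. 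Your primary route, via the operand-negation identity $y \msub_r z = y \madd_r(-z)$, is a genuinely different and tighter packaging of the same analogy: once you have checked (i) the identity on all non-NaN pairs (the only delicate case being an exact-zero result, where IEEE~754 prescribes the same sign rule for a zero difference of like-signed operands as for a zero sum of opposite-signed ones), (ii) that the selectors agree, $r_l(S,y,\msub,z)=r_l(S,y,\madd,-z)$, which is immediate since they depend on the operation only through $\sgn(y\circ z)$ and $y-z=y+(-z)$, and (iii) the cell-by-cell identities $\dirsubl(a,b,r)=\diraddl(a,-b,r)$ and $\dirsubu(a,b,r)=\diraddu(a,-b,r)$ under the column relabeling $b\mapsto-b$, then both the correctness and the optimality postconditions transfer verbatim from Theorem~\ref{teo:direct-projection-addition}, because the substitution $Z\mapsto[-z_u,-z_l]$ preserves the achievable-value set. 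The reduction buys a proof that cannot drift out of sync with the addition tables and inherits the treatment of the troublesome corner cells for free, at the cost of a routine but unavoidable finite table comparison; the re-derivation route duplicates the addition argument but would survive any future change that breaks the mirror symmetry between the two tables.
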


\paragraph{Inverse Propagation.}

For inverse propagation, we have to deal with two different cases
depending on which variable we are computing: the first inverse projection
on $y$ or the second inverse projection on $z$.

The first inverse projection of subtraction is somehow similar to
the direct projection of addition. In this case we define
Algorithm~\ref{algo:first-indirect-subtraction} and functions $\invfirstsubl$
and $\invfirstsubu$, as defined in
Figure~\ref{fig:first-indirect-projection-subtraction-lower}
and~\ref{fig:first-indirect-projection-subtraction-upper} respectively.

\begin{algorithm}
\caption{First inverse projection for subtraction constraints.}
\label{algo:first-indirect-subtraction}
\begin{algorithmic}[1]
\REQUIRE $x = y \msub_S z$,
$x\in X = [x_\ell, x_u]$,
$y\in Y = [y_\ell, y_u]$ and
$z\in Z = [z_\ell, z_u]$.
\ENSURE
$Y' \sseq  Y$ and
\(
  \forall r \in S, x \in X, y \in Y, z \in Z
    \itc  x = y \msub_r z \implies y \in Y'
\).

\STATE
         $\bar{r}_\ell\assign\bar{r}_\ell^\ell(S, x_\ell,  \msub, z_\ell)$;   $\bar{r}_u\assign  \bar{r}_u^\ell(S, x_u,  \msub, z_u) ;$
\STATE
$y'_\ell \assign \invfirstsubl(x_\ell, z_\ell, \bar{r}_\ell)$;
$y'_u \assign \invfirstsubu(x_u, z_u, \bar{r}_u)$;
\IF {$y'_\ell \in \Fset$ and $y'_u\in \Fset$}
\STATE
    $Y' \assign Y \inters [y'_\ell, y'_u]$;
\ELSE
\STATE
    $Y' \assign \emptyset$;
 \ENDIF
\end{algorithmic}
\end{algorithm}

\begin{figure}
\begin{tabular}{L|CCCCCC}
\invfirstsubl(x_\ell, z_\ell, \bar{r}_\ell)
       & -\infty & \Rset_- & -0      & +0      & \Rset_+ & +\infty \\
\hline
-\infty & -\infty & -\infty & -\infty & -\infty & -\infty & -\infty \\
\Rset_- & -\fmax  & a_3      & x_\ell     & x_\ell     & a_3      & \uns    \\
-0      & -\fmax  & z_\ell     & -0      & -0      & z_\ell    & \uns    \\
+0      & -\fmax  & a_4    & a_5      & a_4      & a_4    & \uns    \\
\Rset_+ & -\fmax  & a_3      & x_\ell     & x_\ell     & a_3      & \uns    \\
+\infty & -\fmax  & a_6   & +\infty & +\infty & +\infty & \uns    \\
\end{tabular}
\begin{align*}
e_\ell & \equiv x_\ell + \ftwiceerrnearneg(x_\ell)/2 + z_\ell; \\
a_3 & =
     \begin{cases}
     -0, &
           \text{if $\bar{r}_\ell = \rnear$, $\ftwiceerrnearneg(x_\ell) = -\fmin$
           and $x_\ell = -z_\ell$;} \\
     x_\ell \madd_\rup z_\ell, &
           \text{if $\bar{r}_\ell = \rnear$, $\ftwiceerrnearneg(x_\ell) = -\fmin$
           and $x_\ell \neq -z_\ell$;} \\
    \evalup{e_\ell}, &
           \text{if $\bar{r}_\ell = \rnear$, $\feven(x_\ell)$,
           $\ftwiceerrnearneg(x_\ell) \neq -\fmin$ and $\evalup{e_\ell} =
           \roundup{e_\ell}$;} \\
    \evaldown{e_\ell}, &
           \text{if $\bar{r}_\ell = \rnear$, $\feven(x_\ell)$,
           $\ftwiceerrnearneg(x_\ell) \neq -\fmin$ and
           $\evalup{e_\ell} >\roundup{e_\ell}$;} \\
     \fsucc\bigl(\evaldown{e_\ell}\bigr), &
           \text{if $\bar{r}_\ell = \rnear$, otherwise;} \\
     -0, &
           \text{if $\bar{r}_\ell = \rdown$ and $x_\ell = -z_\ell$;} \\
     x_\ell \madd_\rup z_\ell, &
           \text{if $\bar{r}_\ell = \rdown$ and $x_\ell \neq -z_\ell$;} \\
     \fsucc\bigl(\fpred(x_\ell) \madd_\rdown z_\ell\bigr), &
           \text{if $\bar{r}_\ell = \rup$;}
     \end{cases} \\
(a_4, a_5) & =
     \begin{cases}
     (\fsucc(z_\ell), +0), &  \bar{r}_\ell = \rdown; \\
     (z_\ell, -0), & \text{otherwise;}
     \end{cases} \\
a_6 & =
     \begin{cases}
     +\infty, & \bar{r}_\ell = \rdown; \\
     \fsucc(\fmax \madd_\rdown z_\ell), & \bar{r}_\ell = \rup; \\
     \fmax \madd_\rup \left(\ftwiceerrnearpos(\fmax)/2 \madd_\rup z_\ell\right), &
     \text{otherwise.}
     \end{cases}
\end{align*}
\caption{First inverse projection of subtraction: function $\invfirstsubl$.}
\label{fig:first-indirect-projection-subtraction-lower}
\end{figure}

\begin{figure}
\begin{tabular}{L|CCCCCC}
\invfirstsubu(x_u, z_u, \bar{r}_u)
        & -\infty & \Rset_- & -0      & +0      & \Rset_+ & +\infty \\
\hline
-\infty & \uns    & -\infty & -\infty & -\infty & a_9  & \fmax \\
\Rset_- & \uns    & a_7     & x_u     & x_u     & a_7     & \fmax \\
-0      & \uns    & a_8     & a_8     & a_8     & a_8     & \fmax \\
+0      & \uns    & z_u     & +0      & +0      & z_u     & \fmax \\
\Rset_+ & \uns    & a_7     & x_u     & x_u     & a_7     & \fmax \\
+\infty & +\infty & +\infty & +\infty & +\infty & +\infty & +\infty \\
\end{tabular}
\begin{align*}
e_u & \equiv x_u + \ftwiceerrnearpos(x_u)/2 + z_u; \\
a_7 & =
     \begin{cases}
     +0, &
           \text{if $\bar{r}_u = \rnear$, $\ftwiceerrnearpos(x_u) = \fmin$ and
           $x_u = -z_u$;} \\
     x_u \madd_\rdown z_u, &
           \text{if $\bar{r}_u = \rnear$, $\ftwiceerrnearpos(x_u) = \fmin$ and
           $x_u \neq -z_u$;} \\
     \evaldown{e_u}, &
           \text{if $\bar{r}_u = \rnear$, $\feven(x_u)$,
           $\ftwiceerrnearpos(x_u) \neq \fmin$
           and $\evaldown{e_u} = \rounddown{e_u}$;} \\
     \evalup{e_u}, &
           \text{if $\bar{r}_u = \rnear$, $\feven(x_u)$,
           $\ftwiceerrnearpos(x_u) \neq \fmin$
           and $\evaldown{e_u} < \rounddown{e_u}$;} \\
     \fpred\left(\evalup{e_u}\right), &
           \text{if $\bar{r}_u = \rnear$, otherwise;} \\
     \fpred\left(\fsucc(x_u) \madd_\rup z_u\right), &
           \text{if $\bar{r}_u = \rdown$;} \\
     +0, &
           \text{if $\bar{r}_u = \rup$ and $x_u = -z_u$;} \\
     x_u \madd_\rdown z_u, &
           \text{if $\bar{r}_u = \rup$ and $x_u \neq -z_u$;}
     \end{cases} \\
a_8 & =
     \begin{cases}
     z_u,  &
           \text{if $\bar{r}_u = \rdown$;} \\
     \fpred(z_u), &
           \text{otherwise;}
     \end{cases} \\
a_9 & =
     \begin{cases}
     -\infty,  &
           \text{if $\bar{r}_u = \rup$;} \\
     \fpred(z_u \madd_\rup -\fmax),  &
           \text{if $\bar{r}_u = \rdown$;} \\
     -\fmax \madd_\rdown \left(\ftwiceerrnearneg(-\fmax)/2 \madd_\rdown z_u\right), &
           \text{otherwise.}
     \end{cases}
\end{align*}
\caption{First inverse projection of subtraction: function $\invfirstsubu$.}
\label{fig:first-indirect-projection-subtraction-upper}
\end{figure}

\begin{theorem}
\label{teo:first-indirect-projection-subtraction}
\textup{Algorithm~\ref{algo:first-indirect-subtraction}} satisfies its contract.
\end{theorem}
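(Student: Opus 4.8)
The plan is to verify that Algorithm~\ref{algo:first-indirect-subtraction} meets its postcondition: $Y'\sseq Y$ (which is immediate since $Y'$ is either $\emptyset$ or $Y\inters[y'_l,y'_u]$), and the inverse propagation correctness property~\eqref{eq:inverse-propagation-correctness}, i.e., for all $r\in S$, $x\in X$, $y\in Y$, $z\in Z$ with $x=y\msub_r z$ we have $y\in Y'$. First I would reduce from sets of rounding modes to worst-case ones: by Proposition~\ref{prop:worst-case-rounding-modes} (applied in the inverse-projection form, with $\mop=\msub$, so that $\bar r_l=\bar r_l^\mathrm{l}(S,x_l,\msub,z_l)$ and $\bar r_u=\bar r_u^\mathrm{l}(S,x_u,\msub,z_u)$), it suffices to show that the minimum $y'\in\Fset$ with $x_l=y'\msub_{\bar r_l}z_l$ is a lower bound for $y$ and the maximum $y''\in\Fset$ with $x_u=y''\msub_{\bar r_u}z_u$ is an upper bound; more precisely, that $\invfirstsubl(x_l,z_l,\bar r_l)$ is at most that minimum (or equals $\uns$ when no solution exists, forcing $Y'=\emptyset$ correctly) and symmetrically for $\invfirstsubu$. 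Since $x=y\msub_r z$ implies $x\sgeq y\msub_r z\sgeq x_l\sgeq\ldots$ — actually more carefully: from $x\in X$ and $x=y\msub_r z$ we get $y\msub_r z\sgeq x_l$ and $y\msub_r z\sleq x_u$, and by monotonicity of $\msub_r$ in its first argument (for fixed second argument), combined with $z\in[z_l,z_u]$ and the monotonicity/antimonotonicity structure encoded by which endpoints appear in the table, the extremal values of $y$ are attained at the stated endpoint combinations.

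Second I would do the case analysis matching the tables of Figures~\ref{fig:first-indirect-projection-subtraction-lower} and~\ref{fig:first-indirect-projection-subtraction-upper}. The rows/columns partition the behaviour by the sign/symbolic class of $x_l$ (resp.\ $x_u$) and $z_l$ (resp.\ $z_u$) into $\{-\infty,\Rset_-,-0,+0,\Rset_+,+\infty\}$. For the finite–finite entries the key observation is that $x=y\msub_r z$ with $r\in\{\rdown,\rup,\rnear\}$ can be reflected on the reals via Proposition~\ref{prop:real-approx-of-fp-constraints}: e.g.\ $x_l\sleq y\msub_{\rdown}z$ gives $x_l\le y-z$ over $\extRset$, hence $y\ge x_l+z$, and then Proposition~\ref{prop:fp-approx-of-real-constraints} turns the real bound $y\ge x_l+z_l$ (using $z\ge z_l$) into the floating-point bound $y\sgeq\evaldown{x_l+z_l}$, with the $\fsucc$ correction when the inequality is strict — exactly the shape of $a_3$, $a_4$. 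The round-to-nearest subcases split further on parity of $x_l$ (via $\feven/\fodd$), on whether $\ftwiceerrnearneg(x_l)=-\fmin$ (the subnormal-negligible regime that the text flags before the algorithm), and on whether the evaluation $\evalup{e_l}$ is correctly rounded or strictly over — all handled by the corresponding Proposition~\ref{prop:fp-approx-of-real-constraints} clauses~\eqref{case:x-geq-e-implies-x-sgeq-evalup-e} and~\eqref{case:x-leq-e-implies-x-sleq-evaldown-e}. The entries involving infinities and signed zeros I would argue directly from the definition of $\msub_r$ on those arguments and of `$\slt$'/`$\equiv$': e.g.\ $y\msub_r z=+\infty$ forces $y=+\infty$ or $z=-\infty$, which pins down the $+\infty$ and $-\fmax$ entries; the $\uns$ entries correspond to sign/overflow incompatibilities for which no $y\in\Fset$ works, making $\emptyset$ the only correct output.

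Third, since subtraction $x=y\msub_r z$ is related to addition by $x=y\madd_r(-z)$, and the first inverse projection of subtraction is, per the text, "somehow similar to the direct projection of addition", I expect much of the finite–finite reasoning to be a sign-flipped transcription of the argument already carried out for Theorem~\ref{teo:direct-projection-addition} (for the direct table entries $y_l\madd_{r_l}z_l$ etc.) together with the argument for Theorem~\ref{teo:indirect-projection-addition} (for the $\evaldown{\cdot}/\fsucc$ machinery). So the bulk of the work is a careful bookkeeping: checking that $\sigma$-free endpoint selection here ($x_l,z_l$ for the lower bound; $x_u,z_u$ for the upper) is the right one, and that every one of the $6\times 6$ entries in each table is both \emph{sound} (never excludes a genuine $(y,z,r)$) and agrees with what the real/float reflection lemmas deliver. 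The main obstacle I anticipate is the round-to-nearest rows of $\invfirstsubl$ and $\invfirstsubu$: getting the parity conditions, the $\ftwiceerrnearneg(x_l)=-\fmin$ (resp.\ $\ftwiceerrnearpos(x_u)=\fmin$) boundary cases, and the exact-vs-inexact evaluation split to line up precisely with Proposition~\ref{prop:real-approx-of-fp-constraints}\eqref{eq:real-approx-of-fp-constraints:3},\eqref{eq:real-approx-of-fp-constraints:6} and Proposition~\ref{prop:min-max-x-ftwiceerrnearneg-ftwiceerrnearpos} — and in particular confirming that the special entries $a_6$ and $a_9$ (where one endpoint is $\pm\infty$ and the nearest-rounding error at $\mp\fmax$ matters) correctly capture the largest/smallest finite $y$ that can still round to an infinite or near-$\fmax$ result. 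Correctness, not optimality, is all that is claimed, which gives some slack, but the symbolic-value entries still have to be exactly right.
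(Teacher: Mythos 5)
Your plan is correct and follows essentially the route the paper itself takes: the paper gives no standalone proof of this theorem but states that it ``can be obtained by reasoning in the same way as for the projections of addition,'' which is precisely your strategy of reducing to worst-case rounding modes via Proposition~\ref{prop:worst-case-rounding-modes}, transcribing the endpoint/parity/$\ftwiceerrnearneg$ case analysis from the proofs of Theorems~\ref{teo:direct-projection-addition} and~\ref{teo:indirect-projection-addition} through $y\msub_r z = y\madd_r(-z)$, and checking the symbolic entries directly. One small caution for the write-up: a result of $+\infty$ does not force $y=+\infty$ or $z=-\infty$ (it can arise from overflow of finite operands), but you already account for this correctly when you single out the $a_6$ and $a_9$ entries, so this is a wording slip rather than a gap.
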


The second inverse projection of subtraction is quite similar to
the case of direct projection of subtraction. Here we define
Algorithm~\ref{algo:second-indirect-subtraction} and functions $\invsecsubl$
and $\invsecsubu$, as defined in
Figures~\ref{fig:second-indirect-projection-subtraction-lower}
and~\ref{fig:second-indirect-projection-subtraction-upper} respectively.

\begin{algorithm}
\caption{Second inverse projection for subtraction constraints.}
\label{algo:second-indirect-subtraction}
\begin{algorithmic}[1]
\REQUIRE $x = y \msub_S z$,
$x \in X = [x_\ell, x_u]$,
$y \in Y = [y_\ell, y_u]$ and
$z \in Z = [z_\ell, z_u]$.
\ENSURE
$Z' \sseq Z$ and
\(
  \forall r \in S, x \in X, y \in Y, z \in Z
    \itc  x = y \msub_r z \implies z \in Z'
\).

\STATE
$\bar{r}_\ell \assign \bar{r}_\ell^r(S, x_u, \msub, y_\ell)$;
$\bar{r}_u \assign \bar{r}_u^r(S, x_\ell, \msub, y_u)$;

\STATE
$z'_\ell \assign \invsecsubl(y_\ell, x_u, \bar{r}_\ell)$;
$z'_u \assign \invsecsubu(y_u, x_\ell, \bar{r}_u)$;
\IF {$z'_\ell \in \Fset$ and $z'_u \in \Fset$}
\STATE
$Z' \assign Z \inters [z'_\ell, z'_u]$;
\ELSE
\STATE
$Z' \assign \emptyset$;
\ENDIF
\end{algorithmic}
\end{algorithm}

\begin{figure}
\begin{tabular}{L|CCCCCC}
\invsecsubl(y_\ell, x_u, \bar{r}_\ell)
        & -\infty & \Rset_- & -0     & +0     & \Rset_+ & +\infty \\
\hline
-\infty & -\fmax  & -\fmax  & -\fmax & -\fmax & -\fmax  & -\infty \\
\Rset_- & a_{13}  & a_{10}  & a_{11} & y_\ell    & a_{10}  & -\infty \\
-0      & +\infty & -x_u    & a_{12} & -0     & -x_u    & -\infty \\
+0      & +\infty & -x_u    & a_{11} & -0     & -x_u    & -\infty \\
\Rset_+ & +\infty & a_{10}  & a_{11} & y_\ell    & a_{10}  & -\infty \\
+\infty & \uns    & \uns    & \uns   & \uns   & \uns    & -\infty \\
\end{tabular}
\begin{align*}
e_\ell & \equiv y_\ell - \left(x_u + \ftwiceerrnearpos(x_u)/2\right); \\
a_{10} & =
       \begin{cases}
       -0, &
             \text{if $\bar{r}_\ell = \rnear$, $\ftwiceerrnearpos(x_u) = \fmin$
             and $x_u = y_\ell$;} \\
       y_\ell \msub_\rup x_u, &
             \text{if $\bar{r}_\ell = \rnear$, $\ftwiceerrnearpos(x_u) = \fmin$
             and $x_u \neq y_\ell$;} \\
       \evalup{e_\ell}, &
             \text{if $\bar{r}_\ell = \rnear$, $\feven(x_u)$,
             $\ftwiceerrnearpos(x_u) \neq \fmin$
             and $\evalup{e_\ell} = \roundup{e_\ell}$;} \\
       \evaldown{e_\ell}, &
             \text{if $\bar{r}_\ell = \rnear$, $\feven(x_u)$,
             $\ftwiceerrnearpos(x_u) \neq \fmin$ and
             $\evalup{e_\ell} >\roundup{e_\ell}$;} \\
       \fsucc\bigl(\evaldown{e_\ell}\bigr), &
             \text{if $\bar{r}_\ell = \rnear$, otherwise;} \\
       -0, &
             \text{if $\bar{r}_\ell = \rup$ and $x_u = y_\ell$;} \\
       y_\ell \msub_\rup x_u, &
           \text{if $\bar{r}_\ell = \rup$ and $x_u \neq y_\ell$;} \\
       \fsucc\left(y_\ell \msub_\rdown \fsucc(x_u)\right), &
           \text{if $\bar{r}_\ell = \rdown$;}
       \end{cases} \\
(a_{11}, a_{12}) & =
       \begin{cases}
       (y_\ell, -0), & \text{if $\bar{r}_\ell = \rdown$;} \\
       (\fsucc(y_\ell), +0),  & \text{otherwise;}
       \end{cases} \\
a_{13} & =
       \begin{cases}
       +\infty,  &
           \text{if $\bar{r}_\ell = \rup$;} \\
       \fsucc(y_\ell \madd_\rdown \fmax), &
           \text{if $\bar{r}_\ell = \rdown$;} \\
       \fmax \madd_\rup \left( \ftwiceerrnearpos(\fmax)/2 \madd_\rup y_\ell\right), &
           \text{otherwise.}
       \end{cases}
\end{align*}
\caption{Second inverse projection of subtraction: function $\invsecsubl$.}
\label{fig:second-indirect-projection-subtraction-lower}
\end{figure}

\begin{figure}
\begin{tabular}{L|CCCCCC}
\invsecsubu(y_u, x_\ell, \bar{r}_u)
       & -\infty & \Rset_- & -0      & +0      & \Rset_+ & +\infty  \\
\hline
-\infty & +\infty & \uns     & \uns    & \uns    & \uns    & \uns    \\
\Rset_- & +\infty & a_{14}    & y_u     & a_{15}   & a_{14}   & -\infty \\
-0      & +\infty & -x_\ell    & +0      & a_{15}   & -x_\ell    & -\infty \\
+0      & +\infty & -x_\ell    & +0      & a_{16}   & -x_\ell    & -\infty  \\
\Rset_+ & +\infty & a_{14}   & y_u     & a_{15}   & a_{14}   & a_{17}   \\
+\infty & +\infty & \fmax   & \fmax   & \fmax   & \fmax   & \fmax   \\
\end{tabular}
\begin{align*}
e_u & \equiv y_u - \left(x_\ell + \ftwiceerrnearneg(x_u)/2\right); \\
a_{14} & =
       \begin{cases}
       +0, &
             \text{if $\bar{r}_u = \rnear$, $\ftwiceerrnearneg(x_\ell) = -\fmin$
             and $x_\ell = y_u$;} \\
       y_u \msub_\rdown x_\ell, &
             \text{if $\bar{r}_u = \rnear$, $\ftwiceerrnearneg(x_\ell) = -\fmin$
             and  $x_\ell \neq y_u$;} \\
       \evaldown{e_u}, &
             \text{if $\bar{r}_u = \rnear$, $\feven(x_\ell)$,
             $\ftwiceerrnearneg(x_\ell) \neq -\fmin$
             and $\evaldown{e_u} = \rounddown{e_u}$;} \\
       \evalup{e_u}, &
             \text{if $\bar{r}_u = \rnear$, $\feven(x_u)$,
             $\ftwiceerrnearneg(x_u) \neq -\fmin$
             and $\evaldown{e_u} < \rounddown{e_u}$;} \\
       \fpred\left(\evalup{e_u}\right), &
             \text{if $\bar{r}_u = \rnear$, otherwise;} \\
       \fpred\left(y_u \msub_\rup \fpred(x_\ell)\right), &
             \text{if $\bar{r}_u = \rup$;} \\
       +0, &
             \text{if $\bar{r}_u = \rdown$ and $x_\ell = y_u$;} \\
       y_u \msub_\rdown x_\ell, &
             \text{if $\bar{r}_u = \rdown$ and $x_\ell \neq y_u$;}
       \end{cases} \\
(a_{15}, a_{16}) & =
       \begin{cases}
       (\fpred(y_u), -0), &
             \text{if $\bar{r}_u = \rdown$;} \\
       (y_u, +0), & \text{otherwise;}
       \end{cases} \\
a_{17} & =
       \begin{cases}
       -\infty, &
             \text{if $\bar{r}_u = \rdown$;} \\
       \fpred(y_u \msub_\rup \fmax), &
             \text{if $\bar{r}_u = \rup$;} \\
       -\fmax \madd_\rdown \left( \ftwiceerrnearneg(-\fmax)/2 \madd_\rdown y_u\right), &
             \text{otherwise.}
       \end{cases}
\end{align*}
\caption{Second inverse projection of subtraction: function $\invsecsubu$.}
\label{fig:second-indirect-projection-subtraction-upper}
\end{figure}

\begin{theorem}
\label{teo:second-indirect-projection-subtraction}
\textup{Algorithm~\ref{algo:second-indirect-subtraction}} is correct.
\end{theorem}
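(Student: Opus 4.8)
The plan is to follow the template already used for the other inverse projections (Theorems~\ref{teo:indirect-projection-addition} and~\ref{teo:first-indirect-projection-subtraction}) and to exploit the observation noted above that the second inverse projection of subtraction mirrors the \emph{direct} projection of subtraction (Theorem~\ref{teo:direct-projection-subtraction}): over $\extRset$ the constraint $x = y \msub_r z$ rearranges to $z = y - x$, so a lower (resp.\ upper) bound for $z$ is obtained from $y_l$ and $x_u$ (resp.\ $y_u$ and $x_l$), which is exactly the pairing used in Algorithm~\ref{algo:second-indirect-subtraction}; the tables of Figures~\ref{fig:second-indirect-projection-subtraction-lower} and~\ref{fig:second-indirect-projection-subtraction-upper} are the direct-subtraction tables with $x$ and $z$ interchanged, the finite-nonzero entries being \emph{widened} to absorb the rounding error, just as in $\invfirstsubl$/$\invfirstsubu$.

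First I would discharge the set of rounding modes. Because $\bar{r}_l$ and $\bar{r}_u$ are produced by the inverse selectors $\bar{r}_l^\mathrm{r}$, $\bar{r}_u^\mathrm{r}$ of Definition~\ref{def:rounding-mode-selectors-inverse}, the inverse-projection part of Proposition~\ref{prop:worst-case-rounding-modes} (with $\mop = \msub$ and the projected operand in the right position) implies that every $z$ occurring in a solution triple $(x,y,z)$ with $r \in S$, $x \in X$, $y \in Y$ satisfies $\hat{z} \sleq z \sleq \tilde{z}$, where $\hat{z}$ is the least $z' \in \Fset$ with $x = y \msub_{\bar{r}_l} z'$ and $\tilde{z}$ the greatest $z'' \in \Fset$ with $x = y \msub_{\bar{r}_u} z''$. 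Combining this with the monotonicity of $\msub$ (non-decreasing in the first argument, order-reversing in the subtrahend) and with $x \in [x_l,x_u]$, $y \in [y_l,y_u]$, the claim reduces to: $\invsecsubl(y_l,x_u,\bar{r}_l)$ is $\sleq$ every $z \in \Fset$ with $y_l \msub_{\bar{r}_l} z \sleq x_u$, and $\invsecsubu(y_u,x_l,\bar{r}_u)$ is $\sgeq$ every $z \in \Fset$ with $y_u \msub_{\bar{r}_u} z \sgeq x_l$; moreover the algorithm must return $\emptyset$ precisely when one of these conditions is unsatisfiable (the $\uns$ entries) or the resulting interval is degenerate. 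The inclusion $Z' \sseq Z$ is immediate since $Z' = Z \inters [z'_l, z'_u]$.

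The bulk of the work is then the cell-by-cell verification of the two tables, organised by the kind of $y_l$ (resp.\ $y_u$) and of $x_u$ (resp.\ $x_l$) among $\{-\infty,\Rset_-,-0,+0,\Rset_+,+\infty\}$. For the symbolic rows and columns I would argue directly from the IEEE~754 semantics of subtraction with infinities and signed zeros and from Definition~\ref{def:floating-point-predecessors-and-successors}; the signed-zero entries (the $a_{11},a_{12},a_{15},a_{16}$ families) merely record which of $\pm 0$ is numerically extremal among the admissible subtrahends. For the finite-nonzero cells I would convert the floating-point inequality $y_l \msub_{\bar{r}_l} z \sleq x_u$ into a constraint over $\extRset$ via Proposition~\ref{prop:real-approx-of-fp-constraints}: for $\bar{r}_l = \rup$ this yields $z \ge y_l - x_u$; for $\bar{r}_l = \rdown$ it yields $z > y_l - x_u - \ferrdown(x_u)$; and for $\bar{r}_l = \rnear$ it yields $z \ge e_l$ (strictly iff $\fodd(x_u)$), with $e_l$ exactly the expression of Figure~\ref{fig:second-indirect-projection-subtraction-lower}. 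Feeding these real bounds through Proposition~\ref{prop:fp-approx-of-real-constraints} and the evaluation functions of Definition~\ref{evaluation functions} reproduces the floating-point entries of $a_{10}$ —including the $\fsucc$-corrected sub-cases and the $\evalup{e_l}$-versus-$\evaldown{e_l}$ split governed by whether the evaluation is exact— while the adjustments at the subnormal threshold ($\ftwiceerrnearpos(x_u)=\fmin$, where the error term is negligible) and at $\pm\fmax$ (entries $a_{13}$, $a_{17}$, where $\ftwiceerrnearpos(\fmax)$ resp.\ $\ftwiceerrnearneg(-\fmax)$ is special and the bound may reach $\pm\infty$) follow from Definition~\ref{def:rounding-error-functions} and the behaviour of $\roundnear{\cdot}$ near overflow. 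The companion implications of Proposition~\ref{prop:real-approx-of-fp-constraints} handle $\invsecsubu$ symmetrically.

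I expect the main obstacle to be sheer bookkeeping rather than a single hard idea: every combination of operand kind, rounding mode, parity of $x_u$/$x_l$, proximity to the subnormal range, and proximity to overflow has to be matched to the correct table entry, and in particular the $\uns$ cells require showing that no $z \in \Fset$ places $y_l \msub_{\bar{r}_l} z$ (or $y_u \msub_{\bar{r}_u} z$) on the required side of $x_u$ (or $x_l$), so that returning $\emptyset$ is sound. As in the earlier proofs, the $\evalup{\cdot}/\evaldown{\cdot}$ abstraction confines to one place the only step that would otherwise need extra-precision arithmetic, so the correctness argument relies only on the defining inequalities of Definition~\ref{evaluation functions} and never on a particular evaluation strategy.
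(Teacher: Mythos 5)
Your proposal is correct and follows essentially the route the paper intends: the paper gives no separate proof of Theorem~\ref{teo:second-indirect-projection-subtraction}, stating only that it ``can be obtained by reasoning in the same way as for the projections of addition,'' and your sketch is a faithful (indeed more explicit) instantiation of exactly that argument --- worst-case rounding modes via Proposition~\ref{prop:worst-case-rounding-modes} with the right-operand selectors, reduction to the real inequalities of Proposition~\ref{prop:real-approx-of-fp-constraints}, re-approximation in $\Fset$ via Proposition~\ref{prop:fp-approx-of-real-constraints}, and cell-by-cell treatment of the symbolic, subnormal-threshold, and near-overflow entries. The derived bounds you state (e.g.\ $z \geq y_l - x_u$ for $\bar{r}_l = \rup$, $z > y_l - \fsucc(x_u)$ for $\bar{r}_l = \rdown$, and $z \geq e_l$ with strictness governed by $\fodd(x_u)$ for $\rnear$) match the entries of Figures~\ref{fig:second-indirect-projection-subtraction-lower} and~\ref{fig:second-indirect-projection-subtraction-upper}, so the remaining work is, as you say, bookkeeping.
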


Since subtraction is very closely related to addition,
the proofs of Theorems~\ref{teo:first-indirect-projection-subtraction}
and \ref{teo:second-indirect-projection-subtraction}
can be obtained by reasoning in the same way as for the projections of addition.
Moreover, it is worth noting that in order to obtain more precise results,
inverse projections for subtraction need to be intersected with
maximum ULP filtering \cite{BagnaraCGG16IJOC}, as in the case of addition.

\subsection{Multiplication}

Here we deal with constraints of the form
$x = y \mmul _S z$.
As usual, assume
$X = [x_\ell, x_u]$,
$Y = [y_\ell, y_u]$ and
$Z = [z_\ell, z_u]$.

\paragraph{Direct Propagation.}

For direct propagation, a case analysis is performed in order to select
the interval extrema $y_L$ and $z_L$ (resp., $y_U$ and $z_U$)
to be used to compute the new lower (resp., upper) bound for $x$.

Firstly, whenever $\sgn(y_\ell) \neq \sgn(y_u)$ and $\sgn(z_\ell) \neq \sgn(z_u)$,
there is no unique choice for $y_L$ and $z_L$ (resp., $y_U$ and $z_U$);
therefore we need to compute the two candidate lower (and upper) bounds for $x$
and then choose the minumum (the maximum, resp).

The choice is instead unique in all cases where the signs of one among $y$ and $z$,
or both of them, are constant over the respective intervals.
Function $\sigma$ of Figure~\ref{fig:the-sigma-function}
determines the extrema of $y$ and $z$
useful to compute the new lower (resp., upper) bound
for $y$ when the sign of $z$ is constant.
When the sign of $y$ is constant, the appropriate choice for
the extrema of $y$ and $z$ can be determined by swapping the role of $y$ and $z$
in function $\sigma$.

Once the extrema $(y_L, y_U, z_L, z_U)$ have been selected,
functions $\dirmull$ and $\dirmulu$ of Figure~\ref{fig:direct-projection-multiplication}
are used to find new bounds for $x$.
It is worth noting that it is not necessary to compute new values
of $r_\ell$ and $r_u$ for the application of functions $\dirmull$ and $\dirmulu$
at line~\ref{algo3:dirmul-second-invocation} of Algorithm~\ref{algo3}.
This is true because, by Definition~\ref{def:rounding-mode-selectors},
the choice of $r_\ell$ (of $r_u$, resp.) is driven by the sign of $y_L \mmul z_L$
(of $y_U \mmul z_U$, resp.) only.
Since, in this case, the sign of $y_L \mmul z_L$ (of $y_U \mmul z_U$, resp.)
as defined at line~\ref{algo3:extrema-first-assign}
and the sign of $y_L \mmul z_L$ (of $y_U \mmul z_U$, resp.)
as defined at line~\ref{algo3:extrema-second-assign} are the same,
we do not need to compute $r_\ell$ and $r_u$ another time.

\begin{algorithm}
\caption{Direct projection for multiplication constraints.}
\label{algo3}
\begin{algorithmic}[1]
\REQUIRE $x = y \mmul_S z$,
$x \in X = [x_\ell, x_u]$,
$y \in Y = [y_\ell, y_u]$ and
$z \in Z = [z_\ell, z_u]$.
\ENSURE
$X' \sseq  X$ and
\(
  \forall r \in S, x \in X, y \in Y, z \in Z
    \itc  x = y \mmul_r z \implies x \in X'
\)
and
\(
\forall X''\subset X \itc \exists r \in S, y \in Y, z \in Z \st y  \mmul_r z\not\in X''\).
\IF {$\sgn(y_\ell) \neq \sgn(y_u)$ \AND $\sgn(z_\ell) \neq \sgn(z_u)$}
  \STATE
  $(y_L, y_U, z_L, z_U) \assign (y_\ell, y_\ell, z_u, z_\ell);$
  \label{algo3:extrema-first-assign}
  \STATE
  $r_\ell \assign r_\ell(S, y_L, \mmul, z_L)$;
  $r_u \assign r_u(S, y_U, \mmul, z_U)$;
  \STATE
  $v_\ell \assign \dirmull(y_L, z_L, r_\ell)$;
  $v_u \assign \dirmulu(y_U, z_U, r_u)$;
  \STATE
  $(y_L, y_U, z_L, z_U) \assign (y_u, y_u, z_\ell, z_u)$;
  \label{algo3:extrema-second-assign}
  \STATE
  $w_\ell \assign \dirmull(y_L, z_L, r_\ell)$;
  $w_u \assign \dirmulu(y_U, z_U, r_u)$;
  \label{algo3:dirmul-second-invocation}
  \STATE
  $x'_\ell \assign \min \{ v_\ell, w_\ell \}$;
  $x'_u \assign \max \{ v_u, w_u \}$;
\ELSE
  \IF {$\sgn(y_\ell) = \sgn(y_u)$}
  \STATE
  $(y_L, y_U, z_L, z_U) \assign \sigma(y_\ell, y_u, z_\ell, z_u);$
  \ELSE
  \STATE
  $(z_L, z_U, y_L, y_U) \assign \sigma(z_\ell, z_u, y_\ell, y_u);$
  \ENDIF
  \STATE
  $r_\ell \assign r_\ell(S, y_L, \mmul, z_L)$; $r_u \assign r_u(S, y_U, \mmul, z_U)$;
  \STATE
  $x'_\ell \assign \dirmull(y_L, z_L, r_\ell)$;
  $x'_u \assign \dirmulu(y_U, z_U, r_u)$;
\ENDIF
\STATE
$X' \assign X \inters [x'_\ell, x'_u]$;
\end{algorithmic}
\end{algorithm}

\begin{figure}[p!]
\begin{tabular}{L|CCCCCC}
\dirmull(y_L, z_L)
        & -\infty & \Rset_-                       & -0      & +0      & \Rset_+                       & +\infty \\
\hline
-\infty & +\infty & +\infty                       & +\infty & -0      & -\infty                       & -\infty \\
\Rset_- & +\infty & y_L \mathord{\mmul}_{r_\ell} z_L & +0      & -0      & y_L \mathord{\mmul}_{r_\ell} z_L & -\infty \\
-0      & +\infty & +0                            & +0      & -0      & -0                            & -0      \\
+0      & -0      & -0                            & -0      & +0      & +0                            & +\infty \\
\Rset_+ & -\infty & y_L \mathord{\mmul}_{r_\ell} z_L & -0      & +0      & y_L \mathord{\mmul}_{r_\ell} z_L & +\infty \\
+\infty & -\infty & -\infty                       & -0      & +\infty & +\infty                       & +\infty \\
\end{tabular}

\bigskip

\begin{tabular}{L|CCCCCC}
\dirmulu(y_U, z_U)
        & -\infty & \Rset_-                       & -0      & +0      & \Rset_+                       & +\infty \\
\hline
-\infty & +\infty & +\infty                       & +0      & -\infty & -\infty                       & -\infty \\
\Rset_- & +\infty & y_U \mathord{\mmul}_{r_u} z_U & +0      & -0      & y_U \mathord{\mmul}_{r_u} z_U & -\infty \\
-0      & +0      & +0                            & +0      & -0      & -0                            & -\infty \\
+0      & -\infty & -0                            & -0      & +0      & +0                            & +0      \\
\Rset_+ & -\infty & y_U \mathord{\mmul}_{r_u} z_U & -0      & +0      & y_U \mathord{\mmul}_{r_u} z_U & +\infty \\
+\infty & -\infty & -\infty                       & -\infty & +0      & +\infty                       & +\infty \\
\end{tabular}
\caption{Direct projection of multiplication: functions $\dirmull$ and $\dirmulu$.}
\label{fig:direct-projection-multiplication}
\end{figure}

\begin{theorem}
\label{teo:direct-projection-mult}
\textup{Algorithm~\ref{algo3}} satisfies its contract.
\end{theorem}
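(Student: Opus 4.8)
The plan is to establish the two halves of the contract of Algorithm~\ref{algo3} --- correctness and optimality of the direct projection --- by reducing the problem to a finite case analysis on the sign classes of the interval bounds, after two simplifications. First I would record the \emph{monotonicity} of the rounded product: each rounding function of Definition~\ref{def:rounding-functions} preserves `$\leq$', and real multiplication is monotone in each argument once the sign of the other is fixed; hence for any fixed rounding mode the map $y \mapsto y \mmul_r z$ is non-decreasing when $z \sgeq +0$ and non-increasing when $z \sleq -0$, and symmetrically in $z$, with the boundary cases ($\pm\infty$, $\pm 0$) matching the IEEE~754 special-value rules encoded in the tables of Figure~\ref{fig:direct-projection-multiplication}. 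Second I would invoke Proposition~\ref{prop:worst-case-rounding-modes} to replace the set $S$ by the two worst-case modes $r_l, r_u$ chosen in the algorithm, so that it suffices to bound $y \mmul_{r_l} z$ from below and $y \mmul_{r_u} z$ from above over the box $Y \times Z$; the same proposition guarantees that these extreme rounded products are attained by some $r', r'' \in S$, which will be the lever for optimality.

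With these in hand I would treat the two branches separately. In the branch where at least one of $Y$, $Z$ is sign-homogeneous, monotonicity in the ``other'' variable is available with a known sign, so the extremal values of the rounded product over $Y \times Z$ are attained at a single, predictable corner of the box; the function $\sigma$ of Figure~\ref{fig:the-sigma-function} is precisely the table that selects this corner from the sign pattern of the bounds (with the roles of $y$ and $z$ swapped when it is $Y$ rather than $Z$ that is sign-homogeneous, which is legitimate by commutativity of $\mmul$). In the remaining branch, where $\sgn(y_l) \neq \sgn(y_u)$ and $\sgn(z_l) \neq \sgn(z_u)$, splitting each interval at zero yields four sub-boxes; on the reals the minimum of $y \cdot z$ over $Y \times Z$ is $\min(y_l z_u, y_u z_l)$ and the maximum is $\max(y_l z_l, y_u z_u)$, which is exactly why the algorithm evaluates $\dirmull$ (resp.\ $\dirmulu$) at the two candidate corners and takes the pointwise minimum (resp.\ maximum). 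Here I would also verify the remark preceding the algorithm that $r_l$ and $r_u$ need not be recomputed for the second corner: by Definition~\ref{def:rounding-mode-selectors} the selected mode depends only on the sign of the real product at the corner, and both lower-bound corners $(y_l, z_u)$, $(y_u, z_l)$ give a product $\leq 0$ while both upper-bound corners $(y_l, z_l)$, $(y_u, z_u)$ give a product $\geq 0$, so the sign --- hence the mode --- is unchanged.

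Once the correct corner(s) are identified, what remains is to verify entry-by-entry that $\dirmull$ and $\dirmulu$ return, for each of the $6 \times 6$ combinations of classes in $\{-\infty, \Rset_-, -0, +0, \Rset_+, +\infty\}$, the tightest floating-point value that lower- (resp.\ upper-) bounds all \emph{non-NaN} products arising from that corner --- NaN results being tracked in the separate Boolean domain and hence irrelevant to the interval. For the purely finite cells ($\Rset_\pm \times \Rset_\pm$) the entry is simply $y_L \mmul_{r_l} z_L$ (resp.\ $y_U \mmul_{r_u} z_U$) and correctness is immediate from monotonicity plus Proposition~\ref{prop:worst-case-rounding-modes}; the work is in the cells involving $\pm\infty$ and signed zeros, where I would argue from the shape of the interval configurations that can give rise to that corner (for instance a corner $(+\infty, -0)$ forces $+\infty$ to be the sole $Y$-value, or pins $Z$ to a sign-constant interval) and then read off the IEEE~754 product directly, checking moreover that the tabulated value is itself realized by some $y \mmul_r z$ with $y \in Y$, $z \in Z$, $r \in S$. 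Optimality then follows: since $X' = X \inters [x'_l, x'_u]$ and both $x'_l$ and $x'_u$ are realized as products (possibly strictly outside $X$), any interval strictly contained in $X'$ must exclude one of these realized products, or a realized value $\sleq x'_l$ (resp.\ $\sgeq x'_u$), hence a genuine solution. I expect the case-by-case validation of the symbolic-value cells of the $\dirmull$/$\dirmulu$ tables to be the main obstacle, precisely because there the naive corner value $y_L \mmul_{r_l} z_L$ may itself be NaN or have the wrong sign, so one must instead reason about infima and suprema of non-NaN products over the admissible box configurations.
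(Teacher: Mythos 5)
Your proposal is correct and follows essentially the same route as the paper's proof: reduction to the worst-case rounding modes via Proposition~\ref{prop:worst-case-rounding-modes}, corner selection by sign analysis (function $\sigma$ in the sign-homogeneous branch, the two-candidate min/max in the straddling branch, including the observation that the sign of the product --- and hence the selected mode --- is the same at both candidate corners), entry-by-entry validation of the $\dirmull$/$\dirmulu$ tables with the symbolic cells treated specially, and optimality from the fact that $x'_l$ and $x'_u$ are realized by some $y \mmul_r z$ with $r \in S$. The paper handles the symbolic cells by a few representative arguments plus brute-force verification, which matches your anticipated case analysis.
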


\paragraph{Inverse Propagation.}

For inverse propagation, Algorithm~\ref{algo4}
partitions interval $Z$ into the sign-homogeneous intervals
$Z_- \defeq Z \inters [-\infty, -0]$ and
$Z_+ \defeq Z \inters [+0, +\infty]$.
This is done because the sign of $Z$ must be taken into account
in order to derive correct bounds for $Y$.
Hence, once $Z$ has been partitioned into sign-homogeneous intervals,
we use intervals $X$ and $Z_-$ to obtain
interval $[y^-_\ell, y^-_u]$, and $X$ and $Z_+$ to obtain $[y^+_\ell, y^+_u]$.
To do so, the algorithm determines the appropriate extrema
of intervals $X$ and $W = Z_-$ or $W = Z_+$
to be used for constraint propagation.
To this aim, function $\tau$ of Figure~\ref{fig:the-tau-function} is employed;
note that the sign of $W$ is, by construction, constant over the interval.
The chosen extrema are then passed as parameters to functions
$\invmull$ of Figure~\ref{fig:indirect-projection-multiplication-yl}
and $\invmulu$ of Figure~\ref{fig:indirect-projection-multiplication-yu},
that compute the new, refined bounds for $y$,
by using the inverse operation of multiplication, i.e., division.
The so obtained intervals $Y \inters [y^-_\ell, y^-_u]$ and
$Y \inters [y^+_\ell, y^+_u]$ will be then joined with convex union,
denoted by $\biguplus$, to obtain $Y'$.

\begin{algorithm}
\caption{Inverse projection for multiplication constraints.}
\label{algo4}
\begin{algorithmic}[1]
\REQUIRE $x = y \mmul_S z$,
$x \in X = [x_\ell, x_u]$,
$y \in Y = [y_\ell, y_u]$ and
$z \in Z = [z_\ell, z_u]$.
\ENSURE
$Y' \sseq Y$ and
\(
  \forall r \in S, x \in X, y \in Y, z \in Z
    \itc  x = y \mmul_r z \implies y \in Y'
\).
\STATE
$Z_-\assign Z \inters [-\infty, -0];$
\IF {$Z_- \neq \emptyset$}
  \STATE $W \assign Z_- $;
  \STATE
  $(x_L, x_U,w_L, w_U)\assign \tau(x_\ell, x_u, w_\ell, w_u)$;
  \STATE
  $\bar{r}_\ell \assign \bar{r}_\ell^\ell(S, x_L, \mmul, w_L)$;
  $\bar{r}_u \assign \bar{r}_u^\ell(S, x_U, \mmul, w_U)$;
  \STATE
  $y^-_\ell \assign \invmull(x_L, w_L, \bar{r}_\ell)$;
  $y^-_u \assign \invmulu(x_U, w_U, \bar{r}_u)$;
  \IF {$y^-_\ell \in \Fset$ and $y^-_u\in \Fset$}
    \STATE
   $Y'_-= Y \inters[y^-_\ell, y^-_u];$
  \ELSE
    \STATE
    $Y'_- = \emptyset;$
  \ENDIF
\ELSE
  \STATE
  $Y'_- = \emptyset;$
\ENDIF
\STATE
$Z_+ \assign Z \inters [+0, +\infty];$
\IF {$Z_+ \neq \emptyset$ }
  \STATE $W \assign Z_+ $;
  \STATE
  $(x_L, x_U,w_L, w_U) \assign \tau(x_\ell, x_u, w_\ell, w_u)$;
  \STATE
  $\bar{r}_\ell \assign \bar{r}_\ell^\ell(S, x_L, \mmul, w_L)$;
  $\bar{r}_u \assign \bar{r}_u^\ell(S, x_U, \mmul, w_U)$;
  \STATE
  $y^+_\ell \assign \invmull(x_L, w_L, \bar{r}_\ell)$;
  $y^+_u \assign \invmulu(x_U, w_U, \bar{r}_u)$;
  \IF {$y^+_\ell \in \Fset$ and $y^+_u\in \Fset$}
    \STATE
    $Y'_+= Y \inters[y^+_\ell, y^+_u];$
  \ELSE
    \STATE
    $Y'_+ = \emptyset;$
  \ENDIF
\ELSE
  \STATE
  $Y'_+ = \emptyset;$
\ENDIF
\STATE
$Y' \assign Y'_- \biguplus Y'_+$;
\end{algorithmic}
\end{algorithm}

\begin{figure}[p!]
\begin{tabular}{L|CCCCCC}
\invmull(x_L, w_L)
       & -\infty & \Rset_- & -0      & +0      & \Rset_+ & +\infty      \\
\hline
-\infty & \fmin   & a_4     & \uns   & -\infty & -\infty  & -\infty   \\
\Rset_- & \fmin   &a^-_3       &\uns   & -\fmax  &a^+_3       & \fmin     \\
-0      & +0      & +0       & +0     & -\fmax  & a_5    & \fmin     \\
+0      & \fmin   & a_6      & -\fmax & +0      & +0       & +0       \\
\Rset_+ & \fmin   & a^-_3      & -\fmax &\uns    & a^+_3      & \fmin    \\
+\infty & -\infty & -\infty  & -\infty & \uns   & a_7  & \fmin     \\
\end{tabular}
\begin{align*}
e^+_\ell & \equiv (x_L+ \ftwiceerrnearneg(x_L)/2) / w_L;\\
a_3^+
   &=
     \begin{cases}
       \evalup{e^+_\ell},
       &\text{if   $\bar{r}_\ell = \rnear$,   $\feven(x_L)$
              and $\evalup{e^+_\ell} = \roundup{e^+_\ell}$;} \\
       \evaldown{e^+_\ell},
       &\text{if  $\bar{r}_\ell = \rnear$,  $\feven(x_L)$  and $\evalup{e^+_\ell} >\roundup{e^+_\ell}$;} \\
       \fsucc\bigl(\evaldown{e^+_\ell}\bigr),
       &\text{if  $\bar{r}_\ell = \rnear$, otherwise;} \\
       x_L \mdiv_\rup w_L,
       &\text{if   $\bar{r}_\ell = \rdown$;} \\
       \fsucc\bigl(\fpred(x_L) \mdiv_\rdown w_L\bigr),
       &\text{if  $\bar{r}_\ell = \rup$;}\\
     \end{cases} \\
e^-_\ell & \equiv (x_L+ \ftwiceerrnearpos(x_L)/2) / w_L; \\
a_3^- &=
   \begin{cases}
    \evalup{e^-_\ell},
       &\text{if  $\bar{r}_\ell = \rnear$, $\feven(x_L)$
              and $\evalup{e^+_\ell} = \roundup{e^+_\ell}$;} \\
              \evaldown{e^-_\ell},
       &\text{if  $\bar{r}_\ell = \rnear$, $\feven(x_L)$ and $\evalup{e^-_\ell} > \roundup{e^-_\ell}$;} \\
     \fsucc\bigl(\evaldown{e^-_\ell}\bigr),
       &\text{if  $\bar{r}_\ell = \rnear$, otherwise;} \\
     x_L \mdiv_\rup w_L,
       &\text{if  $\bar{r}_\ell = \rup$;} \\
     \fsucc\bigl(\fsucc(x_L) \mdiv_\rdown w_L\bigr),
       &\text{if  $\bar{r}_\ell = \rdown$;}
   \end{cases} \\
 e^1_\ell & \equiv (-\fmax+ \ftwiceerrnearneg(-\fmax)/2) / w_L; \\
 a_4
   &=
   \begin{cases}
     +\infty,
       &\text{if $\bar{r}_\ell = \rup$;} \\
     \fsucc(-\fmax \mdiv_\rdown w_L),
       &\text{if $\bar{r}_\ell = \rdown$;} \\
     \evalup{e^1_\ell},
       &\text{if $\bar{r}_\ell = \rnear$ and $\evalup{e^1_\ell} = \roundup{e^1_\ell}$;} \\
    \evaldown{e^1_\ell},
       &\text{if $\bar{r}_\ell = \rnear$, otherwise;} \\
   \end{cases} \\
(a_5, a_6)
   &=
   \begin{cases}
     (-0, \; \fsucc(\fmin \mdiv_\rdown w_L)),
       &\text{if $\bar{r}_\ell = \rdown$;} \\
     (\fsucc(-\fmax \mdiv_\rdown w_L),\;-0),
       &\text{if $\bar{r}_\ell = \rup$;} \\
    (-\fmin \mdiv_\rup (2 \cdot w_L), \; \fmin \mdiv_\rup (2 \cdot w_L)),
       &\text{if $\bar{r}_\ell = \rnear$;} \\
   \end{cases} \\
e^2_\ell &  \equiv (\fmax + \ftwiceerrnearpos(\fmax)/2) / w_L;\\
a_7
   &=
   \begin{cases}
     +\infty,
       &\text{if $\bar{r}_\ell = \rdown$;} \\
     \fsucc(\fmax \mdiv_\rdown w_L),
       &\text{if $\bar{r}_\ell = \rup$;} \\
     \evalup{e^2_\ell},
       &\text{if $\bar{r}_\ell = \rnear$ and $\evalup{e^2_\ell} = \roundup{e^2_\ell}$;} \\
     \evaldown{e^2_\ell},
       &\text{if $\bar{r}_\ell = \rnear$, otherwise.} \\
   \end{cases}
\end{align*}
\caption{Inverse projection of multiplication: function $\invmull$.}
\label{fig:indirect-projection-multiplication-yl}
\end{figure}

\begin{figure}[p!]

\bigskip
\begin{tabular}{L|CCCCCC}
\invmulu(x_U, w_U)
       & -\infty & \Rset_- & -0      & +0      & \Rset_+ & +\infty      \\
\hline
-\infty & +\infty & +\infty  & +\infty & \uns    & a_9     & -\fmin     \\
\Rset_- & -\fmin  & a^-_8      & \fmax   & \uns    & a^+_8        & -\fmin    \\
-0      & -\fmin  & a_{10}    & \fmax   & -0      & -0        & -0        \\
+0      & -0      & -0       & -0      & \fmax   & a_{11}      & -\fmin    \\
\Rset_+ & -\fmin  & a^-_8       & \uns   & \fmax   & a^+_8        & -\fmin    \\
+\infty & -\fmin  & a_{12}       & \uns    & +\infty & +\infty   & +\infty   \\
\end{tabular}
\begin{align*}
e^+_u &\equiv (x_U+ \ftwiceerrnearpos(x_U)/2) / w_U; \\
a^+_8 &=
     \begin{cases}
       \evaldown{e^+_u},
       &\text{if $\bar{r}_u = \rnear$, $\feven(x_U)$
              and $\evalup{e^+_u} = \roundup{e^+_u}$;} \\
       \evalup{e^+_u},
       &\text{if $\bar{r}_u = \rnear$, $\feven(x_U)$  and $\evalup{e^+_u} >\roundup{e^+_u}$;} \\
       \fpred\bigl(\evalup{e^+_u}\bigr),
       &\text{if $\bar{r}_u = \rnear$, otherwise;} \\
       \fpred\bigl(\fsucc(x_U) \mdiv_\rup w_U\bigr),
       &\text{if $\bar{r}_u = \rdown$;} \\
       x_U \mdiv_\rdown w_U,
       &\text{if $\bar{r}_u = \rup$;}
     \end{cases} \\
e^-_u &\equiv (x_U+ \ftwiceerrnearneg(x_U)/2) / w_U; \\
a^-_8 &=
     \begin{cases}
         \evaldown{e^-_u},
       &\text{if $\bar{r}_u = \rnear$, $\feven(x_U)$
              and $\evalup{e^-_u} = \roundup{e^-_u}$;} \\
         \evalup{e^-_u},
       &\text{if $\bar{r}_u = \rnear$, $\feven(x_U)$  and $\evalup{e^-_u} >\roundup{e^-_u}$;} \\
         \fpred\bigl(\evalup{e^-_u}\bigr),
       &\text{if $\bar{r}_u = \rnear$, otherwise;} \\
         \fpred\bigl(\fpred(x_U) \mdiv_\rup w_U\bigr),
       &\text{if $\bar{r}_u = \rup$;} \\
         x_U \mdiv_\rdown w_U,
       &\text{if $\bar{r}_u = \rdown$;}
     \end{cases} \\
e^1_u &\equiv (-\fmax + \ftwiceerrnearneg(-\fmax)/2)/ w_U;\\
a_9
   &=
     \begin{cases}
       -\infty, &\text{if $\bar{r}_u = \rup$;} \\
       \fpred(-\fmax \mdiv_\rup w_U), &\text{if $\bar{r}_u = \rdown$;} \\
       \evaldown{e^1_u}
           &\text{if $\bar{r}_u = \rnear$ and $\evaldown{e^1_u } = \rounddown{e^1_u}$}; \\
       \evalup{e^1_u}  &\text{if $\bar{r}_u = \rnear$, otherwise;} \\
     \end{cases} \\
(a_{10},a_{11})
   &=
     \begin{cases}
       (+0,\; \fpred(\fmin \mdiv_\rup w_U)),
       &\text{if $\bar{r}_u = \rdown$;} \\
       (\fpred(-\fmin \mdiv_\rup w_U),\;+0),  &\text{if $\bar{r}_u = \rup$;} \\
       (-\fmin \mdiv_\rdown (2 \cdot w_U),\; \fmin\mdiv_\rdown (2 \cdot w_U)),
       &\text{if $\bar{r}_u = \rnear$;} \\
     \end{cases} \\
e^2_u &\equiv (\fmax + \ftwiceerrnearpos(\fmax)/2)/ w_U;\\
a_{12}
   &=
     \begin{cases}
       -\infty, &\text{if $\bar{r}_u = \rdown$;} \\
       \fpred(\fmax \mdiv_\rup w_U), &\text{if $\bar{r}_u = \rup$;} \\
       \evaldown{e^2_u}
       &\text{if $\bar{r}_u = \rnear$ and $\evaldown{e^2_u }= \rounddown{e^2_u}$}; \\
       \evalup{e^2_u}  &\text{if $\bar{r}_u = \rnear$, otherwise.} \\
     \end{cases} \\
\end{align*}
\caption{Inverse projection of multiplication: function $\invmulu$.}
\label{fig:indirect-projection-multiplication-yu}
\end{figure}

\begin{theorem}
\label{teo:indirect-projection-mult}
\textup{Algorithm~\ref{algo4}} satisfies its contract.
\end{theorem}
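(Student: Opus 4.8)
The plan is to follow the blueprint of the correctness proof of the first inverse projection of division (Theorem~\ref{teo:first-indirect-projection-division}), using that the inverse of multiplication on one operand is division by the other. The inclusion $Y' \sseq Y$ is immediate: each of $Y'_-$, $Y'_+$ is either $\emptyset$ or of the form $Y \inters [\cdot,\cdot]$, hence contained in the interval $Y$; since $Y$ is convex, the convex union $Y'_- \biguplus Y'_+$ is again contained in $Y$.

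For the correctness property, fix $r \in S$, $x \in X$, $y \in Y$, $z \in Z$ with $x = y \mmul_r z$; the goal is $y \in Y'$. Since nothing lies strictly between $-0$ and $+0$, the sets $Z_- = Z \inters [-\infty,-0]$ and $Z_+ = Z \inters [+0,+\infty]$ partition $Z$, so $z$ lies in exactly one of them, and as both branches of Algorithm~\ref{algo4} invoke the same subroutines it suffices to treat $z \in Z_-$. On $W = Z_-$ the sign is constant, so $\mmul_r$ is monotone in each argument over the relevant ranges; the selector $\tau$ of Figure~\ref{fig:the-tau-function} plays here exactly the role it plays in the direct projection of division (Theorem~\ref{teo:direct-projection-division}), selecting the endpoints $(x_L,w_L)$ of $X$ and $W$ at which the admissible values of $y$ attain their infimum and $(x_U,w_U)$ those at which they attain their supremum. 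Combining this monotonicity with Proposition~\ref{prop:worst-case-rounding-modes} (which replaces the whole set $S$ by the single worst-case modes $\bar{r}_l$, $\bar{r}_u$, never round-to-zero, returned by the inverse selectors), the claim reduces to: $y^-_l \sleq y \sleq y^-_u$, where $y^-_l = \invmull(x_L,w_L,\bar{r}_l)$ is the least $y'\in\Fset$ with $x_L = y' \mmul_{\bar{r}_l} w_L$ and $y^-_u = \invmulu(x_U,w_U,\bar{r}_u)$ is the greatest $y''\in\Fset$ with $x_U = y'' \mmul_{\bar{r}_u} w_U$; and, whenever the relevant table cell is $\uns$, no admissible $y$ exists at all, so that $Y'_- = \emptyset$ is sound. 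The two bounds then assemble: $y$ lies in $[y^-_l,y^-_u]$, hence in $[y^-_l,y^-_u] \biguplus [y^+_l,y^+_u]$, and intersecting with $Y$ gives $y \in Y'$.

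The core is a cell-by-cell verification of Figures~\ref{fig:indirect-projection-multiplication-yl} and~\ref{fig:indirect-projection-multiplication-yu}. For cells whose arguments are symbolic ($\pm\infty$ or $\pm0$) the entry is read off from the multiplication table of Figure~\ref{fig:direct-projection-multiplication} and the definitions of $\fsucc$, $\fpred$: list the $y'$ with $x_L = y' \mmul_{\bar r} w_L$ for the given symbolic $x_L$, $w_L$ and take the extremum, or, if there are none, the entry is $\uns$. For the generic cells ($x_L$, $w_L$ both in $\Rset_-\union\Rset_+$) I would, as in the division proofs, start from the floating-point inequality $x \sleq y \mmul_{\bar r_l} z$ (resp.\ $x \sgeq y \mmul_{\bar r_u} z$), pass to an inequality over the reals by Proposition~\ref{prop:real-approx-of-fp-constraints}, divide through by $w_L$ keeping track of its sign --- the point where the auxiliary expressions $e^{\pm}_l$, $e^{\pm}_u$ and the $\ftwiceerrnearneg/\ftwiceerrnearpos$ correction terms appear in the $\rnear$ sub-case, with Proposition~\ref{prop:min-max-x-ftwiceerrnearneg-ftwiceerrnearpos} locating the extremum over the interval --- and finally turn the resulting real inequality back into a floating-point bound by Proposition~\ref{prop:fp-approx-of-real-constraints}, which supplies both the exact case ($\evalup{e} = \roundup{e}$) and the inexact one, together with the $\fsucc/\fpred$ adjustments needed for strict inequalities and for directed rounding. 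The sub-cases for $\bar r_l$, $\bar r_u \in \{\rdown,\rup,\rnear\}$ and, under $\rnear$, for $\feven(x_L)$ versus $\fodd(x_L)$ match exactly the conditional branches in the definitions of $a^{\pm}_3, \dots, a_{12}$.

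The main obstacle is combinatorial rather than conceptual: two $6\times6$ tables, each generic entry further split by rounding mode, mantissa parity and exactness of the evaluation, plus a dozen symbolic entries, all to be checked --- here only against the correctness, not the optimality, requirement, which is the one real simplification over the direct-projection proofs. A secondary subtlety is justifying, for each $\uns$ entry, that infeasibility at the extremal argument selected by $\tau$ forces infeasibility throughout the corresponding sub-box; this again rests on the monotonicity of $\mmul_r$ already established for Theorem~\ref{teo:direct-projection-mult}.
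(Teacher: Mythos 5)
Your proposal follows essentially the same route as the paper's proof: containment of $Y'$ in $Y$ by construction, reduction to worst-case rounding modes via Proposition~\ref{prop:worst-case-rounding-modes}, sign-splitting of $Z$ with $\tau$ selecting the extremal arguments, and a cell-by-cell verification of the two tables using Propositions~\ref{prop:real-approx-of-fp-constraints}, \ref{prop:min-max-x-ftwiceerrnearneg-ftwiceerrnearpos} and~\ref{prop:fp-approx-of-real-constraints}, exactly as the paper does. The only imprecision is your claim that $\invmull$ returns the \emph{least} feasible $y'$: correctness only requires (and the tables only guarantee, e.g.\ in the inexact-evaluation branches of $a_3^{\pm}$) a sound lower bound, which is what the cited propositions actually deliver.
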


Of course, the refinement $Z'$ of $Z$ can be defined analogously.

%It is worth noting that in order to obtain more precise results, in some cases,
%inverse projections for multiplication have to be intersected with a filter based on an extention of the  Michel and Marre property originally proposed in \cite{MarreM10}.
% In \cite{BagnaraCGG16IJOC} the authors proposed a   filtering  by maximum ULP  algorithm  on intervals for multiplication/division  constraints under  the round to nearest rounding mode. A generalization of such algorithm to the all rounding mode should be used to enhance the precision of the inverse projection of multiplication and division for some special  cases. In more details, concerning multiplication,
%when the interval of $X$ does contain
%zeros while  $Z$ and $Y$ do,    the proposed filtering  by maximum ULP algorithm
% is able to derive more precise bounds than the ones  obtained with
%the standard inverse projection.

% \begin{example}
%\label{ex:mu-fmul}
%Consider the IEEE~754 single-precision constraint
%$x = y \mmul_S z$ with initial intervals
%$X= [\float{1.0}{-50}, \float{1.0}{-30}]$ and
%$Y=Z= [-\infty, +\infty]$.
%Filtering by maximum ULP results into the possible refinement
%$Y'= [-\float{1.0\cdots0}{96}, \float{1.0\cdots0}{96}]$ for $S=\{\rnear\}$,
%while Algorithm~\ref{algo4} would give $Y'=[-\fmax,\fmax]$, with any  rounding mode. 
%\end{example}

\clearpage

\section{Proofs of Results}\label{proofs}
 \subsection{Proofs of Results in Section~\ref{sec:preliminaries}}

\begin{delayedproof}[of Proposition~\ref{prop:round-properties}]
In order to prove \eqref{round-properties:first},
we first prove that $\rounddown{x} \leq x$.
To this aim, consider the following cases on $x \in \Rset \setdiff \{0\}$:
\begin{description}

\item[$-\fmax \leq x < 0   \;\lor\; \fmin \leq x:$]
by~\eqref{eq:round-towards-minus-infinity} we have
$\rounddown{x} = \max \{\, z \in \Fset \mid z \leq x \,\}$,
hence $\rounddown{x}\leq x$;

\item[$0 < x < \fmin:$]
by~\eqref{eq:round-towards-minus-infinity}
we have $\rounddown{x} = -0\leq x$;

\item[ $x < -\fmax:$]
by~\eqref{eq:round-towards-minus-infinity}
we have $\rounddown{x} = -\infty \leq x$.
\end{description}

We now prove that $x \leq \roundup{x}$. Consider the following cases on
$x \in \Rset \setdiff \{ 0 \}$:
\begin{description}

\item[$x > \fmax:$]
by~\eqref{eq:round-towards-plus-infinity} we have
$\roundup{x}= +\infty$ and thus $x \leq \roundup{x}$ holds;

\item[$x \leq -\fmin \;\lor\; 0 < x \leq \fmax:$]
by~\eqref{eq:round-towards-plus-infinity} we have
$\roundup{x} = \min \{\, z \in \Fset \mid z \geq x \,\}$,
hence $ x \leq \roundup{x}$ holds;

\item[$-\fmin< x < 0:$]
by~\eqref{eq:round-towards-plus-infinity} we have $\roundup{x} = -0$
hence $x \leq \roundup{x}$ holds.
\end{description}

In order to prove \eqref{round-properties:second},
consider the following cases on $x \in \Rset \setdiff \{ 0 \}$:
\begin{description}

\item[$x > 0:$]
by~\eqref{eq:round-towards-zero} we have
$\roundzero{x} = \rounddown{x} \leq \roundup{x}$;

\item[$x <0:$]
by~\eqref{eq:round-towards-zero} we have
$\rounddown{x} \leq \roundup{x} = \roundzero{x}$.
\end{description}

 In order to prove \eqref{round-properties:third},
consider the following cases on $x \in \Rset \setdiff \{ 0 \}$:
\begin{description}
\item[$ -\fmax\leq x\leq  \fmax:$]
we have the following cases
\begin{description}
\item[$\bigl|\rounddown{x} - x\bigr|
                   < \bigl|\roundup{x}   - x\bigr| \vee \bigl(\bigl|\rounddown{x} - x\bigr|
                   = \bigl|\roundup{x}   - x\bigr|\bigr)
            \wedge \feven\bigl(\rounddown{x}\bigr)$:]
by~\eqref{eq:round-to-nearest}, we have
$\roundnear{x} = \rounddown{x} \leq \roundup{x}$;

\item[$\bigl|\rounddown{x} - x\bigr|
                   > \bigl|\roundup{x}   - x\bigr|\vee \bigl(\bigl|\rounddown{x} - x\bigr|
                   = \bigl|\roundup{x}   - x\bigr|
                 \wedge \neg \feven\bigl(\rounddown{x}\bigr)\bigr)$:]
          by~\eqref{eq:round-to-nearest} we have $\rounddown{x} \leq \roundup{x} = \roundnear{x}$.
\end{description}
\item[$ -\fmax> x:$]
we have the following cases
\begin{description}
\item[$ -2^{\emax}(2 - 2^{-p})<x< -\fmax$:]
by~\eqref{eq:round-to-nearest} we have $\rounddown{x} \leq \roundup{x} = \roundnear{x}$.

\item[$ x\leq  -2^{\emax}(2 - 2^{-p})$:]
by~\eqref{eq:round-to-nearest} we have
$\roundnear{x} = \rounddown{x} \leq \roundup{x}$;
\end{description}

\item[$ \fmax< x:$]
we have the following cases
\begin{description}
\item[$ 2^{\emax}(2 - 2^{-p})>x>\fmax$:]
               by~\eqref{eq:round-to-nearest} we have
$\roundnear{x} = \rounddown{x} \leq \roundup{x}$;

\item[$ x\geq  2^{\emax}(2 - 2^{-p})$:]
by~\eqref{eq:round-to-nearest} we have $\rounddown{x} \leq \roundup{x} = \roundnear{x}$.
\end{description}

\end{description}

In order to prove \eqref{round-properties:fourth},
let us compute $-\roundup{-x}$.
There are the following cases:
\begin{description}

\item[$-x > \fmax:$]
this implies that $x < -\fmax$
and, by~\eqref{eq:round-towards-plus-infinity},
$\roundup{-x} = +\infty$;
hence, by~\eqref{eq:round-towards-minus-infinity},
$-\roundup{-x} = -\infty = \rounddown{x}$;

\item[$-x \leq -\fmin\;\lor\;0 < -x \leq \fmax:$]
this implies that $x \geq \fmin  \;\lor\;  -\fmax \geq  x > 0$
and, by~\eqref{eq:round-towards-plus-infinity}, we have
$\roundup{-x} = \min \{\, z \in \Fset \mid z \geq -x \,\}$;
therefore, by~\eqref{eq:round-towards-minus-infinity},
\(
  -\roundup{-x}
    = -\min \{\, z \in \Fset \mid z \geq -x \,\}
    =  \max \{\, z \in \Fset \mid z \leq x \,\}
    = \rounddown{x}
\),

\item[$-\fmin < -x < 0:$]
this implies that $0 < x < \fmin$ and,
by~\eqref{eq:round-towards-plus-infinity}, $\roundup{-x}= -0$;
hence, by~\eqref{eq:round-towards-minus-infinity},
$-\roundup{-x} = +0 = \rounddown{x}$.
\end{description}
\end{delayedproof}

\subsection{Proofs of Results in Section~\ref{sec:rounding-modes-and-rounding-errors}}
\label{sec:proofs-rounding-modes-and-rounding-errors}

\begin{delayedproof}[Rest of the proof of Proposition~\ref{prop:worst-case-rounding-modes}]
We prove the second part of
Proposition~\ref{prop:worst-case-rounding-modes},
regarding rounding mode selectors for inverse propagators.
Before doing so, we need to prove the following result.
Let
\(
  \mathord{\mop}
   \in
     \{ \mathord{\madd}, \mathord{\msub}, \mathord{\mmul}, \mathord{\mdiv} \},
\)
and let $\mathrm{r}$ and $\mathrm{s}$ be two IEEE~754 rounding modes,
such that for any $a, b \in \Fset$,
\[
a \mop_r b \sleq a \mop_s b.
\]
Moreover, let $x, z \in \Fset$,
and let $\bar{y}_\mathrm{s}$ be the minimum
$y_\mathrm{s} \in \Fset$ such that
$x = y_\mathrm{s} \mop_\mathrm{s} z$.
Then, for any $y_\mathrm{r} \in \Fset$ such that
$x = y_\mathrm{r} \mop_\mathrm{r} z$ we have
\begin{align*}
  \bar{y}_\mathrm{s} \mop_\mathrm{r} z
  &\sleq
  \bar{y}_\mathrm{s} \mop_\mathrm{s} z \\
  &=
  x \\
  &=
  y_\mathrm{r} \mop_\mathrm{r} z.
\end{align*}
This leads us to write
\[
  [
    \bar{y}_\mathrm{s} \circ z
  ]_\mathrm{r}
  \sleq
  [
    y_\mathrm{r} \circ z
  ]_\mathrm{r}
\]
which, due to the isotonicity of all IEEE~754 rounding modes,
implies
\[
  \bar{y}_\mathrm{s} \circ z
  \sleq
  y_\mathrm{r} \circ z.
\]
Finally, if operator `$\circ$' is isotone we have
\[
  \bar{y}_\mathrm{s}
  \sleq
  y_\mathrm{r},
\]
which implies that $\bar{y}_\mathrm{s}$
is the minimum $y \in \Fset$ such that
$x = y \mop_r z$ or $x = y \mop_s z$.
On the other hand, if `$\circ$' is antitone we have
\[
  \bar{y}_\mathrm{s}
  \sgeq
  y_\mathrm{r},
\]
and $\bar{y}_\mathrm{s}$
is the maximum $y \in \Fset$ such that
$x = y \mop_r z$ or $x = y \mop_s z$.
An analogous result can be proved regarding
the upper bound for $y$ in case the operator is isotone,
and regarding the lower bound for $y$ in case it is antitone.

The above claim allows us to prove the following.
Assume first that $\mop$ is isotone with respect to $y$ in $x=y\mop z$. Let
$\hat{y}_{\rup}$
be the minimum $y_{\rup} \in \Fset$ such that
$x = y_{\rup} \mop_\rup z = \roundup{y_{\uparrow} \mop z}$, let
$\hat{y}_\rnear$
be the minimum $y_\rnear \in \Fset$ such that
$x = y_{\rnear} \mop_\rnear z = \roundnear{y_\rnear \mop z}$ and, finally,
let
$\hat{y}_\rdown$
be the minimum $y_\rdown \in \Fset$ such that
$x = y_{\rdown} \mop_\rdown z = \rounddown{y_\rdown \mop z}$.
We will prove that
\[
  \hat{y}_\rup \sleq \hat{y}_\rnear \sleq \hat{y}_\rdown.
\]
Since we assumed that $\mop$ is isotone with respect to $y$ in $x = y \mop z$,
the rounding mode that gives the minimal $y$ solution of $x = [y \mop z]_\mathrm{r}$
is the one that yields a bigger (w.r.t. $\sleq$ order) floating point number,
as we proved before.
We must now separately treat the following cases:
\begin{description}
\item[$y\mop z\not=0:$]
  By~\eqref{round-properties:third}, we have
  $\rounddown{y \mop z} \leq \roundnear{y \mop z} \leq \roundup{y \mop z}$.
  Since in this case
  $y \mop z \not= 0$, we have that
  $\rounddown{y \mop z} \sleq \roundnear{y \mop z} \sleq \roundup{y \mop z}$.
  This implies $\hat{y}_\rup \sleq \hat{y}_\rnear \sleq \hat{y}_\rdown$.
\item[$y\mop z=0:$]
  In this case, $\rounddown{y \mop z} \sleq \roundnear{y \mop z} = \roundup{y \mop z}$.
  This implies $\hat{y}_\rup \sleq \hat{y}_\rnear \sleq \hat{y}_\rdown$.
\end{description}
Moreover, let
$\tilde{y}_{\rup}$
be the maximum $y_{\rup} \in \Fset$ such that
$x = y_{\rup} \mop_\rup z = \roundup{y_{\uparrow} \mop z}$, let
$\tilde{y}_\rnear$
be the maximum $y_\rnear \in \Fset$ such that
$x = y_{\rnear} \mop_\rnear z = \roundnear{y_\rnear \mop z}$ and, finally,
let
$\tilde{y}_\rdown$
be the maximum $y_\rdown \in \Fset$ such that
$x = y_{\rdown } \mop_\rdown z = \rounddown{y_\rdown \mop z}$.
We will prove the fact that
\[
  \tilde{y}_\rup \sleq  \tilde{y}_\rnear \sleq \tilde{y}_\rdown.
\]
Since we assumed that $\mop$ is isotone with respect to $y$ in $x=y\mop z$,
the rounding mode that gives a maximum $y$ solution of $x = y \mop z_\mathrm{r}$
is the one that gives a smaller (w.r.t. $\sleq$ order) floating point number.
We must now deal with the following cases:
\begin{description}
\item[$y\mop z\not=0:$]
  By~\eqref{round-properties:third}, we have
  $\rounddown{y \mop z} \leq \roundnear{y \mop z} \leq \roundup{y \mop z}$.
  Since in this case $y \mop z \not= 0$,
  we have $\rounddown{y \mop z} \sleq \roundnear{y \mop z} \sleq \roundup{y \mop z}$.
  This implies $\tilde{y}_\rup \sleq  \tilde{y}_\rnear \sleq \tilde{y}_\rdown$.
\item[$y\mop z=0:$]
  In this case $\rounddown{y \mop z} \sleq \roundnear{y \mop z} = \roundup{y \mop z}$.
  This implies $\tilde{y}_\rup \sleq \tilde{y}_\rnear \sleq \tilde{y}_\rdown$.
\end{description}
The inequalities
$\hat{y}_\rup \sleq \hat{y}_\rnear \sleq \hat{y}_\rdown$ and
$\tilde{y}_\rup \sleq \tilde{y}_\rnear \sleq \tilde{y}_\rdown$
allow us to claim that the rounding mode selectors
$\hat{r}_\ell(S, \mop, b)$ and $\hat{r}_u(S, b)$
are correct when $\mop$ is isotone with respect to $y$.
In a similar way it is possible to prove that,
in case $\mop$ is antitone with respect to argument $y$,
the above-mentioned rounding mode selectors can be exchanged:
$\hat{r}_u(S, b)$ can be used to obtain the lower bound for $y$,
while $\hat{r}_\ell(S, \mop, b)$ can be used to obtain the upper bound.

Note that, in general, the roundTowardZero rounding mode is equivalent
to roundTowardPositive if the result of the rounded operation is negative,
and to roundTowardNegative if it is positive.
The only case in which this is not true is when the result is $+0$
and the operation is a sum or a subtraction:
this value can come from the rounding toward negative infinity
of a strictly positive exact result,
or the sum of $+0$ and $-0$, which behaves like roundTowardPositive,
yielding $+0$.
This case must be treated separately,
and it is significant only in $\hat{r}_\ell(S, \mop, b)$,
which is used when seeking for the lowest possible value of the variable to be refined
that yields $+0$.

Definition~\ref{def:rounding-mode-selectors-inverse} also contains
selectors that can choose between rounding mode selectors
$\hat{r}_\ell(S, b)$ and $\hat{r}_u(S, b)$
by distinguishing whether the operator is isotone or antitone
with respect to the operand $y$ to be derived by propagation;
they take the result of the operation $b$ and the known operand $a$
into account.
In particular,
$\bar{r}_\ell^\ell(S, b, \mop, a)$,
$\bar{r}_u^\ell(S, b, \mop, a)$
choose the appropriate selector for the leftmost operand, and
$\bar{r}_\ell^r(S, b, \mop, a)$,
$\bar{r}_u^r(S, b, \mop, a)$
are valid for the rightmost one.
\end{delayedproof}

\begin{delayedproof}[of Proposition~\ref{prop:min-max-x-ftwiceerrnearneg-ftwiceerrnearpos}]
We first  prove~\eqref{eq:min-x-plus-ftwiceerrnearneg}.
By Definition~\ref{def:rounding-error-functions},
we have the following cases:
\begin{description}

\item[$x_\ell = -\fmax$:]
Then,
\begin{align*}
x_\ell + \ftwiceerrnearneg(x_\ell)/2
&= -\fmax + \bigl(-\fmax - \fsucc(-\fmax)\bigr) / 2 \\
&= -2^{\emax} (2 - 2^{1-p}) + \bigl(-2^{\emax} (2 - 2^{1-p}) + 2^{\emax} (2 - 2^{1-p}- 2^{1-p}) \bigr)/2 \\
&= -2^{\emax} (2 - 2^{1-p} +1 - 2^{-p} -1+ 2^{1-p}) \\
&= -2^{\emax}(2- 2^{-p})
\end{align*}

On the other hand, consider any $x$ such that $x_\ell < x \leq x_u$. Since  $x \in \Fset$,
this implies that $\fsucc(-\fmax) \leq x \leq x_u$.
In this case
\[
  x + \ftwiceerrnearneg(x) / 2 = (x + \fpred(x)) / 2.
\]
Since `$\fpred$' is monotone, the minimum can be found when $x = \fsucc(-\fmax)$.
In this case, we have that
\begin{align*}
 (x + \fpred(x))/2
&=  (\fsucc(-\fmax) - \fmax)/2\\
&=\bigl(-2^{\emax} (2 -  2^{1-p}- 2^{1-p})  - 2^{\emax} (2-  2^{1-p})\bigr)/2\\
&=  \bigl(- 2^{\emax}(2 -2^{1-p}- 2^{1-p}+2- 2^{1-p})\bigr)/2\\
&= -2^{\emax}(2- 3 \cdot 2^{-p})\\
&>   x_\ell +  \ftwiceerrnearneg(x_\ell)/2\\
& =  -2^{\emax}(2- 2^{-p}).
\end{align*}
Hence we can conclude that
\(\min_{x_\ell \leq x \leq x_u} \bigl(x + \ftwiceerrnearneg(x)/2\bigr)
    = x_\ell + \ftwiceerrnearneg(x_\ell)/2\).

\item[$x_\ell > -\fmax$:] In this case
\[
  x + \ftwiceerrnearneg(x) / 2 = (x + \fpred(x)) / 2.
\]
Since `$\fpred$' is monotone,
\(
\min_{x_\ell \leq x \leq x_u} \bigl(x + \ftwiceerrnearneg(x_\ell)/2\bigr)
    = x_\ell + \ftwiceerrnearneg(x_\ell)/2
\).
\end{description}

We now prove~\eqref{eq:max-x-plus-ftwiceerrnearpos}.
By Definition~\ref{def:rounding-error-functions},
we have the following cases:
\begin{description}

\item[$x_u = \fmax$:]
Then,
\begin{align*}
x_u +  \ftwiceerrnearpos(x_u)/2
  &= \fmax + \bigl(\fmax - \fpred(\fmax)\bigr) / 2   \\
  &= 2^{\emax} (2 - 2^{1-p}) + \bigl(2^{\emax} (2 - 2^{1-p}) - 2^{\emax} (2 - 2^{1-p} - 2^{1-p})\bigr) / 2 \\
  &= 2^{\emax} (2 - 2^{1-p} + 1 - 2^{-p} - 1 + 2^{1-p}) \\
  &= 2^{\emax}(2 - 2^{-p}).
\end{align*}
Now, consider any $x$ such that $x_\ell \leq x < x_u$.
Since $x \in \Fset$, this implies that $x_\ell \leq x\leq \fpred(\fmax)$.
In this case
\[
  x + \ftwiceerrnearpos(x)/2 = (x + \fsucc(x))/2.
\]
Since `$\fsucc$' is monotone, the maximum can be found when $x= \fpred(\fmax)$.
In this case, we have that
\begin{align*}
 (x + \fsucc(x))/2
&=  (\fpred(\fmax) + \fmax)/2\\
&=\bigl(2^{\emax} (2 -  2^{1-p}- 2^{1-p})  + 2^{\emax} (2-  2^{1-p})\bigr)/2\\
&=  \bigl( 2^{\emax}(2 - 2^{1-p}- 2^{1-p}+2- 2^{1-p})\bigr)/2\\
&= 2^{\emax}(2- 3\cdot 2^{-p})\\
&>  x_u +  \ftwiceerrnearpos(x_u)/2\\
&=  2^{\emax}(2- 2^{-p}).
\end{align*}

Hence we can conclude that
\(\max_{x_\ell \leq x \leq x_u} \bigl(x + \ftwiceerrnearpos(x)/2\bigr)
    = x_u + \ftwiceerrnearpos(x_u)/2.
\)

\item[$x_u< \fmax$:] In this case
\[
  x + \ftwiceerrnearpos(x)/2 = (x + \fsucc(x))/2.
\]
Since `$\fsucc$' is monotone,
\(\max_{x_\ell \leq x \leq x_u} \bigl(x + \ftwiceerrnearpos(x)/2\bigr)
    = x_u + \ftwiceerrnearpos(x_u)/2.
\)
\end{description}
\end{delayedproof}

We now introduce and prove Proposition~\ref{prop:rounding-error-functions},
which contains properties of the rounding error functions
that are only needed in the proof of Proposition~\ref{prop:real-approx-of-fp-constraints}.
\begin{proposition}
\label{prop:rounding-error-functions}
For each $r \in \Rset \setdiff \{ 0 \}$ we have
\begin{align}
\label{eq:rounding-error-functions-down}
  0
    &\leq
      r - \rounddown{r}
        <
          \ferrdown\bigl(\rounddown{r}\bigr) \\
\label{eq:rounding-error-functions-up}
  \ferrup\bigl(\rounddown{r}\bigr)
    &<
      r - \roundup{r}
        \leq
          0            \\
\label{eq:rounding-error-functions-near}
  \ftwiceerrnearneg\bigl(\roundnear{r}\bigr)/2
    &\leq
      r - \roundnear{r}
        \leq
          \ftwiceerrnearpos\bigl(\roundnear{r}\bigr)/2,
\end{align}
where the two inequalities of~\eqref{eq:rounding-error-functions-near}
are strict if $\roundnear{r}$ is odd.
\end{proposition}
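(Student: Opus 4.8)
The plan is to prove the three chains of inequalities by a single case analysis on where $r$ sits in the floating-point grid: either $r$ equals a finite element of $\Fset$, or $r$ lies strictly between two consecutive finite floats, or $r$ falls in one of the boundary regions $0<r<\fmin$, $-\fmin<r<0$, $r>\fmax$, $r<-\fmax$ that Definition~\ref{def:rounding-functions} handles with ad~hoc clauses. The non-strict halves $r-\rounddown{r}\ge 0$ and $r-\roundup{r}\le 0$ are free from Proposition~\ref{prop:round-properties}, equation~\eqref{round-properties:first}, so all the actual work is in the strict bounds of \eqref{eq:rounding-error-functions-down}--\eqref{eq:rounding-error-functions-up} and in \eqref{eq:rounding-error-functions-near}. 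The guiding idea is that the rounding result is always a nearest (for $\rnear$) or immediately adjacent (for $\rdown$, $\rup$) float, so the committed error is bounded by one ULP, resp.\ half a ULP, in the appropriate direction; the four functions of Definition~\ref{def:rounding-error-functions} are exactly those (for $\rnear$, twice those) ULP/half-ULP quantities, read off at the \emph{rounded} value.

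First I would treat \eqref{eq:rounding-error-functions-down}. Whenever $\rounddown{r}=\max\{z\in\Fset: z\le r\}$ (which covers also the case $r\in\Fset$), the float $\fsucc(\rounddown{r})$ is strictly above $\rounddown{r}$ and therefore necessarily exceeds $r$; hence $r-\rounddown{r}<\fsucc(\rounddown{r})-\rounddown{r}=\ferrdown(\rounddown{r})$. In the subnormal-underflow regions $\rounddown{r}$ is $\pm0$ and the bound collapses to $|r|<\fmin=\ferrdown(\pm0)$. Then \eqref{eq:rounding-error-functions-up} follows by the reflection $r\mapsto -r$, using $\rounddown{-r}=-\roundup{r}$ from~\eqref{round-properties:fourth} together with the identity $\ferrdown(-x)=-\ferrup(x)$ (immediate from Definition~\ref{def:rounding-error-functions} and $\fsucc(-x)=-\fpred(x)$, cf.\ Definition~\ref{def:floating-point-predecessors-and-successors}); note that the reflection delivers the bound with $\roundup{r}$, rather than $\rounddown{r}$, as the argument of $\ferrup$ --- the form consistent with the usage in~\eqref{eq:real-approx-of-fp-constraints:2}.

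For \eqref{eq:rounding-error-functions-near} the key remark, immediate from~\eqref{eq:round-to-nearest}, is that $\roundnear{r}\in\{\rounddown{r},\roundup{r}\}$, so I would split accordingly (the case $r\in\Fset$ being trivial, with error $0$, since $\ftwiceerrnearneg(f)/2<0<\ftwiceerrnearpos(f)/2$ for every finite $f$). If $\roundnear{r}=\rounddown{r}=:f$ with $\fsucc(f)$ finite, then the tie-breaking clause of~\eqref{eq:round-to-nearest} places $r$ in $[f,\,(f+\fsucc(f))/2]$, the right endpoint being attained only when $f$ is even; since in the generic case $\ftwiceerrnearpos(f)=\fsucc(f)-f$, this says precisely $0\le r-f\le\ftwiceerrnearpos(f)/2$, strict on the right iff $f$ is odd, while $\ftwiceerrnearneg(f)/2=(\fpred(f)-f)/2<0\le r-f$ is trivial. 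The case $\roundnear{r}=\roundup{r}$ is the mirror image, with $\ftwiceerrnearneg$ and $\fpred$ in place of $\ftwiceerrnearpos$ and $\fsucc$. What remains are the configurations $\roundnear{r}\in\{\fmax,-\fmax,+\infty,-\infty\}$: there $\fsucc$ or $\fpred$ leaves the finite range, the generic ULP formula is unavailable, and one checks the inequality directly using the dedicated clause of Definition~\ref{def:rounding-error-functions} (e.g.\ $\ftwiceerrnearpos(\fmax)=\fmax-\fpred(\fmax)$) together with the explicit threshold $2^{\emax}(2-2^{-p})$ appearing in~\eqref{eq:round-to-nearest} --- observing, for instance, that $\fmax$ has an all-ones mantissa, hence is odd, so the strict form of the bound is the one to establish there, and that $\fmax+\ftwiceerrnearpos(\fmax)/2$ coincides with that threshold.

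I expect the main obstacle to be the round-half-to-even bookkeeping in \eqref{eq:rounding-error-functions-near}: tracking exactly when the half-ULP bound is met with equality (when $\roundnear{r}$ is even) versus forced to be strict (when $\roundnear{r}$ is odd), and verifying that this dovetails with each of the four special clauses of $\ftwiceerrnearneg$ and $\ftwiceerrnearpos$, which are precisely the configurations in which $\fsucc$ or $\fpred$ escapes $\Fset\inters\Rset$. A lesser subtlety is the overflow regions $r>\fmax$ and $r<-\fmax$, where $\rounddown{r}$ or $\roundup{r}$ is already $\pm\infty$ and a literal reading of the strict inequalities compares $\pm\infty$ with $\pm\infty$; these should be handled under the same no-overflow proviso that is implicit throughout this section and made explicit, e.g., in Proposition~\ref{prop:real-approx-of-fp-constraints}.
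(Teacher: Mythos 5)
Your proposal is correct and follows essentially the same route as the paper's proof: bound the directed-rounding error by the one-ULP gap read off at the rounded value (arguing that otherwise $r$ would have been rounded to the adjacent float), get the round-up case from the round-down case (you do this by reflection via~\eqref{round-properties:fourth}, the paper argues it directly, but the content is identical), and for round-to-nearest combine the midpoint/tie-breaking argument with a direct check of the $\pm\fmax$ and $\pm\infty$ clauses against the overflow threshold $2^{\emax}(2 - 2^{-p})$. Your observation that~\eqref{eq:rounding-error-functions-up} should carry $\ferrup\bigl(\roundup{r}\bigr)$ rather than $\ferrup\bigl(\rounddown{r}\bigr)$ is well taken --- the paper's own proof establishes exactly the $\roundup{r}$ form, and the $\rounddown{r}$ form fails at power-of-two boundaries --- as is your remark about the degenerate comparison of infinities when $\rounddown{r} = -\infty$ or $\roundup{r} = +\infty$, which the paper's proof passes over silently.
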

\begin{proof}
Suppose $r \in \Rset$ was rounded down to $x \in \Fset$.  Then the
error that was committed, $r - x$, is a nonnegative extended real that
is strictly bounded from above by $\ferrdown(x) = \fsucc(x) - x$, that
is, $0 \leq r - x < \fsucc(x) - x$, for otherwise we would have
$r \geq \fsucc(x)$ or $r < x$ and, in both cases $r$ would not have
been rounded down to $x$.
Note that $\ferrdown(\fmax) = +\infty$, coherently with the fact that
the error is unbounded from above in this case.

Dually, if $r \in \Rset$ was rounded up to $x \in \Fset$
the error that was committed, $r - x$, is a nonpositive extended
real that is strictly bounded from below by $\ferrup(x) = \fpred(x) - x$,
that is, $\fpred(x) - x < r - x \leq 0$ since, clearly, $\fpred(x) < r \leq x$.
Note that $\ferrup(-\fmax) = -\infty$, coherently with the fact that
the error is unbounded from below in this case.

Suppose now that $r \in \Rset$ was rounded-to-nearest to $x \in \Fset$.
Then the error that was committed, $r - x$, is such that
$\ftwiceerrnearneg(x)/2 \leq r - x \leq \ftwiceerrnearpos(x)/2$,
where the two inequalities are strict if $x$ is odd.

In fact, if $x \notin \{ -\infty, -\fmax \}$,
then $\ftwiceerrnearneg(x)/2 = \bigl(\fpred(x) - x\bigr)/2 \leq r - x$,
for otherwise $r$ would be closer to $\fpred(x)$.
If $x = -\infty$, then $\ftwiceerrnearneg(x)/2 = +\infty$ and
$r - x = +\infty$, so $\ftwiceerrnearneg(x)/2 \leq r - x$ holds.
If $x = -\fmax$, then
\begin{align*}
  \ftwiceerrnearneg(x)/2
    &= \bigl(-\fmax - \fsucc(-\fmax)\bigr)/2 \\
    &= \bigl(-2^{\emax}(2 - 2^{1-p}) + 2^{\emax}(2 - 2^{1-p} - 2^{1-p})\bigr)/2 \\
    &= \bigl(-2^{\emax}(2 - 2^{1-p} - 2 + 2^{1-p} + 2^{1-p})\bigr)/2 \\
    &= -2^{\emax}2^{1-p}/2 \\
    &= -2^{\emax+1-p}/2 \\
    &= -2^{\emax-p}
\end{align*}
and thus, considering that $-\fmax$ is odd,
$\ftwiceerrnearneg(x)/2 < r - x$ is equivalent to
\begin{align*}
  \ftwiceerrnearneg(x)/2 + x
    &= -\bigl(2^{\emax-p} + 2^{\emax}(2 - 2^{1-p})\bigr) \\
    &= -2^\emax (2^{-p} + 2 - 2^{1-p}) \\
    &= -2^\emax (2 - 2^{-p}) \\
    &< r,
\end{align*}
which must hold, for otherwise $r$ would have been rounded to $-\infty$
\cite[Section~4.3.1]{IEEE-754-2008}.

Suppose now $x \notin \{ +\infty, \fmax \}$:
then $\ftwiceerrnearpos(x)/2 = \bigl(\fsucc(x) - x\bigr)/2 \geq r - x$,
for otherwise $r$ would be closer to $\fsucc(x)$.
If $x = +\infty$, then $\ftwiceerrnearpos(x)/2 = -\infty$ and
$r - x = -\infty$, and thus $\ftwiceerrnearpos(x)/2 \geq r - x$ holds.
If $x = \fmax$,
then
\begin{align*}
  \ftwiceerrnearpos(x)/2
    &= \bigl(\fmax - \fpred(\fmax)\bigr)/2 \\
    &= \bigl(2^{\emax}(2 - 2^{1-p}) - 2^{\emax}(2 - 2^{1-p} - 2^{1-p})\bigr)/2 \\
    &= \bigl(2^{\emax}(2 - 2^{1-p} - 2 + 2^{1-p} + 2^{1-p})\bigr)/2 \\
    &= 2^{\emax}2^{1-p}/2 \\
    &= 2^{\emax+1-p}/2 \\
    &= 2^{\emax-p}
\end{align*}
and thus, considering that $\fmax$ is odd,
$\ftwiceerrnearpos(x)/2 > r - x$ is equivalent to
\begin{align*}
  \ftwiceerrnearpos(x)/2 + x
    &= \bigl(2^{\emax-p} + 2^{\emax}(2 - 2^{1-p})\bigr) \\
    &= 2^\emax (2^{-p} + 2  - 2^{1-p}) \\
    &= 2^\emax (2 - 2^{-p}) \\
    &> r,
\end{align*}
which must hold, for otherwise $r$ would have been rounded to $+\infty$.
\end{proof}

\begin{delayedproof}[of Proposition~\ref{prop:real-approx-of-fp-constraints}]
In order to prove~\eqref{eq:real-approx-of-fp-constraints:1},
first observe that  $x \sleq y \mop_\rdown z$ implies that  $x \leq y \mop_\rdown z$.
Assume first that $y \mop_\rdown z \in  \Rset_+ \union \Rset_-$.
In this case, $y \mop_\rdown z = \rounddown{ y \circ z} $.
By inequality~\eqref{round-properties:first} of Proposition~\ref{prop:round-properties},
$y \mop_\rdown z =   \rounddown{ y \circ z}\leq  y \circ z $.
Therefore, $x \leq y \mop_\rdown z = \rounddown{ y \circ z} \leq y \circ z $.
Then,
assume that  $ y \mop_\rdown z =+\infty$. In this case, since the rounding towards minus infinity never rounds to  $+\infty$,
it follows that  $y \mop_\rdown z= y\circ z$.  Hence, $ x \leq y \circ z = + \infty$, holds.
Assume now that  $ y \mop_\rdown z =-\infty$. In this case it must be that  $x= -\infty$ then $ x \leq y \circ z$, holds.
Finally, assume that  $ y \mop_\rdown z =+ 0$ or  $ y \mop_\rdown z = -0$. In any case $x \sleq +0$ that implies
$x \leq 0$. On the other hand,
we have two cases, $y \circ z \neq 0$ or $y \circ z =0$.
For the first case,
 by Definition~\ref{def:rounding-functions},
 $0 \leq y \circ z < \fmin $, then  $ x \leq y \circ z$, holds.
For the second case, since $x \leq 0$  then $x \leq y \circ z$.

 In order to prove~\eqref{eq:real-approx-of-fp-constraints:2},
 as before observe that   $x \sleq y \mop_\rup z$ implies that  $x \leq y \mop_\rup z$. Note that
$ x + \ferrup(x) = \fpred(x)$.
So we are left to prove  $ \fpred(x) < y \circ z$.
Assume now that $ 0 < y \circ z \leq \fmax$ or  $ x \leq -\fmin$.
Moreover, note that  it cannot be the case that    $\fpred(x) \geq     y \circ z$, otherwise, by  Definition~\ref{def:rounding-functions},
$y \mop_\rup z \leq \fpred(x)$ and, therefore,  $x \leq y \mop_\rup z$ would not hold.
Then, in this case, we can conclude $\fpred(x)<   y \circ z$.  Now, assume that $- \fmin < y \circ z <0$. In this case
$y \mop_\rup z= -0$. Hence, $x \leq 0$. By Definition~\ref{def:floating-point-predecessors-and-successors},
$\fpred(x) \leq - \fmin$. Hence, $\fpred(x) \leq y \circ z $, holds. Next,  assume  $y \circ z > \fmax$. In this case
$y \mop_\rup z=  \infty$. Hence,  $x \leq \infty$. By Definition~\ref{def:floating-point-predecessors-and-successors},
$\fpred(x) \leq  \fmax$. Hence $\fpred(x) < y \circ z $, holds. Next  assume  $y \circ  z= 0$.  In this case
$y \mop_\rup z= +0$ or $y \mop_\rup z= -0$. Hence, $x \leq 0$.
By Definition~\ref{def:floating-point-predecessors-and-successors},
$\fpred(x) \leq -\fmin$. Hence $\fpred(x) < y \circ z $, holds. Finally assume $y \circ  z= \infty$.
In this case $y \mop_\rup z= \infty$.  Hence $x \sleq \infty$ and therefore  $x \leq \infty$.
By Definition~\ref{def:floating-point-predecessors-and-successors},
$\fpred(x) \leq \fmax$. Hence $\fpred(x) < y \circ z $, holds.

In order to prove~\eqref{eq:real-approx-of-fp-constraints:3},
 as the previous two cases,  note that  $x \sleq y \mop_\rnear z$ implies that  $x \leq y \mop_\rnear z$.
 First observe that    for $x\neq -\infty$,
$ x + \ftwiceerrnearneg(x)/2 < x $.
Indeed, assume first that $x \neq -\fmax$, then, by Definition~\ref{def:rounding-error-functions},
 $  \ftwiceerrnearneg(x)=   x - \fsucc(x) $. Hence $ x + \ftwiceerrnearneg(x)/2= x +  (x- \fsucc(x))/2= (3x -\fsucc(x))/2$.
Since $x < \fsucc(x)$,  we can conclude that  $ x + \ftwiceerrnearneg(x)/2 < x$.
Assume now that  $x =  -\fmax$.
 By Definition~\ref{def:rounding-error-functions},
 $  \ftwiceerrnearneg(x)=   \fpred(x) - x $. Hence $ x + \ftwiceerrnearneg(x)/2= x +  (\fpred(x)-x)/2= (x + \fpred(x))/2$.
Since $x > \fpred(x)$,  we can conclude that  $ x + \ftwiceerrnearneg(x)/2 < x$.

Now, by  Definition~\ref{def:rounding-functions}, we have to consider the following cases for $ x \mop_\rnear y \in  \Rset_+ \union \Rset_-$:
\begin{description}
\item[]$y \mop_\rnear z=  \rounddown{ y \circ z} $.
In this case, by inequality~\eqref{round-properties:first} of
Proposition~\ref{prop:round-properties},
$x + \ftwiceerrnearneg(x) /2 < x \leq   y \mop_\rnear z=  \rounddown{ y \circ z} \leq  y \circ z$.
Therefore,
$ x + \ftwiceerrnearneg(x)/2 <  y \circ z$, holds.

\item[]$y \mop_\rnear z=  \roundup{ y \circ z} $. Assume first that  $x < y \mop_\rnear z$. In this case,  by Definition~\ref{def:rounding-functions}, since $x\in \Fset$ and
$x < y \mop_\rnear z$,  it must be the case  that
 $x < y \circ z$.  Then, we can conclude that  $x + \ftwiceerrnearneg(x)/2  < x <  y \circ z$. Therefore,
$ x + \ftwiceerrnearneg(x)/2 < y \circ z$, holds.
Assume now  that  $x =y \mop_\rnear z$ and $\feven(x)$.  In this case, by Proposition~\ref{prop:rounding-error-functions}, we have that
$ \ftwiceerrnearneg\bigl(\roundnear{y \circ z}\bigr)/2 \leq
      (y \circ z) - \roundnear{y \circ z }$. Since, in this case $x=y \mop_\rnear z$, we obtain  $ \ftwiceerrnearneg(x)/2 \leq
      (y \circ z) - x$. Hence, $x+   \ftwiceerrnearneg(x)/2 \leq
      y \circ z$. If $\fodd(x)$,   by Proposition~\ref{prop:rounding-error-functions}, we have that
$ \ftwiceerrnearneg\bigl(\roundnear{y \circ z}\bigr)/2 <
      (y \circ z) - \roundnear{y \circ z }$. Hence, $x+   \ftwiceerrnearneg(x)/2 <
      y \circ z$.

\end{description}
Consider now the case  that
$y \mop_\rnear z= +0$ or $y \mop_\rnear z= -0$.  If $y \circ z\neq 0$, then $y \mop_\rnear z= \rounddown{ y \circ z} $ or
$y \mop_\rnear z= \roundup{ y \circ z} $. In this case we can reason as above.  Assume then  that   $y \circ z=  0$.  Since
 $x \sleq +0$  or $x \sleq -0$ implies that $x \leq 0$. Therefore,  we can conclude that  $x + \ftwiceerrnearneg(x) < x \leq 0$ holds.
 Assume now that  $y \mop_\rnear z= +\infty$. If $y \circ z\neq \infty$ then $y \mop_\rnear z= \roundup{ y \circ z} $. In this case we can reason as above.
 On the other hand if    $y \circ z=  +\infty$ then   $x + \ftwiceerrnearneg(x) \leq \infty$ holds.

In order to prove~\eqref{eq:real-approx-of-fp-constraints:4},
remember that $x \sgeq y \mop_\rdown z$ implies that  $x \geq y \mop_\rdown z$. Note that
$ x + \ferrdown(x) = \fsucc(x)$.
So we are left to prove  $ \fsucc(x) > y \circ z$.
Assume now that $ -\fmax < y \circ z < 0$ or  $ \fmin < y \circ  z \leq \fmax $.
Note that  it cannot be the case that    $\fsucc(x) \leq     y \circ z$, otherwise, by  Definition~\ref{def:rounding-functions},
$y \mop_\rdown z \geq \fsucc(x)$ and $x \geq y \mop_\rdown z$ would not hold.
Then, in this case, we can conclude that $\fsucc(x)>   y \circ z$.  Next, assume that $0 < y \circ z <\fmin$. In this case
$y \mop_\rdown z= +0$. Hence, $x \geq 0$. By Definition~\ref{def:floating-point-predecessors-and-successors},
$\fsucc(x) \geq \fmin$. Hence $\fsucc(x) \geq y \circ z $, holds. Next,  assume  $y \circ z <- \fmax$. In this case
$y \mop_\rdown z= - \infty$. Hence $x \geq -\infty$. By Definition~\ref{def:floating-point-predecessors-and-successors},
$\fsucc(x) \geq - \fmax$. Hence $\fsucc(x) > y \circ z $, holds. Next  assume  $y \circ  z= 0$.  In this case
$y \mop_\rdown z= +0$ or $y \mop_\rdown z= -0$. In any case, $x \geq 0$.
By Definition~\ref{def:floating-point-predecessors-and-successors},
$\fsucc(x) \geq \fmin$. Hence $\fsucc(x) > y \circ z $, holds. Finally assume $y \circ  z= -\infty$.
In this case $y \mop_\rdown z= -\infty$.  Hence,  since $x \sgeq -\infty$,  $x \geq -\infty$.
By Definition~\ref{def:floating-point-predecessors-and-successors},
$\fsucc(x) \geq - \fmax$. Hence, $\fsucc(x) > y \circ z $, holds.

In order to prove~\eqref{eq:real-approx-of-fp-constraints:5}, as before,  observe that  $x \sgeq y \mop_\rup z$ implies that  $x \geq y \mop_\rup z$.
Assume first that $ y \mop_\rup z \in  \Rset_+ \union \Rset_-$.
In this case,    $ y \mop_\rup z =   \roundup{ y \circ z} $.
By \eqref{round-properties:first} from Proposition~\ref{prop:round-properties},
$y \mop_\rup z = \roundup{ y \circ z}\geq  y \circ z $.
Then, assume that  $ y \mop_\rup z = -\infty$.
In this case, since the rounding towards plus infinity never rounds to  $-\infty$,
it follows that  $y \mop_\rup z= y\circ z$.
Hence, $ x \geq y \circ z = - \infty$, holds.
Assume now that  $ y \mop_\rup z = +\infty$.
In this case, $x = +\infty$ then $x \geq y \circ z$, holds.
Finally, assume that $y \mop_\rup z = + 0$ or $y \mop_\rup z = -0$.
In any case $x \sgeq -0$ that implies $x \geq 0$.
On the other hand, we have two cases, $y \circ z \neq 0$ or $y \circ z =0$.
For the first case, by Definition~\ref{def:rounding-functions},
$-\fmin < y \circ z < 0 $, then  $ x \geq y \circ z$, holds.
For the second case, since $x \geq 0$ then $x \geq y \circ z$.

In order to prove~\eqref{eq:real-approx-of-fp-constraints:6},
note that   $x \sgeq y \mop_\rnear z$ implies that  $x \geq y \mop_\rnear z$. First observe that    for $x\neq +\infty$,
$ x + \ftwiceerrnearpos(x)/2 > x $.
Indeed, assume first that $x \neq \fmax$, then, by Definition~\ref{def:rounding-error-functions},
 $  \ftwiceerrnearpos(x)=   x - \fpred(x) $. Hence $ x + \ftwiceerrnearpos(x)/2= x +  (x - \fpred(x))/2= (3x - \fpred(x))/2$.
Since $x > \fpred(x)$,  we can conclude that  $ x + \ftwiceerrnearpos(x)/2>x$. Assume now that  $x =  \fmax$.
 By Definition~\ref{def:rounding-error-functions},
 $  \ftwiceerrnearpos(x)=   \fsucc(x)-x $. Hence $ x + \ftwiceerrnearpos(x)/2= x +  (\fsucc(x)-x)/2= (x + \fsucc(x))/2$.
Since $x < \fsucc(x)$,  we can conclude that  $ x + \ftwiceerrnearpos(x)/2 > x$.

By Definition~\ref{def:rounding-functions}, we have to consider
the following cases for $x \mop_\rnear y \in \Rset_+ \union \Rset_-$:
\begin{description}
\item[]$y \mop_\rnear z = \roundup{y \circ z}$.
In this case, by inequality~\eqref{round-properties:first}
of Proposition~\ref{prop:round-properties},
$x + \ftwiceerrnearpos(x)/2 > x \geq y \mop_\rnear z = \roundup{y \circ z} \geq  y \circ z$.
Therefore,
$x + \ftwiceerrnearpos(x)/2 > y \circ z$, holds.

\item[]$y \mop_\rnear z=  \rounddown{ y \circ z} $.
Assume first that $x > y \mop_\rnear z$.
In this case, by Definition~\ref{def:rounding-functions},
since $x \in \Fset$ and $x > y \mop_\rnear z$, it must be the case that $x > y \circ z$.
Hence, as in the previous case, by inequality~\eqref{round-properties:first}
Proposition~\ref{prop:round-properties}, $x + \ftwiceerrnearpos(x)/2  > x >  y \circ z$.
Therefore, $x + \ftwiceerrnearpos(x)/2 > y \circ z$, holds.
Assume now  that  $x =y \mop_\rnear z$ and $\feven(x)$.  In this case, by Proposition~\ref{prop:rounding-error-functions}, we have that
$ \ftwiceerrnearpos\bigl(\roundnear{y \circ z}\bigr)/2 \geq
      (y \circ z) - \roundnear{y \circ z }$. Since, in this case $x=y \mop_\rnear z$, we obtain  $ \ftwiceerrnearpos(x)/2 \geq
      (y \circ z) - x$. Hence, $x+   \ftwiceerrnearpos(x)/2 \geq
      y \circ z$. If $\fodd(x)$,   by Proposition~\ref{prop:rounding-error-functions}, we have that
$ \ftwiceerrnearpos\bigl(\roundnear{y \circ z}\bigr)/2 >
      y \circ z - \roundnear{y \circ z }$. Hence, $x+   \ftwiceerrnearpos(x)/2 >
      y \circ z$.

\end{description}
Consider now the case  that
$y \mop_\rnear z= +0$ or $y \mop_\rnear z= -0$.  If $y \circ z\neq 0$, then $y \mop_\rnear z= \rounddown{ y \circ z} $ or
$y \mop_\rnear z= \roundup{ y \circ z} $. In this case we can reason as above.  Assume now that   $y \circ z=  0$.  Since
 $x \sgeq +0$  or $x \sgeq -0$ implies that $x \geq 0$, we can conclude that  $x + \ftwiceerrnearpos(x)/2 > x \geq 0$ holds.
 Assume now that  $y \mop_\rnear z= -\infty$. If $y \circ z\neq- \infty$ then $y \mop_\rnear z= \rounddown{ y \circ z} $. In this case
we can
 reason as above.
 On the other hand if    $y \circ z=  -\infty$ then   $x + \ftwiceerrnearpos(x)/2 \geq -\infty$ holds.
\end{delayedproof}

\begin{delayedproof}[of Proposition~\ref{prop:fp-approx-of-real-constraints}]
We first  prove~\eqref{prop:fp-approx-of-real-constraints:1}.
By inequality \eqref{round-properties:first} from
Proposition~\ref{prop:round-properties}, $e \geq  \rounddown{e}$.
Hence, $ x \geq  \rounddown{e}$. Since by hypothesis, $e \in E_\Fset$ is  an expression that evaluates on $\extRset$ to a nonzero value,
 we have three cases:
\begin{description}
\item[$\rounddown{e}\neq 0$ and $x\neq0$:] In this case  $ x \geq  \rounddown{e}$ implies  $ x \sgeq  \rounddown{e}$.

\item[$\rounddown{e}= +0$:] In this case,  $0< e< \fmin$. Then, it must be the case that  $x > 0$. Therefore     $ x \sgeq  \rounddown{e}$
holds.
\item[$x=0$:] In this case $x$ must be strictly greater than  $e$ since $e \in E_\Fset$  evaluates to a nonzero value. Therefore,
$ e<0$. Hence, by Definition~\ref{def:rounding-functions},
$\rounddown{e} \leq -\fmin$.
Then $x \sgeq \rounddown{e}$ holds.
\end{description}
In all cases, we have that $x \sgeq \rounddown {e}$.
By Definition~\ref{evaluation functions}, we conclude that $x \sgeq \evaldown{e}$.

We now prove~\eqref{prop:fp-approx-of-real-constraints:2}.
By inequality \eqref{round-properties:first} from Proposition~\ref{prop:round-properties},
as in the previous case, $e \geq  \rounddown{e}$.
Hence, $ x >  \rounddown{e}$.
Since by hypothesis $e \in E_\Fset$ is an expression that evaluates on $\extRset$ to a nonzero value,
we have three cases:
\begin{description}
\item[$\rounddown{e}\neq 0$ and $x\neq0$:] In this case  $ x >  \rounddown{e}$ implies  $ x \sgt  \rounddown{e}$.

\item[$\rounddown{e}= +0$:] In this case,  $0< e< \fmin$. Hence, $x > 0$. Therefore     $ x \sgt  \rounddown{e}$
holds.
\item[$x=0$:]  In this case $x$ must be strictly greater than  $e$ since $e \in E_\Fset$  evaluates to a nonzero value. Therefore,
$ e<0$. Hence, by Definition~\ref{def:rounding-functions},
$\rounddown{e} \leq -\fmin$. Then   $ x   \sgt  \rounddown {e}$ holds.

\end{description}
In all cases, we have that $ x   \sgt  \rounddown {e}$.
 By Definition~\ref{evaluation functions}, we conclude that   $ x  \sgt \evaldown{e}$.
Then, by Definition~\ref{def:floating-point-predecessors-and-successors}, we have the following cases on $\evaldown{e}$:
\begin{description}
\item[$\evaldown{e} = \fmax$:] In this case $\fsucc(\evaldown{e})=+\infty$. Since  $ x  \sgt \evaldown{e}$, this implies that
$x= +\infty$. Then  $ x  \sgeq  \fsucc(\evaldown{e})$, holds.
\item[ $-\fmax \leq \evaldown{e} < -\fmin $ or $\fmin \leq \evaldown{e}   < \fmax$:]  In this case $\fsucc(\evaldown{e})=
  \min \{\, y \in \Fset \mid y > \evaldown{e}  \,\}$. Since $x >  \evaldown{e}$, $x\in  \{\, y \in \Fset \mid y > \evaldown{e}  \,\}$. Hence, $x   \sgeq  \fsucc(\evaldown{e})$, holds.
\item[$\evaldown{e} = +0$ or $\evaldown{e} = -0$:]  In this case $\fsucc(\evaldown{e})=\fmin$. Since  $ x  > \evaldown{e}$, this implies that $x\geq \fmin$. Hence, $x   \sgeq  \fsucc(\evaldown{e})$, holds.
\item [$\evaldown{e} = -\fmin$:] In this case $\fsucc(\evaldown{e})=-0$. Since  $ x  > \evaldown{e}=-\fmin$, $x \sgeq -0$.
 Hence, $x   \sgeq  \fsucc(\evaldown{e})$, holds.
\item[$\evaldown{e} = -\infty$:] In this case $\fsucc(\evaldown{e})=-\fmax$. Since  $ x  > \evaldown{e}=-\infty$, $x \sgeq -\fmax$.
 Hence, $x   \sgeq  \fsucc(\evaldown{e})$, holds.
\end{description}

We now prove~\eqref{prop:fp-approx-of-real-constraints:3}.
By inequality~\eqref{round-properties:first} from Proposition~\ref{prop:round-properties},
$e \leq \roundup{e}$.
Hence, like before, $x \leq \roundup{e}$.
Since by hypothesis $e \in E_\Fset$ is an expression that evaluates on $\extRset$ to a nonzero value,
we have three cases:
\begin{description}
\item[$\roundup{e}\neq 0$ and $x\neq 0$:] In this case  $ x \leq  \roundup{e}$ implies  $ x \sleq  \roundup{e}$.
\item[$\roundup{e}= -0$:] In this case,  $-\fmin< e< 0$. Hence, $x < 0$. Therefore     $ x \sleq  \roundup{e}$
holds.
\item[$x=0$:] In this case it must be the case that  $x$ is strictly smaller than  $e$, since  $e \in E_\Fset$  evaluates to a nonzero value.
 Therefore,
$ e>0$. Hence, by Definition~\ref{def:rounding-functions},
 $\roundup{e} \geq \fmin$. Then   $ x  \sleq \roundup {e}$ holds.
\end{description}

In any case, $x \sleq \roundup {e}$ holds.
By Definition~\ref{evaluation functions}, we conclude that $x \sleq \evalup{e}$.

Next we prove~\eqref{prop:fp-approx-of-real-constraints:4}.
By, again, inequality~\eqref{round-properties:first} from Proposition~\ref{prop:round-properties},
$e \leq  \roundup{e}$.
Hence, $x < \roundup{e}$.
Since by hypothesis $e \in E_\Fset$ is an expression that evaluates on $\extRset$ to a nonzero value,
we have three cases:
\begin{description}
\item[$\roundup{e}\neq 0$ and $x\neq 0$:] In this case  $ x <  \roundup{e}$ implies  $ x \slt  \roundup{e}$.
\item[$\roundup{e}= -0$:] In this case,  $-\fmin< e< 0$. Hence, $x < 0$. Therefore     $ x \slt \roundup{e}$
holds.
\item[$x=0$:] In this case it must be the case that  $x$ is strictly smaller than  $e$, since  $e \in E_\Fset$  evaluates to a nonzero value.
 Therefore,
$ e>0$. Hence, by Definition~\ref{def:rounding-functions},
 $\roundup{e} \geq \fmin$. Then   $ x   \slt  \roundup {e}$ holds.
\end{description}

 In any case, $ x   \slt \roundup {e}$ holds.
 By Definition~\ref{evaluation functions}, we conclude that   $ x  \slt \evalup{e}$.
By Definition~\ref{def:floating-point-predecessors-and-successors}, we have the following cases on $\evalup{e}$:

\begin{description}
\item[$\evalup{e} = -\fmax$:] In this case $\fpred(\evalup{e})=-\infty$. Since  $ x  \slt \evalup{e}$,  this implies that
$x= -\infty$. Then  $ x  \sleq  \fpred(\evalup{e})$, holds.
\item[ $\fmin <\evalup{e} \leq \fmax $ or $-\fmax < \evalup{e}   \leq -\fmin$:]  In this case $\fpred(\evalup{e})=
  \max \{\, y \in \Fset \mid y < \evalup{e}  \,\}$. Since $x <  \evalup{e}$, $x\in  \{\, y \in \Fset \mid y < \evalup{e}  \,\}$. Hence, $x   \sleq  \fpred(\evalup{e})$, holds.
\item[$\evalup{e} = +0$ or $\evalup{e} = -0$:]  In this case $\fpred(\evalup{e})=-\fmin$. Since  $ x  < \evalup{e}$, this implies that $x\leq -\fmin$. Hence, $x   \sleq  \fpred(\evalup{e})$, holds.
\item [$\evalup{e} = \fmin$ :] In this case $\fpred(\evalup{e})=+0$. Since  $ x  < \evalup{e}=\fmin$, $x \sleq +0$.
 Hence, $x   \sleq  \fpred(\evalup{e})$, holds.
\item[$\evaldown{e} = +\infty$ :] In this case $\fpred(\evalup{e})=\fmax$. Since  $ x  < \evalup{e}=\infty$, $x \sleq \fmax$.
 Hence, $x   \sleq  \fpred(\evalup{e})$, holds.
\end{description}

In order to prove~\eqref{case:x-geq-e-implies-x-sgeq-evalup-e} we first want to prove that  $x \sgeq \roundup{e}$.
To this aim consider the following cases for $e$:

\begin{description}
\item[$e>\fmax$:] In this case  $ \roundup{e}=+ \infty$. On the hand, $x\geq e >\fmax$. Since $x\in \Fset$  implies that
$x=+\infty$.  Hence $x \sgeq \roundup{e}$.

\item[$e\leq -\fmin$ or $0<e\leq \fmax$:] In this case  $ \roundup{e}=\min \{\, z \in \Fset \mid z \geq e  \,\}$.
Since $x \geq e$, $x\in  \{\, z \in \Fset \mid z \geq  e  \,\}$. Hence, $x  \geq  \roundup{e}$, holds and also $x \sgeq \roundup{e}$.

\item[$-\fmin< e <0$:] In this case  $ \roundup{e}=-0$.
Since $x \geq e$ and $x\in \Fset$,  $x  \sgeq   -0$, holds.

\item[$e=-\infty$:] In this case    $ \roundup{e}=-\infty$ and  $x  \sgeq -\infty   $ holds.

\end{description}
Since by hypothesis  $\roundup{e}=\evalup{e}$, we can conclude that   $x  \sgeq \evalup{e}  $ holds.

In order to prove~\eqref{case:x-leq-e-implies-x-sleq-evaldown-e} we first want to prove that  $x \sleq \rounddown{e}$.
To this aim consider the following cases for $e$:

\begin{description}
\item[$e < -\fmax$:] In this case  $ \rounddown{e}= -\infty$. On the hand, $x\leq e <- \fmax$. Since $x\in \Fset$  implies that
$x=-\infty$.  Hence $x \sleq \rounddown{e}$.

\item[$e\geq \fmin$ or $-\fmax \leq e < 0$:] In this case  $ \rounddown{e}=\max \{\, z \in \Fset \mid z \leq e  \,\}$.
Since $x \leq e$, $x\in  \{\, z \in \Fset \mid z \leq  e  \,\}$. Hence, $x  \leq  \rounddown{e}$, holds and also $x \sleq \rounddown{e}$.

\item[$0< e <\fmin$:] In this case  $ \rounddown{e}=+0$.
Since $x \leq e$ and $x\in \Fset$,  $x  \sleq   +0$, holds.

\item[$e=+\infty$:] In this case    $ \rounddown{e}=\infty$ and  $x  \sleq +\infty   $ holds.

\end{description}
Since by hypothesis  $\rounddown{e}=\evaldown{e}$, we can conclude that   $x  \sleq \evaldown{e}  $ holds.
\end{delayedproof}

%%% Local Variables:
%%% mode: latex
%%% TeX-master: "main"
%%% End:

\subsection{Proofs of Results in Section~\ref{sec:filtering-algorithms}}
\label{sec:proofs-filtering}

\begin{delayedproof}[of Theorem~\ref{teo:direct-projection-division}]
Given the constraint $x = y \mdiv_S z$ with
$x \in X = [x_\ell, x_u]$,
$y \in Y = [y_\ell, y_u]$ and
$z \in Z = [z_\ell, z_u]$,
Algorithm~\ref{algo5} computes a new refining interval $X'$ for variable $x$.
Note that $X' = [x'_\ell, x'_u] \inters X$, which assures us that $X' \sseq X$.

As for the proof of Theorem \ref{teo:indirect-projection-mult},
it is easy to verify that $y_L$ and $w_L$ (resp., $y_U$ and $w_U$) computed using
function $\tau$ of Figure~\ref{fig:the-tau-function},
are the boundaries of $Y$ and $W$ upon which $x$ touches its
minimum (resp., maximum).
Moreover, remember that by Proposition~\ref{prop:worst-case-rounding-modes},
following the same reasoning of the proofs of the previous theorems,
we can focus on finding a lower bound for $y_L \mdiv_{r_\ell} w_L $
and an upper bound for $y_U \mdiv_{r_u} w_U$.

We will now comment only on the most critical entries of function $\dirdivl$
of Figure~\ref{fig:direct-projection-division}:
let us briefly discuss the cases in which $y_L = -\infty$ and $w_L = \pm \infty$.
\begin{description}
\item[$w_L=-\infty.$]
  In this case, by function $\tau$ of Figure~\ref{fig:the-tau-function}
  (see the first three cases), we have $y_L = y_u = -\infty$,
  while either $w_L = w_\ell$ or $w_L = w_u$.
  Since by the IEEE~754 Standard \cite{IEEE-754-2008}
  dividing $\pm \infty$ by $\pm \infty$ is an invalid operation,
  we are left to consider the case $w_L = w_\ell$.
  In this case, recall that by the IEEE~754 Standard \cite{IEEE-754-2008},
  dividing $-\infty$ by a finite negative number yields
  $+\infty$. Hence, we can conclude $x_\ell = +\infty$.
\item[$w_L = +\infty.$]
  By function $\tau$ of Figure~\ref{fig:the-tau-function} (see the fourth and last case),
  we have $y_L = y_\ell = -\infty$, while $w_L = w_\ell = +\infty$.
  Hence, $x_\ell = -0$, since dividing a negative finite number by $+\infty$ gives $-0$.
\end{description}
A similar reasoning applies for the cases $y_L = +\infty$, $w_L= \pm \infty$.
Dually, the only critical entries of function $\dirdivu$
of Figure~\ref{fig:direct-projection-division} are those in which
$y_U = \pm \infty$ and $w_U = \pm \infty$ and can be handled analogously.

We are left to prove that
$\forall X'' \subset X, \exists r \in S, y \in Y, z \in Z \itc y \mdiv_r z \not\in X''$.
Let us focus on the lower bound $x^+_\ell$ proving that,
if $[x^+_\ell, x^+_\ell] \neq \emptyset$, then there exist $r \in S, y \in Y, z\in Z$
such that $y \mdiv_r z = x^+_\ell$.
Consider the particular values $y_L$, $z_\ell = w_L$ and $r_\ell$
that correspond to $x^+_\ell$ in Algorithm~\ref{algo5},
i.e. $y_L$ and $w_L$ and $r_\ell$ are such that $\dirdivl(y_L,w_L,r_\ell) = x^+_\ell$.
By Algorithm~\ref{algo5}, such $y_L$ and $w_L$ must exist.
First consider the cases in which
$y_L \not\in (\Rset_- \cup \Rset_+)$ or $w_L \not\in (\Rset_- \cup \Rset_+)$.
A brute-force verification was successfully conducted, in this cases,
to prove that $y_L \mdiv_{r_\ell} w_L = x^+_\ell$.
For the cases in which $y_L \in (\Rset_- \cup \Rset_+)$ and $w_L \in (\Rset_- \cup \Rset_+)$
we have, by definition of $\dirdivl$ of Figure~\ref{fig:direct-projection-division},
that $x^+_\ell = y_L \mdiv_{r_\ell} w_L$.
Remember that, by Proposition~\ref{prop:worst-case-rounding-modes},
there exists $r \in S$ such that
$y_L \mdiv_\mathrm{r_\ell} w_L = y_L \mdiv_\mathrm{r} w_L$.
Since $y_L \in Y$ and $w_L \in Z$, we can conclude that
$x^+_\ell \not\in X''$ implies that $y_L \mdiv_\mathrm{r} w_L \not\in X''$,
for any $X'' \subseteq X'$.
An analogous reasoning applies to $x^-_\ell$, to $x^+_u$ and $x^-_u$.
This allows us to prove the optimality claim.
\end{delayedproof}

\begin{delayedproof}[of Theorem~\ref{teo:first-indirect-projection-division}]
Given the constraint $x = y \mdiv_S z$ with
$x \in X = [x_\ell, x_u]$,
$y \in Y = [y_\ell, y_u]$ and
$z \in Z = [z_\ell, z_u]$,
Algorithm~\ref{algo6} computes a new, refining interval $Y'$
for variable $y$.
It returns either
$Y' \assign (Y \inters [y^-_\ell, y^-_u]) \biguplus (Y \inters [y^+_\ell, y^+_u])$
or $Y' = \emptyset$: hence, in both cases, we are sure that $Y' \sseq Y$.

By Proposition~\ref{prop:worst-case-rounding-modes},
we can focus on finding a lower bound for $y\in Y$
by exploiting the constraint $y \mdiv_{\bar{r}_\ell} z = x$
and an upper bound for $y$
by exploiting the constraint $y \mdiv_{\bar{r}_u} z = x$.

In order to compute correct bounds for $y$,
Algorithm~\ref{algo7} first splits the interval of $z$ into
the sign-homogeneous intervals $Z_-$ and $Z_+$,
since knowing the sign of $z$ is crucial to determine correct bounds for $y$.
Hence, for $W = Z_-$ (and, analogously, for $W = Z_+$), it calls
function $\sigma$ of Figure~\ref{fig:the-sigma-function} to determine
the appropriate extrema of intervals $X$ and $W$ to be used to compute
the new lower and upper bounds for $y$.
As we did in the proof of Theorem~\ref{teo:direct-projection-mult},
it is easy to verify that $x_L$ and $w_L$ (resp., $x_U$ and $w_U$),
computed using function $\sigma$ of Figure~\ref{fig:the-sigma-function},
are the boundaries of $X$ and $W$ upon which $y$ touches its
minimum (resp., maximum).
Functions $\invfirstdivl$ of Figure~\ref{fig:first-indirect-projection-division-yl}
and $\invfirstdivu$ of Figure~\ref{fig:first-indirect-projection-division-yu}
are then used to find the new bounds for $y$.
The so obtained intervals for $y$ will be eventually joined using convex union
to obtain the refining interval for $y$.

We will now prove the non-trivial parts of the definitions of
functions $\invfirstdivl$ and $\invfirstdivu$.
Concerning the case analysis of $\invfirstdivl$
(Fig~\ref{fig:first-indirect-projection-division-yl}) marked as $a_4$,
the result changes depending on the selected rounding mode:
\begin{description}
\item[$\bar{r}_\ell = \rup:$]
  we clearly must have $y = +\infty$, according to the IEEE~754
  Standard \cite{IEEE-754-2008};
\item[$\bar{r}_\ell = \rdown:$]
  it must be $y / w_L < -\fmax$ and thus,
  since $w_L$ is negative, $y > -\fmax \cdot w_L$ and,
  by~\eqref{prop:fp-approx-of-real-constraints:2}
  of Proposition~\ref{prop:fp-approx-of-real-constraints},
  $y \sgeq \fsucc(-\fmax \mmul_\rdown w_L)$.
\item[$\bar{r}_\ell = \rnear:$]
  since $\fodd(\fmax)$, for $w_L = -\infty$ we need $y$ to be greater than
  or equal to $\bigl(-\fmax + \ftwiceerrnearneg(-\fmax)/2\bigr) \cdot w_L$.
  If
  \(
    \evalup{\bigl(-\fmax + \ftwiceerrnearneg(-\fmax)/2\bigr) \cdot w_L}
    = \roundup{\bigl(-\fmax + \ftwiceerrnearneg(-\fmax)/2\bigr) \cdot w_L}
  \),
  by~\eqref{case:x-geq-e-implies-x-sgeq-evalup-e}
  of Proposition~\ref{prop:fp-approx-of-real-constraints}, we can conclude
  $y \sgeq \evalup{\bigl(-\fmax + \ftwiceerrnearneg(-\fmax)/2\bigr) \cdot w_L}$.
  On the other hand, if
  \(
    \evalup{\bigl(-\fmax + \ftwiceerrnearneg(-\fmax)/2\bigr) \cdot w_L}
    \neq \roundup{\bigl(-\fmax + \ftwiceerrnearneg(-\fmax)/2\bigr) \cdot w_L}
  \),
  then we can only apply~\eqref{prop:fp-approx-of-real-constraints:1}
  of Proposition~\ref{prop:fp-approx-of-real-constraints}, obtaining
  $y \sgeq \evaldown{\bigl(-\fmax + \ftwiceerrnearneg(-\fmax)/2\bigr) \cdot w_L}$.
\end{description}

The case analysis of $\invfirstdivl$ (Fig~\ref{fig:first-indirect-projection-division-yl})
marked as $a_5$ can be explained as follows:
\begin{description}
\item[$\bar{r}_\ell = \rdown:$]
  we must have $y = +\infty$, according to the IEEE~754
  Standard \cite{IEEE-754-2008};
\item[$\bar{r}_\ell = \rup:$]
  inequality $y / w_L > \fmax$ must hold and thus, since $w_L$ is positive,
  $y > \fmax \cdot w_L$ and, by~\eqref{prop:fp-approx-of-real-constraints:2}
  of Proposition~\ref{prop:fp-approx-of-real-constraints},
  $y \sgeq \fsucc(\fmax \mmul_\rdown w_L)$.
\item[$\bar{r}_\ell = \rnear:$]
  since $\fodd(\fmax)$, for $x_L = +\infty$ we need $y$ to be greater than
  or equal to $\bigl(\fmax + \ftwiceerrnearpos(\fmax) / 2\bigl) \cdot w_L$. If
  \(
    \evalup{\bigl(\fmax + \ftwiceerrnearpos(\fmax) / 2\bigl) \cdot w_L}
    = \roundup{\bigl(\fmax + \ftwiceerrnearpos(\fmax) / 2\bigl) \cdot w_L}
  \),
  by~\eqref{case:x-geq-e-implies-x-sgeq-evalup-e}
  of Proposition~\ref{prop:fp-approx-of-real-constraints},
  we can  conclude
  $y \sgeq \evalup{\bigl(\fmax + \ftwiceerrnearpos(\fmax) / 2\bigl) \cdot w_L}$.
  On the other hand, if
  \(
    \evalup{\bigl(\fmax + \ftwiceerrnearpos(\fmax) / 2\bigl) \cdot w_L}
    \neq \roundup{\bigl(\fmax + \ftwiceerrnearpos(\fmax) / 2\bigl) \cdot w_L}
  \)
  then, we can only apply~\eqref{prop:fp-approx-of-real-constraints:1}
  of Proposition~\ref{prop:fp-approx-of-real-constraints}, obtaining
  $y \sgeq \evaldown{\bigl(\fmax + \ftwiceerrnearpos(\fmax) / 2\bigl) \cdot w_L}$.
\end{description}

The explanation for the case analysis of $\invfirstdivl$
(Fig~\ref{fig:first-indirect-projection-division-yl}) marked as $a_6$
is the following:
\begin{description}
\item[$\bar{r}_\ell = \rup:$]
  the lowest value of $y$ that yields $x_L = +0$ with $w_L \in \Rset_-$
  is clearly $y = -0$;
\item[$\bar{r}_\ell = \rdown:$]
  inequality $y / w_L < \fmin$ should hold and thus, since $w_L$ is negative,
  $y > \fmin \cdot w_L$ and, by~\eqref{prop:fp-approx-of-real-constraints:2}
  of Proposition~\ref{prop:fp-approx-of-real-constraints},
  $y \sgeq \fsucc(\fmin \mmul_\rdown w_L)$.
\item[$\bar{r}_\ell = \rnear:$]
  since $\fodd(\fmin)$, for $x_L = +0$ we need $y$ to be greater than
  or equal to $(\fmin \cdot w_L)/2$.
  Since in this case
  \(
    \evalup{(\fmin \cdot w_L)/2}
    = \roundup{(\fmin \cdot w_L)/2}
    = (\fmin \mmul_\rup w_L)/2
  \),
  by~\eqref{case:x-geq-e-implies-x-sgeq-evalup-e}
  of Proposition~\ref{prop:fp-approx-of-real-constraints},
  we can conclude $y \sgeq (\fmin \mmul_\rup w_L)/2$.
\end{description}

Concerning the case analysis of $\invfirstdivl$
(Fig~\ref{fig:first-indirect-projection-division-yl}) marked as $a_7$,
we must distinguish between the following cases:
\begin{description}
\item[$\bar{r}_\ell = \rdown:$]
  considering $x_L = -0$ and $w_L \in \Rset_+$,
  we clearly must have $y = -0$;
\item[$\bar{r}_\ell = \rup:$]
  it should be $y / w_L > -\fmin$ and thus, since $w_L$ is positive,
  $y > -\fmin \cdot w_L$ and, by~\eqref{prop:fp-approx-of-real-constraints:2}
  of Proposition~\ref{prop:fp-approx-of-real-constraints},
  $y \sgeq \fsucc(-\fmin \mmul_\rdown w_L)$.
\item[$\bar{r}_\ell = \rnear:$]
  since $\fodd(\fmin)$, for $x_L = -0$ we need
  $y$ be to greater than or equal to $\bigl(-\fmin \cdot w_L\bigr)/2$.
  Since in this case
  \(
    \evalup{\bigl(-\fmin \cdot w_L\bigr)/2}
    = \roundup{\bigl(-\fmin \cdot w_L\bigr)/2}
    = (-\fmin \mmul_\rup w_L)/2
  \),
  by~\eqref{case:x-geq-e-implies-x-sgeq-evalup-e}
  of Proposition~\ref{prop:fp-approx-of-real-constraints},
  we can conclude $y \sgeq (-\fmin \mmul_\rup w_L)/2$.
\end{description}

Similar arguments can be used to prove the case analyses of $\invfirstdivu$
of Fig~\ref{fig:first-indirect-projection-division-yu} marked as
$a_9$, $a_{10}$, $a_{11}$ and $a_{12}$.

We will now analyze the case analyses of $\invfirstdivl$
of Fig~\ref{fig:first-indirect-projection-division-yl}
marked as $a^-_3$ and $a^+_3$, and the ones of  $\invfirstdivu$
of Fig~\ref{fig:first-indirect-projection-division-yu}
marked as $a^-_8$ and $a^+_8$.
We can assume, of course,
$X = [x_\ell, x_u]$,
$Y = [y_\ell, y_u]$ and
$Z = [w_\ell, w_u]$, where $x_\ell, x_u, w_\ell, w_u \in \Fset \inters \Rset$,
$x_\ell \leq x_u$,
$w_\ell \leq w_u$ and
$\sgn(w_\ell) = \sgn(w_u)$.
Exploiting $x \sleq y \mdiv z$ and $x \sgeq y \mdiv z$,
by Proposition~\ref{prop:real-approx-of-fp-constraints}, we have
\begin{align}
    y / z
  &\begin{cases}
      \mathord{} \geq x,
        &\text{if $\bar{r}_\ell = \rdown$;} \\
      \mathord{} > x + \ferrup(x) = \fpred(x),
        &\text{if $\bar{r}_\ell = \rup$;} \\
      \mathord{} \geq x + \ftwiceerrnearneg(x)/2,
        &\text{if $\bar{r}_\ell = \rnear$ and $\feven(x)$;} \\
      \mathord{} > x + \ftwiceerrnearneg(x)/2,
        &\text{if $\bar{r}_\ell = \rnear$ and $\fodd(x)$.}
    \end{cases} \label{eq:division-firstandsecond-indirect-projection-I}\\
    y / z
    &\begin{cases}
      \mathord{} < x + \ferrdown(x) = \fsucc(x),
        &\text{if $\bar{r}_u = \rdown$;} \\
      \mathord{} \leq x,
        &\text{if $\bar{r}_u = \rup$;} \\
      \mathord{} \leq x + \ftwiceerrnearpos(x)/2,
        &\text{if $\bar{r}_u = \rnear$ and $\feven(x)$;} \\
      \mathord{} < x + \ftwiceerrnearpos(x)/2,
        &\text{if $\bar{r}_u = \rnear$ and $\fodd(x)$.}
    \end{cases} \label{eq:division-firstandsecond-indirect-projection-II}
\end{align}
Since the case $z = 0$ is handled separately
by $\invfirstdivl$ of Fig~\ref{fig:first-indirect-projection-division-yl}
and by $\invfirstdivu$ of Fig~\ref{fig:first-indirect-projection-division-yu},
we can assume $z \neq 0$.
Thanks to the split of $Z$ into a positive and a negative part,
the sign of $z$ is determinate.
In the following, we will prove the case analyses
marked as $a_3^+$ and $a_8^+$, hence assuming $z > 0$.
From the previous case analysis we can derive
\begin{align}
  y
   &\begin{cases}
      \mathord{} \geq x\cdot z,
        &\text{if $\bar{r}_\ell = \rdown$;} \\
      \mathord{} > \fpred(x) \cdot z,
        &\text{if $\bar{r}_\ell = \rup$} \\
      \mathord{} \geq \bigl(x + \ftwiceerrnearneg(x)/2\bigr) \cdot z,
        &\text{if $\bar{r}_\ell = \rnear$ and $\feven(x)$;} \\
      \mathord{} > \bigl( x + \ftwiceerrnearneg(x)/2 \bigr) \cdot z,
        &\text{if $\bar{r}_\ell = \rnear$ and $\fodd(x)$;}
    \end{cases} \label{eq:division-firstandsecond-indirect-projection-III} \\
    y
   &\begin{cases}
      \mathord{} < \fsucc(x)\cdot z,
        &\text{if $\bar{r}_u = \rdown$;} \\
       \mathord{} \leq x \cdot z,
        &\text{if $\bar{r}_u = \rup$;} \\
      \mathord{} \leq \bigl(x + \ftwiceerrnearpos(x)/2 \bigr) \cdot z,
        &\text{if $\bar{r}_u = \rnear$ and $\feven(x)$;} \\
      \mathord{} < \bigl(x + \ftwiceerrnearpos(x)/2 \bigr) \cdot z,
        &\text{if $\bar{r}_u = \rnear$ and $\fodd(x)$.}
    \end{cases} \label{eq:division-firstandsecond-indirect-projection-IV}
\end{align}

Note that the members of the product are independent.
Therefore, we can find the minimum of the product by minimizing
each member of the product. Since we are analyzing the case in which
$W = Z_+$, let $(x_L, x_U, w_L, w_U)$ as defined in function $\sigma$
of Figure~\ref{fig:the-sigma-function},
replacing the role of $y$ with $z$ and the role of $z$ with $x$.
Hence, by
Proposition~\ref{prop:min-max-x-ftwiceerrnearneg-ftwiceerrnearpos} and
the monotonicity of `$\fpred$' and `$\fsucc$' we obtain
\begin{align}
  y
   &\begin{cases}
      \mathord{} \geq x_L \cdot w_L,
        &\text{if $\bar{r}_\ell = \rdown$;} \\
      \mathord{} > \fpred(x_L) \cdot w_L,
        &\text{if $\bar{r}_\ell = \rup$} \\
      \mathord{} \geq \bigl(x_L + \ftwiceerrnearneg(x_L)/2\bigr) \cdot w_L,
        &\text{if $\bar{r}_\ell = \rnear$ and $\feven(x)$;} \\
      \mathord{} > \bigl(x_L + \ftwiceerrnearneg(x_L)/2 \bigr) \cdot w_L,
        &\text{if $\bar{r}_\ell = \rnear$ and $\fodd(x)$;}
    \end{cases} \\
    y
   &\begin{cases}
      \mathord{} < \fsucc(x_U) \cdot w_U,
        &\text{if $\bar{r}_u = \rdown$;} \\
       \mathord{} \leq x_U \cdot w_U,
        &\text{if $\bar{r}_u = \rup$;} \\
\label{eq:div-1-refine-real-upper_bounds}
      \mathord{} \leq \bigl(x_U + \ftwiceerrnearpos(x_U)/2 \bigr) \cdot w_U,
        &\text{if $\bar{r}_u = \rnear$ and $\feven(x)$;} \\
      \mathord{} < \bigl(x_U + \ftwiceerrnearpos(x_U)/2 \bigr) \cdot w_U,
        &\text{if $\bar{r}_u = \rnear$ and $\fodd(x)$.}
    \end{cases}
\end{align}
We can now exploit Proposition~\ref{prop:fp-approx-of-real-constraints}
and obtain:
\begin{align}
\label{eq:div-inv-1-num-rup-rdown-lower}
  y'_\ell &\defeq
    \begin{cases}
      x_L \mmul_\rup w_L,
        &\text{if $\bar{r}_\ell = \rdown$;} \\
      \fsucc\bigl(\fpred(x_L) \mmul_\rdown w_L\bigr),
        &\text{if $\bar{r}_\ell = \rup$;}
    \end{cases} \\
\label{eq:div-inv-1-num-rup-rdown-upper}
     y'_u &\defeq
    \begin{cases}
      \fpred\bigl(\fsucc(x_U) \mmul_\rup w_U\bigr),
        &\text{if $\bar{r}_u = \rdown$;} \\
      x_U \mmul_\rdown w_U,
        &\text{if $\bar{r}_u = \rup$.}
    \end{cases}
\end{align}
Indeed, if $\bar{r}_\ell = \rup$ and $x_L \neq 0$,
then part~\eqref{case:x-geq-e-implies-x-sgeq-evalup-e} of
Proposition~\ref{prop:fp-approx-of-real-constraints}
applies and we have $y \sgeq x_L \mmul_\rup w_L$.
On the other hand, if $x_L = 0$, since by hypothesis $z > 0$ implies $w_L > 0$,
according to IEEE~754 \cite[Section~6.3]{IEEE-754-2008},
we have $x_L \mmul_\rup w_L = \sgn(x_L) \cdot 0$ and, indeed, for each non-NaN,
nonzero and finite $w \in \Fset \inters [+0, +\infty]$,
$\sgn(x_L) \cdot 0$ is the least value for $y$ that satisfies
$\sgn(x_L) \cdot 0 = y \mdiv_\rdown w$.

Analogously,
if $\bar{r}_\ell = \rup$ and $x_L \neq \fmin$,
then Proposition~\ref{prop:fp-approx-of-real-constraints}
applies and we have $\fsucc\bigl(\fpred(x_L) \mmul_\rdown w_L\bigr)$.
On the other hand, if $x_L = \fmin$, in this case,
$\fsucc\bigl(\fpred(x_L) \mmul_\rdown w_L\bigr) = \fmin$
which is consistent with the fact that, for each non-NaN, nonzero
and finite $w \in \Fset \inters [+0, +\infty]$, $\fmin$
is the lowest value for $y$ that satisfies $\fmin = y \mdiv_\rup w$.

A symmetric argument justifies~\eqref{eq:div-inv-1-num-rup-rdown-upper}.

As before,
we need to approximate the values of the expressions
$e^+_\ell = \bigl(x_L + \ftwiceerrnearneg(x_L)/2\bigr) \cdot w_L$ and
$e^+_u = \bigl(x_U + \ftwiceerrnearpos(x_U)/2 \bigr) \cdot w_U$.
We leave this as an implementation choice, thus taking into account the case
$\evalup{e^+_\ell} = \roundup{e^+_\ell}$ and $\evaldown{e^+_u} = \rounddown{e^+_u}$
as well as
$\evalup{e^+_\ell} > \roundup{e^+_\ell}$ and $\evaldown{e^+_u} < \rounddown{e^+_u}$.
Therefore, when
$\evaldown{e^+_u} < \rounddown{e^+_u}$ by~\eqref{eq:div-1-refine-real-upper_bounds}
and~\eqref{prop:fp-approx-of-real-constraints:3}
of Proposition~\ref{prop:fp-approx-of-real-constraints}
we obtain $y \sleq  \evalup{e^+_u}$, while, when
$\evaldown{e^+_\ell} > \rounddown{e^+_\ell}$ by~\eqref{eq:div-1-refine-real-upper_bounds}
and~\eqref{prop:fp-approx-of-real-constraints:1}
of Proposition~\ref{prop:fp-approx-of-real-constraints}
we obtain $y \sgeq  \evaldown{e^+_\ell}$.

Thus, for the case in which $\bar{r}_\ell = \rnear$,
since $e^+_u \neq 0$ and $e^+_\ell \neq 0$,
by Proposition~\ref{prop:fp-approx-of-real-constraints}, we have
\begin{align}
\label{eq:div1-inv-num-rnear-lower}
  y'_\ell &\defeq
    \begin{cases}
     \evalup{e^+_\ell},
        &\text{if $\feven(x_L)$ and $\evalup{e^+_\ell} = \roundup{e^+_\ell}$;} \\
          \evaldown{e^+_\ell},
        &\text{if $\feven(x_L)$ and $\evalup{e^+_\ell} \neq \roundup{e^+_\ell}$;} \\
      \fsucc\bigl(\evaldown{e^+_\ell}\bigr),
        &\text{otherwise;}
    \end{cases} \\
\intertext{%
whereas, for the case in which $\bar{r}_u = \rnear$,
}
\label{eq:div1-inv-num-rnear-upper}
  y'_u &\defeq
    \begin{cases}
      \evaldown{e^+_u},
        &\text{if $\feven(x_U)$ and $\evaldown{e^+_u} = \rounddown{e^+_u}$;} \\
          \evalup{e^+_u},
        &\text{if $\feven(x_U)$ and $\evaldown{e^+_u} \neq \rounddown{e^+_u}$;} \\
      \fpred\bigl(\evalup{e^+_u}\bigr),
        &\text{otherwise.}
    \end{cases}
\end{align}

An analogous reasoning, but with $z < 0$, allows us to obtain
the case analyses marked as $a_3^-$ and $a_8^-$.
\end{delayedproof}

\begin{delayedproof}[of Theorem~\ref{teo:second-indirect-projection-div}]
Given the constraint $x = y \mdiv_S z$ with
$x \in X = [x_\ell, x_u]$,
$y \in Y = [y_\ell, y_u]$ and
$z \in Z = [z_\ell, z_u]$,
Algorithm~\ref{algo7} finds a new, refined interval $Z'$ for variable $z$.

Since it assigns either
$Z' \assign (Z \inters [z^-_\ell, z^-_u]) \biguplus (Z \inters [z^+_\ell, z^+_u])$
or $Z'=\emptyset$, in both cases we are sure that $Z' \sseq Z$.
By Proposition~\ref{prop:worst-case-rounding-modes}, as in the previous proofs,
we can focus on finding a lower bound for $z \in Z$
by exploiting the constraint $y \mdiv_{\bar{r}_\ell} z = x$
and an upper bound for $z$
by exploiting the constraint $y \mdiv_{\bar{r}_u} z = x$.

We first need to split interval $X$ into
the sign-homogeneous intervals $X_-$ and $X_+$,
because knowing the sign of $x$ is crucial
for determining correct bounds for $z$.
Hence, for $V = X_-$ (and, analogously, for $V = X_+$)
function $\tau$ of Figure~\ref{fig:the-tau-function}
determines the appropriate interval extrema of $Y$ and $V$
to be used to compute the new lower and upper bounds for $z$.
As in the previous proofs
(see, for example, proof of Theorem~\ref{teo:indirect-projection-mult}),
it is easy to verify that $y_L$ and $v_L$ (resp., $y_U$ and $v_U$)
computed using function $\tau$ of Figure~\ref{fig:the-tau-function}
are the boundaries of $Y$ and $V$ upon which $z$ touches its
minimum (resp., maximum).
Functions $\invsecdivl$ of Figure~\ref{fig:second-indirect-projection-division-zl}
and $\invsecdivu$ of Figure~\ref{fig:second-indirect-projection-division-zu}
are then used to find the new bounds for $z$.
The so obtained intervals for $z$ will be then joined with convex union
in order to obtain the refining interval for $z$.

We will prove the most important parts of the definitions
of $\invsecdivl$ (Figure~\ref{fig:second-indirect-projection-division-zl})
and $\invsecdivu$ (Figure~\ref{fig:second-indirect-projection-division-zu})
only, starting with the case analysis marked as $a_4$.
Depending on the rounding mode in effect, the following arguments are given:
\begin{description}
\item[$\bar{r}_\ell = \rdown:$]
  in this case, the only possible way to obtain $-0$
  as the result of the division is having $z = +\infty$ (with $y \in \Rset_-$);
\item[$\bar{r}_\ell = \rup:$]
  it should be $y_L / z > -\fmin$ and thus, since $y_L$ and $x_L$ are negative,
  we can conclude that $z$ is positive.
  Thus, $y_L > -\fmin \cdot z$ implies $y_L / -\fmin < z$,
  and by~\eqref{prop:fp-approx-of-real-constraints:2}
  of Proposition~\ref{prop:fp-approx-of-real-constraints},
  $z \sgeq \fsucc(z_L \mdiv_\rdown -\fmin)$.
\item[$\bar{r}_\ell = \rnear:$]
  since $\fodd(-\fmin)$, for $v_L = -0$ we need
  \(
    y_L / z
    \geq (-\fmin + \ftwiceerrnearpos(-\fmin)/2)
    = (-\fmin + \fmin / 2)
    = -\fmin / 2
  \).
  As before, since $y_L$ and $v_L$ are negative,
  we can conclude that $z$ is positive:
  hence $y_L \geq (-\fmin / 2) \cdot z$.
  Therefore, $z \geq y_L / (-\fmin/2) = z \geq (y_L / -\fmin) \cdot 2$.
  Since in this case
  \(
    \evalup{(y_L / -\fmin) \cdot 2}
    = \roundup{(y_L / -\fmin) \cdot 2}
    = (y_L \mdiv_\rup -\fmin) \cdot 2
  \),
  by~\eqref{case:x-geq-e-implies-x-sgeq-evalup-e}
  of Proposition~\ref{prop:fp-approx-of-real-constraints},
  we can conclude $y \sgeq (y_L \mdiv_\rup -\fmin) \cdot 2$.
\end{description}

As for the case analysis of $\invsecdivl$
(Figure~\ref{fig:second-indirect-projection-division-zl}) marked as $a_5$,
we must distinguish between the following cases:
\begin{description}
\item[$\bar{r}_\ell = \rup:$]
  we must have $z = +\infty$ in order to obtain $x = +0$;
\item[$\bar{r}_\ell = \rdown:$]
  inequality $y_L / z < \fmin$ must hold and thus,
  since positive $y_L$ and $v_L$ imply a positive $z$,
  $z > y_L / \fmin$ and,
  by~\eqref{prop:fp-approx-of-real-constraints:2}
  of Proposition~\ref{prop:fp-approx-of-real-constraints},
  $z \sgeq \fsucc(y_L \mdiv_\rdown \fmin)$.
\item[$\bar{r}_\ell = \rnear:$]
  since $\fodd(\fmin)$,
  for $v_L = +0$ we need $y_L / z \leq \fmin/2$.
  As $z$ is positive in this case, $(y_L / \fmin) \cdot 2 \leq z$.
  Since
  \(
    \evalup{(y_L / \fmin) \cdot 2}
    = \roundup{(y_L / \fmin) \cdot 2}
    = (y_L \mdiv_\rup \fmin) \cdot 2
  \),
  by~\eqref{case:x-geq-e-implies-x-sgeq-evalup-e}
  of Proposition~\ref{prop:fp-approx-of-real-constraints},
  we can conclude $y \sgeq (y_L \mdiv_\rup \fmin) \cdot 2$.
\end{description}

Concerning the case analysis of $\invsecdivl$
(Fig~\ref{fig:second-indirect-projection-division-zl}) marked as $a_6$,
we must distinguish between the following cases:
\begin{description}
\item[$\bar{r}_\ell = \rdown:$]
  the lowest value of $z$ that gives $x = +\infty$
  with $y \in \Rset_-$ is $z = -0$;
\item[$\bar{r}_\ell = \rup:$]
  inequality $ y_L / z > \fmax$ must hold;
  since $y_L$ is negative and $v_L$ is positive, $z$ must be negative,
  and therefore $y_L < \fmax \cdot z$.
  Hence, $y_L / \fmax < z$.
  By~\eqref{prop:fp-approx-of-real-constraints:2}
  of Proposition~\ref{prop:fp-approx-of-real-constraints},
  we obtain $z \sgeq \fsucc(y_L \mdiv_\rdown \fmax)$.
\item[$\bar{r}_\ell = \rnear:$]
  since $\fodd(\fmax)$, for $v_L = +\infty$ we need
  $y_L / z \geq (\fmax +  \ftwiceerrnearpos(\fmax)/2)$.
  As before, since $w_L$ is negative and $v_L$ is positive,
  we can conclude that $z$ is negative, and, therefore,
   $y_L \leq (\fmax + \ftwiceerrnearpos(\fmax)/2) \cdot z$ holds.
   As a consequence, $y_L / (\fmax + \ftwiceerrnearpos(\fmax)/2) \leq z$.
   If
   \(
     \evalup{y_L / (\fmax + \ftwiceerrnearpos(\fmax)/2)}
     = \roundup{y_L / (\fmax + \ftwiceerrnearpos(\fmax)/2)}
   \),
   by~\eqref{case:x-geq-e-implies-x-sgeq-evalup-e}
   of Proposition~\ref{prop:fp-approx-of-real-constraints},
   we can conclude $z \sgeq \evalup{y_L / (\fmax + \ftwiceerrnearpos(\fmax)/2)}$.
   On the other hand, if
   \(
     \evalup{y_L / (\fmax + \ftwiceerrnearpos(\fmax)/2)}
     \neq \roundup{y_L / (\fmax + \ftwiceerrnearpos(\fmax)/2)}
   \) then,
   we can only apply~\eqref{prop:fp-approx-of-real-constraints:1}
   of Proposition~\ref{prop:fp-approx-of-real-constraints},
   obtaining $z \sgeq \evaldown{y_L / (\fmax + \ftwiceerrnearpos(\fmax)/2)}$.
\end{description}

Regarding the case analysis of $\invsecdivl$
(Fig~\ref{fig:second-indirect-projection-division-zl}) marked as $a_7$,
we have the following cases:
\begin{description}
\item[$\bar{r}_\ell = \rup:$]
  the lowest value of $z$ that yields $x = -\infty$ with $y \in \Rset_+$ is $z = -0$;
\item[$\bar{r}_\ell = \rdown:$]
  inequality $y_L / z < -\fmax$ must hold and thus,
  since a positive $y_L$ and a negative $v_L$ imply that the sign of $z$ is negative,
  $y_L > -\fmax \cdot z$.
  Hence, $y_L / -\fmax < z$.
  By~\eqref{prop:fp-approx-of-real-constraints:2}
  of Proposition~\ref{prop:fp-approx-of-real-constraints},
  $z \sgeq \fsucc(y_L \mdiv_\rdown -\fmax)$.
\item[$\bar{r}_\ell = \rnear:$]
  since $\fodd(-\fmax)$, for $v_L = -\infty$ we need
  $y_L / z \leq -\fmax + \ftwiceerrnearneg(-\fmax)/2$.
  Since $z$ in this case is negative, we obtain the inequality
  $z \geq y_L / (-\fmax + \ftwiceerrnearneg(-\fmax)/2)$.
  If
  \(
    \evalup{y_L / \bigl(-\fmax + \ftwiceerrnearneg(-\fmax)/2\bigr)}
    = \roundup{y_L / \bigl(-\fmax + \ftwiceerrnearneg(-\fmax)/2\bigr)}
  \),
  by~\eqref{case:x-geq-e-implies-x-sgeq-evalup-e}
  of Proposition~\ref{prop:fp-approx-of-real-constraints},
  we can conclude
  $y \sgeq \evalup{y_L / \bigl(-\fmax + \ftwiceerrnearneg(-\fmax)/2\bigr)}$.
  On the other hand, if
  \(
    \evalup{y_L / \bigl(-\fmax + \ftwiceerrnearneg(-\fmax)/2\bigr)}
    \neq \roundup{y_L / \bigl(-\fmax + \ftwiceerrnearneg(-\fmax)/2\bigr)}
  \),
  then we can only apply~\eqref{prop:fp-approx-of-real-constraints:1}
  of Proposition~\ref{prop:fp-approx-of-real-constraints},
  obtaining $y \sgeq \evaldown{y_L / \bigl(-\fmax + \ftwiceerrnearneg(-\fmax)/2\bigr)}$.
\end{description}

Similar arguments can be used to prove the case analyses
of function $\invsecdivu$ of Figure~\ref{fig:second-indirect-projection-division-zu}
marked as $a_9$, $a_{10}$, $a_{11}$ and $a_{12}$.

We will now analyze the case analyses of $\invsecdivl$
of Figure~\ref{fig:second-indirect-projection-division-zl}
marked as $a^-_3$ and $a^+_3$, and the ones of $\invsecdivu$
of Figure~\ref{fig:first-indirect-projection-division-yu} marked as $a^-_8$ and $a^+_8$.
In this proof, we can assume
$y_L, v_L \in \Rset_- \cup \Rset_+$, $y_U, v_U \in \Rset_- \cup \Rset_+$
and $\sgn(v_L) = \sgn(v_U)$.
First, note that the argument that leads
to~\eqref{eq:division-firstandsecond-indirect-projection-III}
and~\eqref{eq:division-firstandsecond-indirect-projection-IV}
starting from $x \sleq y \mdiv z$ and $x \sgeq y \mdiv z$
is in common with the proof of Theorem~\ref{teo:first-indirect-projection-division}.

Provided that interval $X$ is split into intervals $X_+$ and $X_-$,
it is worth discussing the reasons why it is not necessary to partition also $Y$
directly in Algorithm~\ref{algo7}.
Assume $Y = [-a, b]$ with $a, b > 0$ and consider the partition of $Y$
into two sign-homogeneus intervals $Y \cap [-\infty, -0]$
and $Y \cap [+0, +\infty]$, as usual.
Note that the values $-0 \in Y \cap [-\infty, -0] = [-a, -0]$
and the values $+0 \in Y \cap [+0, +\infty] = [+0, b]$
can never be the boundaries of $Y$ upon which $z$ touches its minimum (resp., maximum).
This is because $y$ will be the numerator of fractions
(see expressions~\eqref{eq:division-second-indirect-projection-I}
and~\eqref{eq:division-second-indirect-projection-II}).
Moreover, by the definition of functions
$\invsecdivl$ of Fig~\ref{fig:second-indirect-projection-division-zl}
and $\invsecdivu$ of Fig~\ref{fig:second-indirect-projection-division-zu},
it easy to verify that the partition of $Y$ would not prevent
the interval computed for $y$ from being equal to the empty set.
That is, if $\invsecdivl(y_L, v_L, \bar{r}_\ell) = \uns$ or
$\invsecdivu(y_U, v_U, \bar{r}_u) = \uns$,
then partitioning also $Y$ into sign-homogeneus intervals
and then applying the procedure of Algorithm~\ref{algo7}
to the two distinct intervals results again into an empty refining interval for $z$.

Hence, to improve efficiency, Algorithm~\ref{algo7} does not split
interval $Y$ into sign-homogeneous intervals.
However, in this proof it is necessary to partition $Y$
into intervals $Y_-$ and $Y_+$ in order to determine
the correct formulas for lower and upper bounds for $z$.
In the following, for the sake of simplicity,
we will analyze the special case $X_+$ and $Y = Y_+$,
so that $Y$ does not need to be split because it is already a sign-homogeneous interval.
The remaining cases in which $Y$ is sign-homogeneous as well as
those in which it is not can be derived analogously.
To sum up, in this case we assume $x \geq 0$ and $y \geq 0$, and therefore $z > 0$.

Now, we need to prove the cases marked as $a_3^+$ and $a_8^+$.
The case analysis of~\eqref{eq:division-firstandsecond-indirect-projection-I}
and~\eqref{eq:division-firstandsecond-indirect-projection-II}
yields~\eqref{eq:division-firstandsecond-indirect-projection-III}
and~\eqref{eq:division-firstandsecond-indirect-projection-IV}.
Remember that the case $x = \pm 0$ is handled separately by functions
$\invsecdivl$ of Figure~\ref{fig:second-indirect-projection-division-zl}
and $\invsecdivu$ of Figure~\ref{fig:second-indirect-projection-division-zu},
hence assuming $x > 0$, we obtain
\begin{align}
  z
   &\begin{cases}
      \mathord{} \leq y/x,
        &\text{if $\bar{r}_u = \rdown$;} \\
      \mathord{} < y / \fpred(x),
        &\text{if $\bar{r}_u = \rup$ and $x \neq \fmin$;} \\
      \mathord{} \leq \fmax,
        &\text{if $\bar{r}_u = \rup$ and $x = \fmin$;} \\
      \mathord{} \leq y / \bigl(x + \ftwiceerrnearneg(x)/2 \bigr),
        &\text{if $\bar{r}_u = \rnear$ and $\feven(x)$;} \\
      \mathord{} < y / \bigl(x + \ftwiceerrnearneg(x)/2 \bigr),
        &\text{if $\bar{r}_u = \rnear$ and $\fodd(x)$;}
  \end{cases}
     \label{eq:division-second-indirect-projection-I} \\
         z
   &\begin{cases}
      \mathord{} > y / \fsucc(x),
        &\text{if $\bar{r}_\ell = \rdown$ and $x \neq -\fmin$;} \\
          \mathord{} \geq -\fmax,
        &\text{if $\bar{r}_\ell = \rdown$ and $x = -\fmin$;} \\
       \mathord{} \geq y / x ,
        &\text{if $\bar{r}_\ell = \rup$;} \\
      \mathord{} \geq y / \bigl(x + \ftwiceerrnearpos(x)/2 \bigr),
        &\text{if $\bar{r}_\ell = \rnear$ and $\feven(x)$;} \\
      \mathord{} > y / \bigl(x + \ftwiceerrnearpos(x)/2 \bigr),
        &\text{if $\bar{r}_\ell = \rnear$ and $\fodd(x)$.}
    \end{cases}
    \label{eq:division-second-indirect-projection-II}
\end{align}

Since the members of the divisions are
independent, we can find the minimum of said divisions by minimizing
each one of their members.
Let $(y_L, y_U, v_L, v_U)$ be as returned by function $\tau$ of
Figure~\ref{fig:the-tau-function}.
By Proposition~\ref{prop:min-max-x-ftwiceerrnearneg-ftwiceerrnearpos} and
the monotonicity of `$\fpred$' and `$\fsucc$' we obtain
\begin{align}
  z
   &\begin{cases}
      \mathord{} \leq y_U/v_U,
        &\text{if $\bar{r}_u = \rdown$;} \\
      \mathord{} < y_U / \fpred(v_U),
        &\text{if $\bar{r}_u = \rup$ and $v_U \neq \fmin$;} \\
        \mathord{} \leq \fmax,
        &\text{if $\bar{r}_u = \rup$ and $v_U = \fmin$;} \\
      \mathord{} \leq y_U /\bigl(v_U + \ftwiceerrnearneg(v_U)/2\bigr),
        &\text{if $\bar{r}_u = \rnear$ and $\feven(v_U)$;} \\
      \mathord{} < y_U / \bigl( v_U + \ftwiceerrnearneg(v_U)/2 \bigr),
        &\text{if $\bar{r}_u = \rnear$ and $\fodd(v_U)$;}
    \end{cases} \\
    z
   &\begin{cases}
      \mathord{} > y_L / \fsucc(v_L),
        &\text{if $\bar{r}_\ell = \rdown$ and $v_L\neq-\fmin$;} \\
          \mathord{} \geq -\fmax,
        &\text{if $\bar{r}_\ell = \rdown$ and $v_L=-\fmin$;}  \\
       \mathord{} \geq y_L / v_L ,
        &\text{if $\bar{r}_\ell = \rup$;} \\
      \mathord{} \geq y_L / \bigl(v_L + \ftwiceerrnearpos(v_L)/2 \bigr),
        &\text{if $\bar{r}_\ell = \rnear$ and $\feven(v_L)$;}\\
      \mathord{} > y_L / \bigl(v_L + \ftwiceerrnearpos(v_L)/2 \bigr),
        &\text{if $\bar{r}_\ell = \rnear$ and $\fodd(v_L)$.}
    \end{cases}
\end{align}

We can now exploit Proposition~\ref{prop:fp-approx-of-real-constraints}
and obtain:
\begin{align}
\label{eq:div-inv-2-num-rup-rdown-lower}
  z'_\ell &\defeq
    \begin{cases}
      y_L \mdiv_\rup v_L,
        &\text{if $\bar{r}_\ell = \rup$;} \\
      \fsucc\bigl(y_L \mdiv_\rdown \fsucc(v_L) \bigr),
        &\text{if $\bar{r}_\ell = \rdown$ and $v_L\neq-\fmin$;}
    \end{cases} \\
 z'_u &\defeq
    \begin{cases}
      \fpred\bigl(y_U \mdiv_\rup \fpred(v_U) \bigr),
        &\text{if $\bar{r}_u = \rup$ and $v_U\neq\fmin$;} \\
      y_U \mdiv_\rdown v_U,
        &\text{if $\bar{r}_u = \rdown$.}
        \label{eq:div-inv-2-num-rup-rdown-upper}
    \end{cases}
\end{align}
Since $y_L \neq 0$, then $ y_L / \fsucc(v_L) \neq 0$.
Hence, Proposition~\ref{prop:fp-approx-of-real-constraints}
applies and we have $z \sgeq y_L \mdiv_\rup v_L$ if $\bar{r}_\ell = \rup$ and
$z \sgeq \fsucc\bigl(y_L / \fsucc(v_L) \bigr)$
if $\bar{r}_\ell = \rdown$ and $v_L \neq -\fmin$.
Analogously,
since $y_U \neq 0$, then $y_U / \fpred(v_L) \neq 0$.
Hence, by Proposition~\ref{prop:fp-approx-of-real-constraints}
we obtain~\eqref{eq:div-inv-2-num-rup-rdown-upper}.

Note that, since division by zero is not defined on real numbers,
we had to separately address the case
$\bar{r}_u = \rup$ and $x = \fmin$
in \eqref{eq:division-second-indirect-projection-I},
and the case $\bar{r}_\ell = \rdown$ and $x = -\fmin$
in \eqref{eq:division-second-indirect-projection-II}.
Division by zero is, however, defined on IEEE~754 floating-point numbers.
Indeed, if we evaluate the second case of
\eqref{eq:div-inv-2-num-rup-rdown-lower} with $v_L = -\fmin$,
we obtain
$\fsucc(y_L \mdiv_\rdown \fsucc(-\fmin)) = -\fmax$,
which happens to be the correct value for $z'_\ell$,
provided $y_L > 0$.
The same happens for \eqref{eq:div-inv-2-num-rup-rdown-upper}.
Therefore, there is no need for a separate treatment when variable $x$
takes the values $\pm\fmin$.

As before,
we need to approximate the values of the expressions
$e^+_u \defeq y_U / \bigl(v_U + \ftwiceerrnearneg(v_U)/2 \bigr)$
and
$e^+_\ell \defeq y_L / \bigl(v_L + \ftwiceerrnearpos(v_L)/2 \bigr)$.
Thus, when
$\evaldown{e^+_u} < \rounddown{e^+_u}$
by~\eqref{eq:div-1-refine-real-upper_bounds}
and~\eqref{prop:fp-approx-of-real-constraints:3}
of Proposition~\ref{prop:fp-approx-of-real-constraints}
we obtain $y \sleq \evalup{e^+_u}$, while, when
$\evaldown{e^+_\ell} > \rounddown{e^+_\ell}$
by~\eqref{eq:div-1-refine-real-upper_bounds}
and~\eqref{prop:fp-approx-of-real-constraints:1}
of Proposition~\ref{prop:fp-approx-of-real-constraints}
we obtain $y \sgeq \evaldown{e^+_\ell}$.
Thus, for the case where $\bar{r}_\ell = \rnear$,
since $e^+_u \neq 0$ and $e^+_\ell \neq 0$,
by Proposition~\ref{prop:fp-approx-of-real-constraints}, we have
\begin{align}
\label{eq:div2-inv-num-rnear-lower}
  y'_\ell &\defeq
    \begin{cases}
     \evalup{e^+_\ell},
        &\text{if $\feven(v_L)$ and $\evalup{e^+_\ell} = \roundup{e^+_\ell}$;} \\
     \evaldown{e^+_\ell},
        &\text{if $\feven(v_L)$ and $\evalup{e^+_\ell} \neq \roundup{e^+_\ell}$;} \\
     \fsucc\bigl(\evaldown{e^+_\ell}\bigr),
        &\text{otherwise;}
    \end{cases} \\
\intertext{%
whereas, for the case in which $\bar{r}_u = \rnear$,
}
\label{eq:div2-inv-num-rnear-upper}
  y'_u &\defeq
    \begin{cases}
      \evaldown{e^+_u},
        &\text{if $\feven(v_U)$ and $\evaldown{e^+_u} = \rounddown{e^+_u}$;} \\
      \evalup{e^+_u},
        &\text{if $\feven(v_U)$ and $\evaldown{e^+_u} \neq \rounddown{e^+_u}$;} \\
      \fpred\bigl(\evalup{e^+_u}\bigr),
        &\text{otherwise.}
    \end{cases}
\end{align}

An analogous reasoning allows us to prove the case analyses
marked as $a_3^-$ and $a_8^-$.
\end{delayedproof}

\begin{delayedproof}[of Theorem~\ref{teo:direct-projection-mult}]
Given the constraint $x = y \mmul_S z$ with
$x \in X = [x_\ell, x_u]$,
$y \in Y = [y_\ell, y_u]$ and
$z \in Z = [z_\ell, z_u]$,
then $X' = [x'_\ell, x'_u] \inters X$.
Hence, we are sure that $X' \sseq X$.

It should be immediate to verify that
function $\sigma$ of Figure~\ref{fig:the-sigma-function},
related to the case $\sgn(y_\ell) = \sgn(y_u)$,
chooses the appropriate interval extrema $y_L, y_U, z_L, z_U$,
necessary for computing bounds for $x$.
Indeed, note that such choice is completely driven by the sign of the resulting product.
Analogously, the correct interval extrema $y_L, y_U, z_L, z_U$
related to the case $\sgn(z_\ell) = \sgn(z_u)$
can be determined by applying function $\sigma$ of Figure~\ref{fig:the-sigma-function},
but swapping the role of $y$ and $z$.
Hence, if the sign of $y$ or of $z$ is constant
(see the second part of Algorithm~\ref{algo3})
function $\sigma$ of Figure~\ref{fig:the-sigma-function}
finds the appropriate extrema for $y$ and $z$ to compute the bound for $x$.

Concerning the cases $\sgn(y_\ell) = \sgn(z_\ell) = -1$
and $\sgn(y_u) = \sgn(z_u) = 1$ (first part of Algorithm~\ref{algo3}),
note that we have only two possibilities for the interval extrema $y_L$ and $z_L$,
that are $y_\ell$ and $z_u$ or $y_u$ and $z_\ell$.
Since the product of $y_L$ and $z_L$ will have a negative sign in both cases,
the right extrema for determining the lower bound $x'_\ell$
have to be chosen by selecting the smallest product of $y_L$ and $z_L$.
Analogously, for $y_U$ and $z_U$ there are two possibilities:
$y_\ell$ and $z_\ell$ or $y_u$ and $z_u$.
Since the product of $y_U$ and $z_U$ will have a positive sign in both cases,
the appropriate extrema for determining the upper bound $x'_u$
have to be chosen as the biggest product of $y_U$ and $z_U$.

Remember that by Proposition~\ref{prop:worst-case-rounding-modes},
following the same reasoning as in the previous proofs,
it suffices to find a lower bound for $y_L \mmul_{r_\ell} z_L$
and an upper bound for $y_U \mmul_{r_u} z_U$.

We now comment on some critical case analyses
of function $\dirmull$ of Figure~\ref{fig:direct-projection-multiplication}.
Consider, for example, when $y_L = \pm \infty$ and $z_L = \pm 0$.
In particular, we analyze the case in which $y_L = -\infty$ and $z_L = \pm 0$.
Note that $y_L = -\infty$ implies $y_\ell = -\infty$.
Assume, first, that $z_L = +0$.
Recall that by the IEEE~754 Standard \cite{IEEE-754-2008}
$\pm \infty \mmul \pm 0$ is an invalid operation.
However, since $y_\ell = -\infty$, we have two cases:
\begin{description}
\item[$y_u \geq -\fmax:$]
  note that, in this case, $-\fmax \mmul +0 = -0$;
\item[$y_u = -\infty:$]
  in this case, $z_L$ must correspond to $z_u$
  (see the last three cases of function $\sigma$).
  Since $-\infty \mmul z$ for $z < 0$ results in $+\infty$,
  we can conclude that $-0$ is a correct lower bound for $x$.
\end{description}

A similar reasoning applies for the cases $y_L = +\infty$, $z_L = \pm 0$.
Dually, the only critical entries of function
$\dirmulu$ of Figure~\ref{fig:direct-projection-multiplication}
are those in which $y_U = \pm\infty$ and $z_U = \pm 0$.
In these cases we can reason in a similar way, too.

We are left to prove that
$\forall X'' \subset X \itc \exists r \in S, y \in Y, z \in Z \st y \mmul_r z \not\in X''$.
Let us focus on the lower bound $x'_\ell$,
proving that there exist values $r \in S, y \in Y, z \in Z$
such that $y \mmul_r z = x'_\ell$.
Consider the particular values of $y_L$, $z_L$ and $r_\ell$
that correspond to the value of $x'_\ell$ chosen by Algorithm~\ref{algo3},
that is $y_L$, $z_L$ and $r_\ell$ are such that $\dirmull(y_L, z_L, r_\ell) = x'_\ell$.
By Algorithm~\ref{algo3}, such values of $y_L$ and $z_L$ must exist.
First, consider the cases in which $y_L \not\in (\Rset_- \cup \Rset_+)$
or $z_L \not\in (\Rset_- \cup \Rset_+)$.
In these cases, a brute-force verification was successfully conducted
to verify that $y \mmul_{r_\ell} z = x'_\ell$.
For the cases in which $y_L \in (\Rset_- \cup \Rset_+)$
and $z_L \in (\Rset_- \cup \Rset_+)$ we have,
by definition of $\dirmull$ of Figure~\ref{fig:direct-projection-multiplication},
that $x'_\ell = y_L \mmul_{r_\ell} z_L$.
Remember that, by Proposition~\ref{prop:worst-case-rounding-modes},
there exist $r' \in S$ such that
$y_L \mmul_\mathrm{r_\ell} z_L = y_L \mmul_\mathrm{r'} z_L$.
Since $y_L \in Y$ and $z_L \in Z$, we can conclude that
$x'_\ell \not\in X''$ implies that $y'_L \mmul_\mathrm{r'} z_L \not\in X''$.
An analogous reasoning allows us to conclude that $\exists r \in S$
for which the following holds:
$x'_u \not\in X''$ implies $y_U \mmul_\mathrm{r} z_U \not\in X''$.
\end{delayedproof}

\begin{delayedproof}[of Theorem~\ref{teo:indirect-projection-mult}]
Given the constraint $x = y \mmul_S z$ with
$x \in X = [x_\ell, x_u]$,
$y \in Y = [y_\ell, y_u]$ and
$z \in Z = [z_\ell, z_u]$,
Algorithm~\ref{algo4} computes $Y'$,
a new and refined interval for variable $y$.

First, note that either
$Y' \assign (Y \inters [y^-_\ell, y^-_u]) \biguplus (Y \inters[y^+_\ell, y^+_u])$
or $Y' = \emptyset$, hence, in both cases, we are sure that $Y' \sseq Y$ holds.

By Proposition~\ref{prop:worst-case-rounding-modes},
we can focus on finding a lower bound for $y \in Y$
by exploiting the constraint $y \mmul_{\bar{r}_\ell} z=x$
and an upper bound for $y \in Y$
by exploiting the constraint $y \mmul_{\bar{r}_u} z=x$.

Now, in order to compute correct bounds for $y$,
we first need to split the interval of $z$
into the sign-homogeneous intervals $Z_-$ and $Z_+$,
because it is crucial to be sure of the sign of $z$.
As a consequence, for $W=Y_-$ (and, analogously, for $W=Y_+$)
function $\tau$ of Figure~\ref{fig:the-tau-function} picks
the appropriate interval extrema of $X$ and $W$ to be used
to compute the new lower and upper bounds for $y$.
It is easy to verify that the values of $x_L$ and $w_L$ (resp., $x_U$ and $w_U$)
computed using function $\tau$ of Figure~\ref{fig:the-tau-function}
are the boundaries of $X$ and $W$ upon which $y$ touches its
minimum (resp., maximum).
Functions $\invmull$ of Figure~\ref{fig:indirect-projection-multiplication-yl}
and $\invmulu$ of Figure~\ref{fig:indirect-projection-multiplication-yu}
are then employed to find the new bounds for $y$.
The so obtained intervals for $y$ are then joined using
convex union between intervals, in order to obtain the refining interval for $y$.

Observe that functions $\invmull$ of Fig~\ref{fig:indirect-projection-multiplication-yl}
and $\invmulu$ of Fig~\ref{fig:indirect-projection-multiplication-yu}
are dual to each other:
every row/column of one table can be found in the other table
reversed and changed of sign.
This is due to the fact that, for each $r \in R$ and each
$D \sseq \Fset \times \Fset$, we have
\begin{align*}
  &\min \bigl\{\,
         y \in \Fset
       \bigm|
         (x, z) \in D, x = y \mmul_r z
       \,\bigr\} \\
  =
  &-\max \bigl\{\,
          y \in \Fset
        \bigm|
          (x, z) \in D, -x = y \mmul_r z
        \,\bigr\} \\
  =
  &-\max \bigl\{\,
          y \in \Fset
        \bigm|
          (x, z) \in D, x = y \mmul_r -z
        \,\bigr\}.
\end{align*}

Concerning the case analysis of $\invmull$
marked as $a_4$ of Fig~\ref{fig:indirect-projection-multiplication-yl},
we must consider the following cases:
\begin{description}
\item[$\bar{r}_\ell = \rup:$]
  we clearly must have $y = +\infty$ in this case;
\item[$\bar{r}_\ell = \rdown:$]
  inequality $y \cdot w_L < - \fmax$ must hold and thus, since $w_L$ is negative,
  $y > -\fmax / w_L$ and, by~\eqref{prop:fp-approx-of-real-constraints:2}
  of Proposition~\ref{prop:fp-approx-of-real-constraints},
  $y \sgeq \fsucc(-\fmax \mdiv_\rdown w_L)$.
\item[$\bar{r}_\ell = \rnear:$]
  since $\fodd(\fmax)$, for $x_L = -\infty$ we need $y$ to be greater
  or equal than $\bigl(-\fmax + \ftwiceerrnearneg(-\fmax)/2\bigr) / w_L$. If
  \(
    \evalup{\bigl(-\fmax + \ftwiceerrnearneg(-\fmax)/2\bigr) / w_L}
    = \roundup{-\fmax + \ftwiceerrnearneg(-\fmax)/2\bigr) / w_L}
  \),
  by~\eqref{case:x-geq-e-implies-x-sgeq-evalup-e}
  of Proposition~\ref{prop:fp-approx-of-real-constraints},
  we can conclude $y \sgeq \evalup{\bigl(-\fmax + \ftwiceerrnearneg(-\fmax)/2\bigr)/w_L}$.
  On the other hand, if
  \(
    \evalup{\bigl(-\fmax + \ftwiceerrnearneg(-\fmax)/2\bigr) / w_L}
    \neq \roundup{-\fmax + \ftwiceerrnearneg(-\fmax)/2\bigr) / w_L}
  \),
  then we can only apply~\eqref{prop:fp-approx-of-real-constraints:1}
  of Proposition~\ref{prop:fp-approx-of-real-constraints},
  obtaining $y \sgeq \evaldown{\bigl(-\fmax + \ftwiceerrnearneg(-\fmax)/2\bigr)/w_L}$.
\end{description}

Regarding the case analysis of $\invmull$
marked a  $a_5$ of Fig~\ref{fig:indirect-projection-multiplication-yl},
we have the following cases:
\begin{description}
\item[$\bar{r}_\ell = \rdown:$]
  in this case, we must have $y = -0$;
\item[$\bar{r}_\ell = \rup:$]
  inequality $y \cdot w_L > - \fmin$ must hold and thus, since $w_L$ is positive,
  $y > -\fmin / w_L$ and, by~\eqref{prop:fp-approx-of-real-constraints:4}
  of Proposition~\ref{prop:fp-approx-of-real-constraints},
  $y \sgeq \fsucc(-\fmin \mdiv_\rdown w_L)$.
\item[$\bar{r}_\ell = \rnear:$]
  since $\fodd(\fmin)$, for $x_L=-0$ we need $y$ to be greater or equal than
  $-\fmin  / (2 \cdot w_L)$.
  Since in this case
  \(
    \evalup{-\fmin / (2 \cdot w_L)}
    = \roundup{-\fmin / (2 \cdot w_L)}
    = (-\fmin) \mdiv_\rup (2 \cdot w_L)
  \),
  by~\eqref{case:x-geq-e-implies-x-sgeq-evalup-e}
  of Proposition~\ref{prop:fp-approx-of-real-constraints},
  we can conclude $y \sgeq -\fmin \mdiv_\rup (2 \cdot w_L)$.
\end{description}

As for the case analysis of $\invmull$
marked as $a_6$ of Figure~\ref{fig:indirect-projection-multiplication-yl},
the following cases must be studied:
\begin{description}
\item[$\bar{r}_\ell = \rup:$]
  we must have $y = +0$ in this case;
\item[$\bar{r}_\ell = \rdown:$]
  it should be $y \cdot w_L < \fmin$ and thus, since $w_L$ is negative,
  $y > \fmin / w_L$ and, by~\eqref{prop:fp-approx-of-real-constraints:4}
  of Proposition~\ref{prop:fp-approx-of-real-constraints},
  $y \sgeq \fsucc(-\fmin \mdiv_\rdown w_L)$.
\item[$\bar{r}_\ell = \rnear:$]
  since $\fodd(\fmin)$, for $x_L = -0$ we need $y$ to be greater than
  or equal to $\bigl(\fmin / (2 \cdot w_L)\bigr)$.
  Since in this case
  \(
    \evalup{\fmin / (2 \cdot w_L)}
    = \roundup{\fmin / (2 \cdot w_L)}
    = \fmin \mdiv_\rup (2 \cdot w_L)
  \),
  by~\eqref{case:x-geq-e-implies-x-sgeq-evalup-e}
  of Proposition~\ref{prop:fp-approx-of-real-constraints}
  we can conclude $y \sgeq \fmin \mdiv_\rup (2 \cdot w_L)$.
\end{description}

Finally, for the case analysis of $\invmull$
marked as $a_7$ of Fig~\ref{fig:indirect-projection-multiplication-yl},
the following cases must be considered:
\begin{description}
\item[$\bar{r}_\ell = \rdown:$]
  in this case we must have $y = +\infty$;
\item[$\bar{r}_\ell = \rup:$]
  it should be $y \cdot w_L > -\fmax$ and thus, since $w_L$ is positive,
  $y > \fmax / w_L$ and, by~\eqref{prop:fp-approx-of-real-constraints:2}
  of Proposition~\ref{prop:fp-approx-of-real-constraints},
  $y \sgeq \fsucc(\fmax \mdiv_\rdown w_L)$.
\item[$\bar{r}_\ell = \rnear:$]
  since $\fodd(\fmax)$, for $x_L = +\infty$ we need $y$ to be greater than
  or equal to $\bigl(\fmax + \ftwiceerrnearpos(\fmax)/2\bigr) / w_L$. If
  \(
    \evalup{\bigl(\fmax + \ftwiceerrnearpos(\fmax)/2\bigr) / w_L}
    = \roundup{\fmax + \ftwiceerrnearpos(\fmax)/2\bigr) / w_L}
  \),
  by~\eqref{case:x-geq-e-implies-x-sgeq-evalup-e}
  of Proposition~\ref{prop:fp-approx-of-real-constraints},
  we can conclude $y \sgeq \evalup{\bigl(\fmax + \ftwiceerrnearpos(\fmax)/2\bigr)/w_L}$.
  On the other hand, if
  \(
    \evalup{\bigl(\fmax + \ftwiceerrnearpos(\fmax)/2\bigr) / w_L}
    \neq \roundup{\fmax + \ftwiceerrnearpos(\fmax)/2\bigr) / w_L}
  \),
  then we can only apply~\eqref{prop:fp-approx-of-real-constraints:1}
  of Proposition~\ref{prop:fp-approx-of-real-constraints}, obtaining
  $y \sgeq \evaldown{\bigl(\fmax + \ftwiceerrnearpos(\fmax)/2\bigr)/w_L}$.
\end{description}

Similar arguments can be used to prove the case analyses of $\invmulu$
of Figure~\ref{fig:indirect-projection-multiplication-yu}
marked as $a_9$, $a_{10}$, $a_{11}$ and $a_{12}$.

We now analyze the case analyses of $\invmull$
of Fig~\ref{fig:indirect-projection-multiplication-yl}
marked as $a^-_3$ and $a^+_3$ and the ones
of $\invmulu$ of Fig~\ref{fig:indirect-projection-multiplication-yu}
marked as $a^-_8$ and $a^+_8$, for which we can assume
$x_L, w_L \in \Fset \inters \Rset$ and $x_U, w_U \in \Fset \inters \Rset$,
and $\sgn(w_\ell) = \sgn(w_u)$.
Exploiting $x \sleq y \mmul z$ and $x \sgeq y \mmul z$,
by Proposition~\ref{prop:real-approx-of-fp-constraints} we have
\begin{align}
    y \cdot z
  &\begin{cases}
      \mathord{} \geq x,
        &\text{if $\bar{r}_\ell = \rdown$;} \\
      \mathord{} > x + \ferrup(x) = \fpred(x),
        &\text{if $\bar{r}_\ell = \rup$;} \\
      \mathord{} \geq x + \ftwiceerrnearneg(x)/2,
        &\text{if $\bar{r}_\ell = \rnear$ and $\feven(x)$;} \\
      \mathord{} > x + \ftwiceerrnearneg(x)/2,
        &\text{if $\bar{r}_\ell = \rnear$ and $\fodd(x)$.}
    \end{cases} \\
    y \cdot z
    &\begin{cases}
      \mathord{} < x + \ferrdown(x) = \fsucc(x),
        &\text{if $\bar{r}_u = \rdown$;} \\
      \mathord{} \leq x,
        &\text{if $\bar{r}_u = \rup$;} \\
      \mathord{} \leq x + \ftwiceerrnearpos(x)/2,
        &\text{if $\bar{r}_u = \rnear$ and $\feven(x)$;} \\
      \mathord{} < x + \ftwiceerrnearpos(x)/2,
        &\text{if $\bar{r}_u = \rnear$ and $\fodd(x)$.}
    \end{cases}
\end{align}
Since the case $z = 0$ is handled separately
by $\invmull$ of Figure~\ref{fig:indirect-projection-multiplication-yl}
and by $\invmulu$ of Figure~\ref{fig:indirect-projection-multiplication-yu},
we can assume $z \neq 0$.
Thanks to the splitting of $Z$ into a positive and a negative part,
the sign of $z$ is determined.
In the following, we will prove the case analyses marked as $a_3^+$ and $a_8^+$.
Hence, assuming $z > 0$, the previous case analysis gives us
\begin{align}
  y
   &\begin{cases}
      \mathord{} \geq x/z,
        &\text{if $\bar{r}_\ell = \rdown$;} \\
      \mathord{} > \fpred(x)/z,
        &\text{if $\bar{r}_\ell = \rup$} \\
      \mathord{} \geq \bigl(x + \ftwiceerrnearneg(x)/2\bigr) / z,
        &\text{if $\bar{r}_\ell = \rnear$ and $\feven(x)$;} \\
      \mathord{} > \bigl( x + \ftwiceerrnearneg(x)/2 \bigr)/ z,
        &\text{if $\bar{r}_\ell = \rnear$ and $\fodd(x)$;}
    \end{cases} \\
    y
   &\begin{cases}
      \mathord{} < \fsucc(x)/z,
        &\text{if $\bar{r}_u = \rdown$;} \\
       \mathord{} \leq x / z,
        &\text{if $\bar{r}_u = \rup$;} \\
      \mathord{} \leq \bigl(x + \ftwiceerrnearpos(x)/2 \bigr)/ z,
        &\text{if $\bar{r}_u = \rnear$ and $\feven(x)$;} \\
      \mathord{} < \bigl(x + \ftwiceerrnearpos(x)/2 \bigr)/ z,
        &\text{if $\bar{r}_u = \rnear$ and $\fodd(x)$.}
    \end{cases}
\end{align}

Note that the numerator and the denominator of the previous fractions are
independent. Therefore, we can find the minimum of the fractions by minimizing
the numerator and maximizing the denominator. Since we are analyzing the case
in which $W = Z_+$, let $(x_L, w_L, x_U, w_U)$
as the result of function $\tau$ of Figure~\ref{fig:the-tau-function}.
Hence, by Proposition~\ref{prop:min-max-x-ftwiceerrnearneg-ftwiceerrnearpos} and
the monotonicity of `$\fpred$' and `$\fsucc$ we obtain
\begin{align}
  y
   &\begin{cases}
      \mathord{} \geq x_L/w_L,
        &\text{if $\bar{r}_\ell = \rdown$;} \\
      \mathord{} > \fpred(x_L)/w_L,
        &\text{if $\bar{r}_\ell = \rup$} \\
      \mathord{} \geq \bigl(x_L + \ftwiceerrnearneg(x_L)/2\bigr) / w_L,
        &\text{if $\bar{r}_\ell = \rnear$ and $\feven(x)$;} \\
      \mathord{} > \bigl( x_L + \ftwiceerrnearneg(x_L)/2 \bigr)/ w_L,
        &\text{if $\bar{r}_\ell = \rnear$ and $\fodd(x)$;}
    \end{cases} \\
    y
   &\begin{cases}
      \mathord{} < \fsucc(x_U)/w_U,
        &\text{if $\bar{r}_u = \rdown$;} \\
       \mathord{} \leq x_U / w_U,
        &\text{if $\bar{r}_u = \rup$;} \\
\label{eq:mult-refine-real-upper_bounds}
      \mathord{} \leq \bigl(x_U + \ftwiceerrnearpos(x_U)/2 \bigr)/ w_U,
        &\text{if $\bar{r}_u = \rnear$ and $\feven(x)$;} \\
      \mathord{} < \bigl(x_U + \ftwiceerrnearpos(x_U)/2 \bigr) /w_U,
        &\text{if $\bar{r}_u = \rnear$ and $\fodd(x)$.}
    \end{cases}
\end{align}

We can now exploit Proposition~\ref{prop:fp-approx-of-real-constraints}
and obtain:
\begin{align}
\label{eq:mult-inv-num-rup-rdown-lower}
  y'_\ell &\defeq
    \begin{cases}
      x_L \mdiv_\rup w_L,
        &\text{if $\bar{r}_\ell = \rdown$;} \\
      \fsucc\bigl(\fpred(x_L) \mdiv_\rdown w_L\bigr),
        &\text{if $\bar{r}_\ell = \rup$;}
    \end{cases} \\
\label{eq:mult-inv-num-rup-rdown-upper}
     y'_u &\defeq
    \begin{cases}
      \fpred\bigl(\fsucc(x_U) \mdiv_\rup w_U\bigr),
        &\text{if $\bar{r}_u = \rdown$;} \\
      x_U \mdiv_\rdown w_U,
        &\text{if $\bar{r}_u = \rup$.}
    \end{cases}
 \end{align}
Indeed, if $x_L \neq 0$, then Proposition~\ref{prop:fp-approx-of-real-constraints}
applies and we have $y \sgeq x_L \mdiv_\rup w_L$.
On the other hand, if $x_L = 0$, since by hypothesis $z > 0$ implies  $w_L > 0$,
according to IEEE~754 \cite[Section~6.3]{IEEE-754-2008},
we have $(x_L \mdiv_\rup w_L) = \sgn(x_L) \cdot 0$ and, indeed,
for each non-NaN, nonzero and finite
$w \in \Fset \inters [+0, +\infty]$, $\sgn(x_L) \cdot 0$
is the least value for $y$ that satisfies
$\sgn(x_L) \cdot 0 = y \mmul_\rdown w$.

Analogously, if $x_L \neq \fmin$,
then Proposition~\ref{prop:fp-approx-of-real-constraints}
applies and we have $\fsucc\bigl(\fpred(x_L) \mdiv_\rdown w_L\bigr)$.
On the other hand, if $x_L = \fmin$,
$\fsucc\bigl(\fpred(x_L) \mdiv_\rdown w_L\bigr) = \fmin$,
which is consistent with the fact that, for each non-NaN, nonzero
and finite $w \in \Fset \inters [+0, +\infty]$,
$\fmin$ is the lowest value of $y$ that satisfies
$\fmin = y \mmul_\rup w$.

A symmetric argument justifies~\eqref{eq:mult-inv-num-rup-rdown-upper}.

As before, we will consider both the cases
$\evalup{e^+_\ell} = \roundup{e^+_\ell}$ and $\evaldown{e^+_u} = \rounddown{e^+_u}$
as well as
$\evalup{e^+_\ell} > \roundup{e^+_\ell}$ and $\evaldown{e^+_u} < \rounddown{e^+_u}$.
Thus, when $\evaldown{e^+_u} < \rounddown{e^+_u}$
by~\eqref{eq:mult-refine-real-upper_bounds}
and~\eqref{prop:fp-approx-of-real-constraints:3}
of Proposition~\ref{prop:fp-approx-of-real-constraints}
we obtain $y \sleq \evalup{e^+_u}$.
Instead, when $\evaldown{e^+_\ell} > \rounddown{e^+_\ell}$,
by~\eqref{eq:mult-refine-real-upper_bounds}
and~\eqref{prop:fp-approx-of-real-constraints:1}
of Proposition~\ref{prop:fp-approx-of-real-constraints}
we obtain $y \sgeq  \evaldown{e^+_\ell}$.
In conclusion, for the case in which $\bar{r}_\ell = \rnear$,
since $e_u \neq 0$ and $e_\ell \neq 0$,
by Proposition~\ref{prop:fp-approx-of-real-constraints}, we have
\begin{align}
\label{eq:mult-inv-num-rnear-lower}
  y'_\ell &\defeq
    \begin{cases}
     \evalup{e^+_\ell},
        &\text{if $\feven(x_L)$ and $\evalup{e^+_\ell} = \roundup{e^+_\ell}$;} \\
          \evaldown{e^+_\ell},
        &\text{if $\feven(x_L)$ and $\evalup{e^+_\ell} \neq \roundup{e^+_\ell}$;} \\
      \fsucc\bigl(\evaldown{e^+_\ell}\bigr),
        &\text{otherwise;}
    \end{cases} \\
\intertext{%
whereas, for the case in which $\bar{r}_u = \rnear$,
}
\label{eq:mult-inv-num-rnear-upper}
  y'_u &\defeq
    \begin{cases}
      \evaldown{e^+_u},
        &\text{if $\feven(x_U)$ and $\evaldown{e^+_u} = \rounddown{e^+_u}$;} \\
          \evalup{e^+_u},
        &\text{if $\feven(x_U)$ and $\evaldown{e^+_u} \neq \rounddown{e^+_u}$;} \\
      \fpred\bigl(\evalup{e^+_u}\bigr),
        &\text{otherwise.}
    \end{cases}
\end{align}

An analogous reasoning with $z < 0$ allows us to obtain
the case analyses marked as $a_3^-$ and $a_8^-$.
\end{delayedproof}

\end{document}